\documentclass{lmcs}
\pdfoutput=1
\usepackage[utf8]{inputenc}

\usepackage{lastpage}
\lmcsdoi{19}{3}{1}
\lmcsheading{}{\pageref{LastPage}}{}{}%
{Jun.~01,~2021}{Jul.~06,~2023}{}

\usepackage{pinlabel}
\usepackage{latexsym}
\usepackage{amssymb}
\usepackage{amsmath}
\usepackage{amsfonts}
\usepackage{graphicx}
\usepackage{array, longtable}
\newcolumntype{C}{>{$}c<{$}}
\newcolumntype{L}{>{$}l<{$}}
\newcolumntype{R}{>{$}r<{$}}
\usepackage{amsthm}
\usepackage{bussproofs}

\allowdisplaybreaks

\title[The Gray Code Case]{Computing with Infinite Objects: the Gray Code Case}

\author[D.~Spreen]{Dieter Spreen\lmcsorcid{0000-0002-2773-7323}}[a]
\address{Department of Mathematics, University of Siegen, 57068 Siegen, Germany}
\email{spreen@math.uni-siegen.de}
\thanks{
\protect\includegraphics[width=1.1em]{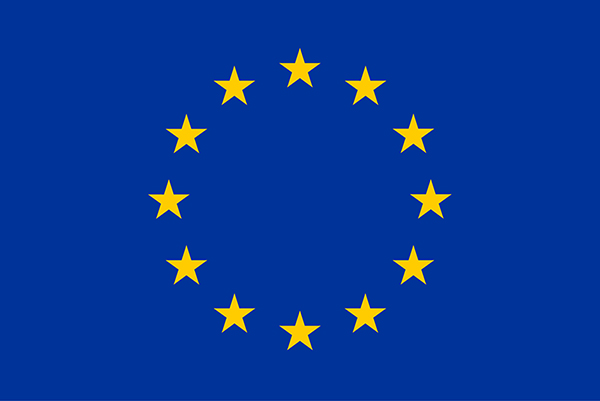}
This project has received funding from the European Union's Horizon 2020 research and innovation programme under the Marie Sk\l{}odowska-Curie grant agreement No 731143.}

\author[U.~Berger]{Ulrich Berger\lmcsorcid{0000-0002-7677-3582}}[b]
\address{Department of Computer Science, Swansea University, The Computational Foundry,
Swansea University Bay Campus, Fabian Way, Swansea, SA1 8EN, UK}
\email{u.berger@swansea.ac.uk}


\newenvironment{ncase}{\emph{Case\,}}{}

\newcommand{\set}[2]{\{\,#1 \mid #2\,\}}
\newcommand{\fun}[3]{#1\colon #2 \to #3}
\newcommand{\llbracket}{[\![}
\newcommand{\rrbracket}{]\!]}
\newcommand{\range}{\mathrm{range}}
\newcommand{\card}[1]{|\!|#1|\!|}
\newcommand{\llparenthesis}{(\!|}
\newcommand{\rrparenthesis}{|\!)}

\newcommand{\alphaeq}{=_{\alpha}}

\newcommand{\CC}{\mathbb{C}}

\newcommand{\II}{\mathbb{I}}

\newcommand{\NN}{\mathbb{N}}

\newcommand{\RR}{\mathbb{R}}
\newcommand{\TT}{\mathbb{T}}

\newcommand{\AAA}{\mathcal{A}}

\newcommand{\KKK}{\mathcal{K}}

\newcommand{\PPP}{\mathcal{P}}

\newcommand{\TTT}{\mathcal{T}}

\newcommand{\bone}{\mathbf{1}}
\newcommand{\btwo}{\mathbf{2}}
\newcommand{\bthree}{\mathbf{3}}
\newcommand{\bfalse}{\mathbf{False}}
\newcommand{\bfix}{\mathbf{fix}\,}

\newcommand{\bfA}{\mathbf{A}} 
\newcommand{\bB}{\mathbf{B}}

\newcommand{\bD}{\mathbf{D}}
\newcommand{\bG}{\mathbf{G}} 
\newcommand{\bGC}{\mathbf{GC}}
\newcommand{\bH}{\mathbf{H}}
\newcommand{\bII}{\mathbf{I}\!\mathbf{I}}

\newcommand{\bK}{\mathbf{K}}
\newcommand{\bN}{\mathbf{N}}
\newcommand{\br}{\,\mathbf{r}\,}
\newcommand{\brk}{\,\mathbf{rk}}
\newcommand{\bP}{\mathbf{P}}
\newcommand{\bQ}{\mathbf{Q}}
\newcommand{\bR}{\,\mathbf{R}}
\newcommand{\bS}{\mathbf{S}} 
\newcommand{\bSD}{\mathbf{SD}}
\newcommand{\bZ}{\mathbf{Z}}
\newcommand{\bt}{\mathbf{t}}
\newcommand{\bnil}{\mathbf{Nil}}
\newcommand{\bleft}{\mathbf{Left}}
\newcommand{\bright}{\mathbf{Right}}
\newcommand{\bpair}{\mathbf{Pair}}
\newcommand{\bfun}{\mathbf{Fun}}
\newcommand{\bcase}{\mathbf{case}\,}

\newcommand{\bof}{\mathbf{of}\,}
\newcommand{\brec}{\mathbf{rec}\,}
\newcommand{\amb}{\mathbf{Amb}}
\newcommand{\amblr}{\amb_{\mathbf{LR}}}

\newcommand{\mamb}{\mathbf{mapamb}\,}
\newcommand{\mlr}{\mathbf{mapLR}\,}

\newcommand{\AV}{\mathrm{AV}}
\newcommand{\GC}{\mathrm{GC}}
\newcommand{\GF}{\mathrm{GF}}
\newcommand{\SD}{\mathrm{SD}}

\newcommand{\av}[1]{\mathrm{av}_{#1}}
\newcommand{\bav}[1]{\mathbf{av}_{#1}}

\newcommand{\ball}[3]{\mathrm{B}_{#1}(#2,#3)}

\newcommand{\hdm}{\mu_{\mathrm H}}
\newcommand{\treehdm}{\delta_{\mathrm H}}
\newcommand{\Def}{\overset{\mathrm{Def}}{=}}

\newcommand{\ccase}[2]{\bcase #1\, \bof \{ #2 \}}

\newcommand{\ddown}{\mathord{\downdownarrows}}
\newcommand{\itdown}{\mathord{\stackrel{\ast}{\ddown}}}
\newcommand{\low}{\mathopen\downarrow}
\newcommand{\stc}[2]{#1\low #2}

\newcommand{\val}[1]{\llbracket #1 \rrbracket}
\newcommand{\treeval}[1]{\llparenthesis #1 \rrparenthesis}

\newcommand{\fin}{\mathbf{fin}}
\newcommand{\nil}{\mathbf{Nil}}

\newcommand{\acc}{\mathbf{Acc}}
\newcommand{\prog}{\textbf{Prog}}
\newcommand{\path}{\textbf{Path}}
\newcommand{\wfi}{\textbf{WFI}}
\newcommand{\bmin}{\mathbf{min}\,}
\newcommand{\bmax}{\mathbf{max}\,}

\newcommand{\pfun}{\mathbf{fun}}
\newcommand{\ssp}{\rightsquigarrow}
\newcommand{\sspc}{\overset{\mathrm{c}}{\ssp}}

\newcommand{\dom}{\mathrm{dom}}
\newcommand{\FV}{\mathrm{FV}}
\newcommand{\btrue}{\mathbf{True}}

\newcommand{\bnat}{\mathbf{nat}}

\newcommand{\extra}[1]{#1}
\newcommand{\ignore}[1]{}

\newcommand{\mon}{\mathbf{mon}}
\newcommand{\iter}{\mathbf{it}}
\newcommand{\coiter}{\mathbf{coit}}
\newcommand{\hscoiter}{\mathbf{hscoit}}
\newcommand{\chscoiter}{\mathbf{chscoit}}
\newcommand{\scoiter}{\mathbf{scoit}}

\newcommand{\absorb}{\mathbf{absorb}}
\newcommand{\id}{\mathbf{id}}
\newcommand{\caibs}{\mathbf{caibs}}

\newcommand{\eps}{\mathrel{\varepsilon}}
\newcommand{\noteps}{\mathrel{\not\!\varepsilon}}

\def\int{\mathop{\mathstrut\rm int}\nolimits}

\newcommand{\rt}[2]{#2\mathord\upharpoonright_{#1}}
\newcommand{\rtbt}[2]{#2|_{#1}}
\newcommand{\rtu}[2]{#2\mathord\upharpoonright^{\rm u}_{#1}}
\newcommand{\rest}{\mathord\restriction}
\newcommand{\restu}{\mathord\upharpoonright^{\rm u}}

\newcommand{\mytextcolor}[2]{#2}

\begin{document}

\begin{abstract}
Infinite Gray code has been introduced by Tsuiki~\cite{ts} as a redundancy-free representation of the reals. In applications the signed digit representation is mostly used which has maximal redundancy. Tsuiki presented a functional program converting signed digit code into infinite Gray code. Moreover, he showed that infinite Gray code can effectively be converted into signed digit code, but the program needs to have some non-deterministic features (see also \cite{tsug}). Berger and Tsuiki~\cite{btifp,bt} reproved the result in a system of formal first-order intuitionistic logic extended by inductive and co-inductive definitions, as well as some new logical connectives capturing concurrent behaviour. The programs extracted from the proofs are exactly the ones given by Tsuiki. In order to do so, co-inductive predicates $\bS$ and $\bG$ are defined and the inclusion $\bS \subseteq \bG$
is derived. For the converse inclusion the new logical connectives are used to introduce a concurrent version $\bS_{2}$ of $\bS$ and $\bG \subseteq \bS_{2}$ is shown. What one is looking for, however, is an equivalence proof of the involved concepts. One of the main aims of the present paper is to close the gap. A concurrent version $\bG^{*}$ of $\bG$ and a modification $\bS^{*}$ of $\bS_{2}$ are presented such that $\bS^{*} = \bG^{*}$. A crucial tool in \cite{btifp} is a formulation of the Archimedean property of the real numbers as an induction principle. We introduce a concurrent version of this principle which allows us to prove that $\bS^{*}$ and $\bG^{*}$ coincide. 
A further central contribution is the extension of the above results to the hyperspace of 
non-empty compact subsets of the reals.
\end{abstract}

\ACMCCS{D.2.4.; F.4.1; I.2.2; I.2.3; I.2.4}
\keywords{Computing, real numbers, compact sets, signed-digit representation, Gray code representation, iterative function systems, program extraction, logic, inductive definition, co-inductive definition, constructive mathematics}

\maketitle

\setcounter{tocdepth}{1}
 \tableofcontents

\section{Introduction}\label{sec-intro}

In investigations on exact computations with continuous objects such as the real numbers, objects are usually represented by streams of finite data. This is true for theoretical studies in the Type-Two Theory of Effectivity approach (cf.\ e.g.\ \cite{wei}) as for practical research, where prevalently the signed digit representation is used (cf.\ \cite{cg,em,bh}), but also others~\cite{es,eh,ts}. In \cite{beu} it is shown how to use the method of program extraction from proofs to extract certified algorithms working with the signed digit representation in a semi-constructive logic allowing inductive and co-inductive definitions. In addition to producing correct algorithms, this approach allows reasoning in a representation-free way, as in usual mathematical practice. Concrete representations of the objects needed in computations are generated automatically by the extraction procedure. A detailed description of the logic (i.e.\ \emph{Intuitionistic Fixed Point Logic (IFP)})  and the  realisability approach used for extracting programs can be found in~\cite{btifp}.

In order to generalise from the different finite objects used in the various stream representations, the present authors~\cite{bs} used the abstract framework of what was coined \emph{digit space}, i.e.\ a bounded complete non-empty metric space $X$ enriched with a finite set $D$ of contractions on $X$, called \emph{digits}, that cover the space, that is
\[
X = \bigcup \set{d[X]}{d \in D},
\]
where $d[X] = \set{d(x)}{x \in X}$. 

Digit spaces are compact and weakly hyperbolic, where the latter property means that for every infinite sequence $d_{0}, d_{1}, \ldots$ of digits the intersection $\bigcap_{n \in \NN} d_{0} \circ \cdots \circ d_{n}[X]$ contains \emph{at most} one point~\cite{ed}. Compactness on the other hand, implies that each such intersection contains \emph{at least} a point. By this way every stream of digits denotes a uniquely determined point in $X$. Because of the covering property it follows conversely that each point in $X$ has such a code. 

The framework has been generalised in~\cite{sp}. In both papers the proof of the main results required a strengthening of the covering condition in such way that 
\[
X = \bigcup \set{\int(d[X])}{d \in D},
\]
where for a subset $A$ of $X$, $\int(A)$ is the topological interior of $A$. Spaces with this property were called \emph{well-covering}. The usual spaces occurring in applications are of this kind, in particular the space $(\II, \SD)$ consisting of the real interval $\II \Def [-1, 1]$ and the set $\SD \Def \set{\lambda x.\ (x + d)/2}{(d = -1 \vee d = 1) \vee d = 0}$ whose streams of digits are used in the \emph{signed digit representation}.

An important example of a non-well-covering digit space is the space $(\II, \GC)$ with $\GC \Def \set{\lambda x.\ -d \cdot (x - 1)/2}{d = -1 \vee d= 1}$ leading to an extension of finite Gray code to infinite words over the alphabet $\{\, -1, 1 \,\}$, by which each real number in $\II$ except the dyadic rationals in $(-1, 1)$ is represented by exactly one word. Dyadic rationals are represented by two words that differ in only one place. It follows that the corresponding cell contains no information. Tsuiki~\cite{ts} suggested to identify both codes and to fill the cell in which they differ with the symbol $\bot$ for `\emph{unknown}'. Note that the symbol $\bot$ is of a different kind than -1 or 1. It is like the symbol for `\emph{blank}' on a Turing tape which can also be re-written in the course of the computation.
By this way a redundancy-free representation of the interval [-1, 1] is obtained, also called \emph{infinite Gray code}.

There is, however, a price to be paid for getting rid of redundancy. Tsuiki~\cite{ts} proved that the computability notion for real numbers that is obtained with respect to the new representation is equivalent to the widely accepted computability notion based on the Type-Two Theory of Effectivity approach. To this end he showed that there are computable translations from streams of digits of a real number with respect to the signed digit representation into a stream representing the same number in infinite Gray code, and vice versa. As turned out, the translation of infinite Gray code into signed digit representation cannot be computed purely sequentially: one must have access not only to the head of the input stream, but also to the entry next to it, similarly when writing. To this end, the algorithm has to work non-deterministically.  

The representation of elements of a digit space $(X, D)$  by streams of digits can be characterised co-inductively. Let $\CC_{X} \subseteq X$ be co-inductively defined by
\[
x \in \CC_{X} \overset{\nu}{=} (\exists d \in D) (\exists y \in \CC_{X})\, x = d(y),
\]
that is, $\CC_{X}$ is the largest subset of $X$ satisfying the equation (see Section~\ref{sec-indef} for the theory of inductive and co-inductive definitions). Then from a constructive proof that $x \in \CC_{X}$ one can extract a stream of digits representing $x$. Note that from the covering property of digit spaces it follows (by co-induction) that $X \subseteq \CC_{X}$. However, in general this is only true in classical logic since for an arbitrary element $x \in X$ one can usually not determine constructively a digit $d \in D$ whose image contains $x$. Hence, constructively, $\CC_{X}$ is normally a proper subset of $X$. 
However, if the digit space has an effective basis and is well-covering, then $\CC_{X}$ yields a representation of $X$ that is constructively equivalent to the standard Cauchy representation. Since this is the case for the digit space $(\II, \SD)$, we let $\bS = \CC_{(\II, \SD)}$, and  obtain $\bS$ as the largest set of real numbers in $[-1, 1]$ that (constructively) has a signed digit representation.

The same approach does not work for infinite Gray code since, as pointed out earlier, its digit space, $(\II, \GC)$, is non-well covering. Nevertheless, Berger and Tsuiki presented in \cite{btifp}, the following co-inductive characterisation $\bG$ of the interval $\II$ that does allow for the extraction of infinite Gray code: 
\[\bG(x) \overset{\nu}{=} (-1 \le x \le 1) \wedge \bD(x) \wedge \bG(\bt(x)). \]
Here, $\bD(x) \Def x \not= 0 \to (x \le 0 \vee x \ge 0)$, and $\bt$ is the tent function
$\bt(x) \Def 1 - 2|x|$ which is the continuous join of the inverses of the digits in $\GC$.
Note that, if $x\not= 0$, then a realiser of $\bD(x)$ is 
a digit deciding the disjunction $x \le 0 \vee x \ge 0$. 
However, in the case $x=0$ the realiser may be undefined. 
This provides a logical explanation why an infinite Gray code may contain an undefined digit.
In \cite{btifp} it is shown that that $\bS \subseteq \bG$ is provable in IFP and that the extracted algorithm is exactly the translation from signed digit representation into infinite Gray code given in \cite{ts}. 

When trying to extract Tsuiki's translation in the opposite direction from a proof of $\bG \subseteq \bS$, one faces the obstacle that IFP, being based on traditional realisability, can only extract algorithms that are deterministic and sequential, while, as discussed above, an effective translation from infinite Gray code to the signed digit \mytextcolor{red}{representation} is necessarily non-deterministic and concurrent.
To overcome this limitation of IFP, in \cite{bt} an extension of IFP, \emph{Concurrent Fixed Point Logic (CFP)}, is developed. Its main novelty is a \emph{concurrency modality} $\ddown(A)$ indicating that realisers of $A$ may
be computed by two concurrent threads, where the result of the thread terminating first is taken as realiser and the other thread is discarded. 
More precisely, realisability of $\ddown(A)$ is defined using a version of McCarthy's Amb~\cite{mc}:
\begin{align*}
c \br \ddown(A) \Def\, &c = \amb(a,b) 
\wedge \mbox{} \\
&(a \neq\bot \vee b \neq \bot) \wedge \mbox{} \\
&(a \neq\bot \to a \br A) \wedge (b \neq\bot \to b \br A).
\end{align*}
Besides solving the above translation problem, the motivation for introducing this modality is the wish to provide a constructive interpretation of the law of excluded middle 
\[\dfrac{B \to A \quad \neg B \to A}{A}\,\, \text{(lem)}\]
where $B$ is a formula without computational content. The idea is that (lem) should be realised by $\amb$, since a realiser of the conclusion should be computable by running the given realisers of the two premises in parallel. In turns out that, to make this work, it is not enough to add the concurrency modality to the conclusion: one must also modify the premises to avoid false positives.
This results in the rule 
\[\dfrac{\rt{B}{A} \quad \rt{\neg B}{A}}{\ddown(A)}\,\, \text{($\ddown$-lem)} \]
where $\rt{B}{A}$ is a strengthening of the implication $B \to A$, called \emph{restriction}\footnote{In\cite{bt} the notation $\rtbt{B}{A}$ is used instead of $\rt{B}{A}$.}, that guarantees that all its defined realisers are in fact realisers of $A$, independently of the truth value of $B$. More precisely, realisability for restriction is defined as
\[
a \br \rt{B}{A} \Def 
(
B 
\to a \neq \bot) \wedge
(a \neq\bot \to a \br A).
\]
whereas $a \br (B \to A) \Def B \to a \br A$. The definitions of realisability for the concurrency modality and restriction shown above are slightly simplified; for full definitions, see Section~\ref{sec-concon}.

In \cite{bt} the definition of $\bS$ is modified by making use of the new modality for concurrency:
\[
\bS_{2}(x) \overset{\nu}{=} \ddown((\exists d \in \bSD)\, \bII(d, x) \wedge \bS_{2}(2x - d)),
\]
where $\bII(d,x) \Def |2x - d| \le 1$.
In terms of realisability, $\bS_{2}(x)$ means that a signed digit representation of $x$ is obtained through the concurrent computation of two threads. 
Now, the inclusion $\bG \subseteq \bS_{2}$  can be derived (\cite{bt}, see also Theorem~\ref{thm-StoGtoS2}), and it turns out that the extracted algorithm is the one given in \cite{ts}. 

So far, we reviewed the results in \cite{btifp} and \cite{bt} which our work builds on. In the following we give an overview of the main new result of the present paper.

As we have seen, from the results in \cite{btifp} and \cite{bt} it follows that $\bS \subseteq \bG \subseteq \bS_{2}$. What one is looking for, however, is a proof of the equivalence of the involved concepts. In this paper we present a concurrent version $\bG^{*}$ of $\bG$ and a modification $\bS^{*}$ of $\bS_{2}$ so that $\bS^{*} = \bG^{*}$. $\bS^{*}$ and $\bG^{*}$ are defined with the following iterated form of the concurrency operator~(cf.\ Section~\ref{sub-monad-conc}):
\[
\itdown(A) \overset{\mu}{=} \ddown(A \vee \itdown(A)).
\]
where $\overset{\mu}{=}$ indicates that $\itdown(A)$ is the \emph{least} (i.e.\ logically strongest)
proposition satisfying the equation. A realiser of $\itdown(A)$ consists of two concurrent threads of computations, $\amb(a,b)$, where each thread, if terminating, either provides a realiser of $A$,
or else a new concurrent computation. Since the least fixed point is taken, the first alternative is guaranteed to happen, eventually.

$\bS^{*}$ is now defined like $\bS_{2}$, but with \mytextcolor{red}{$\itdown$ instead} of $\ddown$~(see Section~\ref{sec-sdgc}):
\[ \bS^{*}(x) \overset{\nu}{=} \itdown((\exists d \in \bSD)\, \bII(d, x) \wedge \bS^{*}(2x - d)) \]
The modification of the predicate for infinite Gray code is even simpler since only the decision
$x\le 0 \lor x \ge 0$ is subject to the operator $\itdown$:
\[ \bG^{*}(x) \overset{\nu}{=} (-1 \le x \le 1) \wedge (x \not= 0 \to \itdown(x \le 0 \vee x \ge 0)) \wedge \bG^{*}(\bt(x))\]
This means that extracted realisers are ordinary streams, however with concurrently computed digits.

Another central objective of the present paper is to do a similar thing for the hyperspace of non-empty compact subsets of $\II$. That is, we give a co-inductive characterisation of this space from which a Gray code-like representation of the non-empty compact subsets of $\II$ can be extracted and compare it with the characterisation of the hyperspace of non-empty compact subsets of digit spaces investigated in \cite{bs,sp}, now applied to the digit space $(\II, \SD)$.

The analogue of the signed digit representation for compact sets is
\[
\bS_{\bK}(K) \overset{\nu}{=} \bK(K)\land(\exists E \in \bP_{\fin}(\bSD))\,
K \subseteq \bII_E \wedge (\forall d\eps E) (K_d \not= \emptyset \wedge 
\bS_{\bK}(\bav{d}^{-1}[K_d]))
\]
where $\bK$ is an atomic predicate characterising (axiomatically) the non-empty compact subsets of $\II$, $\bP_{\fin}(\bSD)$ is the set of non-empty subsets of the signed digit set $\bSD$, $\bII_E$ is the set of all points that are $1/2$ close to the half of some digit in $E$, and $K_d$ is the set of points in $K$ that are $1/2$ close to $d/2$ (e.g.~$K_1=K \cap [0,1]$); finally, $\bav{d}^{-1}$ is the inverse of the digit function $\lambda x.(x+d)/2$~(see Definition~\ref{def-sk}).

The generalisation of infinite Gray code of compact sets is trickier. Here, we compute the Gray codes of the minimum and maximum of $K$ and then, recursively, narrow down the set (\mytextcolor{red}{Definition}~\ref{def-gk}):
\[
\bG_{\bK}(K) \overset{\nu}{=} 
\bK(K) \land \bG(\bmin K) \land \bG(\bmax K) \land 
 (\forall d \in  \bGC)\, (K_{d} \ne \emptyset \to  \bG_{\bK}(\bt[K_{d}])).
\]
It is not hard to see that both definitions are generalisations of the point case, that is, 
$\bS(x)$ exactly if $\bS_{\bK}(\{x\})$, and $\bG(x)$ exactly if $\bG_{\bK}(\{x\})$.
We give a constructive proof that $\bS_{\bK} \subseteq \bG_{\bK}$ from which a translation between the respective representations of compact sets can be extracted, thus lifting the corresponding result in~\cite{btifp} from points to the compact sets.

Finally, we also lift the equation $\bS^{*} = \bG^{*}$ from points to compact sets. The definitions of the concurrent versions of $\bS_{\bK}(K)$ and $\bG_{\bK}(K)$ are obtained by putting stars at appropriate places (see Definition~\ref{def-skstar} and the beginning of Section~\ref{sec-concgray}):
\begin{gather*}
\bS^{*}_{\bK}(K) \overset{\nu}{=} \bK(K)\land\itdown((\exists E \in \bP_{\fin}(\bSD))\,
K \subseteq \bII_E \wedge (\forall d \eps E) (K_d \not= \emptyset \wedge \bS^{*}_{\bK}(\bav{d}^{-1}[K_d])))  \\
\bG_{\bK}^{*}(K) \overset{\nu}{=} \bK(K) \land \bG^{*}(\bmin K) \land \bG^{*}(\bmax K) \land (\forall d \in \bGC)\, (K_d \ne \emptyset \rightarrow \bG_{\bK}^{*}(\bt[K_d]))
\end{gather*}
Our final result is the equation $\bS^{*}_{\bK} = \bG^{*}_{\bK}$ which provides an equivalence of the concurrent
signed digit and Gray code representations of non-empty compact sets.

An important proof tool in this work is \emph{Archimedean Induction (AI)}, a formulation of the Archimedean property for real numbers as an induction principle introduced in~\cite{bt}. In the present paper we introduce different version of this principle which are vital for all our main results. Let us briefly discuss the main ideas.

Common formulations of the Archimedean property are either not realisable (e.g.\ $(\forall x) (\exists n\in\NN)\, |x| \le n$), or don't have computational content (e.g.\ $(\forall x)\, ((\forall n\in\NN)\, |x|<2^{-n}) \to x=0$). In contrast, the classically equivalent principle of Archimedean Induction (cf.\ Section~\ref{sec-conarch})) 
\[
\dfrac{(\forall x \ne 0)\ (|x| \leq 1/2 \to P(2x)) \to P(x)}
{(\forall x \ne 0)\ P(x)}\,\, (\mathrm{AI})
\]
inherits computational content from the (arbitrary) predicate $P$ and is realised by general recursion.  It is crucial that the variable $x$ is not relativised to a predicate such as $\bS$ that would yield a representation of $x$.
A simple example of an application of (AI) is $(\forall x,y\in\bS)\,(x+y\neq 0 \to (\exists n\in\NN)\, (|x|>2^{-n} \lor |y|>2^{-n}))$, which \mytextcolor{red}{one} would normally prove using the Archimedean property (in the form without computational content), countable choice ($\mathrm{AC}^\omega$) and  Markov's principle (MP). Using (AI) one needs neither ($\mathrm{AC}^\omega$) nor (MP). Roughly speaking, (AI) can be viewed as a combination of all those three principles that avoids speaking about infinite sequences.

In this paper, we will use variants of (AI) and of the following classically equivalent but constructively slightly weaker form of (AI) which has another predicate $B$ as parameter:
\[
\dfrac{(\forall x \in B \setminus \{ 0 \})\, P(x) \vee (|x| \le 1/2 
\wedge B(2x) \wedge (P(2x) \to P(x)))}
{(\forall x \in B \setminus \{ 0 \})\, P(x)} \, \,(\mathrm{AIB})
\]
An example is a variant where both premise and conclusion are made concurrent (cf.\ Definition~\ref{def-caibstar}):
\[
\dfrac{(\forall x \in B \setminus \{ 0 \})\, \itdown(P(x) \vee (|x| \le 1/2 
\wedge B(2x) \wedge (P(2x) \to P(x))))}
{(\forall x \in B\setminus \{ 0 \})\, \itdown(P(x))} \,\, (\mathrm{CAIB^*})
\]
We also introduce versions of (AI) or (AIB) for compact sets (cf.\ Definition~\ref{def-arcp}), signed digit represented compact sets (cf.\ Definition~\ref{def-aicsd}), and the restriction operator $\rest$ (cf.\ Definition~\ref{def-aicr}). 

The paper is organised as follows: In Section~\ref{sec-dig} the definition of a digit space is recalled and extended Gray code introduced. Section~\ref{sec-indef} contains a short introduction to inductive and
co-inductive definitions and the proof methods they come equipped with. The application to digit spaces is discussed as well.

The next two sections give brief introductions to the logical systems used for program extraction. Section~\ref{sec-progex} deals with \emph{Intuitionistic Fixed Point Logic (IFP)} and the kind of realisability used to generate programs. 
We follow~\cite{btifp} except that in the case-construct of the programming language we permit clauses with overlapping patterns (see~Section~\ref{sub-prog}).
In Section~\ref{sec-concon}, the extension of IFP to \emph{Concurrent Fixed Point Logic (CFP)}  is discussed. The logic contains two new connectives from \cite{bt} and corresponding proof rules. The rules are all realisable.  We will derive further rules. 

In Section~\ref{sec-conarch} and \ref{sec-sdgc}, respectively, concurrent versions of Archimedean induction and the predicates $\bS$ and $\bG$ are introduced. These predicates are such that the realisers of their elements are signed digit and/or Gray code representations of the elements. The concurrent versions of both predicates are shown to coincide. From the proofs computable translations between the two representations can be extracted.

The remaining sections deal with non-empty compact subsets of the interval $\II$. In Section~\ref{sec-comset} some facts presented in \cite{bs} about the representation of the non-empty compact subsets of a digit space by digit trees are recalled. These are finitely branching infinite trees. Their nodes are labelled with digits. The words along the infinite paths are codes of the elements of the represented compact set. In the special case of the non-empty compact subsets of $\II$, the representation is one-to-one, if the elements of $\II$ are represented by infinite Gray code.
Archimedean induction for the non-empty compact subsets of $\II$ is discussed in Section~\ref{sec-arcp}. 

In Section~\ref{sec-cpcode} a predicate $\bG_{\bK}$ is co-inductively defined the realisers of which are Gray code representations of the non-empty compact subsets of $\II$. A similar predicate $\bS_{\bK}$ was defined in the previous section with respect to the signed digit representation. It is shown that $\bS_{\bK} \subseteq \bG_{\bK}$. Just as in the point case, for the converse inclusion a concurrent version $\bS^{*}_{\bK}$ of the predicate $\bS_{\bK}$ has to be considered. In Section~\ref{sec-concgray}, finally, a concurrent version $\bG^{*}_{\bK}$ of the predicate $\bG_{\bK}$ is introduced and the equality $\bS^{*}_{\bK} = \bG^{*}_{\bK}$ is derived. Again computable translations between the digital trees based on signed digit representation and Gray code representation, respectively, can be extracted from the proof.

Realisers are an important ingredient of the approach delineated so far: Results are derived by applying the logical rules of Concurrent Fixed Point logic as well as new rules provided in the paper. But the algorithms used in applications are obtained by following the proof rules and combining their realisers accordingly. For each of the results derived in the real number case, that is in Section~\ref{sec-sdgc}, we will present the realisers obtained in this way. In the compact sets case we leave this to the reader as the proofs follow a pattern very similar to the point case.

\mytextcolor{green}
{
A further crucial aspect of this work is abstraction: The logical language and proof calculus 
do not refer to the operational semantics of programs. Instead, operational soundness
is guaranteed through a general computational adequacy theorem that
applies to any concurrent operational semantics satisfying certain fairness 
requirements\footnote{Committing to a fixed operational semantics would make
concurrent logical rules and programming constructs reduncant since they could be
sequentialised by familiar scheduling/dove-tailing techniques.}~\cite{bt}. 
This means that extracted programs can be executed in 
any efficient concurrent execution model. 
}

\section{Digit Spaces}\label{sec-dig}

We review the concept of a digit space \cite{bs,sp} as a general model of computation with infinite streams of digits.
\begin{defi}
Let $(X, \mu)$ be a non-empty compact metric space and $E$ be a finite collection of contracting self-maps $\fun{e}{X}{X}$. Then
$(X, E)$ is a \emph{digit space}, if 
\begin{equation}\label{eq-ds}
X = \bigcup \set{e[X]}{e \in E}.
\end{equation}
\end{defi}
Here, $e[X] = \set{e(x)}{x \in X}$. The maps $e$ will be called \emph{digits} in this context. 

Note that, being a continuous map on a compact set, the metric $\mu$ is bounded.

We identify a finite sequence of digits $\vec e = [e_0, \ldots, e_{n-1}] \in E^n$ with the composition $e_0 \circ \cdots \circ e_{n-1}$ and a digit $e$ with the singleton sequence $[e] \in E^1$. The set of all finite sequences of digits will be denoted by $E^{<\omega}$. Moreover, we let $E^\omega$ be the set of all infinite sequences of elements of $E$ and set for $\alpha \in E^\omega$,
\[
\alpha^{<n} \Def [\alpha_0, \ldots, \alpha_{n-1}].
\]

\begin{lemC}[{\cite[Lemma 2.3]{bs}}]
Let $(X, E)$ be a digit space. Then $\bigcap_{n \in \NN} \alpha^{<n}[X]$ contains exactly one point which we denote by $\val{\alpha}$, for every $\alpha \in E^\omega$.
\end{lemC}
The mapping $\fun{\val{\cdot}}{E^\omega}{X}$ is called the \emph{coding map}.

As is well known, $E^\omega$ is a compact bounded metric space with metric
\[
\delta(\alpha, \beta) = \begin{cases}
					0 & \text{ if $\alpha = \beta$,}\\
					2^{-\min \{\, n \mid \alpha_n \not= \beta_n \,\}} & \text{ otherwise.}
				\end{cases}
\]

\begin{propC}[{\cite[Proposition 2.7]{bs}}]\label{prop-val}\hfill
\begin{enumerate}

\item The coding map $\val{\cdot}$ is onto and uniformly continuous.\label{prop-val-1}

\item The metric topology in $X$ is equivalent to the quotient topology induced by the coding map.\label{prop-val-2}

\end{enumerate}
\end{propC}

Set
\[
\alpha \sim \beta \Longleftrightarrow \val{\alpha} = \val{\beta},
\]
for $\alpha, \beta \in E^\omega$. Then $\sim$ is an equivalence relation. The equivalence class associated with $\alpha \in E^\omega$ will be denoted by $[\alpha ]_\sim$. Furnish the quotient $E^\omega /\mathord{\sim}$ with the quotient topology and let $\fun{q_\sim}{E^{\omega}}{E^{\omega}/\mathord{\sim}}$ be the quotient map. Moreover let $\fun{\widehat{\val{\cdot}}}{E^{\omega}/\mathord{\sim}}{X}$ be the uniquely determined continuous map with $\widehat{\val{\cdot}} \circ q_\sim = \val{\cdot}$. 

\begin{prop}\label{prop-valhom}
The map $\widehat{\val{\cdot}}$ is a homeomorphism between the quotient $E^\omega / \mathord{\sim}$ and the metric space $X$.
\end{prop}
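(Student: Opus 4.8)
The plan is to show that $\widehat{\val{\cdot}}$ is a continuous bijection and then to upgrade it to a homeomorphism by a compactness argument. Continuity of $\widehat{\val{\cdot}}$ is already guaranteed by its defining property as the factorisation of the coding map through the quotient, so only three points remain to be checked: surjectivity, injectivity, and continuity of the inverse.

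First I would establish surjectivity and injectivity, which together make $\widehat{\val{\cdot}}$ a continuous bijection. For surjectivity, I use $\widehat{\val{\cdot}} \circ q_\sim = \val{\cdot}$ together with the fact that $\val{\cdot}$ is onto by Proposition~\ref{prop-val}(\ref{prop-val-1}); since $q_\sim$ is surjective, so is $\widehat{\val{\cdot}}$. For injectivity, the definition of $\sim$ does the work: if $\widehat{\val{\cdot}}([\alpha]_\sim) = \widehat{\val{\cdot}}([\beta]_\sim)$, then unwinding through $\widehat{\val{\cdot}} \circ q_\sim = \val{\cdot}$ yields $\val{\alpha} = \val{\beta}$, which is exactly $\alpha \sim \beta$, so $[\alpha]_\sim = [\beta]_\sim$. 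In other words, $\sim$ is precisely the kernel equivalence of the coding map, which is the reason $\widehat{\val{\cdot}}$ separates classes.

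It remains to show that the inverse of $\widehat{\val{\cdot}}$ is continuous, and this is the only genuinely topological step. The clean route is to invoke the classical fact that a continuous bijection from a compact space onto a Hausdorff space is automatically a homeomorphism. Here $E^\omega$ is compact, so its continuous image $E^\omega/\mathord{\sim}$ under $q_\sim$ is compact as well, while $X$, being a metric space, is Hausdorff; hence $\widehat{\val{\cdot}}$ is a closed map and its inverse is continuous. Alternatively, one can read the same conclusion directly off Proposition~\ref{prop-val}(\ref{prop-val-2}): that statement says $X$ already carries the quotient topology induced by $\val{\cdot}$, i.e.\ $\val{\cdot}$ is a topological quotient map whose kernel equivalence is exactly $\sim$, and the induced map on the quotient is then a homeomorphism by the universal property of quotient maps.

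The main obstacle is exactly this last step, since it is the only place where the topology of $X$ (rather than mere set-theoretic bijectivity) is used; everything else is a formal consequence of the factorisation identity and the definition of $\sim$. The compactness argument removes any need to manipulate open sets explicitly, so in practice the proof is short once the bijectivity bookkeeping is in place.
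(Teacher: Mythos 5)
Your proof is correct and follows essentially the same route as the paper: the paper likewise notes that $\widehat{\val{\cdot}}$ is a bijection by construction and then obtains continuity of the inverse from the fact that $E^\omega/\mathord{\sim}$ is compact (being a continuous image of the compact space $E^\omega$) while $X$ is Hausdorff, so closed sets have closed images. Your additional spelling-out of surjectivity/injectivity and the alternative argument via Proposition~\ref{prop-val}(\ref{prop-val-2}) are fine but not needed beyond what the paper does.
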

\begin{proof}
By construction $\widehat{\val{\cdot}}$ is a bijection. It remains to show that its inverse $\widehat{\val{\cdot}}^{-1}$ is continuous as well. Let $A$ be a closed set in $E^\omega / \mathord{\sim}$. 
Since $X$ is compact and $q_\sim$ continuous, it follows that $E^\omega / \mathord{\sim}$ is compact as well. Therefore, $A$ is also compact and so is its continuous image $\widehat{\val{A}}$. As $X$ is Hausdorff, we obtain that $\widehat{\val{A}}$ is closed, i.e., $((\widehat{\val{\cdot}})^{-1})^{-1}[A]$ is closed.
\end{proof}

In what follows we will be interested in two sets of digits on the interval $\II \Def [-1, +1] \subseteq \RR$ furnished with the usual Euclidean metric. 

Let $\AV \Def \set{\av{i}}{i \in \SD}$ with $\SD \Def \{ -1, 0, +1\}$ and
\[
\av{i}(x) = (x + i)/2.
\]
Then $(\II, \AV)$ is a digit space. 
Note that for $[i_{0}, \ldots, i_{r-1}] \in \SD^{r}$, 
\[
\range(\av{i_{0}} \circ \cdots \circ \av{i_{r-1}}) = [\sum_{\nu <r} i_{\nu} \cdot 2^{-(\nu+1)} -1, \sum_{\nu <r} i_{\nu} \cdot 2^{-(\nu+1)} +1].
\]
 Hence, for $w \in \SD^{\omega}$ and $\alpha_{w} \in \AV^{\omega}$ with $\alpha_{w}(\nu) \Def \av{w_{\nu}}$, 
 \[
 \val{\alpha_{w}} = \sum_{\nu \ge 0} w_{\nu} \cdot 2^{-(\nu+1)},
 \] 
 that is $w$ is a \emph{signed digit representation} of $\val{\alpha_{w}}$. Therefore, we call $(\II, \AV)$ \emph{signed digit space}. It satisfies a stronger covering condition than (\ref{eq-ds}), which is needed in the development of most of the theory presented in \cite{bs,sp}: $(\II, \AV)$ is well-covering.

\begin{defi}\label{dn-wc}
A digit space $(X, E)$ is \emph{well-covering} if 
\[
X = \bigcup\set{\int(e[X])}{\mytextcolor{red}{e} \in E},
\]
where $\int(e[X])$ denotes the interior of $e[X]$.
\end{defi}

The other digit set we are going to consider leads to an important example of a digit space that is not well-covering.

Let $\GF \Def \set{g_i}{i \in \GC}$ with $\GC \Def \{ -1, 1 \}$ and 
\[
g_i(x) \Def -i \cdot (x-1)/2. 
\]

Then $g_{-1}$ and $g_1$ are contractions with $g_{-1}[\II] = [-1, 0]$ and $g_1[\II] = [0,+1]$. However, $0 \not\in [-1, 0) \cup (0, +1]$. Therefore,

\begin{lem}\label{lem-gnwc}
$(\II, \GF)$ is a digit space that is not well-covering.
\end{lem}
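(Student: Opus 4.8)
The plan is to split the claim into two independent parts: first that $(\II, \GF)$ is a digit space in the sense of the definition, and then that it fails the well-covering condition of Definition~\ref{dn-wc}.

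For the first part, I would note that $\II = [-1,+1]$ with the Euclidean metric is a non-empty compact metric space, and that $\GF = \{g_{-1}, g_1\}$ is a finite collection of self-maps of $\II$. Each $g_i(x) = -i \cdot (x-1)/2$ satisfies $|g_i(x) - g_i(y)| = |x-y|/2$ because $|i| = 1$, so every $g_i$ is a contraction with Lipschitz constant $1/2$; and the ranges already recorded before the statement, namely $g_{-1}[\II] = [-1,0]$ and $g_1[\II] = [0,+1]$, show that each $g_i$ maps $\II$ into $\II$. The covering condition~(\ref{eq-ds}) then follows at once from $g_{-1}[\II] \cup g_1[\II] = [-1,0] \cup [0,+1] = [-1,+1] = \II$.

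For the second part, I would compute the interiors $\int(g_i[\II])$ relative to the subspace $\II$. Since the neighbourhoods of the endpoints $\pm 1$ in $\II$ are one-sided, the point $-1$ lies in $\int(g_{-1}[\II])$ and the point $+1$ lies in $\int(g_1[\II])$; explicitly $\int(g_{-1}[\II]) = [-1,0)$ and $\int(g_1[\II]) = (0,+1]$. Taking the union yields $[-1,0) \cup (0,+1] = \II \setminus \{0\}$, so $0 \notin \bigcup\set{\int(g_i[\II])}{i \in \GC}$. Hence the equation required by Definition~\ref{dn-wc} fails, and $(\II, \GF)$ is not well-covering.

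The argument presents no genuine obstacle; the only point demanding care is that the interior must be taken relative to the subspace $\II$ rather than relative to $\RR$, which is why the endpoints $\pm 1$ still count as interior points. This subtlety does not affect the conclusion, however: the decisive feature is that $0$ is the common boundary point of the two closed images $[-1,0]$ and $[0,+1]$ and therefore belongs to neither interior, so it remains uncovered irrespective of the ambient space.
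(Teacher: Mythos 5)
Your proof is correct and follows essentially the same route as the paper: the paper's (inline) argument likewise records that $g_{-1}$, $g_1$ are contractions with $g_{-1}[\II]=[-1,0]$ and $g_1[\II]=[0,+1]$, so the covering condition holds, and then observes that $0 \not\in [-1,0)\cup(0,+1]$, i.e.\ $0$ lies in neither relative interior, so well-covering fails. Your added care about taking interiors relative to $\II$ (so that $\pm 1$ count as interior points) is exactly the reading the paper intends.
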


Note that $\GC^{\omega}$ is an extension of finite Gray code to infinite words.

By Proposition~\ref{prop-val}(\ref{prop-val-2}) we know that the metric topology on $\II$ is equivalent to the quotient topology induced by the coding map $\val{\cdot}_G$ associated with $(\II, \GF)$. Set
\[
\alpha \sim_G \beta \Longleftrightarrow \val{\alpha}_G = \val{\beta}_G,
\]
for $\alpha, \beta \in \GF^{\omega}$.

\begin{lem}\label{lem-simG}
For $\alpha, \beta \in \GF^{\omega}$,  $\alpha \sim_G \beta$ if, and only if, either $\alpha = \beta$, or the following Properties (\ref{cond-1}-\ref{cond-3}) hold for some $i \ge 0$:
\begin{enumerate}

\item\label{cond-1} For all $j < i$,  $\alpha_j = \beta_j$.

\item\label{cond-2} $\alpha_i = g_{-1}$ and $\beta_i = g_1$, or conversely, $\alpha_i = g_1$ and $\beta_i = g_{-1}$.

\item\label{cond-3} $\alpha_{i+1} = \beta_{i+1} = g_1$ and $\alpha_j = \beta_j = g_{-1}$, for all $j > i+1$.

\end{enumerate}
\end{lem}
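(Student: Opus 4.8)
The plan is to understand what $\val{\alpha}_G = \val{\beta}_G$ means concretely for the Gray code digits and then characterise exactly when two distinct infinite words can code the same real. Recall that $g_i(x) = -i\cdot(x-1)/2$, so $g_{-1}[\II] = [-1,0]$ and $g_1[\II] = [0,+1]$. The overlap of these two ranges is the single point $0$, and this is the source of all non-injectivity: two words can only disagree at a position $i$ where their accumulated range has narrowed down to a point that lies in both $g_{-1}[\II]$ and $g_1[\II]$, i.e.\ maps to the shared boundary value.

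First I would establish the easy direction: if $\alpha = \beta$ then trivially $\val{\alpha}_G = \val{\beta}_G$, and if Properties (\ref{cond-1}--\ref{cond-3}) hold I would compute both values explicitly. Using Property (\ref{cond-1}) the common prefix $g_{\alpha_0}\circ\cdots\circ g_{\alpha_{i-1}}$ is a common contraction, so it suffices to show the two tails starting at position $i$ have equal value. By Properties (\ref{cond-2}--\ref{cond-3}) the tails are, up to swapping, $g_{-1}$ followed by $g_1 g_{-1}^\omega$ versus $g_1$ followed by $g_1 g_{-1}^\omega$. I would compute that $g_1 g_{-1}^\omega$ codes $0$: the constant tail $g_{-1}^\omega$ has value the unique fixed point $\val{g_{-1}^\omega}_G$ of $g_{-1}$, which is $x = -(-1)(x-1)/2 = (x-1)/2$, giving $x=-1$; then $g_1(-1) = -1\cdot(-1-1)/2 = 1$, wait---one must be careful, so instead I would directly evaluate $g_{-1}(g_1(-1))$ and $g_1(g_1(-1))$ and check both equal $0$, since $g_{-1}(0)=g_1(0)=0$ is exactly the overlap point. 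This confirms the two tails code the same real, establishing the ``if'' direction.

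For the converse I would argue that if $\alpha \neq \beta$ but $\val{\alpha}_G = \val{\beta}_G$, then letting $i$ be the least index with $\alpha_i \neq \beta_i$, Property (\ref{cond-1}) holds by minimality, and the common prefix reduces the problem to two tails beginning with different digits that nevertheless code the same point. Since $g_{-1}[\II]\cap g_1[\II] = \{0\}$, the common value of the two tails must be $0$, forcing $\alpha_i,\beta_i$ to be $g_{-1},g_1$ in some order (Property (\ref{cond-2})), and forcing each remaining tail after position $i$ to code the preimage of $0$. The key computation is that $g_{-1}^{-1}(0) = g_1^{-1}(0) = 1$ and, from $g_1(x)=0 \Leftrightarrow x=1$ and $g_{-1}(x)=0 \Leftrightarrow x=1$, the only infinite word coding $1$ is $g_1 g_{-1}^\omega$; I would verify this uniqueness by noting $1$ lies in the interior of neither cell except as the boundary reached by exactly this word, which yields Property (\ref{cond-3}).

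The main obstacle I anticipate is the uniqueness claim in the converse: showing that $1$ (equivalently $0$ reached from below the shared boundary) has a \emph{unique} Gray code, namely $g_1 g_{-1}^\omega$, rather than admitting further branching. This is where the \emph{non}-well-covering nature of $(\II,\GF)$ matters: the point $0$ is the unique obstruction to injectivity, and one must check that no second ambiguity can arise further down the tail. I would handle this by an inductive argument on the narrowing intervals $\alpha^{<n}[\II]$, using weak hyperbolicity (the intersection contains at most one point) together with the explicit fixed-point computation to pin down the tail digit by digit. Once the tails are forced to agree after the single disagreement at position $i$, Properties (\ref{cond-1}--\ref{cond-3}) follow.
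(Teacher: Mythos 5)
Your proposal is correct, but your proof of the ``only if'' direction takes a genuinely different route from the paper's. The paper argues by brute-force interval arithmetic: assuming the first violation of Property~(\ref{cond-3}) occurs at position $j$, it computes images such as $(g_{-1}\circ g_1\circ g_{-1}^{\,n})[\II]=[-2^{-(n+1)},0]$ and $(g_1\circ g_1\circ g_{-1}^{\,n})[\II]=[0,2^{-(n+1)}]$ and derives a contradiction from the disjointness of the resulting dyadic intervals, doing this separately for $\alpha_{i+1}$, $\beta_{i+1}$ and for all later positions. You instead argue structurally: the identity $\val{\gamma}_G=g_{\gamma_0}(\val{\sigma\gamma}_G)$ for the shifted word $\sigma\gamma$ (valid because the digits are injective contractions on the compact set $\II$), together with injectivity of the common prefix $\alpha^{<i}=\beta^{<i}$, reduces everything to the tails; the common tail value lies in $g_{-1}[\II]\cap g_1[\II]=\{0\}$, hence equals $0$; taking preimages ($g_{-1}^{-1}(0)=g_1^{-1}(0)=1$) forces both tails after position $i$ to code $1$; and a digit-by-digit induction using $1\notin g_{-1}[\II]$ and $-1\notin g_1[\II]$ shows that $g_1g_{-1}^{\omega}$ is the \emph{unique} code of $1$, which is exactly Property~(\ref{cond-3}). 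This is shorter and more conceptual than the paper's case analysis, and it isolates the true source of the redundancy (the single overlap point $0$ and the rigidity of the codes of the endpoints $\pm1$); what the paper's computational proof buys in return is explicit formulas, e.g.\ $\val{\alpha}_G=\alpha^{<i}(0)$ and the interval identities, which are reused immediately after the lemma to describe the equivalence classes and to define the map $\Phi$. Two small blemishes, neither a gap: your first claim that $g_1g_{-1}^{\omega}$ codes $0$ is a slip (it codes $1$; you correct it yourself by computing $g_{\pm1}(g_1(-1))=0$), and ``weak hyperbolicity'' is not what drives the uniqueness of the code of $1$ --- it only ensures each word codes at most one point; the uniqueness comes from your endpoint argument.
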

\begin{proof}
Without restriction assume that $\alpha \not= \beta$ and let $i \ge 0$ be such that $\alpha^{<i} = \beta^{<i}$, $\alpha_i = g_{-1}$ and $\beta_i = g_1$. Moreover, let $\alpha_{i+1} = g_1 = \beta_{i+1}$ as well as $\alpha_j = g_{-1} = \beta_j$, for all $j > i+1$. Then we have for $n>0$ that
\begin{gather*}
g_{-1}^n[\II] = [-1, 2^{-(n-1)}-1], \quad (g_1 \circ g_{-1}^n)[\II] = [1-2^{-n}, 1], \\
(g_{-1} \circ g_1 \circ g_{-1}^n)[\II] = [-2^{-(n+1)}, 0], \quad\text{and}\quad (g_1 \circ g_1 \circ g_{-1}^n)[\II] = [0, 2^{-(n+1)}].
\end{gather*}
Therefore,
\[
\{ \val{\alpha}_G \} 
= \bigcap_{j \ge 0} \alpha^{< j}[\II] 
= \bigcap_{n \ge i} \alpha^{< i}[-2^{-(n+1)}, 0] 
\supseteq \alpha^{< i}[\bigcap_{n \ge i} [-2^{-(n+1)}, 0]] 
= \{ \alpha^{< i}(0) \},
\]
from which it follows that $\val{\alpha}_G = \alpha^{< i}(0)$. In the same way we obtain that $\val{\beta}_G = \beta^{< i}(0)$. Hence, $\val{\alpha}_G = \val{\beta}_G$.

For the verification of the converse implication assume that $\alpha \not= \beta$. Then there is a smallest $i \ge 0$ so that $\alpha_i \not= \beta_i$. It follows that either $\alpha_i = g_{-1}$ and $\beta_i = g_1$, or conversely, $\alpha_i = g_1$ and $\beta_i = g_{-1}$. Without restriction we only consider the first case.

Assume that $\alpha_{i+1} = g_{-1}$. Then 
\[
\val{\alpha}_G \in \bigcap_{n > i} \alpha^{< n}[\II] \subseteq \alpha^{< i}[g_{-1}[g_{-1}[\II]]] = \alpha^{< i}[[-1, -1/2]].
\]  

If $\beta_{i+1} = g_{-1}$, we similarly obtain that 
\[
\val{\beta}_G \in \beta^{< i}[g_1[g_{-1}[\II]]] = \beta^{< i}[[1/2, 1]].
\]
Since $\val{\alpha}_G = \val{\beta}_G$,  $\alpha^{< i} = \beta^{< i}$, and the functions $g_{-1}$ and $g_1$ are both one-to-one, it follows that 
\[
\val{\alpha}_G \in \alpha^{< i}[[-1, -1/2] \cap [1/2, 1]],
\]
which is impossible.

On the other hand, if $\beta_{i+1} = g_1$, we have that $\val{\beta}_G \in \beta^{< i}[g_1[g_1[\II]]] = \beta^{< i}[[0, 1/2]]$, and hence that $\val{\alpha}_G \in \alpha^{< i}[[-1, -1/2] \cap [0, 1/2]]$, which is impossible as well.

It follows that $\alpha_{i+1} = g_1$, which means that $\val{\alpha}_G \in \alpha^{< i}[g_{-1}[g_1[\II]]]= \alpha^{< i}\mytextcolor{red}{[[-1/2, 0]]}$. Thus, $\beta_{i+1} = g_1$ as well.

Finally, suppose that there is a minimal $j > i+1$ such that $\alpha_j = g_1$ and $\beta_{i+2} = \cdots = \beta_{j-1} = g_{-1}$, or $\beta_j = g_1$ and $\alpha_{i+2} = \cdots = \alpha_{j-1} = g_{-1}$. Again, we only consider the first case. Then 
\[
\alpha = \alpha_0 \ldots \alpha_{i-1} g_{-1} g_1 g_{-1} \ldots g_{-1} g_1 \alpha_{j+1} \ldots \quad\text{and}\quad
\beta = \beta_0 \ldots \beta_{i-1} g_1 g_1 g_{-1} \ldots g_{-1} \beta_j \beta_{j+1} \ldots,
\]
with $\alpha^{< i} = \beta^{< i}$. It follows that
\begin{align*}
\val{\alpha}_G \in \alpha^{< i}[g_{-1}[g_1[g_{-1}^{j-(i+2)}[g_1[\II]]]]]
&= \alpha^{< i}[g_{-1}[g_1[g_{-1}^{j-(i+2)}[[0,1]]]]] \\
&= \alpha^{< i}[g_{-1}[g_1[[-1+2^{i+2-j}, -1+2^{i+3-j}]]]] \\
&= \alpha^{< i}[g_{-1}[[1-2^{i+2-j}, 1-2^{i+1-j}]]] \\
&= \alpha^{< i}[[-2^{i+1-j}, -2^{i-j}]]
\end{align*}
and $\val{\beta}_G \in \alpha^{< i}[g_1[g_1[g_{-1}^{j-(i+2)}[\beta_j[\II]]]]]$.

Let us first consider the case that $\beta_j = g_1$. Then we have that
\[
\val{\beta}_G \in \alpha^{< i}[g_1[[1-2^{i+2-j}, 1-2^{i+1-j}]]] = \alpha^{< i}[[2^{i-j}, 2^{i+1-j}]].
\]
Hence, $\val{\alpha}_G \in \alpha^{< i}[[-2^{i+1-j}, -2^{i-j}] \cap [2^{i-j}, 2^{i+1-j}]]$, which is impossible.

If $\beta_j = g_{-1}$, we obtain that 
\begin{align*}
\val{\beta}_G \in \alpha^{< i}[g_1[g_1[g_{-1}^{j-(i+2)}[g_1[\II]]]]] 
&= \alpha^{< i}[g_1[g_1[g_{-1}^{j-(i+2)}[[-1, 0]]]]] \\
&= \alpha^{< i}[g_1[g_1[[-1, 2^{i+2-j}-1]]]] \\
&= \alpha^{< i}[g_1[[1-2^{i+1-j}, 1]]] \\
&= \alpha^{< i}[[0, 2^{i-j}]].
\end{align*}
Thus, $\val{\alpha}_G \in \alpha^{< i}[[-2^{i+1-j}, -2^{i-j}] \cap [0, 2^{i-j}]]$, which is impossible again.

By symmetry we obtain similar contradictions in the other cases.
\end{proof}

It follows that each equivalence class $[\alpha ]_{\sim_G}$ contains at most two elements, and if so, then the two differ in exactly one place, which means that the information coming with this entry is not used in the computation of the coding map.

If $[\alpha ]_{\sim_G} = \{\alpha \}$, then $\widehat{\val{[\alpha] _{\sim_G}}} = \val{\alpha}_G$. In the other case $[\alpha]_{\sim_G} = \{ \alpha, \beta\}$ and there is some uniquely determined index $i \ge 0$ (also denoted by $i(\alpha)$) so that $\alpha = \alpha^{< i} g_{-1} g_1 g_{-1}^\omega$ and $\beta= \alpha^{< i} g_1 g_1 g_{-1}^\omega$, or vice versa. Then $\widehat{\val{[\alpha]_{\sim_G}}_G} = \val{\alpha}_G = \val{\beta}_G$ and hence
\begin{align*}
\{\widehat{\val{[\alpha]_{\sim_G}}_G}\} 
&= \{\val{\alpha}_G\} \cup \{\val{\beta}_G\} \\
&= \bigcap_{n \ge 0} \alpha^{< i}[g_{-1}[g_1[g_{-1}^n[\II]]]] \cup \bigcap_{n \ge 0} \alpha^{< i}[g_1[g_1[g_{-1}^n[\II]]]] \\
&= \alpha^{< i}[g_{-1}[\bigcap_{n \ge 0} g_1[g_{-1}^n [\II]]] \cup g_1[\bigcap_{n \ge 0} g_1[g_{-1}^n[\II]]]] \\
&= \alpha^{< i}[(g_{-1} \cup g_1)[\bigcap_{n \ge 0} g_1[g_{-1}^n[\II]]]],
\end{align*}
where the multi-valued function $g_{-1} \cup g_1$ is defined by $(g_{-1} \cup g_1)(x) \Def \{ g_{-1}(x), g_1(x) \}$. Note that in this case for $Y \subseteq \II$,
\[
(g_{-1} \cup g_1)[Y] = \bigcup\set{(g_{-1} \cup g_1)(x)}{x \in Y} = \bigcup\set{ \{ g_{-1}(x), g_1(x) \}}{x \in Y} = g_{-1}[Y] \cup g_1[Y].
\]

Let $\bot \not\in \GC$ be a new symbol ($\bot$ for \emph{unspecified}) and $g_\bot \Def g_{-1} \cup g_1$. It follows that 
\[
\widehat{\val{[\alpha]_{\sim_G}}_G} = \bigcap_{n \ge 0} \alpha^{< i}[g_\bot[g_1[g_{-1}^n[\II]]]].
\]
Set $\overline{\GF} = \{ g_\bot, g_{-1}, g_1 \}$ and define $\fun{\Phi}{\GF^\omega}{\overline{\GF}^\omega}$ by 
\[
\Phi(\alpha) = \begin{cases}
                                \alpha^{<i(\alpha)} g_\bot g_1 g_{-1}^\omega & \text{ if $\card{[\alpha]_\sim} = 2$,}\\
                                \alpha & \text{ otherwise.}
                      \end{cases}
\]                                
Then 
\[
\Phi(\alpha) = \Phi(\beta) \Longleftrightarrow \alpha \sim_G \beta.
\]
The elements of $\widehat{G} \Def \range({\Phi})$ are called \emph{modified Gray code expansions} of the real numbers in $\II$, or just \emph{Gray code} 
\cite{ts}\footnote{Note that in recent research also words $\alpha \in \GF^{\omega}$ with $\card{[\alpha]_{\sim}} = 2$ are considered as valid Gray code \cite{bmst,btifp}.}. 
Topologise $\widehat{G}$ with the topology co-induced by $\Phi$. As we have seen earlier in this section, $\overline{\GF}^\omega$ also possesses a canonical metric. Its restriction to $\widehat{G}$ will be denoted by $\widehat{\delta}$,  whereas $\delta$ denotes the corresponding metric on $\GF^\omega$. For $n \ge 0$, $\widehat{\alpha} \in \widehat{G}$ and $\alpha \in \GF^\omega$, let $\ball{\widehat{\delta}}{\widehat{\alpha}}{2^{-n}}$ and $\ball{\delta}{\alpha}{2^{-n}}$ be the balls in $\widehat{G}$ and $\GF^\omega$, respectively, of radius $2^{-n}$ around $\widehat{\alpha}$ and $\alpha$.

\begin{lem}
$\Phi^{-1}[\ball{\widehat{\delta}}{\widehat{\alpha}}{2^{-n}}]
= \bigcup\set{\ball{\delta}{\beta}{2^{-n}}}{\beta \in \Phi^{-1}[\{ \widehat{\alpha} \}]}$.
\end{lem}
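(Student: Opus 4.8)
The plan is to establish the two inclusions separately and to reduce each of them to a comparison of finite prefixes. Recall that for both ultrametrics a sequence lies in the ball of radius $2^{-n}$ about a centre precisely when it agrees with the centre on the entries $0,\dots,n$, where for $\widehat{\delta}$ the symbol $g_\bot$ is read as compatible with every digit (this is exactly the feature of the metric on $\overline{\GF}^\omega$ that $g_\bot = g_{-1} \cup g_1$ forces). Thus the whole statement reduces to the claim that, for an arbitrary $\gamma \in \GF^\omega$, the length-$(n{+}1)$ prefix of $\Phi(\gamma)$ is compatible with that of $\widehat{\alpha}$ if and only if $\gamma$ shares its length-$(n{+}1)$ prefix with some element of $\Phi^{-1}[\{\widehat{\alpha}\}]$.

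Before treating the inclusions I would record two structural facts. First, by the definition of $\Phi$ together with Lemma~\ref{lem-simG}, $\Phi(\gamma)$ and $\gamma$ coincide in every position except, when $\card{[\gamma]_{\sim_G}} = 2$, at the index $i(\gamma)$, where $\gamma$ carries a digit and $\Phi(\gamma)$ carries $g_\bot$; in that exceptional case the tail of $\gamma$ beyond $i(\gamma)$ is already $g_1 g_{-1}^\omega$. Second, the fibre $\Phi^{-1}[\{\widehat{\alpha}\}]$ is the singleton $\{\widehat{\alpha}\}$ when $\widehat{\alpha}$ contains no $g_\bot$, and equals the pair of twin codes $\{\sigma g_{-1} g_1 g_{-1}^\omega,\ \sigma g_1 g_1 g_{-1}^\omega\}$ when $\widehat{\alpha} = \sigma g_\bot g_1 g_{-1}^\omega$. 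From the first fact one obtains that $\Phi$ is non-expansive: replacing a single entry by $g_\bot$ can only postpone the first incompatibility, so $\widehat{\delta}(\Phi(\gamma),\Phi(\gamma')) \le \delta(\gamma,\gamma')$ for all $\gamma,\gamma'$.

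The inclusion $\supseteq$ then follows immediately: if $\gamma \in \ball{\delta}{\beta}{2^{-n}}$ for some $\beta \in \Phi^{-1}[\{\widehat{\alpha}\}]$, then by non-expansiveness $\widehat{\delta}(\Phi(\gamma),\widehat{\alpha}) = \widehat{\delta}(\Phi(\gamma),\Phi(\beta)) \le \delta(\gamma,\beta) < 2^{-n}$, so $\gamma$ lies in the left-hand side. Concretely, $\gamma$ and $\beta$ share their first $n{+}1$ entries, and the prefixes of $\Phi(\gamma)$ and $\Phi(\beta)=\widehat{\alpha}$ differ from this common prefix only at the respective indices $i(\gamma)$ and $i(\beta)$, where a digit is overwritten by the compatible symbol $g_\bot$.

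The inclusion $\subseteq$ is the main obstacle and is the step I expect to carry all the weight. Starting from a $\gamma$ with $\widehat{\delta}(\Phi(\gamma),\widehat{\alpha}) < 2^{-n}$, I must exhibit a genuine element $\beta$ of $\Phi^{-1}[\{\widehat{\alpha}\}]$ that agrees with $\gamma$ on the entries $0,\dots,n$. The idea is to keep the first $n{+}1$ entries of $\gamma$ and to continue with the tail of an appropriate preimage of $\widehat{\alpha}$; the delicate point is the behaviour at the $g_\bot$-position of $\widehat{\alpha}$, should it fall within the prefix. Here I would split into cases according to whether $\widehat{\alpha}$ contains a $g_\bot$ and whether its index $i$ satisfies $i \le n$ or $i > n$, and symmetrically according to whether $\gamma$ is itself a twin with $i(\gamma) \le n$. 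In the decisive sub-case the compatibility of the prefixes forces $\gamma$ to carry, at the $g_\bot$-position of $\widehat{\alpha}$, one of the two admissible digits $g_{-1}, g_1$; choosing the corresponding twin from the fibre then yields a $\beta$ that both maps to $\widehat{\alpha}$ and matches $\gamma$ on the prefix, and Lemma~\ref{lem-simG} guarantees that this glued sequence is a legal code lying in $\Phi^{-1}[\{\widehat{\alpha}\}]$. The bulk of the work, and the only real subtlety, is this index bookkeeping around the $g_\bot$-position, ensuring that the constructed sequence is well formed and lands in the correct fibre.
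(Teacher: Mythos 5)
Your argument rests on a reading of $\widehat{\delta}$ that is not the paper's, and both halves of the proof break because of it. The paper defines $\widehat{\delta}$ as the restriction to $\widehat{G}$ of the canonical ultrametric on $\overline{\GF}^\omega$, i.e.\ $2^{-\min\set{k}{\alpha_k \ne \beta_k}}$, in which $g_\bot$ is an ordinary third letter matching only itself; accordingly, the paper's proof unfolds membership in $\ball{\widehat{\delta}}{\widehat{\alpha}}{2^{-n}}$ as the \emph{literal} prefix equality $\Phi(\gamma)^{<n+1} = \widehat{\alpha}^{<n+1}$ and then splits into three cases according to where (if at all) $g_\bot$ occurs in $\widehat{\alpha}$. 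The ``compatibility'' reading you start from ($g_\bot$ matches every digit) is not induced by any metric---compatibility is not even transitive---and is not what $\widehat{\delta}$ does. Your key tool then fails outright: $\Phi$ is \emph{not} non-expansive for the actual metrics, indeed it is discontinuous at every twin code. Concretely, for $\gamma' = g_{-1}g_1g_{-1}^\omega$ (so $\Phi(\gamma') = g_\bot g_1 g_{-1}^\omega$) and $\gamma_k = g_{-1}g_1g_{-1}^k g_1^\omega$ (not a twin, so $\Phi(\gamma_k) = \gamma_k$) we have $\delta(\gamma_k,\gamma') = 2^{-(k+2)} \to 0$ while $\widehat{\delta}(\Phi(\gamma_k),\Phi(\gamma')) = 1$ for all $k$, the images differing already at position $0$. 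So your one-line proof of $\supseteq$ is valid only relative to your own reading, where ``non-expansiveness'' holds trivially because $g_\bot$ matches everything.

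The half you defer as ``index bookkeeping'' is worse than incomplete: under your reading it is false, so no bookkeeping can close it. Take $\widehat{\alpha} = g_\bot g_1 g_{-1}^\omega$ and $n = 2$, so the fibre is $\{\beta^-,\beta^+\}$ with $\beta^{\pm} = g_{\pm1}g_1g_{-1}^\omega$, and consider the twin pair $\gamma = g_{-1}g_1g_1g_1g_{-1}^\omega$, $\gamma^\sharp = g_{-1}g_1g_{-1}g_1g_{-1}^\omega$, for which $\Phi(\gamma) = \Phi(\gamma^\sharp) = g_{-1}g_1g_\bot g_1 g_{-1}^\omega$. Under your reading $\gamma$ lies in the left-hand side (the three leading entries of $\Phi(\gamma)$ are compatible with those of $\widehat{\alpha}$), but $\gamma_2 = g_1$, so $\gamma$ agrees with neither $\beta^-$ nor $\beta^+$ on positions $0,1,2$; your recipe returns $\beta^-$, which disagrees with $\gamma$ at position $2$. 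The configuration your sketch never confronts is exactly this: $\gamma$ itself a twin whose ambiguous position falls inside the window at an index different from $\widehat{\alpha}$'s. The example also exposes an obstruction that no choice of reading removes: the right-hand side contains $\gamma^\sharp$ (which lies in $\ball{\delta}{\beta^-}{2^{-2}}$) but not its twin $\gamma$, hence is not closed under $\sim_G$, whereas every set of the form $\Phi^{-1}[\cdot]$ is closed under $\sim_G$, since $\Phi(\gamma)=\Phi(\gamma^\sharp)$ is equivalent to $\gamma \sim_G \gamma^\sharp$. Consequently the two sides differ for this $\widehat{\alpha}$ and $n$ under either reading: with yours, $\subseteq$ fails as above; with the paper's literal reading, $\supseteq$ fails ($\Phi(\gamma^\sharp)$ begins with $g_{-1}$, not $g_\bot$), and the same example refutes the step ``$\gamma^{<n+1}=\beta^{<n+1}$ implies $\Phi(\gamma)^{<n+1}=\Phi(\beta)^{<n+1}$'' used in the paper's own ``conversely'' direction, because $\Phi$ may create a $g_\bot$ inside the prefix depending on the tail. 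What actually survives is one inclusion per reading (literal gives $\subseteq$, compatibility gives $\supseteq$); an equality would require a right-hand side closed under $\sim_G$, not a refined proof.
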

\begin{proof}
Three cases are to be considered.

\begin{ncase}
$\widehat{\alpha}_m = g_\bot$, for some $m > n$.
\end{ncase}
Then $\widehat{\alpha}_0, \ldots, \widehat{\alpha}_{n} \in \GF$ and hence, for any $\beta \in \Phi^{-1}[\{ \widehat{\alpha} \}]$, $\beta_i = \widehat{\alpha}_i$, for $i \le n$. Therefore, if $\gamma \in \Phi^{-1}[\ball{\widehat{\delta}}{\widehat{\alpha}}{2^{-n}}]$, i.e., if $\Phi(\gamma)^{< n+1} = \widehat{\alpha}^{< n+1}$, then $\gamma^{< n+1} = \beta^{< n+1}$, for any $\beta \in \Phi^{-1}[\{ \widehat{\alpha} \}]$, which means that $\gamma \in \ball{\delta}{\beta}{2^{-n}}$, for any such $\beta$.

Conversely, if $\gamma^{< n+1} = \beta^{< n+1}$, for some $\beta \in \Phi^{-1}[\{ \widehat{\alpha} \}]$, then $\Phi(\gamma)^{< n+1} = \Phi(\beta)^{< n+1} = \widehat{\alpha}^{< n+1}$, i.e., $\Phi(\gamma) \in \ball{\widehat{\delta}}{\widehat{\alpha}}{2^{-n}}$.

\begin{ncase}
$\widehat{\alpha}_m = g_\bot$, for some $m \le n$.
\end{ncase}
It follows that $\widehat{\alpha}_0, \ldots, \widehat{\alpha}_{m-1} \in \GF$. Moreover, if $\gamma \in \Phi^{-1}[\ball{\widehat{\delta}}{\widehat{\alpha}}{2^{-n}}]$, then $\gamma^{< m} = \widehat{\alpha}^{< m}$, $\gamma_{m+1} = g_1 = \widehat{\alpha}_{m+1}$, $\gamma_j = g_{-1} = \widehat{\alpha}_j$, for $m+1 < j \le n$, and $\gamma_m = g_{-1}$ or $\gamma_m = g_1$.
Set $\beta = \widehat{\alpha}^{< m} \gamma_m g_1 g_{-1}^\omega$. Then $\gamma \in \ball{\delta}{\beta}{2^{-n}}$ and $\beta \in \Phi^{-1}[\{ \widehat{\alpha} \}]$.

Conversely, if $\gamma^{< n+1} = \beta^{< n+1}$, for some $\beta \in \Phi^{-1}[\{ \widehat{\alpha} \}]$, then $\Phi(\gamma)^{< n+1} = \Phi(\beta)^{< n+1} = \widehat{\alpha}^{< n+1}$, which means that $\gamma \in \Phi^{-1}[\ball{\widehat{\delta}}{\widehat{\alpha}}{2^{-n}}]$.

\begin{ncase}
$\widehat{\alpha}_i \in \GF$, for all $i \ge 0$.
\end{ncase}
In this case we have for $\gamma \in \GF^\omega$ that $\Phi(\gamma)^{< n+1} = \widehat{\alpha}^{< n+1}$, exactly if $\gamma^{< n+1} = \widehat{\alpha}^{< n+1}$, i.e., $\Phi^{-1}[\ball{\widehat{\delta}}{\widehat{\alpha}}{2^{-n}}] = \ball{\delta}{\widehat{\alpha}}{2^{-n}}$.
\end{proof}

This shows that the topology on $\widehat{G}$ co-induced by $\Phi$ is finer than the metric topology. We will now derive the converse.

Let $U \subseteq \widehat{G}$ be open in the topology co-induced by $\Phi$ and $\widehat{\alpha} \in U$. Then $\Phi^{-1}[U]$ is open in the metric topology on $\GF^\omega$ and $\beta \in \Phi^{-1}[U]$, for all $\beta \in \Phi^{-1}[\{ \widehat{\alpha} \}]$. Note that the latter set is finite. Hence, there is some $n \ge 0$ so that for all $\beta \in \Phi^{-1}[\{ \widehat{\alpha} \}]$, $\ball{\delta}{\beta}{2^{-n}} \subseteq \Phi^{-1}[U]$.

\begin{lem}
$\ball{\widehat{\delta}}{\widehat{\alpha}}{2^{-n}} \subseteq U$.
\end{lem}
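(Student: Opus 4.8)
The plan is to read the inclusion off the preceding lemma essentially for free, since that lemma already identifies the whole preimage
$\Phi^{-1}[\ball{\widehat{\delta}}{\widehat{\alpha}}{2^{-n}}]$
with the union $\bigcup\set{\ball{\delta}{\beta}{2^{-n}}}{\beta \in \Phi^{-1}[\{\widehat{\alpha}\}]}$. I would combine this with two facts already on the table: that $\Phi$ maps $\GF^\omega$ onto $\widehat{G}$ by the very definition $\widehat{G} = \range(\Phi)$, and that in the discussion just before the statement the index $n$ was chosen precisely so that $\ball{\delta}{\beta}{2^{-n}} \subseteq \Phi^{-1}[U]$ holds for every $\beta \in \Phi^{-1}[\{\widehat{\alpha}\}]$.

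First I would take an arbitrary $\widehat{\gamma} \in \ball{\widehat{\delta}}{\widehat{\alpha}}{2^{-n}}$ and, using surjectivity of $\Phi$ onto $\widehat{G}$, pick some $\gamma \in \GF^\omega$ with $\Phi(\gamma) = \widehat{\gamma}$. Since $\widehat{\gamma}$ lies in the ball, $\gamma$ lies in $\Phi^{-1}[\ball{\widehat{\delta}}{\widehat{\alpha}}{2^{-n}}]$, so by the preceding lemma there is some $\beta \in \Phi^{-1}[\{\widehat{\alpha}\}]$ with $\gamma \in \ball{\delta}{\beta}{2^{-n}}$. Then the choice of $n$ gives $\ball{\delta}{\beta}{2^{-n}} \subseteq \Phi^{-1}[U]$ for this $\beta$, whence $\gamma \in \Phi^{-1}[U]$ and therefore $\widehat{\gamma} = \Phi(\gamma) \in U$. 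As $\widehat{\gamma}$ was arbitrary, this establishes $\ball{\widehat{\delta}}{\widehat{\alpha}}{2^{-n}} \subseteq U$.

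Taken together with the earlier direction --- the preceding lemma showed that the co-induced topology on $\widehat{G}$ is finer than the metric one --- this completes the argument that every set $U$ open in the co-induced topology is also open in the metric topology: each of its points $\widehat{\alpha}$ is an interior point witnessed by the ball $\ball{\widehat{\delta}}{\widehat{\alpha}}{2^{-n}}$. Hence the two topologies on $\widehat{G}$ coincide.

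I do not expect any real obstacle here, since the combinatorial heavy lifting was carried out in the case analysis of the preceding lemma. The only points needing a moment's care are purely bookkeeping: that membership $\widehat{\gamma} \in U$ is equivalent to membership of any one of its $\Phi$-preimages in $\Phi^{-1}[U]$ (immediate from $U \subseteq \widehat{G}$ and the definition of the preimage), and that $\Phi^{-1}[\{\widehat{\alpha}\}]$ is finite, so that a single $n$ serves simultaneously for all the finitely many $\beta$ --- a fact already secured in the discussion preceding the statement.
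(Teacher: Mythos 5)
Your proof is correct, but it takes a genuinely different route from the paper's. The paper does not use the preceding lemma as a black box: its proof of this statement repeats the case analysis on the shape of $\widehat{\alpha}$ (whether $\widehat{\alpha}_i \in \GF$ for all $i$, or $\widehat{\alpha}_m = g_\bot$ for some $m$), and in each case re-derives by hand that any preimage $\gamma$ of a point $\widehat{\gamma} \in \ball{\widehat{\delta}}{\widehat{\alpha}}{2^{-n}}$ satisfies $\gamma^{<n+1} = \beta^{<n+1}$ for one of the explicitly constructed $\beta \in \Phi^{-1}[\{\widehat{\alpha}\}]$, whence $\gamma \in \Phi^{-1}[U]$ and $\widehat{\gamma} \in U$. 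You instead observe that the preceding lemma already says exactly what is needed: $\Phi^{-1}[\ball{\widehat{\delta}}{\widehat{\alpha}}{2^{-n}}] = \bigcup\set{\ball{\delta}{\beta}{2^{-n}}}{\beta \in \Phi^{-1}[\{\widehat{\alpha}\}]} \subseteq \Phi^{-1}[U]$ by the choice of $n$, and since both the ball and $U$ are subsets of $\widehat{G} = \range(\Phi)$, an inclusion of preimages transfers back through the surjection $\Phi$: pick any $\Phi$-preimage of $\widehat{\gamma}$, land it in $\Phi^{-1}[U]$, and conclude $\widehat{\gamma} \in U$. This is a real simplification --- it isolates the only two facts actually used (the preceding lemma, which holds for arbitrary $\widehat{\alpha} \in \widehat{G}$ and $n \ge 0$, and surjectivity of $\Phi$ onto $\widehat{G}$) and eliminates the duplicated combinatorial work; the paper's version buys only explicit witnesses $\beta$, $\widetilde{\beta}$ in each case, at the cost of redundancy, since all the case-analysis content is already encapsulated in the proof of the preceding lemma. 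Your closing remarks (reducing $\widehat{\gamma} \in U$ to membership of a preimage in $\Phi^{-1}[U]$, and the finiteness of $\Phi^{-1}[\{\widehat{\alpha}\}]$ securing a single $n$) are exactly the right bookkeeping points, and both are indeed available from the setup preceding the statement.
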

\begin{proof}
Similarly to the preceding proof we consider the following cases.

\begin{ncase}
$\widehat{\alpha}_i \in \GF$, for all $i \ge 0$.
\end{ncase}
Let $\beta \in \Phi^{-1}[\{ \widehat{\alpha} \}]$, then $\widehat{\alpha} = \Phi(\beta) = \beta$ in this case, and hence $\ball{\widehat{\delta}}{\widehat{\alpha}}{2^{-n}} \subseteq \Phi[\ball{\delta}{\beta}{2^{-n}}] \subseteq U$.

\begin{ncase}
$\widehat{\alpha}_m = g_\bot$, for some $m \ge 0$.
\end{ncase}
Let $\beta, \widetilde{\beta} \in \GF^\omega$ such that $\beta_j = \widetilde{\beta}_j = \widehat{\alpha}_j$, for all $j \ge 0$ with $j \not= m$, $\beta_m = g_{-1}$, and $\widetilde{\beta}_j = g_1$. Then $\{ \beta, \widetilde{\beta} \} = \Phi^{-1}[\{ \widehat{\alpha} \}]$. Hence, $\ball{\delta}{\beta}{2^{-n}} \cup \ball{\delta}{\widetilde{\beta}}{2^{-n}} \subseteq \Phi^{-1}[U]$. 
Now, let $\widehat{\gamma} \in \ball{\widehat{\delta}}{\widehat{\alpha}}{2^{-n}}$ and $\gamma \in \Phi^{-1}[\{ \widehat{\gamma} \}]$. Then $\gamma^{< n+1} =  \beta^{< n+1}$ or $\gamma^{< n+1} =  \widetilde{\beta}^{< n+1}$, i.e, $\gamma \in \ball{\delta}{\beta}{2^{-n}} \cup \ball{\delta}{\widetilde{\beta}}{2^{-n}}$, from which we obtain that $\widehat{\gamma} \in U$. Thus, $\ball{\widehat{\delta}}{\widehat{\alpha}}{2^{-n}} \subseteq U$.
\end{proof}

\begin{prop}
The metric topology on $\widehat{G}$ is equivalent to the topology co-induced by $\Phi$.
\end{prop}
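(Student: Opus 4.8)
The plan is simply to assemble the two lemmas immediately preceding the proposition, using the fact that two topologies on the same underlying set coincide precisely when each is finer than the other, and that it suffices to verify this on a neighbourhood base. The $\widehat{\delta}$-balls $\ball{\widehat{\delta}}{\widehat{\alpha}}{2^{-n}}$ form a base of the metric topology on $\widehat{G}$, so the whole argument can be phrased in terms of these balls and their $\Phi$-preimages.

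First I would show that the topology co-induced by $\Phi$ is finer than the metric topology; this is the content of the remark following the first lemma. It is enough to check that each basic metric ball $\ball{\widehat{\delta}}{\widehat{\alpha}}{2^{-n}}$ is open in the co-induced topology, i.e.\ that $\Phi^{-1}[\ball{\widehat{\delta}}{\widehat{\alpha}}{2^{-n}}]$ is $\delta$-open in $\GF^\omega$. By the first lemma this preimage equals $\bigcup\set{\ball{\delta}{\beta}{2^{-n}}}{\beta \in \Phi^{-1}[\{ \widehat{\alpha} \}]}$, a union of $\delta$-balls and hence $\delta$-open. Consequently every metric-open subset of $\widehat{G}$ is co-induced open.

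For the reverse inclusion I would take an arbitrary set $U$ open in the co-induced topology and an arbitrary point $\widehat{\alpha} \in U$. Then $\Phi^{-1}[U]$ is $\delta$-open and contains the finite set $\Phi^{-1}[\{ \widehat{\alpha} \}]$ (finiteness being guaranteed by the earlier analysis of the $\sim_G$-classes, each of which has at most two elements), so a single radius $2^{-n}$ can be chosen with $\ball{\delta}{\beta}{2^{-n}} \subseteq \Phi^{-1}[U]$ for every $\beta \in \Phi^{-1}[\{ \widehat{\alpha} \}]$, exactly as set up before the second lemma. The second lemma then delivers $\ball{\widehat{\delta}}{\widehat{\alpha}}{2^{-n}} \subseteq U$. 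As $\widehat{\alpha} \in U$ was arbitrary, $U$ is a neighbourhood of each of its points in the metric topology and is therefore metric-open. Hence the two topologies have the same open sets and coincide.

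The genuinely hard work has already been discharged in the two lemmas, whose proofs rest on the delicate case distinctions according to whether and where the symbol $g_\bot$ occurs; by comparison the present step is only bookkeeping. The one point that deserves care, and which I would make explicit, is the passage to a \emph{uniform} radius $2^{-n}$ in the reverse inclusion: this is precisely where the finiteness of $\Phi^{-1}[\{ \widehat{\alpha} \}]$ (and hence, ultimately, the earlier structural lemma on $\sim_G$-classes) is indispensable, since without it the individual radii supplied by openness of $\Phi^{-1}[U]$ need not admit a common lower bound.
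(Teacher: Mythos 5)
Your proposal is correct and follows exactly the paper's intended argument: the proposition is precisely the assembly of the two preceding lemmas, with the first giving that metric balls are open in the co-induced topology and the second (together with the finiteness of the fibres $\Phi^{-1}[\{\widehat{\alpha}\}]$, which licenses the uniform radius $2^{-n}$) giving the converse. Your explicit emphasis on where that finiteness is needed matches the paper's own remark preceding the second lemma.
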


\section{Inductive and co-inductive definitions}\label{sec-indef}
Let $X$ be a set and $\PPP(X)$ its powerset. An operator $\fun{\Phi}{\PPP(X)}{\PPP(X)}$ is \emph{monotone} if for all $Y, Z \subseteq X$,
\[
\text{
if $Y \subseteq Z$, then $\Phi(Y) \subseteq \Phi(Z)$;
}
\]
and a set $Y \subseteq X$ is \emph{$\Phi$-closed} (or a pre-fixed point of $\Phi$) if $\Phi(Y) \subseteq Y$. Since $\PPP(X)$ is a complete lattice, every monotone operator $\Phi$ has a least fixed point $\mu\Phi\in\PPP(X)$ by the Knaster-Tarski Theorem. We often write
\[
P(x) \overset{\mu}{=} \Phi(P)(x),
\]
instead of $P = \mu\Phi$. $\mu\Phi$ can be defined to be the least $\Phi$-closed subset of $X$. Thus, we have the \emph{induction principle} stating that for every $Y \subseteq X$,
\[\text{
If $\Phi(Y) \subseteq Y$ then $\mu\Phi \subseteq Y$.
}\] 

Dual to inductive definitions are \emph{co-inductive definitions}. A subset $Y$ of $X$ is called \emph{$\Phi$-co-closed} (or a post-fixed point of $\Phi$) if $Y \subseteq \Phi(Y)$. By duality, every monotone $\Phi$ has a largest fixed point $\nu\Phi$ which can be defined as the largest $\Phi$-co-closed subset of $\Phi$. So, we have the \emph{co-induction principle} stating that for all $Y \subseteq X$,
\[\text{If $Y \subseteq \Phi(Y)$ then $Y \subseteq \nu\Phi$.}\]

Note that for $P \subseteq X$ we also write
\[
P(x) \overset{\nu}{=} \Phi(P)(x)
\]
instead of $P = \nu\Phi$. 

For monotone operators $\fun{\Phi, \Psi}{\PPP(X)}{\PPP(X)}$ define
\[
\Phi \subseteq \Psi \Def (\forall Y \subseteq X)\, \Phi(Y) \subseteq \Psi(Y).
\]
It is easy to see that the operation $\nu$ is monotone, i.e., if $\Phi \subseteq \Psi$, then $\nu\Phi \subseteq \nu\Psi$. This allows to derive the following strengthening of the co-induction principle.

\begin{lem}[\textbf{Strong Co-induction Principle~\cite{btifp}}]\label{lem-strong}\sloppy
Let $\fun{\Phi}{\PPP(X)}{\PPP(X)}$ be a monotone operator. Then: 
\[\text{
If $Y \subseteq \Phi(Y \cup \nu\Phi)$ then $Y \subseteq \nu\Phi$.
}\]
\end{lem}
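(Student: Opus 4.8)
The plan is to deduce the strong principle from the ordinary co-induction principle by applying the latter to the enlarged set $Z \Def Y \cup \nu\Phi$. Since $Y \subseteq Z$, it suffices to establish $Z \subseteq \nu\Phi$, and for this it is enough to check that $Z$ is $\Phi$-co-closed, i.e.\ $Z \subseteq \Phi(Z)$; the co-induction principle then yields $Z \subseteq \nu\Phi$ at once, whence $Y \subseteq \nu\Phi$.

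So the core of the argument is to verify $Y \cup \nu\Phi \subseteq \Phi(Y \cup \nu\Phi)$, which I would split into the two inclusions $Y \subseteq \Phi(Z)$ and $\nu\Phi \subseteq \Phi(Z)$. The first is immediate: it is precisely the hypothesis $Y \subseteq \Phi(Y \cup \nu\Phi)$, read as $Y \subseteq \Phi(Z)$. For the second I would use that $\nu\Phi$, being the largest fixed point, satisfies $\nu\Phi = \Phi(\nu\Phi)$. Since $\nu\Phi \subseteq Z$, monotonicity of $\Phi$ gives $\Phi(\nu\Phi) \subseteq \Phi(Z)$, and therefore $\nu\Phi = \Phi(\nu\Phi) \subseteq \Phi(Z)$. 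Combining the two inclusions yields $Z = Y \cup \nu\Phi \subseteq \Phi(Z)$, as required.

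There is essentially no hard step here; the whole content lies in the right choice of the witness set. The only point to get straight is that one must co-induct on $Y \cup \nu\Phi$ rather than on $Y$ alone: the hypothesis only controls the behaviour of $\Phi$ on the union, so $Y$ by itself need not be a post-fixed point. Throwing in $\nu\Phi$ repairs this, because the fixed-point equation $\nu\Phi = \Phi(\nu\Phi)$ together with monotonicity guarantees that the added part stays inside $\Phi(Z)$. If one prefers, the argument can be phrased purely lattice-theoretically, without mentioning elements, by appealing to the monotonicity of $\nu$ noted just above the statement; but the direct verification of the post-fixed-point inequality is the most transparent route and transfers unchanged to the realisability reading needed later in the paper.
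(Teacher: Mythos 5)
Your proof is correct and is essentially the paper's own argument: the paper only remarks that the proof is dual to that of the strong induction principle in \cite{btifp,sp}, and your verification that $Y \cup \nu\Phi$ is a post-fixed point (the hypothesis handling $Y$, and $\nu\Phi \subseteq \Phi(\nu\Phi) \subseteq \Phi(Y \cup \nu\Phi)$ by co-closure and monotonicity handling the rest) is precisely that dualisation. The alternative lattice-theoretic phrasing via monotonicity of $\nu$, which the paper alludes to just before the lemma, is equivalent, so there is no substantive difference.
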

The proof is dual to the proof of the strong induction principle in \cite{btifp,sp}.

\begin{lem}[\textbf{Generalised Half-strong Co-induction Principle}]\label{lem-halfstrong}
Let $\fun{\Phi',\Phi}{\PPP(X)}{\PPP(X)}$ be monotone operators
such that $\Phi'$ is absorbed by $\Phi$, that is,
$\Phi'(\Phi(Y)) \subseteq \Phi(Y)$ for all $Y\subseteq X$.
Then: 
\[\text{
If $Y \subseteq \Phi'(\Phi(Y) \cup \nu \Phi)$ then $Y \subseteq \nu \Phi$.
}\]
\end{lem}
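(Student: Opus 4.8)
The goal is to show that if $Y \subseteq \Phi'(\Phi(Y) \cup \nu\Phi)$, then $Y \subseteq \nu\Phi$, under the absorption hypothesis $\Phi'(\Phi(Z)) \subseteq \Phi(Z)$ for all $Z$. The natural strategy is to reduce this to the Strong Co-induction Principle (Lemma~\ref{lem-strong}), since that principle already handles a hypothesis of the form ``$Z \subseteq \Phi(Z \cup \nu\Phi)$''. The plan is therefore to find a suitable set $Z$ whose post-fixed-point-like behaviour under $\Phi$ follows from the given behaviour of $Y$ under $\Phi'$, and then transport the conclusion $Z \subseteq \nu\Phi$ back to $Y \subseteq \nu\Phi$.

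**The key step.**
First I would set $Z \Def \Phi(Y)$ and aim to verify the hypothesis of Lemma~\ref{lem-strong} for this $Z$, namely $Z \subseteq \Phi(Z \cup \nu\Phi)$. Starting from the assumption $Y \subseteq \Phi'(\Phi(Y) \cup \nu\Phi)$, I would apply $\Phi$ to both sides using monotonicity, obtaining $\Phi(Y) \subseteq \Phi(\Phi'(\Phi(Y) \cup \nu\Phi))$. The absorption hypothesis is what tames the right-hand side: since $\Phi'$ is absorbed by $\Phi$, and since $\nu\Phi = \Phi(\nu\Phi)$ so that $\Phi(Y) \cup \nu\Phi = \Phi(Y) \cup \Phi(\nu\Phi) \subseteq \Phi(Y \cup \nu\Phi)$, we get $\Phi'(\Phi(Y) \cup \nu\Phi) \subseteq \Phi'(\Phi(Y \cup \nu\Phi)) \subseteq \Phi(Y \cup \nu\Phi)$ by monotonicity of $\Phi'$ and then absorption. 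Applying $\Phi$ once more yields $\Phi(Y) \subseteq \Phi(\Phi(Y \cup \nu\Phi))$, and here I would want to massage the argument of the outer $\Phi$ into the form $Z \cup \nu\Phi = \Phi(Y) \cup \nu\Phi$, which again uses $\nu\Phi = \Phi(\nu\Phi) \subseteq \Phi(Y \cup \nu\Phi)$ together with monotonicity. This should deliver $Z = \Phi(Y) \subseteq \Phi(\Phi(Y) \cup \nu\Phi) = \Phi(Z \cup \nu\Phi)$, exactly the Strong Co-induction hypothesis.

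**Closing the argument.**
Lemma~\ref{lem-strong} then gives $Z = \Phi(Y) \subseteq \nu\Phi$. To recover the statement for $Y$ itself, I would feed this back into the original assumption: $Y \subseteq \Phi'(\Phi(Y) \cup \nu\Phi) \subseteq \Phi'(\nu\Phi \cup \nu\Phi) = \Phi'(\nu\Phi) = \Phi'(\Phi(\nu\Phi)) \subseteq \Phi(\nu\Phi) = \nu\Phi$, where the crucial inclusion $\Phi'(\Phi(\nu\Phi)) \subseteq \Phi(\nu\Phi)$ is absorption applied with $Z = \nu\Phi$. This completes the reduction.

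**Anticipated obstacle.**
The delicate point is the bookkeeping around the interplay of $\Phi$, $\Phi'$, and $\nu\Phi$ when collapsing the nested applications: one must repeatedly use $\nu\Phi = \Phi(\nu\Phi)$ and monotonicity to rewrite unions like $\Phi(Y) \cup \nu\Phi$ as subsets of $\Phi(Y \cup \nu\Phi)$ without circularity, and to ensure the absorption hypothesis is invoked at precisely the right argument. I expect the main risk is an off-by-one error in how many times $\Phi$ is applied before absorption is available; choosing $Z = \Phi(Y)$ (rather than $Z = Y$) is what aligns the levels correctly, and verifying that this choice genuinely satisfies the Strong Co-induction premise is the heart of the proof.
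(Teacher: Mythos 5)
There is a genuine gap in your key step. To pass from $\Phi(Y) \subseteq \Phi(\Phi(Y \cup \nu\Phi))$ to the desired $\Phi(Y) \subseteq \Phi(\Phi(Y) \cup \nu\Phi)$ you need, by monotonicity of the outer $\Phi$, the inclusion $\Phi(Y \cup \nu\Phi) \subseteq \Phi(Y) \cup \nu\Phi$. But the facts you cite ($\nu\Phi = \Phi(\nu\Phi)$ and monotonicity) only yield the \emph{reverse} inclusion $\Phi(Y) \cup \nu\Phi \subseteq \Phi(Y \cup \nu\Phi)$: a monotone operator need not satisfy $\Phi(A \cup B) \subseteq \Phi(A) \cup \Phi(B)$, and the inclusion you need is false in general. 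So the verification of the Strong Co-induction premise for your chosen $Z = \Phi(Y)$ does not go through; indeed, that premise for $Z=\Phi(Y)$ is essentially equivalent to what you are trying to prove, which is why no amount of monotonicity bookkeeping will deliver it.

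The fix is already contained in your own argument. Halfway through the key step you established $\Phi'(\Phi(Y) \cup \nu\Phi) \subseteq \Phi'(\Phi(Y \cup \nu\Phi)) \subseteq \Phi(Y \cup \nu\Phi)$ (monotonicity of $\Phi'$, then absorption at $Y \cup \nu\Phi$). Combined with the hypothesis $Y \subseteq \Phi'(\Phi(Y) \cup \nu\Phi)$ this gives $Y \subseteq \Phi(Y \cup \nu\Phi)$, which is precisely the premise of Lemma~\ref{lem-strong} for $Y$ itself. Applying strong co-induction directly to $Y$ yields $Y \subseteq \nu\Phi$, and the entire detour through $Z = \Phi(Y)$, including your closing step, becomes unnecessary; this direct route is exactly the paper's proof. (If you insist on applying strong co-induction to a transformed set, $Z \Def \Phi(Y \cup \nu\Phi)$ does work: then $Y \subseteq Z$, hence $Z \subseteq \Phi(Z \cup \nu\Phi)$ by monotonicity, so $Z \subseteq \nu\Phi$ and $Y \subseteq \nu\Phi$. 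But $Z = \Phi(Y)$ does not.)
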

Note: If $\Phi'$ is the identity, then this is the half-strong co-induction
principle from~\cite{btifp}. For a proof of that special case see~\cite{sp}.
We will specialise generalised half-strong co-induction to a
concurrent setting in Section~\ref{sec-concon} and use it in Section~\ref{sec-sdgc}.
\begin{proof}
Assume $Y \subseteq \Phi'(\Phi(Y) \cup \nu \Phi)$.
Since $\Phi(Y) \cup\nu\Phi \subseteq \Phi(Y\cup\nu\Phi)$
(by the monotonicity of $\Phi$), we have
$Y \subseteq \Phi'(\Phi(Y\cup \nu \Phi))$ (by the monotonicity of $\Phi'$),
and therefore $Y \subseteq \Phi(Y\cup \nu \Phi)$
since $\Phi'$ is absorbed by $\Phi$.
With strong co-induction, it follows $Y\subseteq\nu\Phi$.
\end{proof}

The following example is taken from~\cite{be}.

\begin{exa}[\textbf{Natural numbers}]\label{ex-nat}
Define $\fun{\Phi}{\PPP(\RR)}{\PPP(\RR)}$ by
\[
\Phi(Y): = \{ 0 \} \cup \set{y + 1}{y \in Y}.
\]
Then $\mu\Phi = \NN = \{\, 0, 1, \ldots \,\}$. The induction principle is logically equivalent to the usual zero-successor-induction on $\NN$; if $0 \in Y$ and $(\forall y \in Y) (y \in Y \to y+1 \in Y)$, then $(\forall y \in \NN)\, y \in Y$.
\end{exa}

\begin{exa}[\textbf{The set of non-empty finite subsets of a set}]\label{ex-finset}
Let $Y$ be a subset of a set $X$.
Define $\fun{\Phi_Y}{\PPP(\PPP(X))}{\PPP(\PPP(X))}$ by
\[
\Phi_Y(Z) \Def
  \set{u \in \PPP(X)}
      {(\exists x \in Y)\, u = \{ x \} \vee
         (\exists v \in Z) (\exists y \in Y)\, u = v \cup \{ y \}}
\]
and let $\bP_{\textbf{fin}}(Y) \Def \mu \Phi_Y$. 
Then $\bP_{\textbf{fin}}(Y)$ is the set of all non-empty finite subsets of $Y$.
\end{exa}

\begin{exa}[\textbf{Digit Spaces}]
Digit spaces can be characterised co-inductively. Define $\CC_{X} \subseteq X$ by
\[
\CC_{X}(x) \overset{\nu}{=} (\exists e \in E)(\exists y \in X)\, x = e(y) \wedge \CC_{X}(y), 
\]
i.e.\ $\CC_{X} = \nu \Phi_{X}$, where for $Z \subseteq X$,
\[
\Phi_{X}(Z) \Def \set{x \in X}{(\exists e \in E)(\exists y \in X)\, x = e(y) \wedge Z(y)}.
\]
Note here that we may consider subsets $A \subseteq X$ as unary predicates and write $A(x)$ instead of $x \in A$.

\begin{lemC}[\cite{sp}]\label{lem-digco}
Let $(X, E)$ be a digit space. Then $X = \CC_{X}$.
\end{lemC}

If all digits $e \in E$ are invertible, a slightly more comfortable characterisation can be given. Define $\CC'_{X} \subseteq X$ by
\[
\CC'_{X}(x) \overset{\nu}{=} (\exists e \in E)\, x \in \range(e) \wedge \CC'_{X}(e^{-1}(x)).
\]
\begin{lem}\label{lem-diginvco}
Let $(X, E)$ be a digit space with only invertible digits. Then $\CC'_{X} = \CC_{X}$. 
\end{lem}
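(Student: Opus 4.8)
The plan is to show that the two co-inductive definitions are generated by \emph{the same} monotone operator, so that their largest fixed points coincide automatically. Recall that $\CC_{X} = \nu\Phi_{X}$ with
\[
\Phi_{X}(Z) = \set{x \in X}{(\exists e \in E)(\exists y \in X)\, x = e(y) \wedge Z(y)},
\]
and write $\CC'_{X} = \nu\Phi'_{X}$, where
\[
\Phi'_{X}(Z) = \set{x \in X}{(\exists e \in E)\, x \in \range(e) \wedge Z(e^{-1}(x))}.
\]
Here $\Phi'_{X}$ is well defined precisely because every $e \in E$ is invertible, so that $e^{-1}(x) \in X$ is meaningful whenever $x \in \range(e)$. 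First I would confirm that both operators are monotone (immediate, since in each case $Z$ occurs only positively), and then prove the pointwise equality $\Phi_{X}(Z) = \Phi'_{X}(Z)$ for every $Z \subseteq X$.

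For the inclusion $\Phi_{X}(Z) \subseteq \Phi'_{X}(Z)$, suppose $x = e(y)$ with $y \in X$ and $Z(y)$. Then $x \in \range(e)$, and by injectivity of $e$ we have $e^{-1}(x) = y$, so $Z(e^{-1}(x))$ holds and $x \in \Phi'_{X}(Z)$. Conversely, given $x \in \range(e)$ with $Z(e^{-1}(x))$, put $y \Def e^{-1}(x) \in X$; then $e(y) = x$ and $Z(y)$, whence $x \in \Phi_{X}(Z)$. The two witnesses $y$ and $e^{-1}(x)$ are thus interchangeable, the only point to watch being that the side condition $x \in \range(e)$ in $\Phi'_{X}$ is automatic on the $\Phi_{X}$ side and is exactly what guarantees that the inverse is applicable on the $\Phi'_{X}$ side.

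Having established $\Phi_{X} = \Phi'_{X}$ as operators on $\PPP(X)$, I would conclude by the uniqueness of greatest fixed points that $\CC'_{X} = \nu\Phi'_{X} = \nu\Phi_{X} = \CC_{X}$. I do not expect any genuine obstacle: the whole content is the bookkeeping of the bijection $y \leftrightarrow e^{-1}(x)$ supplied by the invertibility hypothesis, and no appeal to (co-)induction is actually needed once the operators are seen to agree. Should one prefer not to identify the operators, the same correspondence yields $\CC_{X} \subseteq \Phi'_{X}(\CC_{X})$ and $\CC'_{X} \subseteq \Phi_{X}(\CC'_{X})$ directly, so that two applications of the co-induction principle give the two inclusions; but the operator-equality route is cleaner.
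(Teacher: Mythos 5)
Your proof is correct, and it takes a genuinely different (if closely related) route from the paper's. The paper never compares the operators themselves: it shows directly that $\CC'_{X}$ is a post-fixed point of $\Phi_{X}$ (unfolding $\CC'_{X}$ once and setting $y = e^{-1}(x)$) and, conversely, that $\CC_{X}$ is a post-fixed point of $\Phi'_{X}$, and then applies the co-induction principle twice to get the two inclusions --- exactly the fallback you sketch in your last sentence. You instead prove the stronger statement $\Phi_{X}(Z) = \Phi'_{X}(Z)$ for every $Z \subseteq X$, after which $\nu\Phi_{X} = \nu\Phi'_{X}$ is immediate from the fact that equal operators have equal greatest fixed points. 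The witness-swapping computation ($y \leftrightarrow e^{-1}(x)$ via injectivity of $e$, with $x \in \range(e)$ guaranteeing applicability of the inverse) is identical in both arguments; what differs is the packaging. Your route buys a bit more: since the operators coincide, \emph{all} their fixed points (least as well as greatest) agree, and the proof localises the use of invertibility in a single operator identity. The paper's route has the advantage of matching the proof pattern used throughout (and the one that formalises directly in IFP, where predicates defined by $\nu$ are compared via the co-induction rule rather than by identifying the generating operators), which is also why it is the form from which realisers are extracted.
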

\begin{proof} Both inclusions follow by co-induction. Let $x \in \CC'_{X}$. Then there exists $e \in E$ so that $x \in \range(e)$ and $\CC'_{X}(e^{-1}(x))$. It follows for $y = e^{-1}(x)$ that $x = e(y)$ and $\CC'_{X}(y)$, which shows that $\CC'_{X} \subseteq \Phi_{X}(\CC'_{X})$. Hence, $\CC'_{X} \subseteq \CC_{X}$.

Conversely, let $x \in \CC_{X}$. Then there are $e \in E$ and $y \in X$ with $x = e(y)$ and $\CC_{X}(y)$. It follows that $x \in \range(e)$ and $\CC_{X}(e^{-1}(x))$. Thus, $\CC_{X} \subseteq \CC'_{X}$.
\end{proof}

Classically the set $\CC_{X}$ is rather uninteresting, but constructively it is significant, since from a constructive proof that $x \in \CC_{X}$ one can extract a stream $\alpha$ of digits such that $x = \val{\alpha}$.

For what follows let $S \Def \CC_{(\II, \AV)}$ and $G \Def \CC_{(\II, \GF)}$. Then 
\begin{equation}\label{eq-sdrep}
S(x) \overset{\nu}{=} (\exists i \in \SD)\, \II(i, x) \wedge S(2x - i),
\end{equation}
where for $i \in \SD$ and $x \in \II$, $\II(i, x) \Def | 2 x - i | \le 1$, and
\begin{equation}\label{eq-grepprime}
G(x) \overset{\nu}{=} (\exists j \in \GC)\, x \in \range(g_{j}) \wedge G(1 - j \cdot 2x).
\end{equation}
Note that $\range(g_{-1}) = [-1, 0]$ and $\range(g_{1}) = [0, 1]$. Moreover, the functions $1 - 2(-x)$ and $1 - 2x$, respectively, form the left and the right branch of the \emph{tent function} $t(x) \Def 1 - 2|x|$. Hence the right-hand side in (\ref{eq-grepprime}) is equivalent to 
\[
((x < 0 \wedge j = -1) \vee (x > 0 \wedge j = 1) \vee x = 0) \wedge G(t(x)).
\]
However, the last disjunction is not decidable as the test for 0 is not computable. Since we want to work in a logic that allows extracting computable content from disjunctions, a (classically) equivalent formula of what we have just obtained is preferable 
\begin{equation}\label{eq-grep}
G\mytextcolor{red}{(x)} \overset{\nu}{=} (x \not= 0 \to x \le 0 \vee x \ge 0) \wedge G(t(x)).
\end{equation}
\end{exa}

\begin{exa}[\textbf{Well-founded induction}]
The principle of \emph{well-founded induction} is an induction principle for elements in the accessible or well-founded part of a binary relation $\prec$. As shown in~\cite{btifp}, it is an instance of strictly positive induction. The \emph{accessible part} of $\prec$ is inductively defined by
\[
\acc_{\prec}(x) \overset{\mu}{=} (\forall y \prec x)\, \acc_{\prec}(y),
\]
that is, $\acc_{\prec} = \mu \Phi$ where $\Phi(X) \Def  \set{x}{(\forall y \prec x)\, X(y)}$.
A predicate $P$ is called \emph{progressive} if $\Phi(P) \subseteq P$, that is, $\prog_{\prec}(P)$ holds where
\[
\prog_{\prec}(P) \Def (\forall x)((\forall y \prec x)\, P(y) \to P(x)).
\]
Therefore, the principle of well-founded induction, which states that a progressive predicate holds on the accessible part of $\prec$, is a direct instance of the rule of strictly positive induction:
\[
\dfrac{\prog_{\prec}(P)}{\acc_{\prec} \subseteq P}\,\, \text{(WFI$_{\prec}(P)$).}
\]
In most applications $P$ is of the form $A \to Q$. The progressivity of $P \to Q$ can equivalently be written as progressivity of $P$ relativised to $A$,
\[
\prog_{\prec, A}(P) \Def (\forall x \in A)((\forall y \in A)\, (y \prec x \to P(y)) \to P(x)).
\]
and the conclusion becomes $\acc_{\prec} \cap A \subseteq P$
\[
\dfrac{\prog_{\prec, A}(P)}{\acc_{\prec} \cap A \subseteq P}\,\, \text{(WFI$_{\prec, A}(P)$)}.
\]

Dually to the accessibility predicate one can define for a binary relation a path predicate
\[
\path_{\prec}(x) \overset{\nu}{=} (\exists y \prec x)\, \path_{\prec}(y),
\]
that is, $\path_{\prec} = \nu \Psi$ where $\Psi(X) \Def \set{x}{(\exists y \prec x)\, X(y)}$.
Intuitively, $\path_{\prec}(x)$ states that there is an infinite descending path $\ldots x_{2} \prec x_{1} \prec x$.

With the axiom of choice and classical logic it can be shown that $\neg \path_{\prec}(x)$ implies $\acc_{\prec}(x)$. 
\end{exa}

\section{Extracting algorithmic content from co-inductive proofs}\label{sec-progex}

In this section we recast the theory of digit spaces in a constructive setting with the aim to extract programs that provide effective representations of certain objects or transformations between different representations. As the main results on this basis we will obtain effective transformations between the signed digit and the Gray code representations of  $\II$ and the hyperspace of non-empty compact subsets of $\II$, respectively, showing that the two representations are effectively equivalent. The method of program extraction is based on a version of realisability, and the main constructive definition and proof principles will be induction and co-induction. The advantage of the constructive approach lies in the fact that proofs can be carried out in a representation-free way. Constructive logic and the Soundness Theorem automatically guarantee that proofs are witnessed by effective and provably correct transformations on the level of representations.

\subsection{The formal system IFP}
\label{sub-ifp}
As basis for program extraction from proofs we use \emph{Intuitionistic Fixed Point Logic} (IFP)~\cite{btifp}, which is an extension of many-sorted first-order logic by inductive and co-inductive definitions, i.e., predicates defined as least and greatest fixed points of strictly positive operators. Here, an occurrence of an expression $E$ is \emph{strictly positive (s.p.)} in an expression $F$ if that occurrence is not within the premise of an implication, and a predicate $P$ is \mytextcolor{red}{\emph{strictly positive in a predicate variable}} $X$ if every occurrence of $X$ in $P$ is strictly positive. 
Strict positivity is a simple and sufficiently general syntactic condition that ensures monotonicity and hence the existence of these fixed points, as discussed in Section~\ref{sec-indef}.

Relative to the language specified the following kinds of expression are defined:
\begin{description}

\item[{\it Formulas} $A, B$]  \emph{Equations} $s = t$ ($s, t$ terms of the same sort), $P(\vec t)$ ($P$ a predicate which is not an abstraction, $\vec t$ a tuple of terms whose sorts fit the arity of $P$), \emph{conjunction} $A \wedge B$, \emph{disjunction} $A \vee B$, \emph{implication} $A \to B$, \emph{universal} and \emph{existential quantification} $(\forall x)\, A, (\exists x)\, A$.

\item[{\it Predicates} $P, Q$]  \emph{Predicate variables} $X, Y, \ldots$ (each of fixed arity), \emph{predicate constants, abstraction} $\lambda \vec{x}.\, A$ (arity given by the variable tuple $\vec{x}$), $\mu \Phi, \nu \Phi$ (arities = arity of $\Phi$).

\item[{\it Operators} $\Phi$]  $\lambda X.\, P$ where $P$ must be strictly positive in $X$ and the arities of $X$ and $P$ must coincide. The arity of $\lambda X.\, P$ is this common arity.

\end{description}
\emph{Falsity} is defined as $\bfalse \Def \mu (\lambda X.\, X)()$ where $X$ is a predicate variable of arity $()$.

Program extraction is performed via a `uniform' realisability interpretation 
(Section~\ref{sub-realisability}). Uniformity concerns the interpretation of 
quantifiers: A formula $(\forall x) \, A(x)$ is realised uniformly by one 
object $a$ that realises $A(x)$ for all $x$, so $a$ may not depend on $x$. 
Dually, a formula $(\exists x)\, A(x)$ is realised uniformly by one 
object $a$ that realises $A(x)$ for some $x$, so $a$ does not contain a 
witness for $x$. 
Expressions (formulas, predicates, operators) that contain no disjunction
and no free predicate variables are identical to their realisability 
interpretations and are called \emph{non-compu\-tational (nc)}.
A slightly bigger class of expressions are \emph{Harrop expressions}.
These may contain disjunctions and free predicate variables but not at strictly
positive positions. A Harrop formula may not be identical to its realisability
interpretation, however they have at most one realiser which is trivial
and which is represented by the program constant $\bnil$ 
(see Sections~\ref{sub-prog} and \ref{sub-realisability}). 

We highlight some feature that distinguish IFP from other approaches to program 
extraction.

\begin{description}

\item[{\it Classical logic}] Although IFP is based on intuitionistic logic a 
fair amount of classical logic is available. 
Soundness of realisability holds in the presence of any non-computational 
axioms that are classically true. This can be extended to Harrop axioms whose
realisability interpretations (see~\ref{sub-realisability}) are classically true. 

\item[{\it Sets}]
We add for every sort $s$ a powersort $\PPP(s)$ and a 
(non-computational) element-hood relation constant $\eps$ of arity 
$(s,\PPP(s))$. 
In addition, for every Harrop formula $A(x)$ the comprehension axiom  
\[
(\exists u)(\forall x)(x \eps  u \leftrightarrow A(x))
\]
is added. ($A(x)$ may contain free variables other than $x$.) 
The realisability interpretation of such a comprehension axiom is again a 
comprehension axiom and can hence be accepted as true.
We will use the notation $\set{x}{A(x)}$ for the element $u$ of sort 
$\PPP(s)$ whose existence is postulated in the comprehension axiom above. 
Hence, we can define the empty set $\emptyset \Def \set{x}{\bfalse}$, singletons 
$\{ x \} \Def  \set{y}{y = x}$, the classical union of two sets 
$u \cup v \Def \set{x}{\neg(x \noteps u \land x \noteps v)}$, the union 
of all members of a set of sets 
$\bigcup u \Def \set{x}{(\exists y \eps u)\, x \eps y}$,
and the intersection of a class of sets defined by a predicate $P$
of arity $(\PPP(s))$,
$\bigcap P \Def \set{x}{(\forall y \in P)\, x \eps y}$.

Note that the informal notion of `set' used in Section~\ref{sec-indef} is
represented in the formal system IFP in three different ways:
\begin{enumerate}
\item Sorts are names for abstract `ground' sets.
For example, $s$ is a name for the abstract set of real numbers.
\item Terms of sort $\PPP(s)$ denote subsets of the ground set denoted
by $s$. Elements of sort $\PPP(s)$ can be defined by comprehension,
$\{x \mid A(x)\}$, which is restricted to nc formulas $A(x)$.
\item Predicates are expressions denoting subsets of the ground sets.
Predicates can be constructed by $\lambda$-abstraction, $\lambda x\,.\,A(x)$
(also written $\set{x}{A(x)}$),
where $A(x)$ can be any formula.
\end{enumerate}
By `set' we will mean in the following always (2), that is
`element of sort $\PPP(s)$'. The three concepts form an increasing hierachy
since the sort $s$ corresponds to the set
$\{x \mid \btrue\}$ and every set $u$ corresponds to the predicate
$\lambda x\,.\,x \eps u$.
Note that $x\,\eps u$ is an nc formula while $x\in P$ (which is synonym for
$P(x)$) has computational content if the predicate $P$ has.

To clarify the distinction we formally recast the definition of `the set of
finite subsets of a set' (Example~\ref{ex-finset}), which should now rather be
called `the predicate of finite subsets of a predicate': Let $P$ be a predicate
of arity $(s)$ ($s$ and $P$ correspond to $X$ and $Y$ in \ref{ex-finset}).
We define the predicate $\bP_{\textbf{fin}}(P)$ of arity $(\PPP(s))$ as
$\mu\,\Phi_P$ where the operator $\Phi_P$ of arity $(\PPP(s))$ is defined as
$
\Phi_P = \lambda Z \,.\,
  \lambda u \,.\,
      (\exists x \in P)\, u = \{ x \} \vee
         (\exists v \in Z) (\exists y \in P)\, u = v \cup \{ y \}
$.

\item[{\it Abstract real numbers}] In formalising the theory of real numbers, e.g., the set $\bR$ of real numbers is regarded as a sort $\iota$. A predicate $\bN$ with $\bN(x)$ if the real number $x$ is a natural number, is introduced by induction as in  Example~\ref{ex-nat}. All arithmetic constants and functions we wish to talk about are admitted as constant or function symbols. The predicates $=$, $<$ and $\le$ are considered as non-computational. As axioms, any true disjunction-free formulas about real numbers can be chosen. As such, the axiom system $\AAA_{R}$ consists of a discunction-free formulation of the axioms of real-closed fields, equations for exponentiation, the defining axiom for $\max$, stability of $=, \le, <$, as well as the Archimedean property $\mathbf{AP}$ about the non-existence of real numbers greater than all natural numbers, and Brouwer's Thesis for nc predicates 
\[
(\textbf{BT}_{\mathbf{nc}}) \quad (\forall x)\, (\neg \path_{\prec}(x) \to \acc_{\prec}(x)).
\]

\item[{\it Compact sets}]
In order to be able to deal with the hyperspace of non-empty compact subsets 
of the compact real interval $[-1,1]$, we also add a predicate constant  
$\bK$ of arity $(\PPP(\iota))$ to denote the elements of that hyperspace.
We also add an axiom for 
the finite intersection property stating that the intersection of the members of 
a descending sequence in $\bK$ is not empty.

\item[{\it Partial computation}] Like the majority of programming languages, IFP's language of extracted programs admits general recursion and therefore partial, i.e., non-terminating computation. 

\item[{\it Infinite computation}] Infinite data, as they naturally occur in exact real number computation, can be represented by infinite computations. This \mytextcolor{red}{is} achieved by an operational semantics where computations may continue forever outputting arbitrarily close approximations to the complete (infinite) result at their finite stages.

\end{description}

The proof rules of IFP include the usual natural deduction rules for intuitionistic first-order logic with equality. In addition there are the following rules for strictly positive induction and co-induction: the closure and co-closure of the least and greatest fixed point, respectively, stated as assumption-free rules, and the induction as well as the co-induction principle.

\subsection{Programs and their semantics}
\label{sub-prog}
Extracted programs, i.e.\ realisers, are interpreted as elements of a Scott domain $D$ defined by the recursive domain equation
\[
D = (\bnil + \bleft(D) + \bright(D) + \bpair(D \times D) + \bfun(D \to D))_{\bot},
\]
where $D \to D$ is the domain of continuous functions from $D$ to $D$, $+$  denotes the disjoint sum of partial orders, and $(\cdot)_{\bot}$ adds a new bottom element. $\bnil$, $\bleft$, $\bright$, $\bpair$ and $\bfun$ denote the injections of the various components of the sum into $D$. $\bnil$, $\bleft$, $\bright$, $\bpair$ (but not $\bfun$) are called \emph{constructors}.

$D$ carries a natural partial order $\sqsubseteq$ with respect to which it is a countably based \emph{Scott domain} (\emph{domain} for short), that is a bounded-complete algebraic directed-complete partial order with least element $\bot$ and a basis of countably many compact elements~\cite{dom}.
An element of $D$ is called \emph{defined} if it is different from $\bot$.
Hence, each defined element is of one of the forms
$\bnil$, $\bleft(\_)$, $\bright(\_)$,$\bpair(\_,\_)$, $\bfun(\_)$.

Since domains are closed under suprema of increasing chains $D$ contains not only finite but also infinite combinations of the constructors. For example, writing $a : b$ for $Pair(a, b)$, an infinite sequence of domain elements $(d_i)_{i \in  \bN}$ is represented in $D$ as the stream
\[
d_{0} : d_{1} :  \ldots \Def \sup_{n \in \bN} \bpair(d_{0}, \bpair(d_{1}, \ldots, \bpair(d_{n}, \bot) \ldots)).
\]
Because Scott domains and continuous functions form a \mytextcolor{red}{Cartesian} closed category, $D$ can be equipped with the structure of a partial combinatory algebra (PCA, \cite{dom}) by defining a continuous application operation $a\, b$ such that $a\, b \Def  f(b)$, if $a = \bfun(f)$, and $a\, b \Def \bot$, otherwise, as well as combinators $K$ and $S$ satisfying $K\, a\, b = b$ and $S\, a\, b\, c = a\, c\, (b\, c)$ (where application associates to the left). In particular $D$ has a continuous least fixed point operator which can be defined by Curry's $Y$-combinator or as the mapping $(D \to D) \ni f \mapsto \sup_{n} f^{n}(\bot) \in D$.

Besides the PCA structure the algebraicity of $D$ will be used, that is, the fact that every element of $D$ is the directed supremum of compact elements. Compact elements have a strongly finite character. The finiteness of compact element is captured by their defining property, saying that $d \in D$ is compact if for every directed set $A \subseteq D$, if $d \sqsubseteq \bigsqcup A$, then $d \sqsubseteq a$ for some $a \in A$, and the existence of a function assigning to every compact element $a$ a \emph{rank}, $ \brk(a) \in \NN$, satisfying
\begin{description}

\item[rk1] If $a$ has the form $C(a_{1}, \ldots, a_{k})$ for a data constructor $C$, then $a_{1}, \ldots, a_{k}$
are compact and $\brk(a) > \brk(a_{i})$, for $1 \le i \le k$.

\item[rk2] If $a$ has the form $\bfun(f)$, then for every $b \in D$, $f(b)$ is compact with $\brk(a) > \brk(f(b))$ and there exists a compact $b_{0} \sqsubseteq b$ such that $\brk(a) > \brk(b_{0})$ and $f(b_{0}) = f(b)$. Moreover, there are finitely many compact elements $b_{1}, \ldots, b_{n}$ with $\brk(b_{i}) < \brk(a)$ such that $f(b) = \bigsqcup \set{f(b_{i})}{1 \le i \le n \wedge b_{i} \sqsubseteq b}$.

\end{description}

Elements of $D$ are denoted by programs which are defined as in \cite{btifp}
except that the case construct is more general since it allows overlapping patterns. 
For example, it is now possible
to define the function parallel-or~\cite{pl}. Setting $\btrue = \bleft(\bnil)$,
$\bfalse = \bright(\bnil)$ parallel-or can be defined as
\begin{align*}
\lambda c.\, \ccase{c}{\,
       & \bpair(\btrue,\_) \to \btrue;\\ 
       & \bpair(\_,\btrue) \to \btrue;\\
       & \bpair(\bfalse,\bfalse) \to \bfalse\,}
\end{align*}
which is not possible in the programming language defined in~\cite{btifp}.
We will need this greater expressivity in Section~\ref{sec-concon}.

Formally, \emph{Programs} are terms $M, N, \ldots$ of a new sort $\delta$
built up as follows:
\[
\begin{array}{rcl}
\mathit{Programs} \ni M, N, L, R & ::= & a, b 
           \quad\text{(program variables)}                        \\[.5ex]
 & | & \bnil\, |\, \bleft(M)\, |\, \bright(M)\, |\, \bpair(M, N)  \\[.5ex]
 & | & \ccase{M}{Cl_1;\ldots;Cl_n}               \\[.5ex]
 & |     & \lambda a.\, M                                         \\[.5ex]
 & |     & M N                                                    \\[.5ex]
 & |     & \brec M                                                \\[.5ex]
 & |     & \bot
\end{array}
\]
where in the case-construct the $Cl_i$ are pairwise compatible clauses 
(see \mytextcolor{red}{Definition}~\ref{dn-parcomp} below).
A \emph{clause} is an expression of the form $P \to N$ where $P$ is a pattern 
and $N$ is a program.
A \emph{pattern} is either a constructor pattern or a function pattern.
A \emph{constructor pattern} is a program built from constructors and variables
such that each variable occurs at most once. 
\emph{Function patterns} are of the form $\pfun(a)$ where $a$ is a program variable.

\begin{defi}\label{dn-parcomp}
Two clauses, $P_1 \to N_1$ and $P_2 \to N_2$, 
are \emph{compatible} if for any substitutions $\theta_1$, $\theta_2$, 
if $P_1\theta_1\alphaeq P_2\theta_2$, 
then $N_1\theta_1 \alphaeq N_2\theta_2$ where $\alphaeq$ means $\alpha$-equality,
that is, equality up to renaming of bound variables.
\end{defi}
Compatibility of clauses can be decided efficiently since it is enough to consider
most general unifiers $\theta_1$ and $\theta_2$.

The variables in $P$ are considered as binders. Hence, the free variables of a 
clause $P\to N$ are the free variables of $N$ that do not occur in $P$. 

In~\cite{btifp} only simple patterns containing one occurrence of one constructor
are considered and two clauses are required to have different constructors.
This is equivalent to allowing arbitrary pattern but requiring different clauses
to have non-unifiable pattern.

Programs that are $\alpha$-equal will be identified. 
Moreover, we will write $a \overset{\text{rec}}{=} M$ for 
$a \Def \mathbf{rec}(\lambda a.\, M)$, and $a\, b \overset{\brec}{=} M$ for 
$a \overset{\brec}{=} \lambda b.\, M$. 

\begin{defi}\label{dn-patmach} \hfill 
\begin{enumerate}
\item\label{dn-patmach-1}
A program $M$ \emph{matches a constructor pattern} $P$ if there is a substitution $\theta$,
called the \emph{matching substitution}, such that $\dom(\theta) = \FV(P)$
and $P\theta = M$. 
\item\label{dn-patmach-2}
A program $M$ \emph{matches a function pattern} $\pfun(a)$ if $M$ is a $\lambda$-abstraction
and in this case the matching substitution is $[a \mapsto M]$.
\item\label{dn-patmach-3}
A program \emph{matches a clause} $P \to N$ if it matches $P$.
\end{enumerate}
\end{defi}

Except for the case-construct, the denotational semantics of programs in $D$
is defined as in \cite{btifp}.
To define the denotation $\ccase{M}{\vec{Cl}}$ we first define
when a domain element $d$ \emph{matches} a pattern $P$ and, if it
does, the \emph{matching environment} which has as domain the
variables of the pattern. 
\begin{itemize}
\item
In the case of a constructor pattern $P$
this is obvious and the matching environment $\eta$ (if it exists)
will satisfy $\val{P}\eta=d$.  

\item
The matches of a function pattern
$\pfun(a)$ are the domain elements of the form $\bfun(f)$ and the
matching environment is $[a \mapsto\bfun(f)]$.
\end{itemize}
The denotation of a case program in an environment $\eta$, 
$\val{\ccase{M}{\vec{Cl}}}\eta$, is defined as follows: 

\begin{defi}\label{dn-semcase} \hfill 
\begin{enumerate}
\item\label{dn-semcase-1}
If $P\to N$ is a clause in $\vec{Cl}$ such
that $\val{M}\eta$ matches $P$ with matching environment $\eta'$, then 
$\val{\ccase{M}{\vec{Cl}}}\eta = \val{N}(\eta+\eta')$ 
where $\eta+\eta'$ is the environment obtained by
overriding $\eta$ with $\eta'$. 

\item\label{dn-semcase-2}
If no such matching is possible, 
then $\val{\ccase{M}{\vec{Cl}}}\eta=\bot$. 
\end{enumerate}
\end{defi}

Due to the compatibility condition the denotation is independent of the choice of
the matching clause. This follows from the fact that two patterns
$P_1$, $P_2$ are unifiable if and only if they have a common match and
the most general unifiers are in a one-to-one correspondence with the
matching environments of the common match.

\begin{defi}\label{dn-val}
A program is called a \emph{value} if it is an abstraction
or begins with a constructor. 
\end{defi}
Note that a closed program is a value exactly if it is a 
weak head normal form (whnf). 
Clearly, if $M$ is a value, then $\val{M}\eta\neq\bot$ for every 
environment $\eta$.

The following small-step operational semantics of closed programs is similar 
to the one in~\cite{btifp}. The difference is due to the more general 
case expressions. 
\renewcommand{\labelenumi}{\roman{enumi}.}
\begin{enumerate}
\item $\ccase{M}{\ldots;P\to N;\ldots}\ssp N\theta$\,
if $M$ matches $P$ with matching substitution $\theta$. 
\item $(\lambda x.\,M)\ N \ssp M[N/x]$.
\item $\brec\,M \ssp M\,(\brec\,M)$.
\item\AxiomC{$M \ssp M'$}
             \UnaryInfC{$\ccase{M}{\vec{Cl}}\ssp 
               \ccase{M'}{\vec{Cl}}$}
            \DisplayProof\,
 if $M$ doesn't match any clause in $\vec{Cl}$. 
\item \AxiomC{$M \ssp M'$}
            \UnaryInfC{$M\,N \ssp M'\,N$}                    
            \DisplayProof\,
if $M$ is not an abstraction. \\
\item
\AxiomC{$M_i \ssp M_i'$ $(i = 1,\ldots, k)$} 
\UnaryInfC{$C(M_1,\ldots,M_k) \ssp C(M_1',\ldots,,M_k')$}
\DisplayProof. \\
\item$\lambda x.\,M \ssp \lambda x.\,M$.
\end{enumerate}
\renewcommand{\labelenumi}{\arabic{enumi}.}

\begin{lemC}[\cite{btifp}]\label{lem:ssp}
Let $M$ be a closed program.
\begin{enumerate}
\item $M \ssp M'$ for exactly one $M'$. 
\item If $M \ssp M'$, then $\val{M} = \val{M'}$. 
\item 
  $\val{M} \ne \bot$ exactly if there is a hnf $V$ such that $M \ssp^* V$.
\end{enumerate}
\end{lemC}
\begin{proof}
  (1) holds by the compatibility condition for case-constructs.
(2) is easy.
  The proof of (3) is as the proof of \cite[Lemma~33]{btifp} 
  for the case that $M$ begins with a constructor, and an easy consequence of \cite[Lemma~32]{btifp}
    for the case that $M$ is a $\lambda$-abstraction. 
\end{proof}

\subsection{Types}
\label{sub-typ}
A type $\tau(E)$ is assigned to every IFP-formula and predicate $E$, 
where types are expressions defined by the grammar
\[
\textit{Types} \ni \rho, \sigma ::= \alpha\,\text{(type variables)}\, |\, \mathbf{1}\, |\, \rho \times \sigma\, |\,\rho + \sigma\, |\, \rho \Rightarrow \sigma\, |\, \bfix \alpha.\, \rho
\]
where in $\bfix \alpha.\, \rho$ the type $\rho$ must be strictly positive in 
$\alpha$. Types are interpreted by subdomains of $D$ in an obvious way.

The idea is that for a formula $A$, $\tau(A)$ is the type of potential realisers. 
Expressions without computational content will receive type {\bf 1}.

Intuitively, by saying that a program $a$ is a \emph{realiser} of a formula $A$, one means that $a$ is a computational content of formula $A$. In intuitionistic logic, a proof of $A \vee B$ gives us the evidence that A is true or B is true. The notion of realiser used in \mytextcolor{red}{the present} paper is designed by treating this as the primitive source of computational content. Therefore, we defined an expression \emph{non-computational (nc)} if it contains neither disjunctions nor free predicate variables. A more general notion of an expression with trivial computational content is provided by the Harrop property. A formula is \emph{Harrop} if it contains neither disjunctions nor free predicate variables at strictly positive positions. A predicate $P$ is \emph{$X$-Harrop}, if $P$ is strictly positive in $X$ and $P[\hat{X}/X]$ is Harrop for $\hat{X}$ a predicate constant associated with $X$. 
\begin{longtable}{lclr}
$\tau(P(\vec{t}))	$	&=&	$\tau(P)$		&										\\[.8ex]
$\tau(A \wedge B)$		&=& $\tau(A) \times \tau(B)$		&$\text{($A, B$ non-Harrop)}$			\\
				&=& $\tau(A)$				&$\text{($B$ Harrop)}$					\\
				&=& $\tau(B)	$			&$\text{(otherwise)}	$				\\[.8ex]
$\tau(A \vee B)$		&=& $\tau(A) + \tau(B)$			&								\\[.8ex]
$\tau(A \to B)$		&=& $\tau(A) \Rightarrow \tau(B)$ &$\text{($A, B$ non-Harrop)}$			\\
				&=& $\tau(B)	$			&$\text{($A$ Harrop)}$					\\[.8ex]
$\tau(\Diamond x\, A)$	&=& $\tau(A)$				&$\text{($\Diamond \in \{ \forall, \exists \}$)}$	\\[.8ex]		$\tau(X)$			&=& $\alpha_{X}$				&$\text{($X$ a predicate variable)}$		\\[.8ex]		
$\tau(P)$			&=& $\mathbf{1}$					&\text{($P$ a predicate constant)}		\\[.8ex]
$\tau(\lambda \vec{x}.\, A) $ &=& $\tau(A)	$		&								\\[.8ex]		$\tau(\Box (\lambda X.\, P))$	 &=& $\bfix \alpha_{X}.\, \tau(P)$	&$\text{($\Box \in \{ \mu, \nu \}$, $P$ not $X$-Harrop)}$ \\
					 &=& $\mathbf{1} $			&$\text{($\Box \in \{ \mu, \nu \}$, $P$ $X$-Harrop)}$	
\end{longtable}
For example, $\tau(\NN)=\bnat \Def \bfix\alpha.\,\mathbf{1}+\alpha$, 
the type of unary natural numbers.

\subsection{Realisability}
\label{sub-realisability}
Next, we define the notion that a program $a : \tau(A)$ is a \emph{realiser} of a formula $A$. In order to formalise this notion and to provide a formal proof of its soundness, Berger and Tsuiki~\cite{btifp} introduced an extension RIFP of IFP which in addition to the sorts of IFP contains the sort $\delta$, denoting the domain $D$. For each IFP formula $A$ they define an RIFP predicate $\bR(A)$ of arity $(\delta)$ that specifies the set of domain elements that realise $A$. Similarly,  for every non-Harrop predicate $P$ of arity $(\vec\sigma)$ a predicate $\bR(P)$ of arity $(\vec\sigma, \delta)$, and every non-Harrop operator $\Phi$ of arity $(\vec\sigma)$ an operator $\bR(\Phi)$ of arity $(\vec\sigma, \delta)$ is defined. Note that instead of $\bR(A)(a)$ we also write $a \br A$. Moreover, we write $\br A$ to mean $(\exists a)\, a \br A$.

Simultaneously, $\bH(B)$ is defined, for Harrop formulas $B$, which expresses that $B$ is realisable, however with trivial computational content $\bnil$. More precisely, we define a formula $\bH(A)$ for every Harrop formula $A$, a predicate $\bH(P)$ for every Harrop predicate $P$, and an operator $\bH(\Phi)$ for every Harrop operator $\Phi$. $\bH(P)$ and $\bH(\Phi)$, respectively, will be of the same arity as $P$ and $\Phi$.

\begin{longtable}{RCLR}
a \br A 			&= & (a = \bnil \wedge \bH(A)) 			&\text{($A$ Harrop)}  \\[.5ex]
a \br P(\vec t)		&= & \bR(P)(\vec t, a)	&\text{($P$ non-H.)}	\\[.5ex]
c \br (A \wedge B)	&= & (\exists a, b)\, (c = \bpair(a, b) \wedge a \br A \wedge b \br B) &\text{($A, B$ non-H.)} \\[.5ex]
a \br (A \wedge B)	&= & a \br A \wedge \bH(B) 			&\text{($B$ Harrop, $A$ non-H.)}	\\[.5ex]
b \br (A \wedge B)	&= & \bH(A) \wedge b \br B			&\text{($A$ Harrop, $B$ non-H.)}	\\[.5ex]
c \br (A \vee B)		&= & 
\multicolumn{2}{L}{
(\exists a)\, (c = \bleft(a) \wedge a \br A) \vee (\exists b)\, (c = \bright(b) \wedge b \br B)
}	\\[.5ex]
c \br (A \to B) 		&= & c : \tau(A) \Rightarrow \tau(B) \wedge (\forall a)\, (a \br A \to (c\, a) \br B)	&\text{($A, B$ non-H.)}	\\[.5ex]
b \br (A \to B)		&= & b : \tau(B) \wedge (\bH(A) \to b \br B)  &\text{($A$ Harrop, $B$ non-H.)}		\\[.5ex]
a \br \Diamond x\, A	&= & \Diamond x\, (a \br A)  &\text{($\Diamond \in \{ \forall, \exists \}$, $A$ non-H.)}	\\[.8ex]
\bR(X)			&= & \tilde{X}			&								\\[.5ex]
\bR(\lambda \vec x.\, A)	&= & \lambda (\vec x, a)\, (a \br A) 	&\text{($A$ non-H.)}		\\[.5ex]
\bR(\Box (\Phi))		&= & \Box (\bR(\Phi))		&\text{($\Box \in \{ \mu, \nu \}$, $\Phi$ non-H.)}	\\[.5ex]
\bR(\lambda X.\, P)	&= & \lambda \tilde{X}.\, \bR(P)		&\text{($P$ non-H.)}			\\[1.5ex]
\bH(P(\vec{t}))		&= & \bH(P)(\vec{t})					&\text{($P$ Harrop)}			\\[.5ex]
\bH(A \wedge B)	&= & \bH(A) \wedge \bH(B)		&\text{($A, B$ Harrop)}		\\[.5ex]
\bH(A \to B)		&= & \br A \to \bH(B)				&\text{($B$ Harrop)}			\\[.5ex]
\bH(\Diamond x\, A)	&= & \Diamond x\, \bH(A)	&\text{($\Diamond \in \{ \forall, \exists \}$, $A$ Harrop)}	\\[.8ex]
\bH(P)			&= & P						&\text{($P$ a predicate constant)} 	\\[.5ex]
\bH(\lambda \vec{x}.\, A)   &= & \lambda \vec{x}.\, \bH(A)	&\text{($A$ Harrop)}				\\[.5ex]
\bH(\Box(\Phi))		&= & \Box(\bH(\Phi))				&\text{($\Box \in \{ \mu, \nu \}$, $\Phi$ Harrop)}  \\[.5ex]
\bH(\lambda X.\, P)	&= & \lambda X.\, \bH_{X}(P)			&\text{($P$ $X$-Harrop)} 		
\end{longtable}
\noindent
For the last line recall that a predicate $P$ is \emph{$X$-Harrop}, if $P$ is strictly positive in $X$ and $P[\hat{X}/X]$ is Harrop for $\hat{X}$ a predicate constant associated with $X$. In this situation $\bH_{X}(P)$ stands for $\bH(P[\hat{X}/X])[X/\hat{X}]$. The idea is that $\bH_{X}(P)$ is the same as $\bH(P)$ but considering $X$ as a (non-computational) predicate constant.

\begin{lemC}[\cite{btifp}]\label{lem-harreal}
\mbox{}
\begin{enumerate}
\item\label{lem-harreal-1} If $A$ is Harrop, then $\bH(A) \leftrightarrow \br A$.

\item\label{lem-harreal-2} If $E$ is an nc expression, then $\bH(E) = E$, in particular $\bH(\bfalse) = \bfalse$.

\end{enumerate}
\end{lemC}

\begin{exa}[\textbf{Realiser of induction and co-induction}]\label{ex-ind-co}
Set 
\begin{gather*}
f \circ g \Def \lambda a\,.\,f(g\,a),\\
\hbox{$[f+g]$}
   \Def \lambda c\,.\,\ccase{c}{\bleft(a) \to f\,a; \bright(b) \to g\,b}.
\end{gather*}

Note that if $f:\rho\to\sigma$ and $g:\sigma\to\sigma'$, then
$g\circ f:\rho\to\sigma'$,
and if $f_1 : \rho_1\to\sigma$ and $f_2 : \rho_2\to\sigma$, then
$[f_1+f_2] : (\rho_1+\rho_2) \to \sigma$.

Note also that for a s.p.\ non-Harrop operator $\Phi$,
and a non-Harrop predicate $P$.
\[\tau(\mu(\Phi))=\tau(\nu(\Phi)) = 
\bfix\tau(\Phi) \Def \bfix\,\alpha\,.\,\tau(\Phi)(\alpha)\,\]

For every s.p.\ type operator $\varphi$ let 
$\mon_\varphi : (\alpha\to\beta)\to\varphi(\alpha)\to\varphi(\beta)$
be the canonical program such that 
for every s.p.\ operator $\Phi$ and all predicates $P,Q$ (of fitting arity),
$\mon_{\tau(\Phi)}$ realizes $(P\subseteq Q) \to \Phi(P)\subseteq\Phi(Q)$.
Then $\mon_{\varphi}$ is a polymorphic program whose type depends on the type variables
$\alpha,\beta$. These type variables may be substituted by any types $\rho,\sigma$.
We sometimes write $\mon^{\rho,\sigma}_\varphi$ to indicate that we are interested
in the typing obtained by this substitution, that is,
$\mon^{\rho,\sigma}_\varphi: (\rho\to\sigma)\to\varphi(\rho)\to\varphi(\sigma)$.
A similar convention applies to the polymorphic programs defined below, such as
$\iter_\varphi$, $\coiter_\varphi$, etc., as well as to the polymorphic constructors
$\bleft^{\alpha,\beta}:\alpha\to(\alpha+\beta)$,
$\bright^{\alpha,\beta}:\beta\to(\alpha+\beta)$, and the identity function 
$\id^\alpha:\alpha\to\alpha$.
Of course these programs do not depend on the superscripts but only 
on the subscript (if any).

To improve readability we will in the following omit the `$\tau$' from $\tau(A)$,
$\tau(P)$ and $\tau(\Phi)$ and we write $\nu\Phi$ instead of
$\nu(\Phi)$, etc. Hence for example, instead of writing
\[
\mon^{\tau(P),\tau(\nu(\Phi))}_{\tau(\Phi)}:
(\tau(P)\to\tau(\nu(\Phi)))\to\tau(\Phi)(\tau(P))\to\tau(\Phi)(\tau(\nu(\Phi)))
\]
we write
\[
\mon^{P,\nu\Phi}_{\Phi}:(P\to\nu\Phi)\to\Phi(P)\to\Phi(\nu\Phi)
\]
or even
\[
\mon^{P,\nu\Phi}_{\Phi}:(P\to\nu\Phi)\to\Phi(P)\to\nu\Phi
\]
since $\tau(\Phi)(\tau(\nu\Phi))\equiv\tau(\nu\Phi)$.

\paragraph{Induction.}

If $s : \Phi(P) \to P$ realises $\Phi(P) \subseteq P$, then 
$\iter^P_{\Phi}\,s$ realizes $\mu\Phi\subseteq P$ where 
\begin{gather*}
\iter_{\varphi} : (\varphi(\alpha) \to \alpha) \to \bfix(\varphi)\to\alpha,\\
\iter_{\varphi}\,s \Def 
     \brec\,\lambda f.\,s \circ \mon^{\bfix(\varphi),\alpha}_{\varphi}\, f.
\end{gather*}

\paragraph{Co-induction.}

If $s : P\to \Phi(P)$ realises $P \subseteq \Phi(P)$, then 
$\coiter^P_{\Phi}\,s$ realises $P\subseteq \nu\Phi$ where
\begin{gather*}
\coiter_{\varphi} :(\alpha\to \varphi(\alpha))\to\alpha\to\bfix\varphi,\\
\coiter_\varphi\,s \Def 
       \brec\,\lambda f.\,\mon^{\alpha,\bfix\varphi}_\varphi\, f \circ s.
\end{gather*}

\paragraph{Half-strong co-induction.}

If $s : P\to (\Phi(P) + \nu\Phi)$ 
realises $P \subseteq \Phi(P) \cup \nu\Phi$, then 
$\hscoiter^P_{\Phi}\,s$
realises $P\subseteq \nu\Phi$ where
\begin{gather*}
\hscoiter_{\varphi} :(\alpha\to (\varphi(\alpha)+\bfix\varphi))\to
                                                  \alpha\to\bfix\varphi\\
\hscoiter_\varphi\,s \Def \brec\,
     \lambda f.\,
    [\mon^{\alpha,\bfix\varphi}_\varphi\, f + \id^{\bfix\varphi}] \circ s
\end{gather*}

\paragraph{Strong co-induction.}

If $s : P\to (\Phi(P)+\nu\Phi)$  
realises $P \subseteq \Phi(P \cup \nu(\Phi))$, then 
$\scoiter^P_{\Phi}\,s$
realises $P\subseteq \nu\Phi$ where
\begin{gather*}
\scoiter_{\varphi} :(\alpha\to \varphi(\alpha+\bfix\varphi))\to
                                                  \alpha\to\bfix\varphi,\\
\scoiter_\varphi\,s \Def \brec\,
     \lambda f.\,
    \mon^{\alpha+\bfix\varphi,\bfix\varphi}_\varphi\, [f + \id^{\bfix\varphi}] \circ s.
\end{gather*}

\paragraph{Generalised half-strong co-induction.}

Assume $\Phi'$ is monotone.

Let $\absorb^{\alpha}_{\Phi',\Phi} : \Phi'(\Phi(\alpha)) \to \Phi(\alpha)$ 
realise $\Phi'(\Phi(Y))\subseteq \Phi(Y)$ for all $Y$. 

If $s : P\to \Phi'(\Phi(P) + \nu\Phi)$ 
realises $P \subseteq \Phi'(\Phi(P)\cup\nu(\Phi))$, then
\[\hscoiter_{\Phi}\,
   (\absorb_{\Phi',\Phi} \circ \mon_{\Phi'} 
           [\mon_{\Phi}\,\bleft +
            \mon_{\Phi}\,\bright] \circ s)\] 
realises $P \subseteq\nu\Phi$. This means that the realiser is a function 
$f:P\to\nu\Phi$ defined recursively by
\begin{align*}
  f \overset{\brec}{=} &\mon^{P+\nu\Phi,\nu\Phi}_{\Phi}\, [f + \id^{\nu\Phi}] \\
     &\circ\ \absorb^{P+\nu\Phi}_{\Phi',\Phi} \\ 
     &\circ\ \mon^{\Phi(P)+\nu\Phi,\Phi(P+\nu\Phi)}_{\Phi'} 
             [\mon^{P,P+\nu\Phi}_{\Phi}\,\bleft^{P,\nu\Phi} + 
               \mon^{\nu\Phi,P+\nu\Phi}_{\Phi}\,\bright^{P,\nu\Phi}] \\
     &\circ\ s. 
\end{align*}

\end{exa}

\begin{exa}[\textbf{Realiser of well-founded induction}]\label{ex-wfirec}
The schema of well-founded induction, $\wfi_{\prec, A}$(P), is realised as follows: If $s$ realises $\prog_{\prec, A}$ where $P$ is non-Harrop, then $\acc_{\prec} \cap A \subseteq P$ is realised by
\begin{itemize}
\item 
$\tilde{f} \overset{\text{rec}}{=} \lambda a.\, (s\, a\, (\lambda a'.\,\lambda b.\, \tilde{f}\, a'))$ if $\prec$ and $A$ are both non-Harrop,

\item 
$\tilde{f} \overset{\text{rec}}{=} \lambda a.\, (s\, a\, \tilde{f})$ if $\prec$ is Harrop and $A$ is non-Harrop,

\item
$\tilde{c} \overset{\text{rec}}{=} s\,(\lambda b.\, \tilde{c})$ if $\prec$ is non-Harrop and $A$ is Harrop,

\item
$\brec s$ if $\prec$ and $A$ are both Harrop.

\end{itemize}
See \cite[Lemma 21]{btifp} for a proof.
\end{exa}

 \subsection{Soundness}
\label{sub-soundness}

The Soundness Theorem~\cite{btifp} stating that provable formulas are realisable is the theoretical foundation for program extraction.
\begin{thm}[\textbf{Soundness}]\label{thm-sound} 
Let $\AAA$ be a set of nc axioms. From an \emph{IFP($\AAA$)} proof of formula $A$ one can extract a program $M : \tau(A)$ such that $M \br A$ is provable in \emph{RIFP($\AAA$)}.

More generally, let $\Gamma$ be a set of Harrop formulas and $\Delta$ a set of non-Harrop formulas. Then, from an \emph{IFP($\AAA$)} proof of a formula $A$ from the assumptions $\Gamma, \Delta$ one can extract a program $M$ with $\emph{FV($M$)} \subseteq \vec{u}$ such that $\vec{u} : \tau(\Delta) \vdash M : \tau(A)$ and $M \br A$ are provable in \emph{RIFP($\AAA$)} from the assumptions $\bH(\Gamma)$ and $\vec{u} \br \Delta$.
\end{thm}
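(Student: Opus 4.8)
The plan is to prove the general, context-bearing statement by induction on the IFP($\AAA$) derivation; the closed case then follows by taking $\Gamma$ and $\Delta$ empty. For a derivation of $A$ from $\Gamma, \Delta$ I would build a realiser $M$ whose free realiser-variables lie among $\vec{u}$, one variable $u_B : \tau(B)$ for each non-Harrop assumption $B \in \Delta$. Harrop assumptions contribute no realiser-variable, since their only possible realiser is $\bnil$; instead they enter the RIFP($\AAA$) derivation through the hypotheses $\bH(\Gamma)$. The two invariants to maintain along the induction are the typing judgement $\vec{u} : \tau(\Delta) \vdash M : \tau(A)$ and the realisability judgement $M \br A$, both to be established in RIFP($\AAA$) from $\bH(\Gamma)$ and $\vec{u} \br \Delta$.

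The logical rules are discharged by the familiar Brouwer--Heyting--Kolmogorov correspondence, read off directly from the realisability clauses of Section~\ref{sub-realisability}. Conjunction introduction pairs the two realisers with $\bpair$ and degenerates to the single non-Harrop component when one conjunct is Harrop; elimination projects. Disjunction introduction tags with $\bleft$ or $\bright$, and its elimination is a case analysis on the realiser of the disjunction. Implication introduction abstracts the realiser-variable of the discharged assumption and modus ponens applies; the Harrop-premise clause replaces the abstraction by the guarded form prescribed by the $\to$-clause. For the quantifiers one exploits uniformity: since $(\forall x)\,A$ and $(\exists x)\,A$ are realised exactly as $A$ is, the realiser is simply carried through both introduction and elimination, the only genuine proof-obligation being the eigenvariable side-condition, namely that the bound variable does not occur free in the realisers of the open assumptions --- which holds because those realisers depend only on $\vec{u}$. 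Equality, being non-computational, is handled by Lemma~\ref{lem-harreal}: its realiser is trivial and replacement merely transports realisers along provable equalities.

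The fixed-point rules are where computational content is genuinely produced. Closure $\Phi(\mu\Phi) \subseteq \mu\Phi$ and co-closure $\nu\Phi \subseteq \Phi(\nu\Phi)$ are realised by the identity, since $\tau(\mu\Phi) = \tau(\nu\Phi) = \bfix\,\tau(\Phi)$ unfolds definitionally to $\tau(\Phi)(\bfix\,\tau(\Phi))$. The induction rule is realised by $\iter_{\Phi}$ and the co-induction rule by $\coiter_{\Phi}$, the combinators already exhibited in Example~\ref{ex-ind-co}: if $s$ realises $\Phi(P) \subseteq P$ then $\iter_{\Phi}\,s$ realises $\mu\Phi \subseteq P$, and dually $\coiter_{\Phi}\,s$ realises $P \subseteq \nu\Phi$. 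Both rest on the mediating fact that $\mon_{\tau(\Phi)}$ realises the monotonicity statement $(P \subseteq Q) \to \Phi(P) \subseteq \Phi(Q)$. Axioms require no further work: an nc axiom is identical to its realisability interpretation by Lemma~\ref{lem-harreal}(\ref{lem-harreal-2}) and is admitted because it is classically true, and the same applies to the Harrop axioms through $\bH$.

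I expect the main obstacle to be the correctness of $\iter_{\Phi}$ and $\coiter_{\Phi}$, the one point where the combinatorial induction on derivations must be supplemented by the domain theory of Section~\ref{sub-prog}. For $\iter_{\Phi}$ one argues by induction on the rank $\brk(a)$ of a compact realiser $a$ of $\mu\Phi$, using rk1 and rk2 to descend through the constructors produced by the closure rule and through the monotonicity realiser; for $\coiter_{\Phi}$ one reasons on the coinductive side, realising membership in $\nu(\bR(\Phi))$ via the least-fixed-point operator $\bfix$ together with a continuity and approximation argument. A secondary but pervasive difficulty is the bookkeeping of the Harrop versus non-Harrop case split: every connective clause bifurcates according to which constituents are Harrop, and one must check in each case that the assembled realiser still inhabits $\tau(A)$ and that the $\bH$-hypotheses propagate correctly, invoking Lemma~\ref{lem-harreal}(\ref{lem-harreal-1}) to pass between $\bH(A)$ and $\br A$ on Harrop subformulas.
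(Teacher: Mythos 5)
A preliminary remark: this paper does not prove Theorem~\ref{thm-sound} at all --- it is imported from \cite{btifp}, where the proof is indeed the induction on derivations you outline. Your treatment of the logical rules, the quantifiers via uniformity, the axioms via Lemma~\ref{lem-harreal}, and your choice of realisers for the fixed-point rules (identity for closure/co-closure, $\iter_{\Phi}\,s$ and $\coiter_{\Phi}\,s$ for induction and co-induction, mediated by $\mon_{\Phi}$) all agree with that proof. The disagreement is confined to the step you yourself single out as the main obstacle, and there it is a genuine gap, not a stylistic difference.

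The theorem's conclusion is that $M \br A$ is \emph{provable in RIFP($\AAA$)}; so the correctness of $\iter_{\Phi}$ and $\coiter_{\Phi}$ must be established by a derivation inside RIFP, not by semantic reasoning about the domain $D$. Your plan --- induction on the rank $\brk(a)$ of compact realisers for $\iter_{\Phi}$, and a ``continuity and approximation argument'' for $\coiter_{\Phi}$ --- is exactly such semantic reasoning, and it also fails on its own terms: realisers of $\mu\Phi$ need not be compact, realisability predicates are neither upward nor downward closed under $\sqsubseteq$, and there is no general lemma reducing realisability of an arbitrary formula to its compact approximants (the one place this paper performs such a reduction, Lemma~\ref{lem-auxitintro}(\ref{lem-auxitintro-3}), uses an ad hoc argument exploiting that the particular function $f^*$ there has only compact values --- special structure that is unavailable in general). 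The mechanism your proposal never invokes, and which makes the actual proof work, is that realisability commutes with fixed points: $\bR(\mu\Phi) = \mu(\bR(\Phi))$ and $\bR(\nu\Phi) = \nu(\bR(\Phi))$ (the clause for $\Box \in \{\mu,\nu\}$ in Section~\ref{sub-realisability}). Consequently the soundness of IFP's induction rule is obtained by applying RIFP's \emph{own} induction principle to the predicate $\lambda(x,a).\,(\iter_{\Phi}\,s\;a) \br P(x)$, using the recursion unfolding $\iter_{\Phi}\,s = s \circ \mon_{\Phi}\,(\iter_{\Phi}\,s)$ together with the fact that $\mon_{\Phi}$ realises monotonicity; dually, soundness of co-induction is an application of RIFP's co-induction principle to the graph predicate of $\coiter_{\Phi}\,s$. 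With that replacement your induction on derivations goes through; without it, the crucial cases cannot be completed in the form the theorem requires.
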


If one wants to apply this theorem to obtain a program realising formula $A$ one must provide terms $K_1, \ldots, K_n$ realising the assumptions in $\Delta$. Then it follows that the term $M(K_1, \ldots, K_n)$ realises $A$, provably in RIFP. Because the program axioms of RIFP given in \cite{btifp} are correct with respect to the denotational semantics, a further consequence is that $M(K_{1}, \ldots, K_{n})$ is a correct realiser of $A$.

That realisers do actually \emph{compute} witnesses is shown in~\cite{btifp} by two \emph{Computational Adequacy Theorems} that relate the denotational definition of realisability with a lazy operational semantics.

\section{Computationally motivated logical connectives}\label{sec-concon}

Non-termination is a natural and fundamental phenomenon in computation. 
It is denotationally modelled in domain theory~\cite{dom} and a logical account 
of it is Scott's logic with an existence predicate~\cite{elogic}. 
The Minlog system~\cite{minlog}
supports the extraction of programs that may or may not terminate and
keeps control of potential partiality through a logic with totality degrees.
A limitation of programs extracted from proofs in Minlog, or other systems 
such as Coq~\cite{coq}, is that they are sequential. Having the
possibility of running computations concurrently, on the other hand,
can be very useful to get around partiality. If, e.g., $M$ and $N$ are
two programs known to realise formula $A$ under the assumption that
condition $B$ or $\neg B$ holds, respectively, then at least one of
them is guaranteed to terminate. So running them concurrently and
picking the result obtained first, will lead to a result realising
$A$, provided that if $M$ or $N$ terminates then it realises $A$.

To capture realisability restricted to a condition as described above,
we follow the approach in~\cite{bt} and extend in Section~\ref{sub-rest} IFP 
by a propositional connective 

\[ 
\rt{B}{A}  \qquad\hbox{(``$A$ \emph{restricted} to $B$'')} 
\]
which, for nc-formulas $B$, has a similar meaning as the formula $B \to A$ but
behaves slightly differently (and better) with respect to~realisability: While a realiser of
$B \to A$ is a program that realises $A$ if $B$ holds 
but otherwise provides no guarantees,
a realiser of $\rt{B}{A}$ is a program $p$ that terminates and realises $A$ if $B$ holds,
but even if $B$ does not hold, $p$ will realise $A$ provided $p$ terminates.
In order to behave well, the formation of $\rt{B}{A}$ is restricted to formulas $A$
satisfying a syntactic condition called productivity (defined in Section~\ref{sub-rest})
that guarantees that only terminating programs can realise $A$.

In Section~\ref{sub-conc} we introduce the concurrency modality $\ddown(A)$ 
from~\cite{bt} with the crucial rule
\[\dfrac{\rt{B}{A} \quad \rt{\neg B}{A}}{\ddown(A)}\,\, \text{($\ddown$-lem)} \]
that makes precise the above intuition. We also prove the realisability of a couple
of further rules that say how $\ddown$ interacts with other logical connectives.

Finally, in Section~\ref{sub-monad-conc},
we introduce a new concurrency modality,
$\itdown(A)$, which inherits most of the properties of $\ddown(A)$
but in addition has realisable rules corresponding to a monad.
This new modality will be used 
in Section~\ref{sec-sdgc} to define concurrent versions of the signed digit 
representation and infinite Gray code which are constructively equivalent.

\subsection{Restriction $\rt{B}{A}$}
\label{sub-rest}
Following~\cite{bt}, we introduce restriction, $\rt{B}{A}$, where $A$ is required 
to be \emph{productive}\footnote{\mytextcolor{red}{Observe that in \cite{bt} the notion ``strict'' is used instead of ``productive''.}}, that is,
every implication and restriction in $A$ has to be part of a Harrop formula
or a disjunction. In particular, Harrop formulas and disjunctions are
always productive. 
The reason why $A$ is required to be productive is that this ensures that $A$
has only defined realisers, that is $\bot$ does not realise $A$.

The definition of the Harrop property is extended by demanding
that Harrop formulas must not contain a restriction at a strictly
positive position. In particular, restrictions are not
Harrop. Realisability for restrictions is defined as
\[
a \br \rt{B}{A} \Def a : \tau(A) \wedge
(\br B \to a \neq \bot) \wedge
(a \neq\bot \to a \br A).
\]
Note that if $A$ is a Harrop formula, then $a \br \rt{B}{A}$ is equivalent to the formula 
\[
(\br B \vee a \not= \bot) \to (a = \bnil \wedge \bH(A)).
\]
The type of restriction is $\tau(\rt{B}{A}) \Def \tau(A)$.

To gain some intuition suppose that a closed program $M$ realises
$\rt{B}{A}$.  Since closed programs denote a value different from $\bot$
exactly if they reduce to whnf, one has:
(i) If $B$ is realisable, then $M$ reduces to whnf.
(ii) If $M$ reduces to whnf, then $M$ realises $A$ (even if $B$ is not
realisable). In this sense, one has partial correctness of $M$ with
respect to the specification $A$. The distinction between
$\rt{B}{A}$ and $B \to A$ regarding realisability is carefully
discussed in \cite{bt}.

Sometimes, we need to get rid of the productivity requirement. This can be achieved by considering formulas of kind $A \lor \bfalse$ instead of just $A$. By definition, $A \lor \bfalse$ is always productive. Moreover, $a \br A$, exactly if $\bleft(a) \br A \lor \bfalse$. Note here that $\bfalse$ has no realiser. This leads us to the following unrestricted version of the restriction connective
\[
\rtu{B}{A} \Def \rt{B}{(A \lor \bfalse)}.
\]

Since the realisers of such formulas are more complicated than those in the productive case, we keep both versions of the restriction connective. It should be clear from the definition that all statements in this paper about the realisability of rules for the restriction connective $\rest$ also hold for the unrestricted version $\restu$.

The following derivation rules concerning restriction are added to IFP:

\begin{longtable}{c|c}
\multicolumn{2}{c}{
$\dfrac{B \to A_{0} \vee A_{1} \quad \neg B \to A_{0} \wedge A_{1}}
      {\rt{B}{(A_{0} \vee A_{1})}}\,\,
            \text{($\rest$-intro ($A_{0}, A_{1}, B$ Harrop))}$} \\[3ex] \hline \\
$\dfrac{A}{\rt{B}{A}}$\,\, \text{($\rest$-return)} &
       $\dfrac{\rt{B}{A} \quad A \to (\rt{B}{A'})}{\rt{B}{A'}}$\,\, \text{($\rest$-bind)} \\[3ex]
$\dfrac{\rt{B}{A} \quad B' \to B}{\rt{B'}{A}}$\,\, \text{($\rest$-antimon)} &
       $\dfrac{\rt{B}{A} \quad B}{A}$\,\, \text{($\rest$-mp)} \\[3ex]
$\dfrac{}{\rt{\bfalse}{A}}$\,\, \text{($\rest$-efq)} &
       $\dfrac{\rt{B}{A}}{\rt{\neg\neg B}{A}}$\,\, \text{($\rest$-stab)} \\[3ex] 
 $\dfrac{(\rtu{C}{\rt{B}{A})}}{\rt{B \land C}{A}}$\, \text{($\rest$-absorb)} &
$\dfrac{\rt{B}{A} \quad \rt{B}{C}}{\rt{B}{(A \land C)}}$\, \text{($\rest$-$\land$)}. 
\end{longtable}

\begin{lemC}[\cite{bt}]\label{lem-restrule}
The rules for restriction are realisable, provably in extended \emph{RIFP}. Hence, Soundness Theorem~\ref{thm-sound} remains valid for the extension of \emph{IFP} by restriction, however, classical logic is needed to derive the correctness of realisers.
\end{lemC}

Note that the last two rules have not been considered in \cite{bt}. As is easily verified, Rule~(\emph{$\rest$-absorb}) is realised by $\lambda a.\, \ccase{a}{\bleft(a') \to a'}$
and Rule~(\emph{$\rest$-$\land$}) by $(\bpair\low)\low$, where $\stc{f}{a}$ denotes \emph{strict application}:
\[
\stc{f}{a} \Def \ccase{a}{C(\_) \to f\, a \mid
                             C \in \{ \bnil, \bleft, \bright, \bpair, \pfun \}}.
\]
Observe that $\stc{f}{a} = f\, a$ if $a \not= \bot$ and  $\stc{f}{\bot} = \bot$. 

\begin{lemC}[\cite{bt}]\label{lem-restrul}
The following rule is derivable from the rules for restriction:
\[
\dfrac{\rt{B}{A} \quad A \to A'}{\rt{B}{A'}}\,\, \emph{($\rest$-mon)}.
\]
\end{lemC}

The rule is realised by $\lambda f.\, \lambda a.\, \stc{f}{a}$. To see this assume that $a \br \rt{B}{A}$ and $f \br (A \to A')$. We have to show that $\stc{f}{a} \br \rt{B}{A'}$. Suppose first that $\br B$. Then $a \neq \bot$ and hence by definition of $f\low$, $\stc{f}{a} = f\,a$. We need that $f\,a \neq \bot$. Here, the productivity requirement for the restriction connective comes into play: $A'$ needs to be productive and since $f\,a \br A'$, we have that $f\,a \neq \bot$, as was to be shown. Next, suppose that $\stc{f}{a} \neq \bot$. Then $a \neq \bot$ as well, by definition of $f\low$. Therefore, $a \br A$. It follows that $f\,a \br A'$. Since $a \neq \bot$, we moreover have that $\stc{f}{a} = f\,a$.  Thus, $\stc{f}{a} \br A'$.

\subsection{McCarthy's Amb and the concurrency modality $\ddown(A)$}
\label{sub-conc}

To deal with concurrency, \cite{bt} introduced a
further constructor $\amb(a, b)$ indicating that its arguments $a, b$
need be evaluated concurrently in order to obtain one of the results
even if the other one is not terminating. The domain $D$ now has to
satisfy the domain equation
\[
D =
(\bnil + \bleft(D) + \bright(D) + \bpair(D \times D) + \bfun(D \to D) + \amb(D\times D))_{\bot}.
\]
The programming language is extended by a constructor $\amb$ which denotes the constructor $\amb$
in the domain $D$.
Hence, denotationally, the constructor $\amb$ is an exact copy of $\bpair$, that is,
it acts like a lazy pairing operator.
Only the operational semantics interprets a program $\amb(M, N)$ as a concurrent 
computation of $M$ and $N$ until one of them is reduced to whnf.
This is formalised by the following (non-deterministic) relation $\sspc$ (`c' for `choice'):
\renewcommand{\labelenumi}{c\roman{enumi}.}
\begin{enumerate}
\item \AxiomC{$M \ssp M'$}
             \UnaryInfC{$M \sspc M'$}
            \DisplayProof,
\item $\amb(M_1,M_2) \sspc M_i$\, if $M_i$ is a whnf  ($i=1,2$),
\item
\AxiomC{$M_i \sspc M_i'$ $(i = 1,\ldots, k)$} 
\UnaryInfC{$C(M_1,\ldots,M_k) \sspc C(M_1',\ldots,,M_k')$}
\DisplayProof\,
 if $C\neq \amb$.
\end{enumerate}
\renewcommand{\labelenumi}{\arabic{enumi}.}
The deterministic relation $\ssp$ which $\sspc$ extends is now to be understood with respect
to all constructors, including $\amb$.
The intuition of $\sspc$ is that a program is first deterministically evaluated using $\ssp$. 
If a program of the form $\amb(M_1,M_2)$ is obtained, deterministic computation continuous in 
parallel with $M_1$ and $M_2$. 
As soon as one of the two programs reach a whnf, the other may be discarded using Rule~(cii). 
Rule (ciii) says that the computation can be carried out inside (nested) 
data constructors.
Note that rule (cii) cannot be applied to proper subterms of a term $\amb(M_1,M_2)$
since the only way of reducing $\amb(M_1,M_2)$ with a rule other than (cii) is by Rule~(ci) and 
Rule~(vi) of $\ssp$.
This ensures that at any point only two concurrent threads are needed to carry out the computation.
For the reductions to yield the desired result (see~\cite{bt}, Theorems~1 and~2), fairness conditions 
must be imposed. For example, if in $\amb(M_1,M_2)$ at least one of the $M_i$ is hnf, then
Rule~(cii) will eventually be applied. 
In~\cite{bt} slighty more general (and more complicated) but essentially equivalent rules are 
given which facilitate the formalisation of the fairness condition and allow parallel threads 
to be evaluated at differend `speeds'.

To indicate at the logical level that a realiser of a formula $A$ may be computed concurrently, 
a new modality $\ddown(A)$ was introduced in \cite{bt} where, again, the formula
$A$ is required to be productive. 
For a non-Harrop formula $A$ realisability is defined as
\begin{align*}
c \br \ddown(A) \Def\, &c = \amb(a,b) \wedge a, b : \tau(A) \wedge \mbox{} \\
&(a \neq\bot \vee b \neq \bot) \wedge \mbox{} \\
&(a \neq\bot \to a \br A) \wedge (b \neq\bot \to b \br A).
\end{align*}
Thus, a realiser of $\ddown(A)$ is a pair of candidate realisers $a$
and $b$ at least one of which denotes a defined value
and all of $a$ and $b$ which are defined values are
correct realisers. In particular, if $a$ and $b$ are both defined,
then they are \emph{both} correct realisers. Therefore, by running the
two programs for the candidates $a$ and $b$ concurrently and taking
the one which becomes defined (i.e.~a whnf) first, it is guaranteed that
we obtain a correct result. Hence, we can safely stop the 
other process. 

The occurrence of $A$ in $\ddown(A)$ is regarded strictly
positive. The definition of the Harrop property is further extended by demanding
that Harrop formulas must not contain the concurrency operator at a strictly
positive position. In particular, $\ddown(A)$ is always non-Harrop 
(even if $A$ is Harrop). 
The type of the concurrency modality is $\tau(\ddown(A)) = \bfA(\tau(A))$
where $\bfA$ is a new type operator.

IFP is once more extended by adding the following derivation rules. The logical system thus obtained is called \emph{Concurrent Fixed Point Logic (CFP)}.

\begin{longtable}{c|c}
$\dfrac{\rt{C}{A} \quad \rt{\neg C}{A}}{\ddown(A)}\,\, \text{($\ddown$-lem)} $
&$ \dfrac{A}{\ddown(A)}\,\, \text{($\ddown$-return)}$ \\[3ex]
$\dfrac{\ddown(A) \quad A \to B}{\ddown(B)}\,\, \text{($\ddown$-mon)}$
&$\dfrac{\ddown(A)}{A}\,\, \text{($\ddown$-H ($A$ Harrop))}$ \\[3ex] 
$ \rule{0mm}{8mm} \dfrac{\ddown(\rt{B}{A})}{\ddown(A)}\,\, \text{($\ddown$-$\rest$-absorb)}$
& $ \dfrac{\rt{B}{A} \quad \rt{D}{C}}{\rt{B \lor D}{\ddown(A \lor C)}}\,\, \text{($\ddown$-$\rest$-$\lor$)}$
\\[5mm] \hline
\multicolumn{2}{c}{$\rule{0mm}{8mm}
\dfrac{B \to \ddown(A_{0} \vee A_{1}) \quad \neg B \to A_{0} \wedge A_{1}}
      {\rt{B}{\ddown(A_{0} \vee A_{1})}}\,\, 
\text{($\ddown$-$\rest$-intro ($A_{0}, A_{1}, B$ Harrop))}$}
\end{longtable}

Note that the last three rules have not been considered in~\cite{bt}.

\begin{lem}\label{lem-concrule}
The rules for the concurrency modality are realisable.
\end{lem}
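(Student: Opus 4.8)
The plan is to verify realisability rule by rule, since Lemma~\ref{lem-concrule} asserts that \emph{all} the derivation rules introduced for the concurrency modality $\ddown$ are realisable. The rules ($\ddown$-lem), ($\ddown$-return), ($\ddown$-mon), and ($\ddown$-H) are from~\cite{bt} and their realisability is presumably already established there, so the genuinely new work (as the paper explicitly flags: ``the last three rules have not been considered in~\cite{bt}'') concerns ($\ddown$-$\|$-absorb), ($\ddown$-$\|$-$\lor$), and ($\ddown$-$\|$-intro). For each rule my task is to exhibit a program built from the realisers of the premises and show, unfolding the definition of $a \br \ddown(A)$ and of $a \br (A \| B)$, that it realises the conclusion. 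The crucial facts I will use repeatedly are: (i) a realiser of $\ddown(A)$ is of the form $\amb(a,b)$ with at least one of $a,b$ defined and every defined component a correct realiser of $A$; and (ii) a realiser of $A \| B$ is typed at $\tau(A)$, is defined whenever $\br B$, and realises $A$ whenever it is defined.

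First I would treat ($\ddown$-$\|$-absorb), which derives $\ddown(A)$ from $\ddown(A \| B)$. Here the type is unchanged since $\tau(A \| B) = \tau(A)$. Given $c \br \ddown(A \| B)$, so $c = \amb(a,b)$ with a defined component, I claim the \emph{same} program $c$ realises $\ddown(A)$: each defined component realises $A \| B$, and by fact~(ii) a defined realiser of $A \| B$ is automatically a realiser of $A$, so the definedness-plus-correctness clauses for $\ddown(A)$ hold verbatim. Thus the identity realises this rule. Next, ($\ddown$-$\|$-$\lor$) produces $\rst{\ddown(A \lor C)}{(B \lor D)}$ from $\rst{A}{B}$ and $\rst{C}{D}$; here I would combine the two candidate realisers $a,b$ by injecting them as $\bleft$/$\bright$ into the sum type $\tau(A) + \tau(C)$ and pairing the results with $\amb$, checking that the restriction side-condition ``defined whenever $\br(B \lor D)$'' follows because $\br B$ forces $a \neq \bot$ and $\br D$ forces $b \neq \bot$, and in either case at least one $\amb$-component is defined.

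The step I expect to be the main obstacle is ($\ddown$-$\|$-intro), whose realiser must combine the computational content of the two premises $B \to \ddown(A_0 \vee A_1)$ and $\neg B \to A_0 \wedge A_1$ into a single program realising $\rst{\ddown(A_0 \vee A_1)}{B}$, under the hypothesis that $A_0, A_1, B$ are all Harrop. The delicacy is that $A_0,A_1,B$ being Harrop collapses much of the type structure (a Harrop disjunct is realised trivially, and $\ddown$ of a Harrop formula is nonetheless non-Harrop with type $\bfA(\tau(A_0 \lor A_1))$), so I must track exactly which components carry the nontrivial $\bleft$/$\bright$ tag witnessing \emph{which} disjunct holds. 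The realiser should run the $\ddown$-candidate from the first premise and the pair from the second premise concurrently, packaging them via $\amb$ so that it reduces to whnf whenever $\br B$ holds (supplied by the first premise) but, should it reduce anyway in the $\neg B$ case, still yields a correct $\bleft$/$\bright$ tag (supplied by the second premise, where $A_0 \wedge A_1$ guarantees \emph{both} disjuncts so either tag is sound). Verifying the restriction clause --- that the resulting program is defined exactly when it ought to be, and correct whenever defined --- is where the careful case analysis on the Harrop structure, together with the operational behaviour of $\amb$ described by $\sspc$ and the fairness condition, does the real work.
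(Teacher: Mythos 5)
Your overall plan follows the paper's: the first four rules are discharged by citation to \cite{bt}, and the new work is on the last three; your treatment of Rule~(\emph{$\ddown$-$\|$-absorb}) --- the identity realises it because a defined realiser of $\rst{A}{B}$ automatically realises $A$ --- is exactly the paper's argument. However, for Rule~(\emph{$\ddown$-$\|$-$\lor$}) your construction (``inject $a,b$ as $\bleft$/$\bright$ and pair with $\amb$'') breaks on the clause you do not check, namely partial correctness. The term $\amb(\bleft(a),\bright(b))$ is a constructor term and hence defined \emph{unconditionally}, so the restriction clause ``if defined then it realises $\ddown(A\lor C)$'' must hold even when neither $B$ nor $D$ is realisable; in that case $a$ and $b$ may both be $\bot$, and then the component $\bleft(\bot)$ is defined but does not realise $A\lor C$, since the premises force $A$ and $C$ to be productive and productive formulas are not realised by $\bot$. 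Strict injections alone do not repair this: $\amb(\bleft\low a,\bright\low b)$ becomes $\amb(\bot,\bot)$, still defined but with no defined component. The missing idea is the paper's guarded combinator $\amblr$, where $\amblr(u,v)$ equals $\amb(u,v)$ if $u$ is of the form $\bleft(\_)$ or $v$ of the form $\bright(\_)$, and is $\bot$ otherwise; the realiser is $\lambda a.\,\lambda b.\,\amblr(\bleft\low a,\bright\low b)$, and it is precisely this bottoming-out that makes partial correctness go through.

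For Rule~(\emph{$\ddown$-$\|$-intro}) your plan to ``run the $\ddown$-candidate from the first premise and the pair from the second premise concurrently'' cannot be implemented: since $A_0,A_1,B$ are Harrop, the second premise $\neg B \to A_0 \wedge A_1$ is itself Harrop and carries \emph{no} computational content --- there is no pair to run. The realiser must be built from the realiser $c$ of the first premise alone; the second premise enters only the verification, through a classical case distinction on $\bH(B)$. (You do state the correct semantic reason for soundness of an arbitrary tag: if $\bH(A_0)$ and $\bH(A_1)$ hold, every defined element of type $\tau(A_0\lor A_1)$ realises the disjunction.) What your sketch lacks, and what does the real work in the paper, is a \emph{guard} on $c$: the paper's $h$ returns $c$ exactly when $c$ has the form $\amb(a,b)$ with at least one constructor-headed component --- implemented by the overlapping clauses $\amb(C(\_),\_)\to c$ and $\amb(\_,C(\_))\to c$ for $C\in\{\bleft,\bright\}$, which is why the extended parallel-or-style case construct is needed --- and returns $\bot$ otherwise. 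Without this guard, in the $\neg\bH(B)$ case nothing constrains $c$ beyond its type, so $c$ could be, e.g., $\amb(\bot,\bot)$: defined, yet not a realiser of $\ddown(A_0\lor A_1)$, so the conclusion's partial-correctness clause would fail.
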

\begin{proof} 
The realisability of the first four rules has been shown in~\cite{bt}. It remains to consider the last three rules. 

It is easy to see that (\emph{$\ddown$-$\rest$-absorb}) is realised by the identity function:
Assume $c \br \ddown(\rt{B}{A})$. 
Then $c = \amb(a, b)$. Furthermore, $a \neq \bot$ or $b \neq\bot$, and 
in the first case $a \br \rt{B}{A}$ while in the second case $b \br \rt{B}{A}$.
We show that $c \br \ddown(A)$. 
By the facts we know about $c$, it suffices to show that if $a \neq \bot$ then $a \br A$ 
(and similarly for $b$).  But if $a \neq \bot$, then $a \br \rt{B}{A}$ and hence $a \br A$.

Next, we show that Rule~($\ddown$-$\rest$-$\lor$) is realised by
\[
g \Def \lambda a.\, \lambda b.\, \amblr(\bleft\low a, \bright\low b).
\]
where $\amblr(u,v) \Def \amb(u,v)$ if $u=\bleft(\_)$ or $v = \bright(\_)$ and $\Def \bot$ otherwise.
$\amblr$ can be easily defined using the case construct.

Suppose that $a \br \rt{B}{A}$ and $b \br \rt{D}{C}$.
We have to verify that
\begin{enumerate}
\item\label{gr1} $\br (B \lor D) \to g\,a\,b \ne \bot$.

\item\label{gr2} $g\, a\, b \ne \bot \to (g\, a\, b) \br \ddown(A \lor C)$.

\end{enumerate}

(\ref{gr1}) Assume $B \lor D$ is realisable. Then $B$ or $D$ is realisable. 
Without restriction assume $\br B$. Then $a \ne \bot$ and hence 
$\bleft\low a = \bleft(a)$ and $g\,a\,b =\amb(\bleft(a), \bright\low b) \ne \bot$.  

(\ref{gr2}) If $g\, a\, b \ne \bot$, then
$g\, a\, b = \amb(u,v)$ where $u=\bleft\low a$ and $v = \bright\low b$,
and $u\ne\bot$ or $v\ne\bot$. 
Furthermore, if $u\ne\bot$, then $u = \bleft(a)$ where $a\ne\bot$.
Hence $a\br A$ and therefore $u \br (A\lor C)$.
With a similar argument one sees that if $v \ne \bot$, then 
$v \br (A\lor C)$.

Rule~(\emph{$\ddown$-$\rest$-intro}), finally, is realised by
\[
h \Def 
 \lambda c.\, 
   \ccase{c}{\amb(C(\_),\_) \to c; \amb(\_,C(\_)) \to c \mid C \in \{\bleft,\bright\}}.
\]
Observe the overlapping clauses. First we note that for
$c : \bfA(\tau(A_0 \lor A_1))$:
\begin{description}
\item[\rm (*)] If $c$ is of the form $\amb(a,b)$, where
$a\neq\bot$ or $b\neq\bot$, then $h\,c = c$. This is the case, in particular, if $c$ realises
$\ddown(A_{0} \vee A_{1})$.
\item[\rm (**)] If $h\,c\neq\bot$, then $h\,c = c$ and $c$ is of the form $\amb(a,b)$
where $a\neq\bot$ or $b\neq\bot$.
\end{description}
Now, assume $c$ realises 
$B \to \ddown(A_{0} \vee A_{1})$, that is, $c : \bfA(\tau(A_0 \lor A_1))$ and 
$\bH(B) \to c \br (\ddown(A_{0} \vee A_{1}))$, 
and that $\bH(\neg B \to A_{0} \wedge A_{1})$ holds, that is, 
$\neg\bH(B) \to \bH(A_{0}) \wedge \bH(A_{1})$.
We show that $h\,c$ realises $\rt{B}{\ddown(A_{0} \vee A_{1})}$:

First, assume $\bH(B)$. Then $c$ realises $\ddown(A_{0} \vee A_{1})$. 
Hence, by (*), $h\,c = c \neq\bot$.

Next, suppose $h\,c\neq\bot$. Then, by (**), $h\,c = c$ and 
$c$ is of the form $\amb(a,b)$ where $a\neq\bot$ or $b\neq\bot$.
Therefore, it suffices to show that $c$ realises $\ddown(A_{0} \vee A_{1})$. 
We do a classical case analysis on $\bH(B)$. 
If $\bH(B)$ holds, then $c$ realises $\ddown(A_{0} \vee A_{1})$.
If $\bH(B)$ does not hold, then $\bH(A_0)$ and $\bH(A_1)$ hold. 
To prove that $c$ realises $\ddown(A_{0} \vee A_{1})$ it suffices to show
that whenever $a$ or $b$ are defined, then they realise $A_0\lor A_1$.
Since $c : \bfA(\tau(A_0 \lor A_1))$, we have $a,b\in\tau(A_0\lor A_1)$.
But, since $\bH(A_0)$ and $\bH(A_1)$ hold, every defined element
in $\tau(A_0\lor A_1)$ realises $A_0\lor A_1$. 
\end{proof}

\extra{
We summarise the realisers obtained by displaying the rules above with their realisers.
We restrict Rule~($\ddown$-mon) to the most interesting cases where $A$ and $B$ are
both non-Harrop. In this case we need for the rule the program
\[
\mamb \Def \lambda f.\, \lambda c.\, \ccase{c}{\amb(a, b) \to \amb(\stc{f}{a}, \stc{f}{b})}.
\]

\begin{longtable}{c|c}
$\dfrac{a\br \rt{C}{A} \quad b \br \rt{\neg C}{A}}{\amb(a,b) \br \ddown(A)}\,\,
                             \text{($\ddown$-lem)} $
&$ \dfrac{a\br A}{\amb(a,\bot) \br \ddown(A)}\,\, \text{($\ddown$-return)}$ \\[3ex]
$\dfrac{c \br \ddown(A) \quad f \br (A \to B)}{(\mamb f\,c) \br \ddown(B)}\,\,
                             \text{($\ddown$-mon, $A, B$ non-Harrop)}$
&$\dfrac{\bH(\ddown(A))}{\bH(A)}\,\, \text{($\ddown$-H ($A$ Harrop))}$ \\[3ex]
$\dfrac{a \br \rt{B}{A} \quad b \br \rt{D}{C}}{(g\, a\, b) \br \rt{B \lor D}{\ddown(A \lor C)}}\,\, \text{($\ddown$-$\rest$-$\lor$)} $
& $\dfrac{c \br \ddown(\rt{B}{A})}{c \br \ddown(A)}\,\,
              \text{($\ddown$-$\rest$-absorb)}$   \\[3ex] \hline
\multicolumn{2}{c}{$\rule{0mm}{8mm}
\dfrac{\bH(B) \to c \br \ddown(A_{0} \vee A_{1})\quad\neg \bH(B)\to\bH(A_{0})\wedge\bH(A_{1})}
      {(h\,c) \br \rt{B}{\ddown(A_{0} \vee A_{1})}}\,\, 
\text{($\ddown$-$\rest$-intro ($A_{0}, A_{1}, B$ Harrop))}$
}
\end{longtable}
where $g, h$ are as defined in the proof.
}

\begin{lem}\label{lem-concder}
The following rules are derivable in \emph{CFP}:
\[
\begin{array}{c|c}
\dfrac{\neg\neg (A \vee B) \quad \rt{A}{C} \quad \rt{B}{C}}{\ddown(C)}\,\, \emph{($\ddown$-$\vee$-elim)}
& \dfrac{\neg\neg (A \vee B) \quad \rt{A}{C} \quad \rt{B}{D}}{\ddown(C \vee D)}\,\, \emph{($\ddown$-$\vee$-elim-or)} \\[3ex]
\dfrac{\ddown(A \vee B)}{\ddown(\ddown(A) \vee \ddown(B))}\,\, \emph{($\ddown$-$\vee$-dist)}
&\dfrac{\ddown(A \wedge B)}{\ddown(A) \wedge \ddown(B)}\,\, \emph{($\ddown$-$\wedge$-dist).} 
\end{array}
\]
\end{lem}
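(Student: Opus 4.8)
The plan is to derive all four rules purely inside CFP, using only the restriction and concurrency rules already listed; since these are derivability claims, no reasoning about realisers is needed. I would dispose of the two conjunction/distribution rules first, as they are immediate, and reserve the two disjunctive elimination rules for last, since there a double negation has to be discharged and this is where the only real idea enters.

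For (\emph{$\ddown$-$\wedge$-dist}) I would apply (\emph{$\ddown$-mon}) twice to the hypothesis $\ddown(A \wedge B)$, once along the projection $A \wedge B \to A$ to obtain $\ddown(A)$ and once along $A \wedge B \to B$ to obtain $\ddown(B)$, and then conjoin the two conclusions. For (\emph{$\ddown$-$\vee$-dist}) I would first establish the implication $(A \vee B) \to (\ddown(A) \vee \ddown(B))$ by case analysis, using (\emph{$\ddown$-return}) in each branch to pass from $A$ to $\ddown(A)$ (respectively from $B$ to $\ddown(B)$) before injecting into the disjunction; a single application of (\emph{$\ddown$-mon}) to $\ddown(A \vee B)$ along this implication then yields $\ddown(\ddown(A) \vee \ddown(B))$.

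The substantial rule is (\emph{$\ddown$-$\vee$-elim-or}). Here I would feed the two restriction hypotheses $\rst{C}{A}$ and $\rst{D}{B}$ into (\emph{$\ddown$-$\|$-$\lor$}), obtaining $\rst{\ddown(C \vee D)}{(A \vee B)}$. The issue is that this restriction is guarded by $A \vee B$, whereas the only hypothesis relating $A$ and $B$ is the doubly negated $\neg\neg(A \vee B)$, so (\emph{$\|$-mp}) cannot be applied directly. The key step is to invoke (\emph{$\|$-stab}), which weakens the guard from $A \vee B$ to $\neg\neg(A \vee B)$, giving $\rst{\ddown(C \vee D)}{\neg\neg(A \vee B)}$; now (\emph{$\|$-mp}) fires with the hypothesis $\neg\neg(A \vee B)$ and discharges the restriction, producing $\ddown(C \vee D)$. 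Finally (\emph{$\ddown$-$\vee$-elim}) drops out as the special case $D := C$: (\emph{$\ddown$-$\vee$-elim-or}) delivers $\ddown(C \vee C)$, and one more application of (\emph{$\ddown$-mon}) along $C \vee C \to C$ collapses this to $\ddown(C)$.

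I expect the main obstacle to be exactly the mismatch between the classical hypothesis $\neg\neg(A \vee B)$ and the intuitionistic restriction guard $A \vee B$ produced by (\emph{$\ddown$-$\|$-$\lor$}). Recognising that (\emph{$\|$-stab}) is precisely the rule that reconciles them, since a restriction guarded by a condition is equally guarded by its double negation, is what makes the elimination rules go through; everything else is routine bookkeeping with (\emph{$\ddown$-mon}), (\emph{$\ddown$-return}), and elementary case analysis.
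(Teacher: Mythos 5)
Your proof is correct, but for the two elimination rules it follows a genuinely different route from the paper. The paper's own proof of (\emph{$\ddown$-$\vee$-elim}) and (\emph{$\ddown$-$\vee$-elim-or}) is simply a citation: these rules were already treated in the earlier Berger--Tsuiki work, where they are obtained from the primitive rule (\emph{$\ddown$-lem}) by massaging guards (from $C \| B$, $\neg\neg(A \vee B)$ and (\emph{$\|$-stab})/(\emph{$\|$-antimon}) one gets $C \| \neg A$, and then (\emph{$\ddown$-lem}) applies). You instead exploit (\emph{$\ddown$-$\|$-$\lor$}), one of the rules newly added in the present paper: from $\rst{C}{A}$ and $\rst{D}{B}$ it yields $\rst{\ddown(C \vee D)}{(A \vee B)}$, after which (\emph{$\|$-stab}) and (\emph{$\|$-mp}) against the hypothesis $\neg\neg(A \vee B)$ discharge the guard --- this correctly identifies the only non-routine point, namely reconciling the classical hypothesis with the intuitionistic guard. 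Your derivation is a valid CFP derivation, has the advantage of being self-contained within this paper's rule set rather than deferring to external work, and reverses the usual dependency: you obtain (\emph{$\ddown$-$\vee$-elim}) as the instance $D := C$ of (\emph{$\ddown$-$\vee$-elim-or}) followed by (\emph{$\ddown$-mon}) along $C \vee C \to C$, whereas the realisers displayed after the lemma ($\amb(a,b)$ versus $\amb(\stc{\bleft}{a},\stc{\bright}{b})$) suggest the opposite order of derivation; note that the realiser your route induces, essentially $\amblr(\bleft\low a,\bright\low b)$, agrees with the paper's displayed one whenever an argument is defined. Your treatment of (\emph{$\ddown$-$\vee$-dist}) and (\emph{$\ddown$-$\wedge$-dist}) coincides exactly with the paper's, which declares them easy consequences of (\emph{$\ddown$-return}) and (\emph{$\ddown$-mon}).
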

\begin{proof}
Rules~(\emph{$\ddown$-$\vee$-elim}) and (\emph{$\ddown$-$\vee$-elim-or}) have already been considered in \cite{bt}.
The Rules~(\emph{$\ddown$-$\vee$-dist}) and (\emph{$\ddown$-$\wedge$-dist}) are
easy consequences of the Rules~(\emph{$\ddown$-return}) and (\emph{$\ddown$-mon}).
\end{proof}

Let us again display the rules with their realisers.
\begin{longtable}{c}
$\dfrac{\neg\bH(\neg (A \vee B)) \quad a \br \rt{A}{C} \quad b \br \rt{B}{C}}
      {\amb(a,b)\br \ddown(C)}\,\,  \emph{($\ddown$-$\vee$-elim)}$ \\[3ex]
$\dfrac{\neg\bH(\neg (A \vee B)) \quad a \br \rt{A}{C} \quad b \br \rt{B}{D}}
      {\amb(\stc{\bleft}{a},\stc{\bright}{b})\br \ddown(C \vee D)}\,\,
                                    \emph{($\ddown$-$\vee$-elim-or)}$ \\[3ex]
$\dfrac{c \br \ddown(A \vee B)}
      {(\mamb g\,c) \br \ddown(\ddown(A) \vee \ddown(B))}\,\,
                                   \emph{($\ddown$-$\vee$-dist)}$\\[2ex]
\multicolumn{1}{l}{
\text{where $g\, d \Def \ccase{d}{\bleft(a) \to \bleft(\amb(a,\bot)); 
                         \bright(b) \to \bright(\amb(b,\bot))}$}
                         } \\[1ex]
$\dfrac{c \br \ddown(A \wedge B)}
      {\bpair(\mamb g_1\,c ,\mamb g_2\,c) \br (\ddown(A) \wedge \ddown(B))}\,\, 
                                   \emph{($\ddown$-$\wedge$-dist)} $\\[2ex]
\multicolumn{1}{l}{
\text{where $g_i\,d \Def \ccase{d}{\bpair(a_1,a_2)\to\amb(a_i,\bot)}$.}
				}	
\end{longtable}

\ignore{
A further useful rule that we will use in the sequel is a concurrent version of half-strong co-induction. In that rule we lift the propositional operator $\ddown$ pointwise to predicates, i.e., $\ddown(Q)(\vec x) \Def \ddown(Q(\vec x))$.

\begin{lem}[Concurrent Half-strong Co-induction]\label{lem-conhscoind}
Let $\Phi$ be strictly positive operator such that 
\[
\ddown(\Phi(X)) \subseteq \Phi(X)\footnote{Below we define the iteration, $\itdown$, of $\ddown$ 
such that operators of the form $\Phi(X)=\itdown(\ldots)$ satisfy this
condition.}
\]
is derivable in \emph{CFP}.
Then the following rule is derivable in \emph{CFP} as well:
\[
\dfrac{P \subseteq \ddown(\Phi(P) \cup \nu \Phi)}{P \subseteq \nu \Phi}.
\]
\end{lem}
\begin{proof}
We define a strictly positive operatore $\Psi$ by
$\Psi(X) \Def \ddown(\Phi(X) \cup \nu \Phi)$.
By Rule~(\emph{$\ddown$-return}),
\[
\nu \Phi \subseteq \Phi(X) \cup \nu \Phi \subseteq \ddown(\Phi(X) \cup \nu \Phi).
\]
In particular, we have
\[
\nu \Phi \subseteq \Psi(\nu \Phi),
\]
from which we obtain by co-induction that 
\[
\nu \Phi \subseteq \nu \Psi.
\]
With Rule~(\emph{$\ddown$-mon}) it follows that
\begin{equation*}
\begin{split}
\nu \Psi 
\subseteq \Psi(\nu \Psi) 
\subseteq \ddown(\Phi(\nu \Psi) \cup \nu \Phi) 
&\subseteq \ddown( \Phi(\nu \Psi) \cup \Phi(\nu \Phi)) \\
&\subseteq \ddown(\Phi(\nu \Psi) \cup \Phi(\nu \Psi))
\subseteq \ddown(\Phi(\nu \Psi))
\subseteq \Phi(\nu \Psi),
\end{split}
\end{equation*}
where the last inclusion holds by assumption.
By co-induction we therefore have that
\begin{equation}\label{eq-parcoind}
\nu \Psi \subseteq \nu \Phi.
\end{equation}

The premise of concurrent half-strong co-induction means that
\[
P \subseteq \Psi(P).
\]
Thus,
\[
P \subseteq \nu \Psi
\]
by co-induction, and hence
\[
P \subseteq \nu \Phi,
\]
because of (\ref{eq-parcoind}).
\end{proof}
}

\subsection{The monadic concurrency modality $\itdown(A)$}
\label{sub-monad-conc}

It is easy to see that the concurrency modality $\ddown$ is not a monad.
The monadic lifting law $(A \to \ddown(B)) \to (\ddown(A) \to \ddown(B))$
is in general not realisable.
In order to turn it into a monad we use its finite iterative closure
\[
\itdown(A) \overset{\mu}{=} \ddown(A \vee \itdown(A)).
\]
Note that $\itdown(A)$ is defined for arbitrary formulas $A$ (not only productive ones)
since in its definition $\ddown$ is applied to a disjunction.
As follows from the definition, we have for $c : \delta$,
\begin{align*}
c \br \itdown(A)         \overset{\mu}{=} \,\,
 & c = \amb(a, b) \wedge a, b : \tau(A \vee \itdown(A)) \wedge 
 (a \neq \bot \lor b \neq \bot) \land \mbox{}\\      
 & (a \neq \bot \to (a =\bleft(a') \wedge a' \br A) \vee 
                (a = \bright(a'') \wedge a'' \br \itdown(A))) \land \mbox{} \\
&  (b \neq \bot \to
 (b =\bleft(b') \wedge b' \br A) \vee 
               (b = \bright(b'') \wedge b'' \br \itdown(A))). 
\end{align*}

As we will see next, in case of the iterated concurrency modality 
the following analogue of Rule~(\emph{$\ddown$--$\rest$-intro}) modality is realisable. 
Again $A_{0}, A_{1}, B$ are required to be Harrop:
\[
\dfrac{B \to \itdown(A_{0} \vee A_{1}) \quad \neg B \to A_{0} \wedge A_{1}}
      {\rt{B}{\itdown(A_{0} \vee A_{1})}}\,\, 
\text{($\itdown$-$\rest$-intro)}.
\]
We need the following functions $\sqcup_{\bnil}, \triangleright : D \times D \to D$  and $\fun{f^{*}, h^{+}, g}{D}{D}$:
\begin{gather*}
a \sqcup_{\bnil} b \Def \ccase{\bpair(a, b)}{\bpair(\bnil, \_)\to\bnil;
                                            \bpair(\_, \bnil) \to\bnil},\\[.5ex]
d \triangleright a \Def \ccase{d}{\bnil \to a},
\end{gather*}
\begin{align*}
f^*d \overset{\brec}{=} \ccase{d}{&\bleft(\bleft(\bnil)) \to \bnil; \\
&\bleft(\bright(\bnil)) \to \bnil; \\
&\bright(\amb(u,v)) \to f^* u \sqcup_{\bnil} f^* v},
\end{align*}
\begin{align*}
h^+ c \overset{\brec}{=} \ccase{c}{&\amb(a,b) \to \\
&\hspace{.5cm} \ccase{f^* a \sqcup_{\bnil} f^* b}{\bnil \to \\
&\hspace{2.5cm} \amb(f^* a\triangleright g\, a, f^* b \triangleright g\, b)}\\
          &}, 
 \end{align*}
 \begin{equation*}
g\, d \Def \ccase{d}{\bleft(\_) \to d;
              \bright(e) \to \bright(h^+ e)}.
\end{equation*}

\begin{lem}\label{lem-auxitintro}
Assume $A = A_0 \lor A_1$ where $A_0$ and $A_1$ are Harrop formulas.
\begin{enumerate}

\item\label{lem-auxitintro-1}
 $(\forall a,b)\, (\amb(a,b) \br \itdown(A) \to f^* a=\bnil \lor f^* b = \bnil)$.
 
\item\label{lem-auxitintro-2} $(\forall c)\,(c \br \itdown(A) \to (h^+ c) \br \itdown(A))$.
\item\label{lem-auxitintro-3}
If $\bH(A_0)$ and $\bH(A_1)$, 
then $(h^+ c) \br \itdown(A_0 \lor A_1)$, for all $c$ such that $h^+ c$
is of the form $\amb(\_,\_)$.

\end{enumerate}
\end{lem}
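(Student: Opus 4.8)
The plan is to prove all three statements by induction, using throughout that $A = A_0 \lor A_1$ with $A_0,A_1$ Harrop, so that every realiser of $A$ is one of the two defined elements $\bleft(\bnil)$, $\bright(\bnil)$ of $\tau(A)=\btwo$, and that $\itdown(A)$ is a \emph{least} fixed point, which makes its realisability predicate amenable to fixed-point induction.

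For (\ref{lem-auxitintro-1}) I would induct on the inductive definition of $\br \itdown(A)$ with the predicate $P(c) \equiv (\exists a,b)(c = \amb(a,b) \land (f^{*} a = \bnil \lor f^{*} b = \bnil))$. Unfolding the closure of the fixed point, $c = \amb(a,b)$ with at least one of $a, b$ defined; assume $a \neq \bot$ (the other case is symmetric). If $a = \bleft(a')$ with $a' \br A$, then $a' \in \{\bleft(\bnil), \bright(\bnil)\}$, so the first two clauses of $f^{*}$ give $f^{*} a = \bnil$. If $a = \bright(a'')$, the induction hypothesis applies to $a''$, giving $a'' = \amb(u,v)$ with $f^{*} u = \bnil$ or $f^{*} v = \bnil$; by the third clause $f^{*} a = f^{*} u \sqcup_{\bnil} f^{*} v = \bnil$. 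In both cases $f^{*} a = \bnil$, hence $P(c)$.

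For (\ref{lem-auxitintro-2}) I would again induct on $\br \itdown(A)$, retaining fixed-point membership so that part (\ref{lem-auxitintro-1}) is available for the current $c$ and the induction hypothesis $h^{+} a'' \br \itdown(A)$ for the immediate sub-realisers $a''$. By (\ref{lem-auxitintro-1}) the guard $f^{*} a \sqcup_{\bnil} f^{*} b = \bnil$ holds, so $h^{+} c = \amb(f^{*} a \triangleright g\, a,\ f^{*} b \triangleright g\, b)$. The main computation is to check that each component is a legitimate realiser of $A \lor \itdown(A)$: if $a = \bot$ it evaluates to $\bot$; if $a = \bleft(a')$ then $f^{*} a = \bnil$ and $g\, a = a$, a left injection realising $A$; and if $a = \bright(a'')$ then again $f^{*} a = \bnil$ (by (\ref{lem-auxitintro-1}) applied to $a''$) and $g\, a = \bright(h^{+} a'')$, a right injection whose content realises $\itdown(A)$ by the induction hypothesis. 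Since at least one of $a,b$ is defined, at least one component is defined, so $h^{+} c$ realises $\itdown(A)$.

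For (\ref{lem-auxitintro-3}) the hypothesis $c \br \itdown(A)$ is no longer available, so the induction must instead be driven by the \emph{computation} of $f^{*}$. Writing $f^{*} = \sup_{n} F_{n}$ for the finite approximants of its defining functional and using that $\bnil$ is compact and maximal, $f^{*} d = \bnil$ holds iff $F_{n} d = \bnil$ for some $n$. I would establish, assuming $\bH(A_0)$ and $\bH(A_1)$, the auxiliary claim that $f^{*} d = \bnil$ implies $g\, d \br (A \lor \itdown(A))$, by induction on the least such $n$: the leaf clauses turn the defined element of $\btwo$ into a realiser of $A$ precisely because $\bH(A_0)$ and $\bH(A_1)$ hold, and in the node case $d = \bright(\amb(u,v))$ one has $g\, d = \bright(h^{+}(\amb(u,v)))$ with $h^{+}(\amb(u,v)) = \amb(f^{*} u \triangleright g\, u,\ f^{*} v \triangleright g\, v)$ again an $\amb$, so the claim reduces to its instances for the surviving branches. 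Granting the claim, (\ref{lem-auxitintro-3}) follows: if $h^{+} c$ is an $\amb$ then $c = \amb(a,b)$ with $f^{*} a = \bnil$ or $f^{*} b = \bnil$, whence every defined component of $h^{+} c$ realises $A \lor \itdown(A)$ and $h^{+} c \br \itdown(A)$. I expect the crux to lie exactly here: one must verify that the $\amb$-tree rebuilt by $h^{+}$ is well-founded along its right-spines, so that it lands in the \emph{least} fixed point $\itdown(A)$ rather than merely satisfying the fixed-point equation. Concretely, the delicate point is to show that each surviving branch is witnessed by $f^{*}$ at a strictly smaller stage than the node that produced it, so that the induction hypothesis indeed covers all defined components simultaneously.
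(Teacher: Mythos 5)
Parts (\ref{lem-auxitintro-1}) and (\ref{lem-auxitintro-2}) of your proposal are correct and are essentially the paper's own proofs: strictly positive induction on the inductive definition of $\br \itdown(A)$, with part (\ref{lem-auxitintro-1}) supplying the guard $f^* a \sqcup_{\bnil} f^* b = \bnil$ in part (\ref{lem-auxitintro-2}), exactly as you describe.

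Part (\ref{lem-auxitintro-3}) is where your argument breaks down, and at precisely the point you flag as the crux. Your induction is on the least stage $n$ with $F_n d = \bnil$ (where $F_n$ are the Kleene approximants of $f^*$), but in the node case $d = \bright(\amb(u,v))$ the guard $F_n u \sqcup_{\bnil} F_n v = \bnil$ certifies only \emph{one} of the two branches at stage $n$. The sibling branch may nevertheless satisfy $f^* = \bnil$ --- and is then gated into the output of $h^+$ as a \emph{defined} component that must also realise $A \lor \itdown(A)$ --- while its least witness stage is arbitrarily larger than $n$, so the induction hypothesis does not cover it. Moreover, this is not repairable by a cleverer stage measure, because your auxiliary claim ``$f^* d = \bnil$ implies $g\, d \br (A \lor \itdown(A))$'' is false for arbitrary domain elements: take the infinite spine $d_k = \bright(\amb(\bleft(\bleft(\bnil)), d_{k+1}))$. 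Every $d_k$ has a good leaf one step to the left, so $f^* d_k = \bnil$ for all $k$ (already at stage $2$, since $\bnil \sqcup_{\bnil} x = \bnil$ for any $x$); but $g\, d_1 = \bright(h^+ \amb(\bleft(\bleft(\bnil)), d_2))$ starts an infinite chain of defined right components, each of whose correctness requires that of the next, and since $\br \itdown(A)$ is a \emph{least} fixed point no element admitting such an infinite regress realises $\itdown(A)$. This is exactly why the paper proves (\ref{lem-auxitintro-3}) first only for \emph{compact} $c$, by induction on the rank of $c$: the rank of a compact element strictly bounds the ranks of \emph{both} subterms of every node simultaneously, so the induction hypothesis reaches the sibling branch that your stage measure misses; the general case is then treated by a separate reduction of $c$ to a compact approximation $c_0 \sqsubseteq c$ with $h^+ c_0 = h^+ c$. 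Any repair of your proof needs such a restriction to compact (well-founded) candidates --- an induction driven solely by the computation stages of $f^*$ cannot succeed, since those stages track a single witnessing path while correctness of $h^+ c$ is a condition on every subtree containing some good path.
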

\begin{proof}
(\ref{lem-auxitintro-1})
We use s.p.~induction.
If $\amb(a,b) \br \itdown(A)$, then $a \br (A \lor \itdown(A))$,
or $b \br (A \lor \itdown(A))$.
Without restriction assume the former.
If $a = \bleft(d)$ where $d \br A$, then $f^* a =\bnil$. 
If $a = \bright(\amb(u,v))$ where $\amb(u,v) \br \itdown(A)$, then, 
by the induction hypothesis, $f^* u =\bnil$ or $f^* v =\bnil$. 
Hence $f^* a=\bnil$.

(\ref{lem-auxitintro-2}) Again, we use s.p.~induction.
Assume $c \br \itdown(A)$.
Then, by (\ref{lem-auxitintro-1}),  $c = \amb(a,b)$ with
$f^* a \sqcup_{\bnil} f^* b = \bnil$ and 
$h^+ c = \amb(a', b')$ with $a' = f^* a\triangleright g\, a$ and
$b' = f^* b \triangleright g\, b$. 
Since $f^* a \sqcup_{\bnil} f^* b = \bnil$, 
$a' = g\, a \neq\bot$ or $b' = g\, b \neq\bot$, as required.
It remains to show that every defined element of $\{a',b'\}$ realises
$A\lor\itdown(A)$. 
Without restriction let $a'$ be defined. Hence $a'=g\, a$
and either (i) $g\, a = a =\bleft(p)$ and thus $a'=\bleft(p)$
with $p\in\{\bleft(\bnil),\bright(\bnil)\}$; 
or (ii) $a = \bright(q)$ and $g\, a = \bright(h^+ q)$, which means that
$a'=\bright(h^+ q)$.
Since $c \br \itdown(A)$ it follows that 
$a \br (A\lor\itdown(A))$. 
In Case~(i), $a=a'$ and we are done since $a$ is defined and hence realises
$A\lor\itdown(A)$. 
In Case~(ii), $q \br \itdown(A)$. Therefore,
by the induction hypothesis, $(h^+ q )\br \itdown(A)$
and hence $a' \br (A\lor\itdown(A))$.

(\ref{lem-auxitintro-3})  Assume $\bH(A_0)$ and $\bH(A_1)$.
We prove the required formula first for compact $c$ only and will later show 
that this is enough.
Hence we show first
\[(\forall\hbox{ compact }c)\,(h^+ c = \amb(\_,\_) \to (h^+ c) \br \itdown(A)).\]
We prove this by induction on the rank of $c$. 
The proof is in large parts similar to the proof of (\ref{lem-auxitintro-2}).
Let $c$ be compact and assume  $h^+ c = \amb(a',b')$.
Then $c = \amb(a,b)$ with
$f^* a \sqcup_{\bnil} f^* b = \bnil$, 
 $a' = f^* a \triangleright g\, a$ and
$b' = f^* b \triangleright g\, b$. 
Since $f^* a \sqcup_{\bnil} f^* b = \bnil$, it follows 
as in \mytextcolor{red}{the proof of Statement} (\ref{lem-auxitintro-2}) that
$a' = g\, a \not= \bot$, or 
$b' = g\, b \not= \bot$.
In either case, one of $a', b'$ is defined, as required.
It remains to show that every defined element of $\{a',b'\}$ realises
$A\lor\itdown(A)$. 
Without restriction let $a'$ be defined. Hence $a'=g\, a$
and either (i) $g\, a = a =\bleft(p)$, \mytextcolor{red}{and} thus $a'=\bleft(p)$
with $p\in\{\bleft(\bnil),\bright(\bnil)\}$; 
or (ii) $a = \bright(q)$ and $g\, a = \bright(h^+ q)$, which means that 
 $a'=\bright(h^+ q)$.
Since $\bH(A_{0})$ and $\bH(A_{1})$, 
$\bleft(\bnil)$ and $\bright(\bnil)$ both realise $A$. 
Hence, in Case~(i), $a' \br (A\lor\itdown(A))$.
In Case~(ii), since $f^* a'=\bnil$, $q$ must be of the 
form $\amb(\_,\_)$. Therefore, since $q$ has smaller rank than $c$, 
by the induction hypothesis, $(h^+ q) \br \itdown(A)$
and hence $a' \br (A\lor\itdown(A))$.

To remove the restriction to compact $c$, it
suffices to show that for every $c$ there is a compact $c_0 \sqsubseteq c$
such that $h^+ c_0 = h^+ c$. 
If $h^+ c=\bot$, then we can choose $c_0 = \bot$.
Otherwise, $c=\amb(a,b)$. 
Since $f^*$ is continuous and its range contains only compact elements 
(namely $\bot$ and $\bnil$), there are compact $a_0\sqsubseteq a$
and $b_0\sqsubseteq b$ such that $f^* a_0=f^* a$ and $f^* b_0=f^* b$.
Therefore, it suffices to show
\begin{multline*}
(\forall \hbox{ compact } a_0,b_0)\ (\forall a,b)\,
    (a_0\sqsubseteq a \land b_0\sqsubseteq b 
      \land f^*(a_0)=f^* a\land f^* b_0 = f^* b \\
\to h^+ \amb(a_0,b_0) = h^+ \amb(a,b).
\end{multline*}
This can be easily shown by induction on the maximum of the ranks of 
$a_0$ and $b_0$.
\end{proof}

Now, we are able to derive the result on Rule~(\emph{$\itdown$-$\rest$-intro})
we are aiming for.
\begin{lem}\label{lem-itdownintro}
Rule~\emph{($\itdown$-$\rest$-intro)}
is realised by $h^+$.
\end{lem}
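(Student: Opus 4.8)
The plan is to unfold the realisability clause for restriction and reduce the whole statement to the three facts already collected in Lemma~\ref{lem-auxitintro}. First I would read off what realisers of the two premises provide. Since $B$ is Harrop and $\itdown(A_0 \vee A_1)$ is non-Harrop (any $\itdown(\cdot)$ carries $\ddown$ at a strictly positive position), a realiser $c$ of the premise $B \to \itdown(A_0 \vee A_1)$ is, by the implication clause with Harrop antecedent, a program with $c : \tau(\itdown(A_0 \vee A_1))$ such that $\bH(B) \to c \br \itdown(A_0 \vee A_1)$. The premise $\neg B \to A_0 \wedge A_1$ is Harrop, hence realised by $\bnil$ exactly when $\bH(\neg B \to A_0 \wedge A_1)$ holds, which unfolds (using that $B,A_0,A_1$ are Harrop) to $\neg \bH(B) \to \bH(A_0) \wedge \bH(A_1)$. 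These are the hypotheses I would carry into the argument.

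Recalling that
\[
a \br (\itdown(A_0 \vee A_1) \,\|\, B) \Leftrightarrow a : \tau(\itdown(A_0 \vee A_1)) \wedge (\br B \to a \ne \bot) \wedge (a \ne \bot \to a \br \itdown(A_0 \vee A_1)),
\]
I would verify these three conjuncts for $a = h^+ c$. The typing $h^+ c : \tau(\itdown(A_0 \vee A_1))$ is routine from the defining recursion of $h^+$ together with the typing of $c$. For the second conjunct, assuming $\br B$, i.e.\ $\bH(B)$, the first premise gives $c \br \itdown(A_0 \vee A_1)$, and then Lemma~\ref{lem-auxitintro}(\ref{lem-auxitintro-2}) yields $h^+ c \br \itdown(A_0 \vee A_1)$; since every realiser of $\itdown(\cdot)$ has the form $\amb(\_,\_)$, in particular $h^+ c \ne \bot$.

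For the third conjunct I would assume $h^+ c \ne \bot$ and argue by a classical case distinction on $\bH(B)$ (this is where classical reasoning enters, as already flagged for restriction in Lemma~\ref{lem-restrule}). If $\bH(B)$ holds the argument is exactly as above and gives $h^+ c \br \itdown(A_0 \vee A_1)$. If $\neg \bH(B)$ holds, then $\bH(A_0)$ and $\bH(A_1)$ hold by the second premise; moreover, inspecting the definition of $h^+$ shows that $h^+ c$ can only become defined by returning a value of the form $\amb(\_,\_)$, so from $h^+ c \ne \bot$ we obtain $h^+ c = \amb(\_,\_)$, whence Lemma~\ref{lem-auxitintro}(\ref{lem-auxitintro-3}) applies and gives $h^+ c \br \itdown(A_0 \vee A_1)$. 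This settles all three conjuncts.

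The genuinely delicate work has already been absorbed into Lemma~\ref{lem-auxitintro}, namely that $h^+$ transports realisers of $\itdown(A_0 \vee A_1)$ to realisers, and that a merely \emph{defined} output of $h^+$ still realises $\itdown(A_0 \vee A_1)$ once $A_0,A_1$ are realisable Harrop formulas. Consequently, the only things to watch in the present lemma are the bookkeeping of the Harrop versus non-Harrop realisability clauses for the two premises and the syntactic observation that a defined value of $h^+$ is necessarily an $\amb$, which is what licenses the appeal to Lemma~\ref{lem-auxitintro}(\ref{lem-auxitintro-3}) in the case where $\bH(B)$ may fail.
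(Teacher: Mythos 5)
Your proposal is correct and follows essentially the same route as the paper's proof: unfold the two premises into $\bH(B) \to c \br \itdown(A_0 \vee A_1)$ and $\neg\bH(B) \to \bH(A_0) \wedge \bH(A_1)$, use Lemma~\ref{lem-auxitintro}(\ref{lem-auxitintro-2}) when $\bH(B)$ holds, and in the definedness case do a classical case split on $\bH(B)$, invoking Lemma~\ref{lem-auxitintro}(\ref{lem-auxitintro-3}) via the observation that a defined value of $h^+ c$ must be an $\amb(\_,\_)$. The only cosmetic difference is that you organise the argument around the three conjuncts of the restriction clause, whereas the paper presents the same content as two implications.
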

\begin{proof}
Set $A \Def A_0\lor A_1$, and assume that $c$ realises $B \to \itdown(A)$
and $\bH(\neg B \to A)$ holds. 
The former means that $\bH(B) \to c \br \itdown(A)$ and
the latter that $\neg \bH(B) \to \bH(A_{0}) \wedge \bH(A_{1})$. 
We have to show that $h^+ c$ realises $\rt{B}{\itdown(A)}$.

First,  assume $\bH(B)$. Then $c \br \itdown(A)$ and, by 
Lemma~\ref{lem-auxitintro}(\ref{lem-auxitintro-1}),
$h^+ c$ realises $\itdown(A)$ and is therefore defined.

 Next, suppose $h^* c$ is defined. 
Hence $c = \amb(a,b)$
and $h^+ c = \amb(a',b')$ with $a' = f^* a\triangleright g\, a$
and $b'=f^* b \triangleright g(b)$.
We have to show $(h^+ c) \br \itdown(A)$.
Since $\itdown(A)$ has the same realisers as $\ddown(A \lor \itdown(A))$,
it is sufficient to derive that $(h^+ c) \br \ddown(A \lor \itdown(A))$.
That is, it suffices to verify that every defined element of $\{a',b'\}$
realises $A \lor \itdown(A)$. Without restriction assume that $a'$ is defined.
We do a classical case analysis on $\bH(B)$.
If $\bH(B)$, then $c \br \itdown(A)$. Therefore,
by Lemma \ref{lem-auxitintro}(\ref{lem-auxitintro-2}),
$(h^+ c) \br \itdown(A)$ and hence $a' \br  A \lor\itdown(A)$.
If $\neg\bH(B)$, then $\bH(A_0)$ and $\bH(A_1)$.
By Lemma \ref{lem-auxitintro}(\ref{lem-auxitintro-3}) we thus have that,
$(h^+ c )\br \itdown(A)$ and with the same argument as above we obtain
$a' \br  A \lor\itdown(A)$.
\end{proof}

Similarly, an analogue of the well known introduction rule for the connective $\land$ is realisable:
\[
\dfrac{\itdown(A) \quad \itdown(B)}{\itdown(A \land B)}\,\, \text{($\itdown$-$\land$-intro)}.
\]
Define
\begin{equation*}
f^{*} d \overset{\brec}{=} \ccase{d}{\bleft(\_) \to \bnil; \bright(\amb(u,v)) \to f^{*} u \sqcup_{\bnil} f^{*} v},
\end{equation*}
\begin{equation*}
\begin{aligned}
g\, d\, e \Def \ccase{d}{&\bleft(d') \to \ccase{e}{\bleft(e') \to \bleft(\bpair(d',e'));  \\
                                     & \hspace{3.74cm}                   \bright(e'') \to \bright(h_{2}\, d\, e'') };  \\
                                     &\bright(d'') \to \bright(h_{1}\, d'' e)}, 
\end{aligned}
\end{equation*}
\begin{multline*}                                  
h_{1}\, u\, v \overset{\brec}{=} \ccase{u}{\amb(a,b) \to\\
 \ccase{f^{*} a \sqcup_{\bnil} f^{*} b}{\bnil \to \amb(f^{*} a \triangleright g\, a\, v, f^{*} b \triangleright g\, b\, v)}}, 
\end{multline*}
\begin{multline*}
h_{2}\, u\, v \overset{\brec}{=} \ccase{v}{\amb(\bar{a}, \bar{b}) \to\\
 \ccase{f^{*} \bar{a} 
\sqcup_{\bnil} f^{*} \bar{b}}{\bnil \to \amb(f^{*} \bar{a} \triangleright g\, u\, \bar{a}, f^{*} \triangleright g\, u\, \bar{b})}}.
\end{multline*}

\begin{lem}\label{lem-andintro}\hfill
\begin{enumerate}
\item\label{lem-andintro-1} $d \br A \land c \br \itdown(B) \to (h_{2}\, \bleft(d)\, c) \br \itdown(A \land B)$.

\item\label{lem-andintro-2} $c_{1} \br \itdown(A) \land c_{2} \br \itdown(B) \to (h_{1}\, c_{1}\, c_{2}) \br \itdown(A \land B)$.
\end{enumerate} 
\end{lem}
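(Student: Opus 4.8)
The plan is to prove statements~(\ref{lem-andintro-1}) and~(\ref{lem-andintro-2}) simultaneously, together with an auxiliary claim about the glue function $g$, following the pattern of the proof of Lemma~\ref{lem-auxitintro}. Throughout, call a defined realiser of $X \lor \itdown(X)$ a \emph{node value}: it is either $\bleft(p)$ with $p \br X$ or $\bright(q)$ with $q \br \itdown(X)$; and recall that $\itdown(X)$ has exactly the same realisers as $\ddown(X \lor \itdown(X))$, so that a realiser of $\itdown(X)$ is an $\amb$-pair at least one of whose defined components is a node value. The three mutually recursive claims are: (i) if $d$ and $e$ are node values of $A \lor \itdown(A)$ and $B \lor \itdown(B)$, then $g\, d\, e$ is a defined node value of $(A \land B) \lor \itdown(A \land B)$; (ii) if $u \br \itdown(A)$ and $e$ is a node value of $B \lor \itdown(B)$, then $h_1\, u\, e \br \itdown(A \land B)$; and (iii) if $d' \br A$ and $v \br \itdown(B)$, then $h_2\, \bleft(d')\, v \br \itdown(A \land B)$. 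Here (iii) is precisely~(\ref{lem-andintro-1}), and~(\ref{lem-andintro-2}) follows by applying (ii) to the node value $e \Def \bright(c_2) \br (B \lor \itdown(B))$ induced by $c_2$, so that the realiser of Rule~(\emph{$\itdown$-$\land$-intro}) is $h_1\, c_1\, \bright(c_2)$.

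First I would establish the analogue for the present $f^*$ of Lemma~\ref{lem-auxitintro}(\ref{lem-auxitintro-1}): by strictly positive induction on $\itdown$, whenever $\amb(a,b) \br \itdown(X)$ then $f^* a = \bnil \lor f^* b = \bnil$, and, sharper, every node value $a$ of $X \lor \itdown(X)$ satisfies $f^* a = \bnil$ (the $\bleft$-clause gives this at once; for $\bright(\amb(u,v))$ the induction hypothesis yields $f^* u = \bnil$ or $f^* v = \bnil$, whence $f^* u \sqcup_{\bnil} f^* v = \bnil$). This controls the guards: since $\bnil \triangleright x = x$ while $\bot \triangleright x = \bot$, the guard $f^*(\cdot) \triangleright (\cdot)$ activates a component exactly when its subcomputation has already produced a node value. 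Consequently the $\amb$-pairs emitted by $h_1$ and $h_2$ keep the invariant ``at least one defined component'', and every defined component is of the form $g\,\cdot\,\cdot$.

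The heart of the argument is to prove (i)--(iii) first for \emph{compact} arguments, by well-founded induction on the measure $\brk(\cdot) + \brk(\cdot)$ built from the ranks of the two tree/node arguments. Inspection of the definitions shows that every recursive call strictly drops this measure: in $g$ the calls $h_1\, d''\, e$ and $h_2\, \bleft(d')\, e''$ replace a $\bright$-node by its strictly smaller content while fixing the other side, and in $h_1$, $h_2$ the calls $g\, a\, e$ and $g\, u\, \bar a$ replace an $\amb$-pair by one of its strictly smaller components. The base case is $g\, \bleft(d')\, \bleft(e') = \bleft(\bpair(d',e'))$, which is a node value of $(A\land B) \lor \itdown(A \land B)$ because $\bpair(d',e') \br A \land B$. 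The two step cases of $g$ feed into (iii) and (ii) at smaller measure and wrap the result in $\bright$, so are again node values; and in $h_1$, $h_2$ the $f^*$-lemma gives $f^* a \sqcup_{\bnil} f^* b = \bnil$, so at least one guarded component survives and, by (i) at smaller measure, realises $(A \land B) \lor \itdown(A \land B)$. Unfolding $\itdown(A \land B)$ as $\ddown((A\land B) \lor \itdown(A \land B))$ then gives the conclusions. Finally I would remove the compactness restriction exactly as in Lemma~\ref{lem-auxitintro}(\ref{lem-auxitintro-3}): $f^*$ is continuous with range in the compact set $\{\bot, \bnil\}$, so for arbitrary arguments one finds compact approximants on which the guards, and hence the outputs of $g$, $h_1$, $h_2$, agree with the given arguments.

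The step I expect to be the main obstacle is organising the simultaneous induction so that the measure genuinely decreases across all three mutually recursive programs, and checking that the concurrency guards $f^*(\cdot) \triangleright (\cdot)$ neither suppress a component that ought to be defined (handled by the ``node value $\Rightarrow f^* = \bnil$'' half of the $f^*$-lemma) nor let through a spurious one, so that the at-least-one-defined invariant of $\itdown$ is preserved at every $\amb$-node.
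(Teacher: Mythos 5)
Your reconstruction of the combinatorial core is faithful to the paper's argument: the auxiliary fact that every defined realiser of $X \lor \itdown(X)$ is sent to $\bnil$ by $f^*$ (so the guards $f^*(\cdot)\triangleright(\cdot)$ let through exactly the productive components), the base case $g\,\bleft(d')\,\bleft(e') = \bleft(\bpair(d',e'))$, and the routing of the step cases through $h_2$ and $h_1$ are all exactly what the paper does. You are also right, and more explicit than the paper, about the type discipline: as printed, $g$ case-analyses its second argument with $\bleft/\bright$ patterns, so in statement (\ref{lem-andintro-2}) the second argument must enter as a node value; the paper's own proof of (\ref{lem-andintro-2}) silently treats $c_2$ this way, and your wrapping $h_1\,c_1\,\bright(c_2)$ is the clean reconciliation of that proof with Rule~(\emph{$\itdown$-$\land$-intro}).

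The genuine gap is in your induction scaffolding. You prove the three claims only for \emph{compact} arguments, by well-founded induction on $\brk(\cdot)+\brk(\cdot)$, and then propose to remove compactness ``exactly as in Lemma~\ref{lem-auxitintro}(\ref{lem-auxitintro-3})''. That transfer is not available here, for two reasons. First, your compact case hypothesises realisability of the inputs, and compact approximants of realisers are in general not realisers; for non-Harrop $A$, $B$ (the case this lemma exists for) there may be \emph{no} compact realisers at all---e.g.\ realisers of $\bS(x)$ are total infinite streams---so the compact case is vacuous and can feed no transfer. Second, the outputs of $g$, $h_1$, $h_2$ contain the input realisers themselves (e.g.\ $\bleft(\bpair(d',e'))$), so outputs on compact approximants do \emph{not} agree with outputs on the given arguments; only the $f^*$-guards do. Lemma~\ref{lem-auxitintro}(\ref{lem-auxitintro-3}) can exploit compactness precisely because there $A_0$, $A_1$ are Harrop (all realisers lie in $\{\bleft(\bnil),\bright(\bnil)\}$) and the hypothesis concerns the shape of the output $h^+c$, not realisability of the input. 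The repair is also the simplification, and it is the paper's route: no measure on domain elements is needed, because $\cdot \br \itdown(\cdot)$ is itself a $\mu$-defined predicate, so strictly positive induction applies directly and covers non-compact realisers. Moreover, the mutual recursion you try to tame with the rank measure unwinds into a sequence: statement (\ref{lem-andintro-1}) is self-contained (with first argument $\bleft(d)$, the only recursive call produced by $g$ is $h_2\,\bleft(d)\,q$ with $q \br \itdown(B)$ structurally below $c$, and $h_1$ is never reached), so it is proved first by s.p.\ induction on $c \br \itdown(B)$; statement (\ref{lem-andintro-2}) then follows by s.p.\ induction on $c_1 \br \itdown(A)$, invoking (\ref{lem-andintro-1}) when a component of $c_1$ is a $\bleft$ and the second argument is a $\bright$.
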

\begin{proof} Both statements are shown by s.p.\ induction.

(\ref{lem-andintro-1}) Assume that $d \br A$ and $c \br \itdown(B)$. Then $c = \amb(\bar{a}, \bar{b})$ with $f^{*} \bar{a} \sqcup_{\bnil} f^{*} \bar{b} = \bnil$. Hence, $h_{2}\, \bleft(d)\, c = \amb(a', b')$ with $a' = f^{*} \bar{a} \triangleright g\, d\, \bar{a}$ and $b' = f^{*} \bar{b} \triangleright g\, d\, \bar{b}$. Since $f^{*} \bar{a} \sqcup_{\bnil} f^{*} \bar{b} = \bnil$, we have that $a' \ne \bot$ or $b' \ne \bot$ as required. It remains to show that all defined elements of $\{ a', b' \}$ realise $(A \land B) \lor \itdown(A \land B)$. Without restriction suppose that $a'$ is defined. Recall that $d \br A$. So, \mytextcolor{red}{we} have that (i) $\bar{a} = \bleft(p)$ with $p \br B$ and $g\, \bleft(d)\, \bleft(p) = \bleft(\bpair(d, p))$, or (ii) $\bar{a} = \bright(q)$ and $g\, \bleft(d)\, \bar{a} = \bright(h_{2}\, \bleft(d)\, q)$. Thus, $a' = \bright(h_{2}\, \bleft(d)\, q)$. Because $p \br B$, it follows in Case (i) that $\bar{a} \br (B \lor \itdown(B))$. Therefore, $(g\, \bleft(d)\, \bar{a}) \br ((A \land B) \lor \itdown(A \land B))$, i.e., $(h_{2}\, \bleft(d)\, c) \br \itdown(A \land B)$. 

In Case (ii) we have $q \br \itdown(B)$. By the induction hypothesis we therefore obtain that $(h_{2}\, \bleft(d)\, q) \br \itdown(A \land B)$. Thus, $a' \br ((A \land B) \lor \itdown(A \land B))$ which implies that 
\[
(h_{2}\, \bleft(d)\, c) \br \itdown(A \land B).
\]

(\ref{lem-andintro-2}) Now, \mytextcolor{red}{suppose} that $c_{1} \br A$ and $c_{2} \br B$. Then $c_{1} = \amb(u, v)$ with $f^{*} u \sqcup_{\bnil} f^{*} v = \bnil$ and $h_{1}\, c_{1}\, c_{2} = \amb(a', b')$ with $a' = f^{*} u \triangleright g\ u\, c_{2}$
and $b' = f^{*} v \triangleright g\, v\, c_{2}$. Since $f^{*} u \sqcup_{\bnil} f^{*} v = \bnil$, we \mytextcolor{red}{have} that $a' = g\, u\, c_{2} \ne \bot$ or $b' = g\, v\, c_{2} \ne \bot$, as required. Again, it remains to show that each defined element of $\{ a', b' \}$ realises $(A \land B) \lor \itdown(A \land B)$. Without restriction assume that $a'$ is defined. Then $a' = g\, u\, c_{2}$ and (i) $u = \bleft(d)$ with 
\[
g\, u\, c_{2} = \ccase{c_{2}}{\bleft(e) \to \bleft(\bpair(d, e)); \bright(e') \to \bright(h_{2}\, u\, e')},
\]
or (ii) $u = \bright(d')$ and $g\, u\, c_{2} = \bright(h_{1}\, d'\, c_{2})$, i.e., $a' = \bright(h_{1}\, d'\, c_{2})$. 

In Case (i) it follows that either $a' = \bleft(\bpair(d, e))$ and thus $a' \br ((A \land B) \lor \itdown((A \land B))$, or $a' = \bright(h_{2}\, d\, e')$, from which we obtain with the first statement that again $a' \br ((A \land B) \lor \itdown(A \land B))$.

In Case (ii) we have that $d' \br \itdown(A)$ and hence by the induction hypothesis that $(h_{1}\, d'\, c_{2}) \br \itdown(A \land B)$. Thus, $a' \br ((A \land B) \lor \itdown(A \land B))$, that is $(h_{1}\, c_{1}\, c_{2}) \br \itdown(A \land B)$.
\end{proof}

\begin{cor}\label{cor-andintro}
Rule~\emph{($\itdown$-$\land$-intro)} is realised by $h_{1}$.
\end{cor}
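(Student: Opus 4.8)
The plan is to read the corollary off directly from Lemma~\ref{lem-andintro}(\ref{lem-andintro-2}). Recall that realising the two-premise rule ($\itdown$-$\land$-intro) means exhibiting a program that, from a realiser $c_{1}$ of the first premise $\itdown(A)$ and a realiser $c_{2}$ of the second premise $\itdown(B)$, produces a realiser of the conclusion $\itdown(A \land B)$. Since $\itdown$-formulas are always non-Harrop, both premises and the conclusion are non-Harrop, so a realiser of the rule is precisely a (curried) binary function of the two candidate realisers, exactly as in the realiser displays for the earlier two-premise rules such as ($\ddown$-lem).

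First I would observe that Lemma~\ref{lem-andintro}(\ref{lem-andintro-2}) states exactly that $c_{1} \br \itdown(A) \land c_{2} \br \itdown(B)$ implies $(h_{1}\, c_{1}\, c_{2}) \br \itdown(A \land B)$. Taking the realiser of the rule to be $h_{1}$, viewed as the binary function $\lambda c_{1}.\, \lambda c_{2}.\, h_{1}\, c_{1}\, c_{2}$, the lemma immediately witnesses that $h_{1}$ maps any pair of realisers of the premises to a realiser of the conclusion. This is, by definition, what it means for $h_{1}$ to realise ($\itdown$-$\land$-intro), so no further construction is required.

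I do not expect any genuine obstacle: the corollary is merely the rule-level reformulation of the function-level Lemma~\ref{lem-andintro}(\ref{lem-andintro-2}). The only bookkeeping to check is that, since both premises are non-Harrop, the realiser of the rule is indeed a binary function rather than degenerating to a Harrop case, and this is immediate from the fact that $\itdown(-)$ is non-Harrop. All of the substantive work—the simultaneous structural induction showing that $h_{1}$, together with the auxiliary programs $h_{2}$, $g$ and $f^{*}$, yields a correct realiser of $\itdown(A \land B)$—has already been discharged in the proof of Lemma~\ref{lem-andintro}.
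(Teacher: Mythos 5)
Your proposal is correct and matches the paper's intent exactly: the paper states the corollary without further proof, precisely because Lemma~\ref{lem-andintro}(\ref{lem-andintro-2}) already says that $h_{1}$ maps realisers $c_{1}$ of $\itdown(A)$ and $c_{2}$ of $\itdown(B)$ to a realiser $h_{1}\,c_{1}\,c_{2}$ of $\itdown(A\land B)$, which is what it means for $h_{1}$ to realise the rule. Your additional bookkeeping remark that $\itdown$-formulas are always non-Harrop, so the realiser is genuinely a binary function, is a correct and harmless elaboration of the same argument.
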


We extend CFP by the Rules~(\emph{$\itdown$-$\rest$-intro}) and (\emph{$\itdown$-$\land$-intro}).

\begin{lem}\label{lem-itconcrule}
The following rules for the iterated concurrency modality are derivable:

\begin{longtable}{c|c} 
\multicolumn{2}{c}{$\dfrac{\itdown(A)}{A}$\,\, \emph{($\itdown$-H ($A$ Harrop))}} \\[2ex] \hline \\[-1.5ex]
$\dfrac{\ddown(A)}{\itdown(A)}$\,\, \emph{($\itdown$-emb)}
& $\dfrac{\ddown(\itdown(A))}{\itdown(A)}$\,\, \emph{($\ddown$-$\itdown$-absorb)} \\[3ex]
$\dfrac{A}{\itdown(A)}$\,\, \emph{($\itdown$-return)}
& $\dfrac{\itdown(A) \quad A\to \itdown(A')}{\itdown(A')}$\,\, \emph{($\itdown$-bind)}  \\[3ex]
$\dfrac{\itdown(\rt{B}{A})}{\rt{B}{\itdown(A)}}$\,\, \emph{($\itdown$-$\rest$-dist)}
& $\dfrac{\itdown(A \vee B)}{\itdown(\itdown(A) \vee \itdown(B))}$\,\,
                                 \emph{($\itdown$-$\vee$-dist)}  \\[3ex]
$\dfrac{\itdown(A) \quad A \to B}{\itdown(B)}$\,\, \emph{($\itdown$-mon)}
& $\dfrac{\itdown(A \wedge B)}{\itdown(A) \wedge \itdown(B)}$\,\, \emph{($\itdown$-$\wedge$-dist)}  \\[3ex] 
 $\dfrac{\itdown(A \to B)}{\itdown(A) \to \itdown(B)}$\,\, \emph{($\itdown$-$\to$-dist)}
& $\dfrac{\itdown(\itdown(A))}{\itdown(A)}$\,\, \emph{($\itdown$-idem).}
\\[4ex] \hline 
\multicolumn{2}{c}
 {$\rule{0mm}{8mm}
 \dfrac{\rt{B_1}{A}\ \ldots\ \rt{B_n}{A}}
 {\rt{B_1\lor\cdots\lor B_n}{\itdown(A)}}$\,\, 
\emph{($\itdown$-$\rest$-$\lor$)}} 
\end{longtable}
\end{lem}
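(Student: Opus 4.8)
The plan is to exploit throughout the inductive definition $\itdown(A)\overset{\mu}{=}\ddown(A\vee\itdown(A))$, i.e. $\itdown(A)=\mu\Phi_A$ for $\Phi_A(X)=\ddown(A\vee X)$. This gives two tools: the closure rule $\ddown(A\vee\itdown(A))\subseteq\itdown(A)$ (with its converse from the fixed-point property), and strictly positive induction, by which to prove $\itdown(A)\subseteq Y$ it suffices to show $\ddown(A\vee Y)\subseteq Y$. Each rule will then be reduced to the matching rule for $\ddown$ (Lemmas~\ref{lem-concrule} and~\ref{lem-concder}) and for restriction (Lemmas~\ref{lem-restrule} and~\ref{lem-restrul}), with the disjunction inside $\ddown$ absorbing the recursion. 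I would first dispatch the three unfolding rules: \emph{($\itdown$-return)} from \emph{($\ddown$-return)} followed by closure; \emph{($\itdown$-emb)} from \emph{($\ddown$-mon)} applied to the injection $A\to A\vee\itdown(A)$ followed by closure; and \emph{($\ddown$-$\itdown$-absorb)} likewise from \emph{($\ddown$-mon)} on $\itdown(A)\to A\vee\itdown(A)$ and closure.

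Next I would build the monad core. For \emph{($\itdown$-mon)} and \emph{($\itdown$-bind)} I run induction with the target predicate itself as invariant: to obtain $\itdown(A)\subseteq\itdown(B)$ from $A\to B$ I verify $\ddown(A\vee\itdown(B))\subseteq\itdown(B)$, which follows by pushing $A\vee\itdown(B)\to B\vee\itdown(B)$ through \emph{($\ddown$-mon)} and closing with the unfolding of $\itdown(B)$; bind is identical with $\itdown(A')$ in place of $\itdown(B)$, collapsing via \emph{($\ddown$-$\itdown$-absorb)}. Then \emph{($\itdown$-idem)} is the instance of \emph{($\itdown$-bind)} along the identity $\itdown(A)\to\itdown(A)$, \emph{($\itdown$-$\wedge$-dist)} is two applications of \emph{($\itdown$-mon)} to the projections, and \emph{($\itdown$-$\vee$-dist)} is one more induction: with $C=\itdown(A)\vee\itdown(B)$ one checks $\ddown((A\vee B)\vee\itdown(C))\subseteq\itdown(C)$, since $A\vee B\to C$ by \emph{($\itdown$-return)} and the outcome collapses through \emph{($\ddown$-$\itdown$-absorb)}. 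The rule \emph{($\itdown$-H)} for Harrop $A$ is the induction with invariant $A$ itself, using $A\vee A\to A$ and \emph{($\ddown$-H)}. The only member of this group needing external input is \emph{($\itdown$-$\to$-dist)}: assuming $\itdown(A\to B)$ and $\itdown(A)$, I combine them by \emph{($\itdown$-$\land$-intro)} (Corollary~\ref{cor-andintro}) to $\itdown((A\to B)\wedge B)$ and then apply \emph{($\itdown$-mon)} along $(A\to B)\wedge A\to B$.

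The two restriction rules are where concurrency genuinely interacts with $\|$, and I expect these to be the main obstacle. For \emph{($\itdown$-$\|$-dist)} I would first record the instance $\rst{A}{B}\to(\rst{\itdown(A)}{B})$ obtained from \emph{($\itdown$-return)} and \emph{($\|$-mon)}, and then induct on $\itdown(\rst{A}{B})$ with invariant $\rst{\itdown(A)}{B}$: the body $\ddown((\rst{A}{B})\vee(\rst{\itdown(A)}{B}))$ collapses to $\ddown(\rst{\itdown(A)}{B})$ by \emph{($\ddown$-mon)}, then to $\itdown(A)$ by \emph{($\ddown$-$\|$-absorb)} followed by \emph{($\ddown$-$\itdown$-absorb)}, and finally back to $\rst{\itdown(A)}{B}$ by \emph{($\|$-return)}. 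For \emph{($\itdown$-$\|$-$\lor$)} I would induct on $n$: the base case $n=1$ is the instance just noted, and in the step I combine $\rst{\itdown(A)}{(B_1\vee\cdots\vee B_{n-1})}$ with $\rst{\itdown(A)}{B_n}$ via \emph{($\ddown$-$\|$-$\lor$)} (taking both disjuncts equal to $\itdown(A)$) to obtain $\rst{\ddown(\itdown(A)\vee\itdown(A))}{(B_1\vee\cdots\vee B_n)}$, then absorb $\ddown(\itdown(A)\vee\itdown(A))\to\itdown(A)$ by \emph{($\ddown$-$\itdown$-absorb)} and reattach the restriction with \emph{($\|$-mon)}.

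The genuinely delicate points are the restriction rules, because there the inductive invariant is itself a restriction and one must be certain that each collapsing step respects the partial-correctness reading of $\|$ (that a component realises $A$ only once it is defined). The heavy realiser-level work, however, has already been discharged in Lemma~\ref{lem-itdownintro} and Corollary~\ref{cor-andintro}, so at the level of this lemma everything reduces to assembling previously proved rules under the closure and induction principles for $\itdown$, with no further explicit program construction required.
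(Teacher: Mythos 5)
Your proposal is correct and follows essentially the same route as the paper: every rule is reduced, via the closure and strictly positive induction principles for $\itdown(A) \overset{\mu}{=} \ddown(A \vee \itdown(A))$, to the previously established $\ddown$- and $\|$-rules, with (\emph{$\itdown$-$\|$-dist}) handled by induction with invariant $\rst{\itdown(A)}{B}$ and (\emph{$\itdown$-$\|$-$\lor$}) by induction on $n$ iterating (\emph{$\ddown$-$\|$-$\lor$}), exactly as in the paper (your spelled-out collapse via (\emph{$\ddown$-$\|$-absorb}), (\emph{$\ddown$-$\itdown$-absorb}) and (\emph{$\|$-return}) is in fact a more explicit version of the paper's terse step). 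The only deviations are harmless: you derive (\emph{$\itdown$-$\to$-dist}) from (\emph{$\itdown$-$\land$-intro}) plus (\emph{$\itdown$-mon}) where the paper uses the monadic laws, and your step case for (\emph{$\itdown$-$\|$-$\lor$}) instantiates (\emph{$\ddown$-$\|$-$\lor$}) with $\itdown(A)$ on both sides rather than with $A$ and $\itdown(A)$ — both variants are equally valid (and the phrase ``$\itdown((A\to B)\wedge B)$'' is evidently a typo for $\itdown((A\to B)\wedge A)$).
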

\begin{proof}
Rule~(\emph{$\itdown$-H}) follows by induction. From $A \vee A \to A$ we obtain with Rules~(\emph{$\ddown$-mon}) and \mytextcolor{red}{(\emph{$\ddown$-H})} that $\ddown(A \vee A) \to \ddown(A)$ and $\ddown(A) \to A$, hence $\ddown(A \vee A) \to A$.

The Rules~(\emph{$\itdown$-emb})  and (\emph{$\ddown$-$\itdown$-absorb})
follow directly from the definition of $\itdown$ and Rule~(\emph{$\ddown$-mon}).

Rule~(\emph{$\itdown$-return}) follows directly from the definition of $\itdown$
and the Rule~(\emph{$\ddown$-return}).

For Rule~(\emph{$\itdown$-bind})
assume that $A\to \itdown(A')$. We prove by induction that also $\itdown(A) \to \itdown(A')$, that is, we have to show that $\ddown(A \vee \itdown(A')) \to \itdown(A')$, which means that we must demonstrate that $\ddown(A \vee \itdown(A')) \to \ddown(A' \vee \itdown(A'))$. Because of the monotonicity of $\ddown$ it suffices to prove that $A \vee \itdown(A') \to A' \vee \itdown(A')$, which is an immediate consequence of our assumption.

In case of Rule~(\emph{$\itdown$-$\rest$-dist}) we apply the induction
principle again. It suffices to show
$\ddown(\rt{B}{A} \vee \rt{B}{\itdown(A)}) \to \rt{B}{\itdown(A)}$.
With (\emph{$\itdown$-return}) we have $A \to \itdown(A)$ and
hence, because of (\emph{$\rest$-mon}),
$\rt{B}{A} \to \rt{B}{\itdown(A)}$.
Therefore,
$(\rt{B}{A} \vee \rt{B}{\itdown(A)}) \to \rt{B}{\itdown(A)}$,
from which it follows with
Rule~(\emph{$\ddown$--mon}) that
$\ddown(\rt{B}{A} \vee \rt{B}{\itdown(A)} )\to \ddown(\rt{B}{\itdown(A)})$.
With Rules~(\emph{$\ddown$-$\rest$-absorb}) and~(\emph{$\ddown$-$\itdown$-absorb}) we get
$\ddown(\rt{B}{A} \vee \rt{B}{\itdown(A)}) \to \rt{B}{\itdown(A)}$.

The Rules~(\emph{$\itdown$-$\lor$-dist}), (\emph{$\itdown$-mon}), 
(\emph{$\itdown$-$\land$-dist}), (\emph{$\itdown$-$\to$-dist}), and (\emph{$\itdown$-idem})
follow from the monadic laws (\emph{$\itdown$-return})
and (\emph{$\itdown$-bind}) in the usual way.

Rule~(\emph{$\itdown$-$\rest$-$\lor$}) is obtained, roughly speaking, by iterating Rule~(\emph{$\ddown$-$\rest$-$\lor$}). The proof is by induction on $n$. 
For $n=1$, the rule follows immediately \mytextcolor{red}{with Rules}~(\emph{$\itdown$-return}) 
and (\emph{$\rest$-mon}).
For the step assume $\rt{B_1}{A}$, $\rt{B_2}{A}$, \ldots, $\rt{B_{n+1}}{A}$.
By the induction hypothesis, $\rt{B_2\lor\cdots\lor B_{n+1}}{\itdown(A)}$.
With Rule~(\emph{$\ddown$-$\rest$-$\lor$}) it therefore follows 
$\rt{B_1\lor B_2\lor\cdots\lor B_{n+1}}{\ddown(A \lor \itdown(A))}$.
Since $\ddown(A \lor \itdown(A))$ is equivalent to $\itdown(A)$, we obtain
$\rt{B_1\lor B_2\lor\cdots\lor B_{n+1}}{\itdown(A)}$ by applying Rule~(\emph{$\rest$-mon}).
\end{proof}

The subsequent list contains realisers for the rules in the above lemma extracted from their proofs.

\begin{longtable}{c}
$\dfrac{c \br \ddown(A)}
       {(\mamb\,\bleft\,c)\, \br \itdown(A)}$\,\, 
               \text{($\itdown$-emb)}  \\[4ex]
$\dfrac{c \br\ddown(\itdown(A))}
       {(\mamb\,\bright\,c)\, \br\itdown(A)}$\,\, 
               \text{($\ddown$-$\itdown$-absorb)} \\[4ex]
$\dfrac{a \br A}
       {(f_{\mathrm{ret}}\,a) \br \itdown(A)}$\,\, 
               \text{($\itdown$-return, $f_{\mathrm{ret}}$ below)} \\[4ex]
$\dfrac{c \br \itdown(A) \quad g \br (A\to \itdown(A'))}
       {(f_{\mathrm{bind}}\,g\,c)\,\br\itdown(A')}$\,\,
               \text{($\itdown$-bind, $f_{\mathrm{bind}}$ below)}  \\[4ex]
$\dfrac{c \br \itdown(\rt{B}{A})}
       {(f_{\rest{-}\mathrm{dist}}\,c) \br \rt{B}{\itdown(A)}}$\,\, 
               \text{($\itdown$-$\rest$-dist, $f_{\rest{-}\mathrm{dist}}$ below)}  \\[4ex]
$\dfrac{c \br \itdown(A) \quad g \br (A\to B)}
       {(f_{\mathrm{mon}}\,g\,c)\,\br\itdown(B)}$\,\, 
               \text{($\itdown$-mon, $f_{\mathrm{mon}}$ below)} \\[4ex]
$\dfrac{c \br \itdown(A \vee B)}
       {(f_{\mathrm{mon}}\,(\mlr\,f_{\mathrm{ret}})\,c)\,\br \itdown(\itdown(A) \vee \itdown(B))}$\,\,
               \text{($\itdown$-$\vee$-dist)}  \\[4ex]
$\dfrac{c \br \itdown(A \wedge B)}
  {\bpair(f_{\mathrm{mon}}\, (\pi_L\, c), f_{\mathrm{mon}}\, (\pi_R\, c))\, \br (\itdown(A) \wedge \itdown(B))}$\,\, 
               \text{($\itdown$-$\wedge$-dist)}  \\[4ex]
$\dfrac{c \br \itdown(A \to B)}
  {(f_{\mathrm{bind}}\, (\lambda a.\, f_{\mathrm{bind}}\, (\lambda f.\, f\,a)\,c))\, \br(\itdown(A) \to \itdown(B))}$\,\, 
               \text{($\itdown$-$\to$-dist)} \\[4ex]
$ \dfrac{c\br \itdown(\itdown(A))}{(f_{\mathrm{bind}}\, (\lambda a.\, a)\,c)\, \br\itdown(A)}$\,\, 
     \text{($\itdown$-idem)} \\[4ex]
$ \dfrac{b_{1} \br \rt{B_{1}}{A}\, \ldots\, b_{n} \br \rt{B_{n}}{A}}{(f_{n}\, b_{1}\, \ldots\, b_{n})\, \br \rt{B_{1} \lor \cdots \lor B_{n}}{\itdown(A)}}$\,\, \text{($\itdown$-$\rest$-$\lor$)}     
\end{longtable}     
where
\begin{gather*}
f_{\mathrm{ret}}\,a \Def \amb(\bleft(a),\bot)\\
f_{\mathrm{bind}}\,g \overset{\brec}{=} \mamb\,(\lambda d.\,
     \ccase{d}{\bleft(a)\to \bright(g\,a);  \bright(c')\to f_{\mathrm{bind}}\,g\,c'}),\\
f_{\rest{-}\mathrm{dist}} \overset{\brec}{=} \mamb\,(\lambda d.\,
     \ccase{d}{\bleft(a)\to d; \bright(c')\to f_{\rest{-}\mathrm{dist}}\,c'}),\\
f_{\mathrm{mon}}\,g \overset{\brec}{=} \mamb\,(\lambda d.\,
     \ccase{d}{\bleft(a)\to \bleft(g\,a); \bright(c')\to f_{\mathrm{mon}}\,g\,c'}),\\
\mlr\,g \Def \lambda c.\, \ccase{c}{\bleft(a)\to\bleft(g\,a); \bright(b)\to\bright(g\,b)},\\
\pi_L \Def \lambda p.\,  \ccase{p}{\bpair(a,\_)\to a},\\
\pi_R \Def \lambda p.\,  \ccase{p}{\bpair(\_,b)\to b},\\
 f_{1}\, b_{1} \Def f_{\mathrm{ret}}\low b_{1},\\ 
 f_{n+1}\, b_{1} \ldots b_{n+1} \Def \bar{g}\, b_{1}\, (f_{n}\, b_{2}\, \ldots\, b_{n+1}),\\
 \bar{g}\, a\, c \Def \amb_{\mathbf{LR}}(\bleft\low a, \bright\low c).
\end{gather*}
\vspace{.3ex}

A further useful rule that we will use in the sequel is a concurrent version of half-strong co-induction.

\begin{lem}[\bf Concurrent Half-strong Co-induction Principle]\label{lem-chscoi}
Let $\fun{\Phi_{0}}{\PPP(X)}{\PPP(X)}$ be a monotone operator and $\Phi(Y) \Def \itdown(\Phi_{0}(Y))$. Then:
\[\text{
If $Y \subseteq \itdown(\Phi(Y) \cup \nu \Phi)$ then $Y \subseteq \nu \Phi$.
}\]
\end{lem}
The principle is an immediate consequence of  the generalised half-strong co-induction principle (Lemma~\ref{lem-halfstrong}): Because of Rule~(\emph{$\itdown$-mon}) $\Phi$ is monotone and with Rule~(\emph{$\itdown$-idem}) we have that $\Phi$ absorbes $\itdown$.
Note that  $\tau(\Phi)(\alpha)$ is of the form $\bfA^*(\rho(\alpha))$. So, 
if $s : P \to \bfA^*(\Phi(P)+\nu\Phi)$ 
realises $P \subseteq \itdown(\Phi(P)\cup\nu\Phi)$,
then $P \subseteq\nu\Phi$ is realised by
$\chscoiter_{\Phi}\,s : P\to\nu\Phi$
with
\[ \chscoiter_{\Phi}\,s \Def  f, \]
where $f$ is defined as in the case of the realisability of generalised half-strong co-induction (Example~\ref{ex-ind-co}).
Moreover,
\begin{gather*}
\absorb^{\alpha}_{\itdown,\Phi} \Def \fun{f_{\mathrm{bind}}\, (\lambda a.\, a)}{\bfA^{*}(\Phi(\alpha))}{\Phi(\alpha)}, \\
\mon_{\itdown} \Def f_{\mathrm{mon}}.
\end{gather*}

\ignore{
Below we list the extracted programs for these rules. 
We first define the program
\[
\mamb \Def \lambda f.\, \lambda c.\, \ccase{c}{\amb(a, b) \to \amb(\stc{f}{a}, \stc{f}{b}},
\]
where $\stc{f}{a}$ denotes \emph{strict application}:
\[
\stc{f}{a} \Def \ccase{a}{C(\_) \to f\, a \mid C \in \{ \bnil, \bleft, \bright, \bpair, \pfun \}}.
\]
Note that $\stc{f}{a} = f\, a$ if $a \not= \bot$ and  $\stc{f}{\bot} = \bot$. 

Now, the following programs realising these rules are extracted 
from the rules in Lemma~\ref{lem-itconcrule} (in the cases where
the conclusion is a non-Harrop formula).
\begin{center}
\begin{tabular}{lcl}
(\emph{$\ddown$-$\itdown$-absorb})&&$\mamb\,\bright$\\
(\emph{$\itdown$-emb})&&$\mamb\,\bleft$\\
(\emph{$\itdown$-return})&&$\lambda\,a\,.\, \amb(\bleft(a),\bot)$\\
(\emph{$\itdown$-bind})&& 
  $f\,c\,g \overset{\brec}{=} \mamb$\\
  &&\hbox{}\hspace{4em}
        $(\lambda d\,.\,
               \ccase{d}{\bleft(a)\to \bright(g\,a);\,
                         \bright(c')\to f\,c'\,g})$\\
  &&\hbox{}\hspace{4em} 
       $c$
\\
\end{tabular}
\end{center}
}

\ignore{
\begin{rem}
 By the Rules (\emph{$\itdown$-idem}) and (\emph{$\ddown$-$\itdown$-absorb}), every s.p.\ operator of the form
$\Phi(X) = \itdown(\ldots)$ satisfies the assumption of Lemma~\ref{lem-conhscoind} that $\itdown(\Phi(X))\subseteq\Phi(X)$.
\end{rem}
}

\section{Concurrent Archimedean induction}\label{sec-conarch}

A powerful tool in the investigation in~\cite{btifp,bt} of the relationship between 
the signed digit representation and infinite Gray code 
is Archimedean induction, which is the Archimedean principle formulated as an induction rule:
\[
\dfrac{(\forall x \ne 0)\ (|x| \leq 1/2 \to P(2x)) \to P(x)}
{(\forall x \ne 0)\ P(x)}\, (\mathrm{AI})
\]
In IFP Rule(AI) is deduced as a special case of well-founded induction (cf.~\cite{btifp}). A useful variant is:

\[
\dfrac{(\forall x \in B \setminus \{ 0 \})\, P(x) \vee (|x| \le 1/2 
\wedge B(2x) \wedge (P(2x) \to P(x)))}
{(\forall x \in B \setminus \{ 0 \})\, P(x)} \, (\mathrm{AIB})
\]

In what follows a concurrent version of the Rule~(AIB) is needed.

\begin{defi}\label{def-caibstar}
\emph{Iterated concurrent Archimedean induction} is the rule
\[
\dfrac{(\forall x \in B \setminus \{ 0 \})\, \itdown(P(x) \vee (|x| \le 1/2 
\wedge B(2x) \wedge (P(2x) \to P(x))))}
{(\forall x \in B\setminus \{ 0 \})\, \itdown(P(x))} \, (\mathrm{CAIB^*})
\]
where $B$ and $P$ are non-Harrop predicates.
\end{defi}

\begin{lem}\label{lem-ciabstarreal}
Rule {\rm (CAIB$^*$)} is realisable. 
Let $s$ realise the premise of {\rm (CAIB$^*$)}. 
Then $\caibs \Def \lambda b.\, a\,(s\, b)$ realises the 
conclusion of {\rm (CAIB$^*$)},
where $a$ is defined by simultaneous recursion together with $s'$ as
\begin{align*}
a\,w &\overset{\brec}{=} \mamb\, s'\,w && \\
s'\,u &\overset{\brec}{=} 
  \ccase{u}{\ &&\hspace{-1.6cm} \bleft(\bleft(c)) \to \bleft(c);  
\\
 & &&\hspace{-1.6cm} \bleft(\bright(\bpair(b, d))) \to \bright(f_{\mathrm{bind}}\, (f_{\mathrm{ret}} \circ d)\, (a\, (s\,b)));
\\ 
  &&&\hspace{-1.6cm} \bright(w') \to \bright(a\,w')\ }.
\end{align*}
\end{lem}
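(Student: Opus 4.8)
The plan is to follow the proof of the non-iterated Rule~(CAIB) in Lemma~\ref{lem-ciabreal}, replacing the single modality $\ddown$ everywhere by its iterate $\itdown$ and, accordingly, replacing the one-step inspection of an $\amb$-realiser by a strictly positive induction on the inductive definition of $\itdown$. Throughout, write $C(x)$ for the body $P(x) \vee (|x| \le 1/2 \wedge B(2x) \wedge (P(2x) \to P(x)))$ of the premise, so that the hypothesis says $s\, b \br \itdown(C(x))$ whenever $b \br B(x)$ and $x \ne 0$. First I would record the typing: from $s : \tau(B) \to \tau(\itdown(C))$ a routine type inference on the simultaneous recursion gives $a : \tau(\itdown(C)) \to \tau(\itdown(P))$ and hence $\caibs = \lambda b.\, a\,(s\,b) : \tau(B) \to \tau(\itdown(P))$, which is the type demanded by the conclusion.

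The heart of the argument is to prove $(\forall x \ne 0)\, Q(x)$, where $Q(x) \Def (\forall w)(w \br \itdown(C(x)) \to (a\, w) \br \itdown(P(x)))$, using Archimedean induction~(AI); the conclusion of (CAIB$^*$) then drops out, since for $b \br B(x)$ with $x \ne 0$ we have $s\, b \br \itdown(C(x))$ and so $\caibs\, b = a\,(s\,b) \br \itdown(P(x))$. So fix $x \ne 0$ with Archimedean induction hypothesis (AIH) $|x| \le 1/2 \to Q(2x)$, and prove $Q(x)$ by an inner strictly positive induction on $\itdown(C(x)) \overset{\mu}{=} \ddown(C(x) \vee \itdown(C(x)))$, exactly in the style of Lemma~\ref{lem-auxitintro}(\ref{lem-auxitintro-2}).

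For the inner step, take $w \br \itdown(C(x))$; then $w = \amb(p,q)$ with at least one of $p, q$ defined and every defined one realising $C(x) \vee \itdown(C(x))$, and $a\, w = \mamb\, s'\, w = \amb(\stc{s'}{p}, \stc{s'}{q})$. Since $\itdown(P(x))$ unfolds to $\ddown(P(x) \vee \itdown(P(x)))$, it suffices to check that at least one of $\stc{s'}{p}, \stc{s'}{q}$ is defined and that every defined one realises $P(x) \vee \itdown(P(x))$. I would run the three clauses of $s'$ on a defined component $u$: if $u = \bleft(\bleft(c))$ then $c \br P(x)$ and $s'\, u = \bleft(c)$ hits the left disjunct directly; if $u = \bright(w')$ then $w' \br \itdown(C(x))$, the inner induction hypothesis gives $a\, w' \br \itdown(P(x))$, and $s'\, u = \bright(a\, w')$ hits the right disjunct; in every clause $s'\, u$ is a value, so definedness of the surviving component is automatic, with no productivity assumption on $P$ needed here in contrast to Lemma~\ref{lem-ciabreal}. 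The remaining clause is $u = \bleft(\bright(\bpair(b,d)))$, where $|x| \le 1/2$, $b \br B(2x)$ and $d \br (P(2x) \to P(x))$; here $s\, b \br \itdown(C(2x))$ and, because $x \ne 0$ gives $2x \ne 0$, the AIH yields $Q(2x)$, whence $a\,(s\,b) \br \itdown(P(2x))$ and $s'\, u = \bleft(d\,(a\,(s\,b)))$.

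This last clause is where the real work sits. The extracted $d$ realises $P(2x) \to P(x)$ and so has type $\tau(P) \to \tau(P)$, whereas $a\,(s\,b)$, by the AIH, realises $\itdown(P(2x))$ and has type $\tau(\itdown(P))$; thus the subterm $d\,(a\,(s\,b))$ applies a map defined on $P(2x)$-realisers to a realiser of $\itdown(P(2x))$. Reconciling these is the main obstacle, and it is exactly where the monadic character of $\itdown$ enters: the conceptually correct realiser turns $d$ and $a\,(s\,b)$ into a realiser of $\itdown(P(x))$ via Rule~(\emph{$\itdown$-mon}) (equivalently $\itdown$-bind), the collapse back to a single $\itdown$-layer being supplied by idempotence (\emph{$\itdown$-idem}, Lemma~\ref{lem-itconcrule}). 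Verifying that the displayed simultaneous recursion for $a$ and $s'$ indeed computes this, so that $d\,(a\,(s\,b))$ may legitimately be placed under $\bleft$ as a realiser of $P(x)$, is the delicate point; it goes through because the $P$ to which the rule is applied is $\itdown$-closed, so that realisers of $\itdown(P(2x))$ already serve as realisers of $P(2x)$. Granting this, assembling the three clauses gives $a\, w \br \itdown(P(x))$, which closes the inner induction, hence $Q(x)$; Archimedean induction then yields $(\forall x \ne 0)\, Q(x)$ and with it the realisability of Rule~(CAIB$^*$).
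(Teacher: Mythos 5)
Your proposal reproduces the paper's proof skeleton faithfully: the same typing argument for $\caibs$, the same predicate $Q(x) \Def (\forall w)\,(w \br \itdown(C(x)) \to (a\,w) \br \itdown(P(x)))$ (with $C(x)$ the disjunction in the premise), the outer Archimedean induction, the inner strictly positive induction on $w \br \itdown(C(x))$, the three-clause analysis of $s'$, and the correct observation that definedness of the surviving components is automatic because every clause of $s'$ returns a constructor-headed term. The first and third clauses are handled exactly as in the paper, and you have correctly isolated the middle clause as the only place where real work happens.

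Your discharge of that middle clause, however, is a genuine gap. You justify placing $d\,(a\,(s\,b))$ under $\bleft$ by claiming that ``the $P$ to which the rule is applied is $\itdown$-closed, so that realisers of $\itdown(P(2x))$ already serve as realisers of $P(2x)$''. This is not a hypothesis of (CAIB$^*$) --- the rule only requires $B$ and $P$ to be non-Harrop --- and it is false where it matters: realisers of $\itdown(P(2x))$ have type $\bfA^*(\tau(P))$, not $\tau(P)$ (even when $P$ has the form $\itdown(P')$, idempotence is realised by the nontrivial program $f_{\mathrm{bind}}\,(\lambda a.\,a)$, not by the identity), and in the paper's own application of the rule (Proposition~\ref{prop-ctog}) one has $P = \bB$ with $\tau(\bB) = \btwo$, whose realisers are booleans, whereas realisers of $\itdown(\bB(2x))$ are $\amb$-trees. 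The correct discharge is precisely the one you gesture at and then abandon: since (AIH) yields $a\,(s\,b) \br \itdown(P(2x))$, lift $d$ with the realiser $f_{\mathrm{mon}}$ of Rule~(\emph{$\itdown$-mon}) to a realiser of $\itdown(P(2x)) \to \itdown(P(x))$, apply it to $a\,(s\,b)$, and inject the result with $\bright$, i.e.\ as a realiser of the \emph{right} disjunct of $P(x) \vee \itdown(P(x))$; no closure property of $P$ is needed and no idempotence either. In other words the middle clause must produce $\bright(f_{\mathrm{mon}}\,d\,(a\,(s\,b)))$ rather than $\bleft(d\,(a\,(s\,b)))$. Note that you have in fact put your finger on a weak spot of the text itself: the paper's proof asserts at this point that, by (AIH), $a\,(s\,b)$ realises $P(2x)$, whereas (AIH) as defined only yields $\itdown(P(2x))$, so the displayed clause exhibits the same mismatch; the honest resolution is to correct the realiser as above, not to assume the mismatch away.
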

\begin{proof}
$\lambda b.\, a\,(s\, b)$ has the right type, 
$\tau(B) \to\bfA^*(\tau(P))$
where $\bfA^*(\alpha) \Def \bfix \beta.\,\bfA(\alpha+\beta)$.
This holds, since we have that
$s : \tau(B) \to \bfA^*(\rho)$, 
with $\rho \Def \tau(P) + \tau(B)\times (\tau(P) \to\tau(P))$,
from which one can infer by a simple type inference that 
$a : \bfA^*(\rho) \to \bfA^*(\tau(P))$, and
$s' : (\rho + \bfA^*(\rho))  \to (\tau(P) + \bfA^*(\tau(P)))$.

Set $C(x) \Def P(x) \vee (|x| \le 1/2 \wedge B(2x) \wedge (P(2x) \to P(x)))$.
To complete the proof, it clearly suffices to show that if $x\ne 0$ and
$w$ realises $\itdown(C(x))$, then $a\,w$ realises $\itdown(P(x))$.

Let 
\[
Q(x) \Def (\forall w)\, (w \br \itdown(C(x)) \to (a\, w) \br \itdown(P(x))).
\]
We use Rule (AI) to prove that $(\forall x \not= 0)\, Q(x)$.

Let $x \not= 0$. The Archimedean induction hypothesis is
\begin{description}
\item[\hspace{1em}(AIH)] $|x|\le 1/2 \to Q(2x)$.
\end{description}  
We show $Q(x)$ by a side induction on the definition of $w \br \itdown(C(x))$.
Hence we assume $w \br \itdown(C(x))$, that is,
$w \br \ddown(C(x) \vee \itdown(C(x)))$, and have to derive 
that $a\, w$ realises $\itdown(P(x))$. 

Thus, $w = \amb(u, v)$ and
$a\, w = \amb(\stc{s'}{u},\stc{s'}\,v)$.
Furthermore, $u \not= \bot$ or $v \not= \bot$, hence 
$\stc{s'}{u} = s'\,u$ or $\stc{s'}{v} = s'\,v$.
Moreover,  for $k \in \{ u, v \}$, if $k \not= \bot$ then 
\[
k \br(C(x) \vee \itdown(C(x)))\,.
\]
Before showing that $\amb(\stc{s'}{u},\stc{s'}{v})$ realises
$\itdown(P(x))$, we prove:
\begin{equation}\label{eq-ciabrealstar}
\text{If $k \in \{ u, v \}$ such that $k \not= \bot$, then 
$(s'\,k) \br (P(x) \vee \itdown(P(x)))$.
}
\end{equation}
Without restriction assume $k = u \not= \bot$.
Hence $u \br (C(x) \vee \itdown(C(x)))$
and $\stc{s'}{u}=s'\,u$. 

If $u = \bleft(\bleft(c))$ with $c\br P(x)$, then $s'\,u = \bleft(c)$, which means that $(s'\,u) \br (P(x) \lor \itdown(P(x)))$.

If $u = \bleft(\bright(\bpair(b,d)))$ with $|x|\le1/2$, $b\br B(2x)$ and
$d\br(P(2x) \to P(x))$, then $s'\,u = \bright(f_{\mathrm{bind}}\, (f_{\mathrm{ret}} \circ d)\, (a\,(s\,b)))$. 
Moreover, by (AIH), $a\,(s\,b)$ realises $P(2 x)$. Therefore, since $f_{\mathrm{ret}} \circ d$ realises $(P(2x) \to \itdown(P(x)))$, we have that 
$f_{\mathrm{bind}}\, (f_{\mathrm{ret}} \circ d)\, (a\,(s\,b))$ realises $\itdown(P(x))$, that is, $(s'\,u) \br (P(x) \lor \itdown(P(x)))$.

If $u = \bright(w')$ with $w' \br \itdown(C(x))$, then $s'\,u = \bright(a\,w')$. 
By the side induction hypothesis, $(a\,w') \br \itdown(P(x))$. Thus, $(s'\,u) \br (P(x) \lor \itdown(P(x)))$.

From the proof of (\ref{eq-ciabrealstar}) it follows easily that $\stc{s'}{u}, \stc{s'}{v} \in (\tau(P) + \bfA^*(\tau(P)))$.
Therefore, it remains to show:
\begin{enumerate}
\item\label{lem-ciabrealstar-1} For some $k\in\{u,v\}$, $\stc{s'}{k}\not=\bot$.
\item\label{lem-ciabrealstar-2} If $k\in\{u,v\}$ such that $\stc{s'}{k} \not= \bot$, 
then $(\stc{s'}{k}) \br (P(x)\vee\itdown(P(x)))$.
\end{enumerate}

For (\ref{lem-ciabrealstar-1}) assume without restriction that $k = u \not= \bot$.
Then $\stc{s'}{u} = s'\,u$. By (\ref{eq-ciabrealstar}), $(s'\,u) \br (P(x) \vee \itdown(P(x)))$.
Hence, $s'\,u \not= \bot$.

To prove (\ref{lem-ciabrealstar-2}), suppose $\stc{s'}{k} \not= \bot$.
Then $k\not= \bot$. It follows that $(\stc{s'}{k}) \br (P(x)\vee\itdown(P(x)))$, 
by (\ref{eq-ciabrealstar}).
\end{proof}

For what follows, we extend CFP again by adding Rule~(CAIB$^*$).

\section{Concurrent signed digit and Gray codes}\label{sec-sdgc}

We first briefly  formalise the definitions given in Section~\ref{sec-indef} in IFP and CFP, respectively. For details the reader is referred to \cite{btifp,bt}.

Define the IFP predicates $\bSD(x)$ and $\bS(x)$ as follows 
\begin{gather*}
\bSD(z) \Def (z = -1 \vee z = 1) \vee z = 0,  \\
\bII(z, x) \Def |2x - z| \le 1, \\
 \mytextcolor{red}{ \bS(x) \overset{\nu}{=} (\exists z)\, \bSD(z) \wedge  \bII(z, x) \wedge \bS(2x -z). }
\end{gather*}
For $\bthree \Def (\bone + \bone) + \bone$ and $\delta^{\omega} \Def \bfix \alpha.\, \delta \times \alpha$, their types are
\[
\tau(\bSD) = \bthree \quad \text{and} \quad
\tau(\bS)   = \bthree^{\omega}.
\]
The predicate $\bSD(z)$ is realised as follows
\begin{align*}
d \br \bSD(z) = \mbox{} &(d = \bleft(\bleft(\bnil)) \wedge z = -1) \vee \mbox{} \\
                        &(d = \bleft(\bright(\bnil)) \wedge z = 1) \vee \mbox{} \\
                        &(d = \bright(\nil) \wedge z = 0).
\end{align*} 
\mytextcolor{red}{
In the sequel the three digits $-1, 1, 0$ will be identified with their realisers, which are programs of type $\bthree$.
For the predicate $\bS(x)$ we obtain
\begin{align*}
p \br \bS(x) 
&\overset{\nu}{=} (\exists d, p')\, p = \bpair(d, p') \wedge (\exists z)\, d \br \bSD(z) \wedge \bII(z, x) \wedge p' \br \bS(2x - z) \\
&\overset{\nu}{=} (\exists d, p')\, p = \bpair(d, p') \wedge \bII(d,x) \le 1 \wedge p' \br \bS(2x - d).
\end{align*} 
}
The realisers of $\bS(x)$ are hence streams of digits -1, 0, 1.

Let us next consider the Gray code case. Define
\begin{gather*}
\bB(x) \Def x \le 0 \vee x \ge 0 , \\
\bD(x) \Def x \not= 0 \to \bB(x), \\
\bt(x) \Def 1 - 2|x|, \\
\bG(x) \overset{\nu}{=} (-1 \le x \le 1) \wedge \bD(x) \wedge \bG(\bt(x)),
 \end{gather*}
where types are $\tau(\bB) = \tau(\bD) = \btwo$ with $\btwo \Def \bone + \bone$, and $\tau(\bG) = \btwo^{\omega}$. Then the predicates $\bD(x)$ and $\bG(x)$ are realised as follows
\begin{gather*}
a \br \bD(x) = a : \btwo \wedge (x \not= 0 \to (a = \bleft(\bnil) \wedge x \le 0) \vee (a = \bright(\bnil) \wedge x \ge 0),\\
q \br \bG(x) \overset{\nu}{=} (-1 \le x \le 1) \wedge (\exists a, q')\, q = \bpair(a, q') \wedge a \br \bD(x) \wedge q' \br \bG(\bt(x)).
\end{gather*}

Since an infinite Gray code may contain a $\bot$, a sequential access of the sequence from left to right will diverge when it accesses a $\bot$. However, because at most one $\bot$ is contained in each sequence, if one evaluates the first two cells concurrently, then at least one of the two processes is guaranteed to terminate. On the basis of this idea Berger and Tsuiki~\cite{bt} showed that a concurrent algorithm converting infinite Gray code into signed digit representation can be extracted from a CFP proof. To this end a concurrent variant of the predicate $\bS$ is introduced
\[
\mytextcolor{red}{\bS_{2}(x) \overset{\nu}{=} \ddown((\exists z)\, \bSD(z) \wedge \bII(z, x) \wedge \bS_{2}(2x - z)). }
\]
$\bS_{2}(x)$ means that a signed digit representation of $x$ is obtained through the concurrent computation of two threads. Note that $\tau(\bS_{2}) = \bfix \alpha.\, \bfA(\bthree \times \alpha)$.
\begin{thmC}[\cite{bt}]\label{thm-StoGtoS2}
$\bS \subseteq \bG \subseteq \bS_{2}$.
\end{thmC}

This result expresses the fact that, as explained above, when computably translating from Gray code to signed digit representation one needs to allow for computations to be carried out concurrently, which, however is not the case for the converse translation from signed digit representation to Gray code. On the other hand the result is not completely satisfying, as one would like to see under which conditions both representations are computably equivalent. 

To achieve a result of this kind we have to introduce concurrent Gray code. In addition we have to allow for iterated concurrent computations. Define
\begin{gather*}
\mytextcolor{red}{\bS^{*}(x) \overset{\nu}{=} \itdown((\exists z)\, \bSD(z) \wedge \bII(z, x) \wedge \bS^{*}(2x - z)),} \\
\bB^{*}(x) \Def \itdown(x \le 0 \vee x \ge 0), \\
\bD^{*}(x) \Def x \not= 0 \to \bB^{*}(x), \\
\bG^{*}(x) \overset{\nu}{=} (-1 \le x \le 1) \wedge \bD^{*}(x) \wedge \bG^{*}(\bt(x)).
\end{gather*}

For the concurrent signed digit representation we have 
$\bS^* = \nu(\Phi_{\bS^*})$ where
\begin{gather*}
\mytextcolor{red}{\Phi_{\bS^*}(Y)(x) = \itdown((\exists z)\, \bSD(z) \wedge \bII(z, x)\land Y(2x-z)),} \\
\tau(\Phi_{\bS^*})(\alpha) = \bfA^*(\bthree\times\alpha), \\
\tau(\bS^*) = \bfix\tau(\Phi_{\bS^*}) = 
                \bfix\,\alpha\,.\,\bfA^*(\bthree\times\alpha), \\
\mon_{\tau(\Phi_{\bS^*})} : (\alpha\to\beta) \to 
                      \bfA^*(\bthree\times\alpha)\to\bfA^*(\bthree\times\beta), \\
\mon_{\tau(\Phi_{\bS^*})}\,f = f_{\mathrm{mon}}\,(\lambda \bpair(d,a)\,.\,\bpair(d,f\,a)),
\end{gather*}
and
\begin{gather*}
\bfA^*(\alpha) = \bfix \beta\,.\,\bfA(\alpha+\beta),\\
f_{\mathrm{mon}} : (\alpha\to\beta)\to \bfA^*(\alpha)\to\bfA^*(\beta),\\ 
\mytextcolor{red}{f_{\mathrm{mon}}
           \overset{\brec}{=} \lambda h.\,\mamb\,(\lambda d\,.\,
 \ccase{d}{\bleft(a)\to \bleft(h\,a);\,\bright(c)\to f_{\mathrm{mon}}\,h\,c}). }
\end{gather*}

On the other hand, for the concurrent infinite Gray code we have 
$\bG^* = \nu(\Phi_{\bG^*})$ where
\begin{gather*}
\Phi_{\bG^*}(Y)(x) = \bD^*(x)\land Y(\bt(x)),\\
\tau(\Phi_{\bG^*})(\alpha) = \bfA^*(\btwo)\times\alpha, \\
\tau(\bG^*) = \bfix\tau(\Phi_{\bG^*}) = 
               \bfix\,\alpha\,.\,\bfA^*(\btwo)\times\alpha = 
                (\bfA^*(\btwo))^{\omega},\\
\mon_{\tau(\Phi_{\bG^*})} : (\alpha\to\beta) \to 
                   (\bfA^*(\btwo)\times\alpha)\to(\bfA^*(\btwo)\times\beta), \\
\mon_{\tau(\Phi_{\bS^*})}\,f\,\bpair(m,a) = \bpair(m,f\,a).
\end{gather*}

We see that a realiser of $\bG^*(x)$ is simply an ordinary infinite stream
(where the cons-operation is the deterministic constructor $\bpair$) of non-deterministic
partial binary digits, whereas a realiser of $\bS^*(x)$ is something that could be called
a non-deterministic stream (where the cons-operation is non-deterministic)
given by a pair of concurrent computations at least one
of which will yield a head, which is signed digit, and a tail, which is again a
non-deterministic stream. Consequently, the function $\mon_{\tau(\Phi_{\bG^*})}$
is much simpler than $\mon_{\tau(\Phi_{\bS^*})}$.

Our next goal is to show that $\bS^{*} = \bG^{*}$. Note that iterated concurrent computations also occur in the case of $\bS_{2}$, which can be seen by unfolding the co-inductive definition.

\begin{lem}\label{lem-neg}
If $\bS^{*}(x)$, then also
\begin{enumerate}
\item\label{lem-neg-1} $\bS^{*}(-x)$ and
\item\label{lem-neg-2} $\bS^{*}(\bt(x))$.
\end{enumerate}
\end{lem}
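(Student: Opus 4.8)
The plan is to prove both parts by co-induction on $\bS^{*}$, exploiting how the negation map $x \mapsto -x$ and the tent map $\bt$ interact with the one-step unfolding $\bS^{*}(x) = \itdown((\exists d \in \bSD)\, \bII(d,x) \wedge \bS^{*}(2x-d))$. In each case the outer $\itdown$ is handled monadically: once the bare body of the unfolding has been reduced to the required target, the rules \emph{($\itdown$-mon)}, \emph{($\itdown$-return)} and \emph{($\itdown$-bind)} transport the conclusion back under the modality. Part (\ref{lem-neg-1}) is established first, since it is needed in the proof of part (\ref{lem-neg-2}).

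For part (\ref{lem-neg-1}) I set $P(y) \Def \bS^{*}(-y)$ and show $P \subseteq \bS^{*}$ by plain co-induction, i.e.\ $P \subseteq \Phi_{\bS^{*}}(P)$. Assuming $P(y)$, so $\bS^{*}(-y) = \itdown((\exists d \in \bSD)\, \bII(d,-y) \wedge \bS^{*}(-2y-d))$, it suffices by \emph{($\itdown$-mon)} to prove the implication of the bodies, namely that $(\exists d \in \bSD)(\bII(d,-y) \wedge \bS^{*}(-2y-d))$ entails $(\exists e \in \bSD)(\bII(e,y) \wedge P(2y-e))$. Given a witness $d$, I take $e \Def -d$, which again lies in $\bSD$. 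The key identities are $\bII(-d,y) = (|2y+d| \le 1) = \bII(d,-y)$ and $2y-e = -(-2y-d)$, so that $P(2y-e) = \bS^{*}(-2y-d)$, which is exactly the tail at hand. Co-induction then yields $\bS^{*}(-y) \to \bS^{*}(y)$ for all $y$, which is equivalent to the claim since $-(-x)=x$.

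For part (\ref{lem-neg-2}) I set $P(z) \Def (\exists x)\,(z = \bt(x) \wedge \bS^{*}(x))$ and prove $P \subseteq \bS^{*}$ by \emph{strong} co-induction (Lemma~\ref{lem-strong}), i.e.\ $P \subseteq \Phi_{\bS^{*}}(P \cup \bS^{*})$. Fix $z = \bt(x)$ with $\bS^{*}(x)$ and unfold $\bS^{*}(x)$ once; using \emph{($\itdown$-bind)} it is enough to show, for a witnessing first digit $d$ with $\bII(d,x)$ and $\bS^{*}(x')$ where $x' \Def 2x-d$, that $\itdown\bigl((\exists e \in \bSD)\,\bII(e,z) \wedge (P \cup \bS^{*})(2z-e)\bigr)$ holds. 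The computation of $\bt(x)$ in terms of the tail splits into three cases: for $d = -1$ one has $\bt(x) = x'$, for $d = 1$ one has $\bt(x) = -x'$, and for $d = 0$ (where $x' = 2x$) one has $\bt(x) = 1 - |x'|$. In the first case $\bS^{*}(z) = \bS^{*}(x')$ holds outright; in the second $\bS^{*}(z) = \bS^{*}(-x')$ holds by part (\ref{lem-neg-1}); in both cases the target follows by unfolding this $\bS^{*}(z)$ and weakening $\bS^{*}$ to $P \cup \bS^{*}$ via \emph{($\itdown$-mon)}. In the remaining case $d = 0$ I aim at the left disjunct: choosing $e \Def 1$ gives $\bII(1,z) = (|1 - 2|x'|| \le 1)$, which holds since $|x'| \le 1$, and $2z - 1 = 1 - 2|x'| = \bt(x')$, so that $P(2z-1)$ holds with witness $w \Def x'$; the target then follows by \emph{($\itdown$-return)}. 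Applying strong co-induction gives $P \subseteq \bS^{*}$, and specialising to $z = \bt(x)$ (with witness $x$) yields $\bS^{*}(\bt(x))$.

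The routine part is the real-number bookkeeping that produces the three tent identities and verifies the membership predicates $\bII$; the genuinely delicate point is the interplay with the concurrency modality. It is essential to notice that the cases $d = \pm 1$ land back inside $\bS^{*}$ whereas $d = 0$ produces a genuine new signed digit, so that plain co-induction does not suffice and one must feed $P \cup \bS^{*}$ rather than $P$ to the operator. Equally, one must resist stripping the outer $\itdown$ from $\bS^{*}(z)$ directly: the correct move is to keep $\bS^{*}(x)$ under its $\itdown$ and discharge everything through the monad law \emph{($\itdown$-bind)}, since the target $\Phi_{\bS^{*}}(P\cup\bS^{*})(z)$ is itself an $\itdown$-formula.
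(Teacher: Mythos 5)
Your proof is correct. Part (\ref{lem-neg-1}) coincides with the paper's own argument: same predicate, plain co-induction, Rule~(\emph{$\itdown$-mon}), and the digit $-d$. For part (\ref{lem-neg-2}) the mathematical core is also the same as the paper's — the case split on the first digit, the tent identities $\bt(x)=x'$ for $d=-1$, $\bt(x)=-x'$ for $d=1$, $\bt(x)=1-|x'|$ with new digit $1$ for $d=0$, and the appeal to part (\ref{lem-neg-1}) in the case $d=1$ — but you wrap it in a genuinely different co-induction scheme. The paper proves $\bQ\subseteq\bS^{*}$ by \emph{concurrent half-strong co-induction}, whose premise is $\bQ \subseteq \itdown(\Phi_{\bS^{*}}(\bQ)\cup\bS^{*})$; there the cases $d=\pm1$ are discharged by landing in the right-hand disjunct $\bS^{*}(y)$ outright, and the extracted program ($\chscoiter$ applied to $f_{\mathrm{mon}}$ of a three-way case) passes the remaining stream through (resp.\ its negation) via a $\bright$ tag without inspecting it. You instead use ordinary strong co-induction (Lemma~\ref{lem-strong}), whose premise $P\subseteq\Phi_{\bS^{*}}(P\cup\bS^{*})$ forces you, in the cases $d=\pm1$, to unfold $\bS^{*}(z)$ one further step (co-closure) and weaken its tail from $\bS^{*}$ to $P\cup\bS^{*}$ under Rule~(\emph{$\itdown$-mon}); this is legitimate and realisable. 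What your route buys is economy of machinery: you need only the standard strong co-induction principle (realiser $\scoiter$), not the generalised/concurrent half-strong principle the paper introduces for this purpose (Lemma~\ref{lem-halfstrong} specialised to $\Phi'=\itdown$, realiser $\chscoiter$). What the paper's route buys is a slightly lazier extracted program: on digits $\pm1$ it emits the tail untouched, whereas your realiser must destructure the tail one level and re-emit it with $\bright$ tags. Your handling of the modality — keeping $\bS^{*}(x)$ under its $\itdown$ and discharging via (\emph{$\itdown$-bind}), with (\emph{$\itdown$-return}) in the $d=0$ case — is exactly right and parallels the paper's use of (\emph{$\itdown$-mon}) plus (\emph{$\itdown$-return}).
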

\begin{proof} 
(\ref{lem-neg-1}) 
Let $P \Def \set{x}{\bS^{*}(-x)}$. We use co-indution to prove that $P \subseteq \bS^{*}$. So, we have to show that
\[
\bS^{*}(-x) \rightarrow \itdown((\exists z')\, \bSD(z') \wedge \bII(z', x) \wedge \bS^{*}(-(2x -z'))),
\]
i.e.,
\[
\itdown((\exists z)\, \bSD(z) \wedge \bII(z, -x) \wedge \bS^{*}(-2x-z)) 
\rightarrow \itdown((\exists z')\, \bSD(z') \wedge \bII(z', x) \wedge \bS^{*}(-(2x -z'))).
\]
Because of Rule (\emph{$\itdown$-mon}) it suffices to prove that 
\[
(\exists z)\, \bSD(z) \wedge \bII(z, -x) \wedge \bS^{*}(-2x-z)
\rightarrow (\exists z')\, \bSD(z') \wedge \bII(z', x) \wedge \bS^{*}(-(2x -z')).
\]

Let $z \in \bSD$. We show that 
\[
\bII(z, -x) \wedge \bS^{*}(-2x-z)
\rightarrow (\exists z')\, \bSD(z') \wedge \bII(z', x) \wedge \bS^{*}(-(2x -z')).
\]

If $\bII(z, -x)$ with $\bS^{*}(-2x - z)$, then $\bII(-z, x)$. Moreover, $\bS^{*}(-(2x-z'))$ with $z' = -z$.

(\ref{lem-neg-2})
Let $\bQ \Def \set{y}{(\exists x)\, \bS^{*}(x) \wedge y = \bt(x)}$. We use concurrent half-strong co-induction (Lemma~\ref{lem-chscoi})  to show that $\bQ \subseteq \bS^{*}$. This means that we have to prove that
\[
\bQ(y) \rightarrow \itdown(\itdown((\exists z')\, \bSD(z') \wedge \bII(z', y) \wedge \bQ(2y -z')) \vee \bS^{*}(y)).
\]
By the definition of $\bQ$ we therefore have to show for $x, y$ with $y = \bt(x)$ that
\[
\itdown((\exists z)\, \bSD(z) \wedge \bII(z, x) \wedge \bS^{*}(2x-z))
\rightarrow \itdown(\itdown((\exists z')\, \bSD(z') \wedge \bII(z', y) \wedge \bQ(2y -z')) \vee \bS^{*}(y)).
\]
Because of the monotonicity and return rules for $\itdown$, it suffices to prove that for $x, y,$ with $y = \bt(x)$ and $ z\in \bSD$ that
\begin{equation*}\label{eq-neg-2}
\bII(z, x) \wedge \bS^{*}(2x-z)
\rightarrow ((\exists z')\, \bSD(z') \wedge \bII(z', y) \wedge \bQ(2y -z')) \vee \bS^{*}(y), 
\end{equation*}
which follows by case distinction:

\begin{ncase}
$z=-1$
\end{ncase}
We have that $x \in [-1, 0]$ and $\bS^{*}(2x+1)$. Since $2x+1 = \bt(x) = y$ in this case, it follows that $\bS^{*}(y)$. 

\begin{ncase}
$z=1$.
\end{ncase}
Now, $x \in [0, 1]$. Therefore, $2x - 1 = - \bt(x) = -y$. Since moreover, $\bS^{*}(2x-1)$, we have that $\bS^{*}(-y)$, from which we obtain that $\bS^{*}(y)$, by Part (\ref{lem-neg-1}). 

\begin{ncase}
$z=0$.
\end{ncase}
It follows that $x \in [-1/2, 1/2]$, which implies that $\bII(1, \bt(x))$. Therefore, it suffices to show that $\bQ(2y -1)$. Note that $2y-1 = -\bt(y) = -\bt(\bt(x)) = \bt(2x)$. Since $\bS^{*}(2x)$, it follows that $\bQ(2y-1)$.
\end{proof}

The first statement is realised by
\[
f_{\ref{lem-neg}.\ref{lem-neg-1}} \Def f_{\mathrm{mon}}\,
\lambda\, \bpair(d,a).\, \bpair({-}d,a),
\]
and the second by
\begin{multline*}
  f_{\ref{lem-neg}.\ref{lem-neg-2}} \Def
    \chscoiter_{\tau(\Phi_{\bS^*})}\, (f_{\mathrm{mon}}\,\lambda\,\bpair(d,a).\, \\
   \bcase d\, \bof \{ {-}1 \to \bright(a); \\
                       1 \to \bright(f_{\ref{lem-neg}.\ref{lem-neg-1}}\,a); \\
      0 \to \bleft(\amb(\bleft(\bpair(1,a)),\bot))\, \}).
\end{multline*}

\begin{prop}\label{prop-ctog}
$\bS^{*} \subseteq \bG^{*}$.
\end{prop}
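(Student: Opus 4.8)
The plan is to prove $\bS^{*} \subseteq \bG^{*}$ by co-induction on the operator defining $\bG^{*}$. Since $\bG^{*}$ is the greatest fixed point of $Y \mapsto \lambda x.\, (-1 \le x \le 1) \wedge \bD^{*}(x) \wedge Y(\bt(x))$, the co-induction principle reduces the goal to establishing
\[
\bS^{*}(x) \to (-1 \le x \le 1) \wedge \bD^{*}(x) \wedge \bS^{*}(\bt(x)).
\]
Thus the work splits into three independent tasks, one per conjunct, and no strengthened form of co-induction is needed because the recursive conjunct refers to $\bS^{*}$ itself.

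The bound $-1 \le x \le 1$ is immediate: from the body of $\bS^{*}$ one obtains some $d \in \bSD$ with $\bII(d, x)$, i.e.\ $|2x - d| \le 1$, which forces $|x| \le 1$ for every $d \in \SD$. As this is a Harrop (indeed nc) formula, it follows from $\bS^{*}(x)$ by Rule~(\emph{$\itdown$-mon}) followed by Rule~(\emph{$\itdown$-H}). The third conjunct, $\bS^{*}(\bt(x))$, is exactly Lemma~\ref{lem-neg}(\ref{lem-neg-2}), which I would invoke directly.

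The crux is the middle conjunct $\bD^{*}(x)$, i.e.\ $x \ne 0 \to \bB^{*}(x)$ with $\bB^{*}(x) = \itdown(x \le 0 \vee x \ge 0)$. Here I would apply iterated concurrent Archimedean induction {\rm (CAIB$^{*}$)} with $B \Def \bS^{*}$ and $P(x) \Def (x \le 0 \vee x \ge 0)$; both are non-Harrop, as required. The premise of {\rm (CAIB$^{*}$)} asks, for $x \in \bS^{*} \setminus \{0\}$, for
\[
\itdown\bigl(P(x) \vee (|x| \le 1/2 \wedge \bS^{*}(2x) \wedge (P(2x) \to P(x)))\bigr).
\]
Starting from $\bS^{*}(x) = \itdown((\exists d \in \bSD)\, \bII(d, x) \wedge \bS^{*}(2x - d))$ and using Rule~(\emph{$\itdown$-mon}), it suffices to derive the inner disjunction from a single leading digit $d$. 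If $d = -1$, then $\bII(-1, x)$ gives $x \le 0$; if $d = 1$, then $\bII(1, x)$ gives $x \ge 0$; in both cases $P(x)$ holds, landing in the left disjunct. If $d = 0$, then $\bII(0, x)$ gives $|x| \le 1/2$, while $\bS^{*}(2x - 0) = \bS^{*}(2x)$ supplies $\bS^{*}(2x)$, and the implication $P(2x) \to P(x)$ is trivial since $2x \le 0 \leftrightarrow x \le 0$ and $2x \ge 0 \leftrightarrow x \ge 0$; this is the right disjunct. Hence the premise holds, and the conclusion of {\rm (CAIB$^{*}$)} yields $(\forall x \in \bS^{*} \setminus \{0\})\, \bB^{*}(x)$, which is precisely $(\forall x \in \bS^{*})\, \bD^{*}(x)$.

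Combining the three parts gives $\bS^{*} \subseteq \Phi_{\bG^{*}}(\bS^{*})$, and co-induction delivers $\bS^{*} \subseteq \bG^{*}$; the realiser is then obtained from the Soundness Theorem by composing the realisers of the three components (the sign-extraction realiser coming from Lemma~\ref{lem-ciabstarreal}). The main obstacle is the sign-extraction step for $\bD^{*}$: naively reading the head digit fails exactly on the zero-digit case (the one carrying $\bot$), and it is the Archimedean structure of {\rm (CAIB$^{*}$)} together with the iterated concurrency $\itdown$ that licenses recursing on $2x$ through the leading zeros until a sign-determining digit appears, which is guaranteed to happen because $x \ne 0$.
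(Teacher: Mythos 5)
Your proof is correct and follows essentially the same route as the paper: plain co-induction on $\bG^{*}$, with the conjunct $\bS^{*}(\bt(x))$ discharged by Lemma~\ref{lem-neg}(\ref{lem-neg-2}) and the conjunct $\bD^{*}(x)$ established via (CAIB$^{*}$) applied to $B \Def \bS^{*}$ and $P(x) \Def (x \le 0 \vee x \ge 0)$, using the identical case analysis on the leading signed digit. The only (harmless) difference is that you spell out the bound $-1 \le x \le 1$ explicitly via Rules (\emph{$\itdown$-mon}) and (\emph{$\itdown$-H}), which the paper leaves implicit.
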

\begin{proof}
The proof is by co-induction. Because of Lemma~\ref{lem-neg}(\ref{lem-neg-2})
it remains to show that 
\[
\bS^{*}(x) \rightarrow \bD^{*}(x)\,,
 \]
that is, $(\forall x\in\bS^{*}\setminus\{0\})\,\itdown(\bB(x))$. 
We prove this formula by iterated concurrent Archimedean induction (CAIB$^*$).
Therefore, we have to show
\begin{equation*}
\label{prop-ctog-caib-prem}
(\forall x\in\bS^{*}\setminus\{0\})\,\itdown(\bB(x) 
\vee (|x| \le 1/2 \wedge \bS^{*}(2x) \wedge (\bB(2x) \to \bB(x))))\,.
\end{equation*}
By Rule~($\itdown$-mon) and since $\bB(2x) \to \bB(x)$ holds,
it suffices to show
\begin{equation*}
\label{prop-ctog-claim}\mytextcolor{red}{
((\exists z)\,  \bSD(z) \wedge \bII(z,x) \wedge \bS^{*}(2x-d)) \to 
                   (\bB(x) \vee (|x| \le 1/2 \wedge \bS^{*}(2x)))\,. }
\end{equation*}
\mytextcolor{red}{
Let $z \in\bSD$ with $\bII(z,x)$ and $\bS^{*}(2x-z)$.
If $z = -1$ or $z= 1$, then $x\le0$ or $x\ge 0$, hence $\bB(x)$.
If $z=0$, then $|2x| \le 1$ and $\bS^{*}(2x)$. 
}
\end{proof}

Statement $\bS^*\subseteq\bD^*$ is realised by

\begin{align*}
f_d \Def \caibs\,(f_{\mathrm{mon}}\,\lambda\,\bpair(d,a).\,
  \ccase{d}{{-}&1\to\bleft({-}1);\\
  		      & 1\to\bleft(1); \\
		      & 0\to\bright(\bpair(a,\id))}),
\end{align*}  

and the inclusion $\bS^*\subseteq\bG^*$ by
$f_{\ref{prop-ctog}}\Def
  \coiter_{\tau(\Phi_{\bG^*})}\,
      (\lambda\,c.\,\bpair(f_d\,c,f_{\ref{lem-neg}.\ref{lem-neg-2}}\, c))$,
that is,
\[ f_{\ref{prop-ctog}}\,c \overset{\brec}{=} 
     \bpair(f_d\,c, f_{\ref{prop-ctog}}\,(f_{\ref{lem-neg}.\ref{lem-neg-2}}\, c)). 
\]

Let us now consider the converse inclusion. 

\begin{lem}\label{lem-gneg}
$\bG^{*}(x) \rightarrow \bG^{*}(-x)$.
\end{lem}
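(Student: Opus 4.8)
The plan is to prove the implication by a single \emph{unfold--refold} of the co-inductive definition of $\bG^{*}$, requiring \emph{no} co-induction at all. This is the decisive structural contrast with the signed-digit case: in Lemma~\ref{lem-neg}(\ref{lem-neg-1}) negating $x$ flips \emph{every} signed digit, so the realiser $f_{\ref{lem-neg}.\ref{lem-neg-1}}$ has to recurse through the whole stream via $f_{\mathrm{mon}}$. For Gray code, by contrast, the sign of $x$ is recorded \emph{only} in the leading digit $\bD^{*}(x)$: after one step one passes to $\bt(x) = 1 - 2|x|$, and since $|{-}x| = |x|$ the tent map is even, $\bt(-x) = \bt(x)$. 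Hence the entire tail of the Gray code of $-x$ coincides with that of $x$, and only the head needs to be altered.

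Concretely, I would argue as follows. Assume $\bG^{*}(x)$. Since $\bG^{*}$ is a greatest fixed point, this is equivalent to
\[
(-1 \le x \le 1) \wedge \bD^{*}(x) \wedge \bG^{*}(\bt(x)).
\]
To conclude $\bG^{*}(-x)$ I use the same fixed-point equivalence in the reverse (folding) direction, so that it suffices to establish the three conjuncts
\[
(-1 \le -x \le 1) \wedge \bD^{*}(-x) \wedge \bG^{*}(\bt(-x)).
\]
The bound $-1 \le -x \le 1$ is immediate from $-1 \le x \le 1$; the tail $\bG^{*}(\bt(-x))$ is literally $\bG^{*}(\bt(x))$ by evenness of $\bt$, hence already available; so the only remaining task is to derive $\bD^{*}(-x)$ from $\bD^{*}(x)$.

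This last step is the sole place requiring a little work, and I expect it to be the main (though modest) obstacle. Because $-x \ne 0 \leftrightarrow x \ne 0$, it suffices to turn $\bB^{*}(-x)$ into $\bB^{*}(x)$. Now $\bB^{*}(-x) = \itdown(-x \le 0 \vee -x \ge 0)$, which is $\itdown(x \ge 0 \vee x \le 0)$; commuting the two disjuncts and applying Rule~(\emph{$\itdown$-mon}) yields $\itdown(x \le 0 \vee x \ge 0) = \bB^{*}(x)$, as required. On the level of realisers this argument extracts a \emph{non-recursive} program: from a realiser $\bpair(a, q')$ of $\bG^{*}(x)$ it returns $\bpair(a', q')$, where $a'$ arises from $a$ by swapping the two Boolean leaves $\bleft(\bnil)$ and $\bright(\bnil)$ (via the realiser of Rule~(\emph{$\itdown$-mon})) while the tail $q'$ is passed through unchanged --- in sharp contrast to the stream-wide recursion of $f_{\ref{lem-neg}.\ref{lem-neg-1}}$.
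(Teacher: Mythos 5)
Your proof is correct, and it is a leaner packaging of the same mathematical content as the paper's. The paper proves this lemma by strong co-induction on the predicate $\bP \Def \set{x}{\bG^{*}(-x)}$, with proof obligation $\bP(y) \to \bD^{*}(y) \wedge (\bP(\bt(y)) \vee \bG^{*}(\bt(y)))$; but in discharging the third conjunct it always chooses the right disjunct $\bG^{*}(\bt(y))$ (precisely because $\bt(-x)=\bt(x)$), so the co-inductive hypothesis $\bP(\bt(y))$ is never invoked. Your unfold--refold argument makes this degeneracy explicit: since the co-inductive call is unused, plain fixed-point closure $\Phi_{\bG^{*}}(\bG^{*}) \subseteq \bG^{*}$ suffices, and that is exactly your refold step. (Formally, this closure inclusion is a derived rule of IFP --- apply co-induction to $\Phi_{\bG^{*}}(\bG^{*})$ using monotonicity and co-closure --- so a vestige of co-induction hides in its justification, but it is standard and assumption-free; alternatively your argument is verbatim an instance of strong co-induction in which the $\bP$-disjunct is dead code.) What your version buys is exactly the point you emphasise: it explains structurally why the extracted program is non-recursive, and indeed your realiser --- flip the head via the $\itdown$-mon realiser, pass the tail through --- coincides with the paper's $f_{\ref{lem-gneg}}\,\bpair(m,a) \Def \bpair(f_{\mathrm{mon}}\,(\lambda d.\,{-}d)\,m,\, a)$, in contrast to the stream-wide recursion of $f_{\ref{lem-neg}.\ref{lem-neg-1}}$ in the signed-digit case. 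One small slip in wording: in the $\bD^{*}$ step you say it suffices to turn $\bB^{*}(-x)$ into $\bB^{*}(x)$, whereas what is needed is the converse direction, deriving $\bB^{*}(-x)$ from $\bB^{*}(x)$; since the disjunct-swap argument under Rule~(\emph{$\itdown$-mon}) is symmetric, this is harmless, but the direction should be stated correctly.
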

\begin{proof}
Let $\bP \Def \set{x}{\bG^{*}(-x)}$. We will use strong co-induction to show that $\bP \subseteq \bG^{*}$. To this end it needs to be shown that 
\[
\bP(y) \to \bD^{*}(y) \wedge (\bP(\bt(y)) \vee \bG^{*}(\bt(y))).
\]
Assume that $\bP(y)$. Then $y = -x$ for some $x \in \bG^{*}$. It follows that $\bD^{*}(x)$ and $\bG^{*}(\bt(x))$. The first property implies that $\bD^{*}(-x)$, that is $\bD^{*}(y)$. Moreover, since $\bt(x) = \bt(-x)$, we also have $\bG^{*}(\bt(y))$.
\end{proof}

The statement $(\forall x)\,(\bG^*(x)\to\bG^*(-x))$ is realised by
\[ 
f_{\ref{lem-gneg}}\,\bpair(m,a) \Def
      \bpair(f_{\mathrm{mon}}\,(\lambda d.\,{-}d)\,m, a).
\]

\begin{lem}\label{lem-glikec}
For $d \in \{ -1, 1 \}$,
\[
\bG^{*}(x) \rightarrow \bII(d, x) \rightarrow \bG^{*}(2x - d).
\]
\end{lem}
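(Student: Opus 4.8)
The plan is to observe that the hypothesis $\bII(d,x)$ pins $x$ to the half of the interval on which the tent function is affine, so that $2x-d$ becomes either $\bt(x)$ or $-\bt(x)$, and then to exploit the fact that $\bG^{*}(\bt(x))$ is already available simply by unfolding $\bG^{*}(x)$.

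First I would unfold the co-inductive definition of $\bG^{*}$. Since $\bG^{*}=\nu\Phi_{\bG^{*}}$ is a fixed point of its defining operator, the co-closure rule gives, from $\bG^{*}(x)$, the conjunction $(-1\le x\le 1)\wedge\bD^{*}(x)\wedge\bG^{*}(\bt(x))$; the only component I shall need is $\bG^{*}(\bt(x))$.

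Next I would do a case distinction on $d$. For $d=-1$, the hypothesis $\bII(-1,x)$, i.e.\ $|2x+1|\le 1$, forces $-1\le x\le 0$, hence $|x|=-x$ and $\bt(x)=1+2x=2x-(-1)=2x-d$. Thus $\bG^{*}(2x-d)$ is literally $\bG^{*}(\bt(x))$, which I already have. For $d=1$, the hypothesis $\bII(1,x)$, i.e.\ $|2x-1|\le 1$, forces $0\le x\le 1$, hence $|x|=x$ and $\bt(x)=1-2x$, so that $2x-d=2x-1=-\bt(x)$. Here I would invoke Lemma~\ref{lem-gneg}: from $\bG^{*}(\bt(x))$ it yields $\bG^{*}(-\bt(x))=\bG^{*}(2x-d)$, as required.

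There is no serious obstacle in this lemma; its content is entirely the elementary identity relating $2x-d$ to $\pm\bt(x)$ on the two halves of $[-1,1]$ carved out by $\bII(d,x)$, combined with the reflection symmetry already established in Lemma~\ref{lem-gneg}. The only point demanding a little care is to record that $\bII(d,x)$ really does restrict $x$ to the sign region on which $\bt$ is affine, so that the sign of $|x|$ is determined and the identity $2x-d=\pm\bt(x)$ holds exactly rather than merely up to the case split inside $|x|$.
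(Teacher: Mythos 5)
Your proof is correct and follows essentially the same route as the paper's: unfold $\bG^{*}(x)$ to obtain $\bG^{*}(\bt(x))$, then case on $d$, using the identity $\bt(x)=2x+1$ for $d=-1$ and $\bt(x)=-(2x-1)$ together with Lemma~\ref{lem-gneg} for $d=1$. If anything, you are slightly more careful than the paper, which states these tent-function identities without explicitly noting that they rely on $\bII(d,x)$ fixing the sign of $x$.
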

\begin{proof}
By case distinction on $d$ we show that 
\[
\bG^{*}(x) \rightarrow \bG^{*}(2x - d).
\]
Therefore, assume that $\bG^{*}(x)$. Then $\bD^{*}(x)$ and $\bG^{*}(\bt(x))$. 

\begin{ncase}
$d=-1$.
\end{ncase}
This case is obvious as $\bt(x)=2x+1$.

\begin{ncase}
$d=1$.
\end{ncase}
Now $\bt(x) = 1-2x = - (2x-1)$. Therefore, the statement follows with Lemma~\ref{lem-gneg}.
\end{proof}

$(\forall x)\,(\bG^*(x)\to
(\forall d\in\{{-}1,1\})\,(\bII(d, x) \rightarrow \bG^{*}(2x - d)))$
is realised by
\[ 
f_{\ref{lem-glikec}}\,\bpair(\_,a)\,d \Def
\ccase{d}{{-}1\to a; 1 \to f_{\ref{lem-gneg}}\,a}.
\]

\begin{lem}\label{lem-min1}
$\bII(1, x) \wedge \bG^{*}(x) \to \bG^{*}(1-x)$.
\end{lem}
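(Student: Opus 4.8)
The plan is to establish $\bG^{*}(1-x)$ directly, by unfolding the co-inductive definition of $\bG^{*}$ a single time and checking the three resulting conjuncts; no fresh co-induction is needed, since the only recursive occurrence will be discharged by Lemma~\ref{lem-glikec}. First I would unpack the hypothesis $\bII(1,x)$, which by definition reads $|2x-1|\le 1$ and is therefore equivalent to $0\le x\le 1$. In particular $0\le 1-x\le 1$, so the bound $-1\le 1-x\le 1$, the first conjunct of $\Phi_{\bG^{*}}(\bG^{*})(1-x)$, holds immediately.

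For the second conjunct, $\bD^{*}(1-x)$, recall $\bD^{*}(1-x)=(1-x\ne 0\to \bB^{*}(1-x))$ with $\bB^{*}(1-x)=\itdown((1-x)\le 0\vee (1-x)\ge 0)$. Since $x\le 1$ gives $1-x\ge 0$, the right disjunct holds outright, so by Rule~\emph{($\itdown$-return)} I obtain $\bB^{*}(1-x)$ unconditionally, and hence $\bD^{*}(1-x)$ holds trivially (its conclusion is true regardless of the premise).

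The key step is the third conjunct, $\bG^{*}(\bt(1-x))$. Using $1-x\ge 0$ I would compute $\bt(1-x)=1-2|1-x|=1-2(1-x)=2x-1$. Now I invoke Lemma~\ref{lem-glikec} with $d=1$: from $\bG^{*}(x)$ and $\bII(1,x)$ it yields $\bG^{*}(2x-1)$, which is exactly $\bG^{*}(\bt(1-x))$. Having verified all three conjuncts of $\Phi_{\bG^{*}}(\bG^{*})(1-x)$, and since $\bG^{*}$ is a fixed point of $\Phi_{\bG^{*}}$, i.e.\ $\Phi_{\bG^{*}}(\bG^{*})\subseteq\bG^{*}$, I conclude $\bG^{*}(1-x)$.

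I do not expect a serious obstacle: the whole argument is a single unfolding of $\bG^{*}$ together with one application of the already-proved Lemma~\ref{lem-glikec}. The one place that genuinely uses the hypothesis $\bII(1,x)$ is the sign bookkeeping $1-x\ge 0$, which makes $|1-x|=1-x$ and hence $\bt(1-x)=2x-1$; the main conceptual point is simply to peel off the definition once and reuse Lemma~\ref{lem-glikec} rather than to set up a new co-induction. On the realiser side I would expect the extracted program to emit, for the head $\bD^{*}(1-x)$, the partial digit witnessing $1-x\ge 0$, and for the tail to pass the realiser of $\bG^{*}(x)$ through the realiser $f_{\ref{lem-glikec}}$ in its $d=1$ branch.
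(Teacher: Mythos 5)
Your proof is correct and follows essentially the same route as the paper: both derive $\bD^{*}(1-x)$ from $\bII(1,1-x)$ (i.e.\ $1-x\ge 0$), compute $\bt(1-x)=2x-1$, and discharge the recursive conjunct via Lemma~\ref{lem-glikec} with $d=1$, using only one unfolding of the fixed point rather than a fresh co-induction. Even your remark about the extracted realiser matches the paper's $f_{\ref{lem-min1}}\,a = \bpair(\amb(\bleft(1),\bot),\,f_{\ref{lem-glikec}}\,a\,1)$.
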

\begin{proof}
Assume $\bII(1, x)$ and $\bG^{*}(x)$. $\bII(1, x)$ implies $\bII(1, 1 - x)$
and hence $\bD^{*}(1-x)$. Therefore, it suffices to show $\bG^{*}(\bt(1-x))$.
Note that in our case, $\bt(1-x) = 2x -1$. Thus, $\bG^{*}(x)$ implies
$\bG^{*}(\bt(1-x))$, by Lemma~\ref{lem-glikec}.
\end{proof}

$(\forall x)\,(\bII(1, x) \wedge \bG^{*}(x) \to \bG^{*}(1-x))$
is realised by
\[ 
f_{\ref{lem-min1}}\,a \Def \bpair(\amb(\bleft(1),\bot),f_{\ref{lem-glikec}}\,a\,1).
\]
Hence, 
\[ 
f_{\ref{lem-min1}}\,\bpair(\_,a) =
   \bpair(\amb(\bleft(1),\bot), f_{\ref{lem-gneg}}\,a).
\]

\begin{lem}\label{lem-ii0}
$\bII(0, x) \wedge \bG^{*}(x) \to \bG^{*}(2x)$.
\end{lem}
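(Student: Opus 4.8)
The plan is to prove $\bG^{*}(2x)$ directly from the co-closure (unfolding) of the co-inductive definition of $\bG^{*}$, without a fresh co-induction. Assume $\bII(0,x)$, which means $|2x| \le 1$, equivalently $-1/2 \le x \le 1/2$, and assume $\bG^{*}(x)$. Unfolding the latter yields $-1 \le x \le 1$, $\bD^{*}(x)$, and $\bG^{*}(\bt(x))$. To obtain $\bG^{*}(2x)$ it then suffices to establish its three defining conjuncts: $-1 \le 2x \le 1$, $\bD^{*}(2x)$, and $\bG^{*}(\bt(2x))$.

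The first conjunct is immediate, since $-1 \le 2x \le 1$ is exactly what $\bII(0,x)$ provides. For $\bD^{*}(2x)$, I observe that $\bB^{*}(2x) = \itdown(2x \le 0 \vee 2x \ge 0)$ follows from $\bB^{*}(x) = \itdown(x \le 0 \vee x \ge 0)$ by Rule~(\emph{$\itdown$-mon}) applied to the evident implication $(x \le 0 \vee x \ge 0) \to (2x \le 0 \vee 2x \ge 0)$. Since $2x \ne 0$ is equivalent to $x \ne 0$, the hypothesis $\bD^{*}(x)$, which is $x \ne 0 \to \bB^{*}(x)$, therefore yields $2x \ne 0 \to \bB^{*}(2x)$, that is, $\bD^{*}(2x)$.

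The crucial conjunct is $\bG^{*}(\bt(2x))$. Here I unfold $\bG^{*}(\bt(x))$ once more to extract $\bG^{*}(\bt(\bt(x)))$. The key arithmetic identity, valid precisely because $|x| \le 1/2$, is
\[
\bt(\bt(x)) = -\bt(2x).
\]
One checks this by cases on the sign of $x$: for $0 \le x \le 1/2$ one has $\bt(x) = 1 - 2x \in [0,1]$, hence $\bt(\bt(x)) = 1 - 2(1-2x) = 4x - 1 = -(1 - 4x) = -\bt(2x)$; for $-1/2 \le x \le 0$ one has $\bt(x) = 1 + 2x \in [0,1]$, hence $\bt(\bt(x)) = 1 - 2(1+2x) = -1 - 4x = -(1 + 4x) = -\bt(2x)$. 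Consequently $\bG^{*}(\bt(\bt(x)))$ is the same statement as $\bG^{*}(-\bt(2x))$, and applying Lemma~\ref{lem-gneg}, which gives $\bG^{*}(y) \to \bG^{*}(-y)$, with $y = -\bt(2x)$ produces $\bG^{*}(\bt(2x))$, completing the three conjuncts.

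The main obstacle, and really the only non-routine point, is spotting the identity $\bt(\bt(x)) = -\bt(2x)$ on the region $|x| \le 1/2$: this is exactly what licenses reading off the recursive conjunct $\bG^{*}(\bt(2x))$ from the second unfolding of $\bG^{*}(x)$ together with the negation lemma. The domain restriction $|x| \le 1/2$ furnished by $\bII(0,x)$ is essential, since it is what keeps $\bt(x)$ inside $[0,1]$ so that the inner absolute value collapses and the double application of the tent function linearises to $\pm(1 - 4x)$.
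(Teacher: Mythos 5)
Your proof is correct. It differs from the paper's only in how the key conjunct $\bG^{*}(\bt(2x))$ is obtained: the paper applies Lemma~\ref{lem-min1} to $\bt(x)$ (noting that $\bII(0,x)$ gives $\bt(x)\ge 0$ and $1-\bt(x)=2|x|$), obtaining $\bG^{*}(2|x|)$, and then unfolds this once and uses $\bt(2|x|)=\bt(2x)$; you instead unfold $\bG^{*}(x)$ twice and verify the arithmetic identity $\bt(\bt(x))=-\bt(2x)$ on $|x|\le 1/2$ by hand, after which Lemma~\ref{lem-gneg} finishes the job. These are really the same computation in different packaging: Lemma~\ref{lem-min1} is proved via Lemma~\ref{lem-glikec}, which itself rests on Lemma~\ref{lem-gneg}, so unrolling the paper's lemma stack reproduces exactly your double unfolding, sign flip, and single fold. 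What the paper's route buys is reuse of already-established lemmas (and a realiser expressed compactly in terms of $f_{\ref{lem-min1}}$); what yours buys is self-containedness modulo Lemma~\ref{lem-gneg} and one fewer fold/unfold of the fixed point. Your treatment of the other two conjuncts ($-1\le 2x\le 1$ from $\bII(0,x)$, and $\bD^{*}(2x)$ from $\bD^{*}(x)$ via Rule~(\emph{$\itdown$-mon}) and $2x\ne 0\leftrightarrow x\ne 0$) matches what the paper leaves implicit, and your final step is a legitimate use of the fixed-point equation rather than co-induction, exactly as in the paper.
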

\begin{proof}
Assume $\bII(0, x)$ and $\bG^{*}(x)$. Then $\bD^*(x)$ and $\bG^{*}(\bt(x))$.
Hence also $\bD^*(2x)$ and $\bt(x)\ge 0$. With Lemma~\ref{lem-min1} it
follows $\bG^*(2|x|)$ (since $1-\bt(x) = 2|x|$).
Therefore, $\bG^*(\bt(2x))$ (since $\bt(|x|) = \bt(x)$). It follows
$\bG^*(2x)$.
\end{proof}   

$(\forall x)\,(\bII(0, x) \wedge \bG^{*}(x) \to \bG^{*}(2x))$
is realised by
\[ 
f_{\ref{lem-ii0}}\,\bpair(m,a) \Def \bpair(m, \pi_{R}\, (f_{\ref{lem-min1}}\,a)).
\] 

\begin{lem}\label{lem-d2}
$\bD^{*}(x) \leftrightarrow \itdown\rt{x \not= 0}{(x \le 0 \vee x \ge 0)}$.
\end{lem}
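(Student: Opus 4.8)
The plan is to recognise that this biconditional is, after unfolding definitions, nothing but the combination of Rule~(\emph{$\itdown$-$\|$-intro}) for the forward implication and Rule~(\emph{$\|$-mp}) for the backward implication, instantiated with the Harrop formulas $A_{0} \Def x \le 0$, $A_{1} \Def x \ge 0$ and $B \Def x \neq 0$. First I would unfold the left-hand side to $\bD^{*}(x) = (x \neq 0 \to \itdown(x \le 0 \vee x \ge 0))$ and observe that the right-hand side is exactly $\rst{\itdown(A_{0} \vee A_{1})}{B}$. Since $=$, $\le$ and $<$ are non-computational, all three of $A_{0}, A_{1}, B$ are Harrop, so the side conditions of the two rules are satisfied.

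For the direction $\bD^{*}(x) \to \rst{\itdown(A_{0} \vee A_{1})}{B}$ I would apply Rule~(\emph{$\itdown$-$\|$-intro}). Its first premise, $B \to \itdown(A_{0} \vee A_{1})$, is literally $\bD^{*}(x)$ and hence available by assumption. Its second premise, $\neg B \to A_{0} \wedge A_{1}$, reads $\neg(x \neq 0) \to (x \le 0 \wedge x \ge 0)$; this is a disjunction-free formula without free predicate variables that is classically true, following from stability of $=$ (which gives $\neg\neg(x = 0) \to x = 0$) together with the elementary order facts $x = 0 \to x \le 0$ and $x = 0 \to x \ge 0$, and may therefore be used as an nc axiom from $\AAA_{R}$. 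Applying the rule yields $\rst{\itdown(A_{0} \vee A_{1})}{B}$, the right-hand side.

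For the converse, $\rst{\itdown(A_{0} \vee A_{1})}{B} \to \bD^{*}(x)$, I would assume the restriction together with the premise $B$ and invoke Rule~(\emph{$\|$-mp}) with $A \Def \itdown(A_{0} \vee A_{1})$ to obtain $\itdown(A_{0} \vee A_{1})$; discharging $B$ then gives $B \to \itdown(A_{0} \vee A_{1})$, which is $\bD^{*}(x)$. On the realiser side, the forward implication is witnessed by the program $h^{+}$ of Lemma~\ref{lem-itdownintro} (the second, Harrop, premise contributing only $\bnil$), while the backward implication is realised by the identity, since from $a \br \rst{A}{B}$ and $x \neq 0$ one reads off $a \neq \bot$ and hence $a \br A$ directly from the realisability clause for restriction.

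There is little genuine difficulty here: the only point requiring care is the bookkeeping that matches the two premises of~(\emph{$\itdown$-$\|$-intro}) to the present situation, in particular checking that the second premise is a legitimate non-computational axiom and that the productivity requirement on $\itdown(A_{0} \vee A_{1})$ needed for the restriction to be well-formed holds (it does, because $\itdown$ is formed by applying $\ddown$ to a disjunction). The content of the lemma is thus essentially the observation that, for this particular pair $(A, B)$ in which $\neg B$ forces both disjuncts of $A$, restriction and implication with the iterated concurrency modality coincide.
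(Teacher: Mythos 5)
Your proof is correct and follows exactly the paper's own argument: the forward direction is an application of Rule~(\emph{$\itdown$-$\|$-intro}) with $A_{0} \Def x \le 0$, $A_{1} \Def x \ge 0$, $B \Def x \neq 0$, whose second premise is discharged by the classical fact that $x = 0$ implies $x \le 0 \wedge x \ge 0$, and the converse direction is Rule~(\emph{$\|$-mp}) after assuming $x \neq 0$. The realisers you name ($h^{+}$ for the forward implication, the identity for the converse) are also precisely those recorded in the paper.
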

\begin{proof}
Note that $\neg (x \not= 0)$, i.e, $x = 0$, implies that $x \le 0 \wedge x \ge 0$. Now, assume that $\bD^{*}(x)$. Then we obtain with Rule~(\emph{$\itdown$-$\rest$-intro}) that $\rt{x \not= 0}{\itdown(x \le 0 \vee x \ge 0)}$.

For the converse implication assume that $x \not= 0$. Then if follows with Rule~(\emph{$\rest$-mp}) that $\itdown(x \le 0 \vee x \ge 0)$. So, $\bD^{*}(x)$ holds.
\end{proof}

$(\forall x)\,(\bD^*(x) \to \itdown\rt{x \neq 0}{(x\le 0 \lor x\ge 0)})$
is realised by the function $h^+$ as defined in Lemma~\ref{lem-itdownintro}.
The converse implication is realised by the identity.

\begin{lem}\label{lem-halfC}
\mytextcolor{red}{$\bG^{*}(x) \to \itdown((\exists z)\, \bSD(z) \wedge \bII(z, x))$.}
\end{lem}
\begin{proof}
Assume $\bG^*(x)$.
We have to show \mytextcolor{red}{$A(x) \Def \itdown((\exists z)\, \bSD(z) \wedge \bII(z, x))$}.
By Rule~($\ddown$-$\itdown$-absorb), it suffices to show $\ddown(A(x))$.
Therefore, by Rule~(\emph{$\ddown$-$\vee$-elim}) it is sufficient to derive
\begin{enumerate}
\item \label{lem-halfC-1} $\rt{x \not= 0}{A(x)}$
\item \label{lem-halfC-2} $\rt{\bt(x) \not= 0}{A(x)}$.
\end{enumerate}
Set $\bD'(y) \Def \rt{\bt(x) \not= 0}{\itdown(y \le 0 \vee y \ge 0)}$.
The assumption $\bG^{*}(x)$ entails $\bD^{*}(x)$ and $\bD^{*}(\bt(x))$ and
therefore also $\bD'(x)$ as well as $\bD'(\bt(x))$, by Lemma~\ref{lem-d2}.
Now, (\ref{lem-halfC-1}) follows immediately from $\bD'(x)$,
by Rules~(\emph{$\itdown$-mon}) and (\emph{$\rest$-mon}).

For (\ref{lem-halfC-2}) we use that $\bD'(\bt(x))$. Because of Rule~(\emph{$\rest$-mon}) it suffices to derive 
$\itdown(\bt(x) \le 0 \vee \bt(x) \ge 0) \to A(x)$.
With Rule~($\itdown$-bind) this can be further reduced to showing
$(\bt(x) \le 0 \vee \bt(x) \ge 0) \to A(x)$.
If $\bt(x) \le 0$, then $x \not= 0$,
and hence $\itdown(x \le 0 \vee x \ge 0)$ because of $\bD^{*}(x)$.
As \mytextcolor{red}{$x \le 0 \vee x \ge 0 \to (\exists z)\, \bSD(z) \wedge \bII(z, x)$},
an application of Rule~(\emph{$\itdown$-mon}) leads to $A(x)$.
If $\bt(x) \ge 0$, then $\bII(0, x)$. Hence, $A(x)$,
by Rule~(\emph{$\itdown$-return}).
\end{proof}

\mytextcolor{red}{$(\forall x)\,(\bG^*(x) \to \itdown((\exists z)\, \bSD(z) \wedge \bII(z, x)))$}
is realised by the function 
\begin{multline*}
f_{\ref{lem-halfC}}\,\bpair(m,\bpair(n,\_)) \Def 
\mamb\,\bright\,
    \amb(h^+
         \,m, \\
         f_{\mathrm{bind}}\,(\lambda\,c.\,
                             \ccase{c}{\bleft(a)\to m;
                                       \bright(b)\to f_{\mathrm{ret}}\,0})\,n).
\end{multline*}

\begin{prop}\label{prop-gtoc}
$\bG^{*} \subseteq \bS^{*}$.
\end{prop}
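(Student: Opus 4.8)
The plan is to establish the inclusion by plain co-induction for $\bS^{*}$. Recall that $\bS^{*} = \nu(\Phi_{\bS^{*}})$ with $\Phi_{\bS^{*}}(Y)(x) = \itdown((\exists d \in \bSD)\, \bII(d, x)\land Y(2x-d))$. Hence it suffices to verify the post-fixed-point condition $\bG^{*} \subseteq \Phi_{\bS^{*}}(\bG^{*})$, that is,
\[
\bG^{*}(x) \to \itdown((\exists d \in \bSD)\, \bII(d, x) \wedge \bG^{*}(2x - d)).
\]
Once this implication is in hand, co-induction yields $\bG^{*} \subseteq \bS^{*}$ directly, so no stronger co-induction principle is needed: the body of the defining operator of $\bS^{*}$ already has exactly the shape we can produce.

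First I would invoke Lemma~\ref{lem-halfC}, which from the assumption $\bG^{*}(x)$ produces $\itdown((\exists d \in \bSD)\, \bII(d, x))$, i.e.\ a concurrently computed signed digit $d$ with $\bII(d, x)$. This is the genuinely concurrent part of the argument: it is where the two-thread evaluation of the head of the Gray code is used to decide which digit may legitimately be emitted. The remaining work is to strengthen the body of the modality from $\bII(d, x)$ to $\bII(d, x) \wedge \bG^{*}(2x - d)$ while keeping $\bG^{*}(x)$ in the context. To do this I would apply Rule~(\emph{$\itdown$-mon}), reducing the goal to the purely deterministic implication
\[
(\exists d \in \bSD)\, \bII(d, x) \to (\exists d \in \bSD)\, \bII(d, x) \wedge \bG^{*}(2x - d),
\]
provable under the standing hypothesis $\bG^{*}(x)$.

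At this point one performs a case distinction on the witnessing digit $d \in \{-1, 0, 1\}$. For $d = -1$ and $d = 1$ the second conjunct $\bG^{*}(2x - d)$ follows from $\bG^{*}(x)$ together with $\bII(d, x)$ by Lemma~\ref{lem-glikec}; for $d = 0$, where $\bII(0, x)$ gives $|2x| \le 1$, it follows from $\bG^{*}(x)$ by Lemma~\ref{lem-ii0} that $\bG^{*}(2x) = \bG^{*}(2x - 0)$. In every case the same digit $d$ is retained, so the existential witness is unchanged, and all the required arithmetic about the tent function $\bt$ and the ranges of the digits has already been isolated in the preceding Lemmas~\ref{lem-gneg}--\ref{lem-halfC}.

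I expect the main obstacle to be conceptual rather than computational: one must ensure it is legitimate to carry the assumption $\bG^{*}(x)$ (whose realiser is a complete infinite Gray stream) into the scope of the monadic modality and use it to reconstruct the tail $\bG^{*}(2x - d)$. This is precisely what the monadic law (\emph{$\itdown$-mon}) licenses, and it is also the reason the \emph{iterated} modality $\itdown$, rather than the bare $\ddown$, is required: the post-processing that attaches $\bG^{*}(2x - d)$ must be applied uniformly inside the (possibly nested) concurrent computation delivered by Lemma~\ref{lem-halfC}, which $\ddown$ alone would not support.
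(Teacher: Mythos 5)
Your proposal is correct and follows essentially the same route as the paper's proof: plain co-induction reducing the goal to $\bG^{*}(x) \to \itdown((\exists d \in \bSD)\, \bII(d, x) \wedge \bG^{*}(2x-d))$, discharged by combining Lemma~\ref{lem-halfC} with Rule~(\emph{$\itdown$-mon}) and the case distinction via Lemmas~\ref{lem-glikec} and \ref{lem-ii0}. The only quibble is your closing aside that $\ddown$ "alone would not support" the post-processing—Rule~(\emph{$\ddown$-mon}) would in fact allow it; the genuine need for $\itdown$ arises earlier, inside Lemma~\ref{lem-halfC} (via $\itdown$-bind and the $\itdown$-based definition of $\bD^{*}$), not at this monotonicity step.
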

\begin{proof}
Again the assertion follows by co-induction. We have to show that 
\[\mytextcolor{red}{
\bG^{*}(x) \to \itdown((\exists z)\, \bSD(z) \wedge \bII(z, x) \wedge \bG^{*}(2x-z)).  }
\]
From Lemmas~\ref{lem-glikec} and \ref{lem-ii0} it follows 
\begin{equation*}\label{eq-gtoc-1} \mytextcolor{red}{
\bG^{*}(x) \to (\exists z)\, \bSD(z) \wedge \bII(z, x) \to (\exists z)\, (\bSD(z) \wedge \bII(z, x) \wedge \bG^{*}(2x-z)), }
\end{equation*}
from which we obtain by Rule~(\emph{$\itdown$-mon}) that
\begin{equation*}\label{eq-gtoc-2} \mytextcolor{red}{
\bG^{*}(x) \to \itdown((\exists z)\, \bSD(z) \wedge \bII(z, x)) \to \itdown((\exists z)\, (\bSD(z) \wedge \bII(z, x) \wedge \bG^{*}(2x-z))). }
\end{equation*}
Note that because of Lemma~\ref{lem-halfC} the assumption \mytextcolor{red}{$\itdown((\exists z)\, \bSD(z) \wedge \bII(z, x))$} can be discharged.
\end{proof}

The inclusion $\bG^*\subseteq \bS^*$ is realised by 
\begin{multline*}
f_{\ref{prop-gtoc}} \Def \coiter_{\tau(\Phi_{\bS^*})}\,(\lambda a.\,
   f_{\mathrm{mon}}\,(\lambda d.\,
                     \ccase{d}{{-}1\to f_{\ref{lem-glikec}}\,a\,d; \\
                               1   \to f_{\ref{lem-glikec}}\,a\,d;
                               0   \to f_{\ref{lem-ii0}}\,a})\,
                  (f_{\ref{lem-halfC}}\,a)). 
\end{multline*}

\begin{thm}\label{thm-maingc}
$\bS^{*} = \bG^{*}$
\end{thm}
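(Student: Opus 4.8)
The plan is to prove the set equality $\bS^{*} = \bG^{*}$ by establishing the two inclusions separately and then combining them by mutual inclusion. In fact both halves have already been prepared: Proposition~\ref{prop-ctog} delivers $\bS^{*} \subseteq \bG^{*}$ and Proposition~\ref{prop-gtoc} delivers the converse $\bG^{*} \subseteq \bS^{*}$. Consequently the proof of the theorem itself is immediate, and the only content of the statement is to record that both directions have now been secured simultaneously.

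It is worth recalling where the genuine work sits, since that is what makes the combination meaningful. For $\bS^{*} \subseteq \bG^{*}$ I would argue by co-induction on $\Phi_{\bG^{*}}$, producing from a realiser of $\bS^{*}(x)$ both a realiser of $\bD^{*}(x)$ and a proof that $\bt(x) \in \bS^{*}$; the first is obtained by iterated concurrent Archimedean induction (Rule~(CAIB$^{*}$)) and the second is Lemma~\ref{lem-neg}(\ref{lem-neg-2}). For $\bG^{*} \subseteq \bS^{*}$ I would again use co-induction, unfolding $\bS^{*}$, and extract a signed digit $d$ together with a witness $\bG^{*}(2x-d)$: Lemmas~\ref{lem-glikec} and \ref{lem-ii0} furnish the witness once $d$ is known, while Lemma~\ref{lem-halfC} furnishes $d$ itself through the concurrent evaluation $\itdown((\exists d \in \bSD)\, \bII(d, x))$.

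The main obstacle is therefore not the final equality but Lemma~\ref{lem-halfC}: extracting a signed digit from a Gray code realiser cannot be done sequentially, because a sequential reader may block on the single $\bot$ that an infinite Gray code can carry, and it is precisely here that the concurrency modality $\itdown$ is indispensable. Once both inclusions are in hand the equality follows at once, and the realisers $f_{\ref{prop-ctog}}$ and $f_{\ref{prop-gtoc}}$ extracted from the two directions are the sought computable translations between the concurrent signed digit and infinite Gray code representations.
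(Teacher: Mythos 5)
Your proposal is correct and matches the paper exactly: Theorem~\ref{thm-maingc} is stated there without further proof, as the immediate conjunction of Proposition~\ref{prop-ctog} ($\bS^{*} \subseteq \bG^{*}$) and Proposition~\ref{prop-gtoc} ($\bG^{*} \subseteq \bS^{*}$). Your recollection of where the real work lies — Rule~(CAIB$^{*}$) and Lemma~\ref{lem-neg}(\ref{lem-neg-2}) for the first inclusion, Lemmas~\ref{lem-glikec}, \ref{lem-ii0} and \ref{lem-halfC} for the second — also agrees with the paper's proofs of those propositions.
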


\section{The compact sets case}\label{sec-comset}

As is well known, the collection $\mathcal{K}(X)$ of non-empty compact subsets of a non-empty compact metric space is a compact space again with respect to the Hausdorff metric $\hdm$. 

\begin{defi}
Let $(X,E)$ be a digit space. A \emph{digital tree} is a nonempty set  $T \subseteq E^{< \omega}$
of finite sequences of digits that is downwards closed under the prefix ordering
and has no maximal element, that is, $[] \in T$ and whenever $[e_0, \ldots, e_n] \in T$ , then
$[e_0, \ldots, e_{n-1}] \in T$ and $[e_{0}, \ldots, e_n, e] \in T$ for some $e \in E$.
\end{defi}

Let $\TTT_E$ denote the set of digital trees with digits in $E$.
Note that each such tree is finitely branching as $E$ is finite.
Moreover, every element $[e_0,\ldots,e_{n-1}]\in T$ can be continued to an 
infinite path $\alpha$ in $T$, that is, $\alpha \in E^\omega$ is such that $\alpha_i = e_i$, for $i < n$, and 
$[\alpha_0,\ldots,\alpha_{k-1}]\in T$ for all $k\in\mytextcolor{red}{\NN}$.
In the following we write  $\alpha\in [T]$ to mean that $\alpha$ is a 
path in $T$, and by a path we always mean an infinite path.
$[T]$ is a non-empty compact subset of $E^\omega$, for every tree $T \in \TTT_E$, and conversely, for every non-empty compact subset $C$ of $E^\omega$,  $C = [T^C]$, where $T^C \Def \{\, \alpha^{< n} \mid \alpha \in C \wedge n \in \mytextcolor{red}{\NN} \,\}$ (cf.~\cite{bs}).

For $T \in \TTT_E$ and $n \ge 0$, let $T^{\le n}$ be the finite initial subtree of $T$ of height $n$. Then 
\[
T^{\le n} = \{\, \alpha^{<m} \mid \alpha \in [T] \wedge m \le n \,\}.
\]
Every such initial subtree defines a map $\fun{f_{T,n}}{X}{\PPP(X)}$ from $X$ into the powerset of $X$ in the obvious way:
\[
f_{T,n}(x) \Def \set{\vec{e}(x)}{\vec{e} \in E^n \cap T}.
\]

\begin{defi}
\label{def-val}
For every $T \in \TTT_E$ we define its \emph{value} by
\[
\treeval{T} \Def \bigcap\nolimits_{n\in\bN} f_{T,n}[X].
\]
\end{defi}

\begin{lemC}[\cite{bs}]
\label{lem-semT}
$\treeval{T} =  \set{\val{\alpha}}{\alpha\in [T]}$.
\end{lemC}

The metric defined on $E^\omega$ in Section~\ref{sec-dig} can be transferred to $\TTT_E$. As we will see next, it coincides with the Hausdorff metric.

\begin{lemC}[\cite{bs}]\label{lem-treehausd}
For $S, T \in \TTT_E$,
\[
\treehdm(S,T) =
\begin{cases}
0 & \text{ if $S = T$,}\\
2^{- \min \set{n}{S^{\le n}\not= T^{\le n}}}  & \text{otherwise.}
\end{cases}
\]
\end{lemC}

\begin{propC}[\cite{bs}]\label{prop-conttreeval} \hfill 
\begin{enumerate}
\item\label{conttreeval1}
$\fun{\treeval{\cdot}}{\TTT_E}{\KKK(X)}$ is onto and uniformly 
continuous.
\item\label{conttreeval2}
The topology on $\KKK(X)$ induced by the Hausdorff metric is equivalent 
to the quotient topology induced by $\treeval{\cdot}$.
\end{enumerate}
\end{propC}

As a consequence of Lemma~\ref{lem-semT} we have for trees $T_1, T_2 \in \TTT_E$ that
\[
\treeval{T_1} = \treeval{T_2} \Longleftrightarrow
(\forall \alpha \in [T_1]) (\exists \beta \in [T_2])\, \alpha \sim \beta \wedge
(\forall \beta \in [T_2]) (\exists \alpha \in [T_1])\, \alpha \sim \beta.
\]

\begin{defi}
A digital tree $T \in \TTT_{E}$ is \emph{full}, if $[T]$ is closed under $\sim$.
\end{defi}

\begin{lem}\label{lem-fulltree}
Let $T_1, T_2 \in\TTT_E$ be full. Then
\[
\treeval{T_1} = \treeval{T_2} \Longleftrightarrow T_1 = T_2.
\]
\end{lem}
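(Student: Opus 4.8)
The plan is to prove the nontrivial implication, from $\treeval{T_1} = \treeval{T_2}$ to $T_1 = T_2$; the converse is immediate since $\treeval{\cdot}$ is a function of the tree. The strategy is to pass from values to paths, using the characterisation of value-equality in terms of $\sim$ recorded just before this lemma, and then to exploit fullness to upgrade mere $\sim$-relatedness of paths to genuine set equality of the path sets.

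First I would observe that a digital tree is completely determined by its set of infinite paths. Since $T$ is finitely branching and has no maximal element, every node extends to an infinite path (as noted before Definition~\ref{def-val}); hence every element of $T$ is a finite prefix of some path, and conversely every prefix of a path lies in $T$ by downward closure. Thus $T = T^{[T]}$, where $T^{[T]} = \set{\alpha^{<n}}{\alpha \in [T] \wedge n \in \bN}$. Consequently it suffices to establish $[T_1] = [T_2]$.

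Next I would invoke the consequence of Lemma~\ref{lem-semT} displayed above: the hypothesis $\treeval{T_1} = \treeval{T_2}$ is equivalent to saying that every path of $T_1$ is $\sim$-related to some path of $T_2$, and conversely. So fix $\alpha \in [T_1]$ and choose $\beta \in [T_2]$ with $\alpha \sim \beta$. Because $T_2$ is full, $[T_2]$ is closed under $\sim$, and since $\sim$ is symmetric, from $\beta \in [T_2]$ and $\beta \sim \alpha$ we conclude $\alpha \in [T_2]$. This yields $[T_1] \subseteq [T_2]$, and the symmetric argument, now using fullness of $T_1$, gives $[T_2] \subseteq [T_1]$. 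Hence $[T_1] = [T_2]$, and therefore $T_1 = T_2$ by the first step.

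The proof is short, so there is no genuine obstacle; the only point deserving care is the passage $T = T^{[T]}$, namely that a tree is recoverable from its paths. This rests on finite branching together with the no-maximal-element condition (essentially König's lemma) and is already available from the discussion preceding Definition~\ref{def-val}. Everything else reduces to unwinding the definition of fullness and the $\sim$-covering characterisation of value equality.
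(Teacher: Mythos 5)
Your proposal is correct and follows essentially the same route as the paper: pass from value equality to mutual $\sim$-relatedness of paths (the consequence of Lemma~\ref{lem-semT}), use fullness to get $[T_1] = [T_2]$, and conclude $T_1 = T_2$. The only difference is that you spell out the final step $[T_1]=[T_2]\Rightarrow T_1=T_2$ via $T = T^{[T]}$ (which needs only downward closure and the no-maximal-element condition, not König's lemma), whereas the paper treats it as immediate.
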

\begin{proof}
We have that
\begin{align*}
\treeval{T_1} = \treeval{T_2} 
&\Rightarrow (\forall \alpha \in [T_1])(\exists \beta \in [T_2])\, \alpha \sim \beta \wedge (\forall \beta \in [T_2])(\exists \alpha \in [T_1])\, \alpha \sim \beta \\
&\Rightarrow [T_1] \subseteq [T_2] \wedge [T_2] \subseteq [T_1] \qquad\text{(as $T_1, T_2$ are full)}\\
&\Rightarrow T_1 = T_2.
\end{align*}

The converse implication holds trivially, as $\treeval{\cdot}$ is a map.
\end{proof}

\begin{lem}
Let $T \in \TTT_E$ and $C$ be a non-empty compact subset of $E^\omega$. Then the following two statements hold:
\begin{enumerate}
\item  If $T$ is full, then $[T]$ is a non-empty compact subset of $E^\omega$ that is closed under $\sim$.

\item If $C$ is closed under $\sim$, then $T^C$ is full.

\end{enumerate}
\end{lem}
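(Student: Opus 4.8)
The plan is to obtain both statements essentially for free from the general facts about $[T]$ and $T^C$ recorded in the paragraph following the definition of a digital tree (cf.~\cite{bs}), so that each part reduces to merely unwinding the definition of \emph{full}. First I would note that the two background facts I intend to use are: (a) for every $T\in\TTT_E$ the set $[T]$ is a non-empty compact subset of $E^\omega$; and (b) for every non-empty compact $C\subseteq E^\omega$ one has $C=[T^C]$, where $T^C=\set{\alpha^{<n}}{\alpha\in C\wedge n\in\bN}$.

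For~(1), the non-emptiness and compactness of $[T]$ are exactly fact~(a) and so require no further argument. The only remaining claim is that $[T]$ is closed under $\sim$, and this is \emph{precisely} the definition of $T$ being full. Hence~(1) follows immediately by combining fact~(a) with the definition of fullness; there is nothing to compute.

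For~(2), I would first confirm that $T^C\in\TTT_E$, so that the predicate ``full'' is meaningful for it: since $C\neq\emptyset$ we have $[]=\alpha^{<0}\in T^C$ for any $\alpha\in C$; $T^C$ is downwards closed under the prefix ordering by construction; and it has no maximal node, because any $\alpha^{<n}\in T^C$ is properly extended in $T^C$ by $\alpha^{<(n+1)}$. This is in any case subsumed by fact~(b), which already asserts $C=[T^C]$. Using that identity together with the hypothesis that $C$ is closed under $\sim$, I conclude that $[T^C]=C$ is closed under $\sim$, which by definition says exactly that $T^C$ is full.

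The hard part here is in truth negligible: the entire content of the lemma is the identity $[T^C]=C$ from~(b) plus the purely definitional observation that ``$T$ is full'' and ``$[T]$ is $\sim$-closed'' are the same condition. The only point worth spelling out explicitly is the verification that $T^C$ is genuinely a digital tree, since without this the notion of fullness would not apply to it; but even this is covered by the cited description of $T^C$, so no genuine obstacle remains.
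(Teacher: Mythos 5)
Your proposal is correct and matches the paper's intent: the paper states this lemma without proof, treating it as immediate from the recalled facts that $[T]$ is non-empty and compact for every digital tree $T$ and that $C = [T^C]$ for every non-empty compact $C \subseteq E^\omega$, combined with the definition of fullness — exactly the reduction you carry out. Your explicit check that $T^C$ is indeed a digital tree is a reasonable bit of added care, but introduces nothing beyond what the paper takes for granted.
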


By Proposition~\ref{prop-valhom}, $\widehat{\val{\cdot}}$ is a bijection between $E^\omega / \mathord{\sim}$ and space $X$. So, if $C$ is a non-empty compact, and hence closed, subset of $X$, $\widehat{\val{\cdot}}^{-1}[C]$ is a non-empty closed subset of $E^\omega / \mathord{\sim}$. Consequently, $\overline{C} \Def q_\sim^{-1}[\widehat{\val{\cdot}}^{-1}[C]]$ is a non-empty closed, and thus compact, subset of  $E^\omega$, which in addition is closed under $\sim$. It follows that $T^{\overline{C}}$ is a full tree in $\TTT_D$ with $\val{T^{\overline{C}}} = C$.

Let $\TTT^f_E$ be the subspace of full trees in $\TTT_E$. 

\begin{prop}
$\fun{\treeval{\cdot}}{\TTT^f_E}{\KKK(X)}$ is one-to-one and onto.
\end{prop}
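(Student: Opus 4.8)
The plan is to establish the two assertions separately, since each reduces to material already assembled above. Write $F \Def \treeval{\cdot}$ restricted to $\TTT^f_E$; I must show $F$ is injective and surjective onto $\KKK(X)$.

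Injectivity is immediate from Lemma~\ref{lem-fulltree}. Both arguments of $F$ are taken from $\TTT^f_E$, hence are full, so that lemma yields $\treeval{T_1} = \treeval{T_2} \Rightarrow T_1 = T_2$ directly. Nothing further is needed here.

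For surjectivity I would reproduce the construction carried out in the paragraph immediately preceding the statement. Given $C \in \KKK(X)$, the set $C$ is non-empty and, being compact in the Hausdorff space $X$, closed; so by Proposition~\ref{prop-valhom} the preimage $\widehat{\val{\cdot}}^{-1}[C]$ is a non-empty closed subset of $E^\omega / \mathord{\sim}$, and $\overline{C} \Def q_\sim^{-1}[\widehat{\val{\cdot}}^{-1}[C]]$ is a non-empty compact subset of $E^\omega$ that is closed under $\sim$. By the lemma characterising full trees above, $T^{\overline{C}} \in \TTT^f_E$ and $[T^{\overline{C}}] = \overline{C}$, so it remains only to verify that $\treeval{T^{\overline{C}}} = C$.

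This last computation is the only genuine work, and it is where I would be careful. I would chain Lemma~\ref{lem-semT} with the description of $\overline{C}$ and the factorisation $\val{\cdot} = \widehat{\val{\cdot}} \circ q_\sim$:
\[
\treeval{T^{\overline{C}}} = \set{\val{\alpha}}{\alpha \in [T^{\overline{C}}]} = \set{\val{\alpha}}{\alpha \in \overline{C}} = \widehat{\val{\cdot}}\bigl[q_\sim[\overline{C}]\bigr] = \widehat{\val{\cdot}}\bigl[\widehat{\val{\cdot}}^{-1}[C]\bigr] = C,
\]
using surjectivity of $q_\sim$ for $q_\sim[\overline{C}] = \widehat{\val{\cdot}}^{-1}[C]$ and surjectivity of $\widehat{\val{\cdot}}$ for the final equality. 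This exhibits $T^{\overline{C}}$ as an $F$-preimage of $C$, so $F$ is onto. Since neither part introduces an idea beyond Lemma~\ref{lem-fulltree}, Proposition~\ref{prop-valhom}, and the full-tree lemma, there is no real obstacle; the only subtlety is to keep the raw coding map $\val{\cdot}$ on $E^\omega$ distinct from its factored version $\widehat{\val{\cdot}}$ on the quotient, so that the value of the tree — computed via its paths in $E^\omega$ — is correctly matched to $C \subseteq X$ through the homeomorphism.
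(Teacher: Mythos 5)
Your proof is correct and takes essentially the same route as the paper: injectivity is exactly Lemma~\ref{lem-fulltree}, and surjectivity is the construction of $\overline{C} = q_\sim^{-1}[\widehat{\val{\cdot}}^{-1}[C]]$ together with the full-tree lemma, which the paper carries out in the paragraph immediately preceding the proposition. The only difference is that the paper asserts $\treeval{T^{\overline{C}}} = C$ without computation, whereas you spell out the chain through Lemma~\ref{lem-semT} and the factorisation $\val{\cdot} = \widehat{\val{\cdot}} \circ q_\sim$, which is a faithful filling-in of that implicit step.
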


This shows that $\KKK(X)$ can be represented in straightforward one-to-one way without requiring that $X$ is represented in this way. For the special case of the real interval $\II$ and Gray code we have seen in Section~\ref{sec-dig} that $\fun{\val{\cdot}}{\widehat{G}}{\II}$ is one-to-one. Hence, every digital tree $T \in \TTT_{\overline{\GF}}$ with $[T] \subseteq \widehat{G}$ is full.

\section{Archimedean induction for compact sets}\label{sec-arcp}

Archimedean induction is a formulation of the Archimedean property as an induction principle introduced in \cite{btifp}. It turned out quite a powerful proof tool. We will now lift this induction principle to the case of non-empty compact sets. Let $\bZ(x)$ be the predicate stating that $x$ is an integer. Moreover, for $K : \PPP(\iota)$ and $n \in \bZ$ define
\begin{gather*}
K \le 0 \Def (\forall x \eps K)\, x \le 0, \\
K \ge 0 \Def (\forall x \eps K)\, x \ge 0, \\
|K| \Def \set{y}{(\exists x \eps K)\, y = |x|}, \\
n K \Def \set{y}{(\exists x \eps K)\, y = n x}, \\
\bK_{0}(K) \Def \bK(K) \wedge 0 \noteps K. 
\end{gather*}
Here, $\bK$ is a predicate constant denoting the set of non-empty compact subsets of 
the compact interval
$\bII \Def [-1,1]$ (see Section~\ref{sub-ifp}). 

\begin{defi}\label{def-arcp}
\emph{Archimedean induction for compact sets} is the following rule
\[
\frac{(\forall K \in \bK_0)\, 
       ((\forall K' \in \bK) (K'\subseteq K \wedge |K'| \le 1/2 \rightarrow P(2 K'))
                   \rightarrow P(K))}
     {(\forall K \in \bK_0)\, P(K)}\,\, \mathrm{(AIC)}.
\]
\end{defi}

Also Archimedean induction for compact sets is a special case of well-founded induction. Set
\[
K'' \prec K \Def K \in \bK \wedge (\exists K' \in \bK)\, 
(K'\subseteq K \wedge |K'| \le 1/2 \wedge K'' = 2 K').
\]
Then the premise of Rule~(AIC) is equivalent to $\prog_{\prec, \bK_{0}}(P)$.

\begin{lem}\label{lem-acck0}
$\acc_{\prec}(K)$ if and only if $K \in \bK_{0}$.
\end{lem}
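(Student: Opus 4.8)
The plan is to prove the two inclusions separately, the essential one being $\bK_{0}\subseteq\acc_{\prec}$, which carries all the Archimedean content (and is the only part actually needed to reduce Rule~(AIC) to well-founded induction: since the premise of (AIC) is $\prog_{\prec,\bK_{0}}(P)$, the rule $\wfi_{\prec,\bK_{0}}$ gives $\acc_{\prec}\cap\bK_{0}\subseteq P$, whence $\bK_{0}\subseteq P$ once $\bK_{0}\subseteq\acc_{\prec}$ is known). Throughout I write $m(K)\Def\min\set{|x|}{x\eps K}$ for the distance from $0$ to a nonempty compact $K$; by compactness this minimum is attained, and $m(K)>0$ exactly when $0\noteps K$. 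The two facts that drive everything are that $K'\subseteq K$ forces $m(K')\ge m(K)$, and that $m(2K')=2\,m(K')$; hence any predecessor $K''=2K'\prec K$ satisfies $m(K'')\ge 2\,m(K)$, so passing to a $\prec$-predecessor at least \emph{doubles} the distance from $0$, while that distance can never exceed $1$ because all sets in sight lie in $\II=[-1,1]$.

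For $\bK_{0}\subseteq\acc_{\prec}$ I first note that whenever $K''\prec K$ with witness $K'\in\bK$, then $K'\subseteq K\subseteq\II$ together with $|K'|\le 1/2$ gives $K'\subseteq[-1/2,1/2]$, so $K''=2K'\subseteq\II$ is again a member of $\bK$. I then prove, by ordinary induction on $n\in\NN$, the auxiliary statement
\[
S(n):\quad (\forall K\in\bK)\,(\,m(K)>2^{-(n+1)}\ \to\ \acc_{\prec}(K)\,).
\]
For $S(0)$, if $m(K)>1/2$ then any $K'\in\bK$ with $K'\subseteq K$ would contain a point $x$ with $|x|\ge m(K)>1/2$, contradicting $|K'|\le 1/2$; thus $K$ has no $\prec$-predecessor and is accessible vacuously. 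For the step, assume $S(n)$ and let $m(K)>2^{-(n+2)}$; every predecessor $K''=2K'\prec K$ then satisfies $m(K'')\ge 2\,m(K)>2^{-(n+1)}$, so $\acc_{\prec}(K'')$ by $S(n)$, and since all predecessors of $K$ are accessible the closure rule for $\acc_{\prec}$ yields $\acc_{\prec}(K)$. Finally, given $K\in\bK_{0}$ we have $m(K)>0$, so the Archimedean property $\mathbf{AP}$ supplies an $n$ with $2^{-(n+1)}<m(K)$, and $S(n)$ delivers $\acc_{\prec}(K)$.

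For the converse, observe that $\prec$ relates sets only when the right-hand argument lies in $\bK$, so the equivalence is to be read over that domain, and it remains to show that an accessible $K\in\bK$ cannot contain $0$. Suppose $0\eps K$. Since $2\{0\}=\{0\}$ and $\{0\}\subseteq K$ with $|\{0\}|\le 1/2$, we have both $\{0\}\prec\{0\}$ and $\{0\}\prec K$; hence the singleton $\set{L}{L=\{0\}}$ is co-closed for the path operator $\Psi$, giving $\path_{\prec}(\{0\})$ and therefore $\path_{\prec}(K)$. On the other hand $\set{K}{\neg\path_{\prec}(K)}$ is $\prec$-progressive (if no predecessor of $K$ is on an infinite descending path, then neither is $K$), so well-founded induction gives $\acc_{\prec}\subseteq\set{K}{\neg\path_{\prec}(K)}$; thus $\path_{\prec}(K)$ rules out $\acc_{\prec}(K)$, and accessible members of $\bK$ indeed omit $0$. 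The main obstacle is the forward inclusion: the induction must run \emph{against} the direction of $\prec$, because the controlling measure $m$ grows toward the ceiling $1$ rather than descending to a floor, and making this precise relies on compactness to furnish the attained minimum $m(K)>0$ and on $\mathbf{AP}$ to bound the number of admissible doublings.
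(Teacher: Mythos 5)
Your proof is correct, but it takes a genuinely different route from the paper's in both directions. For the essential inclusion $\bK_{0}\subseteq\acc_{\prec}$, the paper does not induct on $\NN$ at all: it invokes Brouwer's Thesis $\mathbf{BT}_{\mathbf{nc}}$ to reduce the claim to $K\in\bK_{0}\to\neg\path_{\prec}(K)$, then, assuming $\path_{\prec}(K)$, unfolds the path predicate to construct a descending sequence $(K_{n})$ of compact sets with $K_{0}=K$ and $|K_{n}|\le 2^{-n}$, and derives $0\eps K$ from the finite-intersection-property axiom together with $\mathbf{AP}$. Your argument replaces this with a direct zero-successor induction on the number of admissible doublings, measured by $m(K)$, using $\mathbf{AP}$ only once at the end; it thereby avoids $\mathbf{BT}_{\mathbf{nc}}$ entirely, as well as the choice-flavoured extraction of an infinite sequence from the coinductive predicate $\path_{\prec}$. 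The price is that compactness enters through the attained minimum $m(K)$ (equivalently, the positive distance of a compact set from a point it omits): this is a classically true nc fact, admissible in the paper's framework, but not among the axioms the paper explicitly lists for $\bK$ --- there one would derive it from the finite-intersection property by essentially the same nested-sets argument the paper uses, so the FIP is not really eliminated, only repackaged into a single lemma. For the converse direction the paper is more economical: it inducts on $\acc_{\prec}$ with the predicate $\bK_{0}$ itself, observing that $0\eps K$ makes $\{0\}=2\{0\}$ a witness violating $\Phi(\bK_{0})\subseteq\bK_{0}$; your detour through $\path_{\prec}(\{0\})$ (established by coinduction on the self-loop $\{0\}\prec\{0\}$) and the standard fact $\acc_{\prec}\subseteq\neg\path_{\prec}$ rests on exactly the same observation but needs two extra steps. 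Finally, your remark that the equivalence must be read over $\bK$ --- since $\acc_{\prec}$ holds vacuously for any $K$ outside $\bK$ --- addresses a point the paper silently absorbs by writing the condition $K\in\bK$ into its displayed operator $\Phi$; making this explicit is a small but genuine improvement.
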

\begin{proof}
The `only if' part follows by induction on $\acc_{\prec}(K)$. 
Since $\acc_{\prec} = \mu \Phi$ with 
\begin{align*}
\Phi(X)  &\Def
  \{\, K \in \bK \mid (\forall K' \in \bK)\, 
(K'\subseteq K \land |K'| \le 1/2 \to X(2 K')) \,\}
\end{align*}
we have to show that $\Phi(\bK_{0}) \subseteq \bK_{0}$. Let $K \in \Phi(\bK_{0})$ and 
suppose that $0 \eps K$. Then the compact set $\{ 0 \}$ is a subset of $K$ 
and $|\{ 0 \}| \le 1/2$. Since $K \in \Phi(\bK_{0})$, it follows that 
$2 \{ 0 \} \in \bK_{0}$, 
which is a contradiction.

The `if' part reduces, by \textbf{BT}$_{\mathbf{nc}}$, to the implication $K \in \bK_{0} \to \neg \path_{\prec}(K)$. Therefore, we assume $K \in \bK_{0}$ and $\path_{\prec}(K)$ with the aim to arrive at a contradiction. Recall that 
\begin{align*}
\path_{\prec}(K) &\overset{\nu}{=}  
K \in \bK \wedge (\exists K' \in \bK)\,
(K' \subseteq K \wedge |K'| \le 1/2 \wedge \path_{\prec}(2K')).
\end{align*}
Hence by unfolding $\path_{\prec}(K)$ we can construct a decreasing
sequence $(K_{n})_{n \in \bN} \subseteq \bK$ such that $K_{0} = K$ and
for all $n \in \bN$, $|K_{n}| \le 2^{-n}$.

The sequence $(K_{n})_{n \in \bN}$ is constructed such that $K_0=K$
and for all $n$, $\path_{\prec}(2^nK_n)$, $|K_n| \le 2^{-n}$, and $K_{n+1}\subseteq K_n$.
For $K_0$ the properties hold by assumption. For the step, we use that
$\path_{\prec}(2^nK_n)$ holds and therefore exists $K' \in \bK$ such that
$K' \subseteq 2^nK_n$, $|K'| \le 1/2 $ and $\path_{\prec}(2K')$.
We set $K_{n+1} \Def 2^{-n}K'$. Since $2^{n+1}K_{n+1} = 2K'$ it follows that
$\path_{\prec}(2^{n+1}K_{n+1})$ holds.
Furthermore, $|K_{n+1}|= 2^{-n}|K'| \le 2^{-(n+1)}$.
Finally, $K_{n+1}=2^{-n}K'\subseteq 2^{-n}(2^nK_n)=K_n$.

Since $K$ is compact, there exists
$x \eps \bigcap_{n \in \bN} K_{n}$. Then $|x| \le 2^{-n}$, for all
$n \in \bN$. By the Archimedean axiom, $x=0$, hence $0\eps K$, 
contradicting our assumption.
\end{proof}

\begin{prop}\label{prop-aic}
Archimedean induction for compact sets \emph{(AIC)} is derivable in \emph{IFP($\AAA_{R}$)} and realised by $\brec$.
\end{prop}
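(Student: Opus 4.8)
The plan is to recognise Rule~(AIC) as nothing more than the well-founded induction rule $\wfi_{\prec,\bK_{0}}(P)$ for the relation $\prec$ introduced just before the statement, and then to read off its realiser from the general discussion of well-founded induction. Concretely, I would first argue that the premise of (AIC) is literally $\prog_{\prec,\bK_{0}}(P)$. This is the equivalence already asserted in the text after the definition of $\prec$: unfolding $K'' \prec K$ converts the universal quantification over $K'' \in \bK_{0}$ with $K'' \prec K$ into the quantification over $K' \in \bK$ with $K' \subseteq K$, $|K'| \le 1/2$ and $K'' = 2K'$ that appears in (AIC). The one point to verify here is that such a $2K'$ really lies in $\bK_{0}$, so that the two quantifications match: $2K'$ is compact, it is contained in $\bII$ because $|K'| \le 1/2$, and $0 \noteps 2K'$ because $K' \subseteq K$ and $0 \noteps K$.

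Having identified the premises, I would feed $\prog_{\prec,\bK_{0}}(P)$ into $\wfi_{\prec,\bK_{0}}(P)$. Well-founded induction is, as recalled in the well-founded-induction example, a direct instance of strictly positive induction and hence available already in IFP, and it delivers $\acc_{\prec} \cap \bK_{0} \subseteq P$. To turn this into the conclusion $(\forall K \in \bK_{0})\, P(K)$ of (AIC) I would apply Lemma~\ref{lem-acck0}, whose ``if'' direction gives $\bK_{0} \subseteq \acc_{\prec}$; then $\acc_{\prec} \cap \bK_{0} = \bK_{0}$ and the two conclusions coincide. This simultaneously settles derivability in IFP($\AAA_{R}$): the only ingredients beyond plain IFP enter through Lemma~\ref{lem-acck0}, whose proof uses $\mathbf{BT}_{\mathbf{nc}}$, the finite-intersection axiom for $\bK$, and the Archimedean axiom~$\mathbf{AP}$, all of which belong to $\AAA_{R}$.

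For the realiser, I would observe that both $\prec$ and $\bK_{0}$ are non-computational (they contain neither disjunctions nor free predicate variables), in particular Harrop, whereas $P$ is non-Harrop. This places us in the last clause of Example~\ref{ex-wfirec}, according to which $\wfi_{\prec,\bK_{0}}(P)$ is realised by $\brec$, i.e.\ a realiser $s$ of the premise is sent to $\brec\, s$. Finally I would note that the passage from $\acc_{\prec} \cap \bK_{0} \subseteq P$ to $\bK_{0} \subseteq P$ via Lemma~\ref{lem-acck0} is a statement between Harrop predicates ($\acc_{\prec}$ is $X$-Harrop and $\bK_{0}$ is non-computational), so it carries only trivial computational content and the types $\tau(P)$ of the two conclusions agree; hence the realiser is left unchanged and (AIC) is indeed realised by $\brec$.

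I do not expect a genuine obstacle here, since the proof is a reduction to results already in place; the two points that need care rather than cleverness are the membership check $2K' \in \bK_{0}$ in the premise equivalence and the verification that the $\acc_{\prec}=\bK_{0}$ identification does not perturb the extracted realiser, both of which are immediate once Lemma~\ref{lem-acck0} is invoked.
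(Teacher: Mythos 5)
Your proposal is correct and follows essentially the same route as the paper: identify the premise of (AIC) with $\prog_{\prec,\bK_{0}}(P)$, apply well-founded induction, pass from $\acc_{\prec}\cap\bK_{0}\subseteq P$ to $\bK_{0}\subseteq P$ via Lemma~\ref{lem-acck0}, and read off the realiser $\brec\,s$ from Example~\ref{ex-wfirec} since $\prec$ and $\bK_{0}$ are Harrop. You merely make explicit two details the paper leaves implicit --- the check that $2K'\in\bK_{0}$ in the premise equivalence, and that the Harrop equivalence $\acc_{\prec}\cap\bK_{0}=\bK_{0}$ does not alter the extracted program --- both of which are correct.
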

\begin{proof}
It remains to show the second statement. Note that both $\prec$ and the predicate $\bK_{0}(K)$ are Harrop. Moreover, let $s$ realise the premise of Rule~(AIC). Then $s$ also realises $\prog_{\prec, \bK_{0}}$. Therefore, it follows with the result in Example~\ref{ex-wfirec} that $\brec\, s$ realises $\acc_{\prec} \cap \bK_{0} \subseteq P$ which is equivalent to the conclusion of the rule.
\end{proof}

In applications, Archimedean induction is mostly used for compact sets that are generated in a particular way and therefore come with a special kind of realisers. Here, we are interested in the case that non-empty  compact sets are represented by signed digit code. 

\begin{defi}\label{def-sk}
We define the analogue of the signed digit representation for compact sets as
\[
\bS_{\bK}(K) \overset{\nu}{=} \bK(K)\land(\exists E \in \bP_{\fin}(\bSD))\,
K \subseteq \bII_E \wedge (\forall d\eps E) (K_d \not= \emptyset \wedge 
\bS_{\bK}(\bav{d}^{-1}[K_d]))
\]
with $\bII_{d} \Def \set{x}{\bII(d, x)}$,
$\bII_{E} \Def \set{x}{(\exists d\eps E)\,  \bII(d, x)}$,
$K_d \Def K \cap \bII_d$, and $\bav{d}(x) \Def (x + d) / 2$.
\end{defi}

As follows from the definition of realisability, the type $\tau(\bS_{\bK}(K))$
of realisers of the formula $\bS_{\bK}(K)$ is given by
\begin{align*}
\tau(\bS_{\bK}(K)) &= \bfix \alpha.\, \sum\nolimits_{E \in \bP_{\fin}(\bSD)} \alpha^{\| E \|} \\
		       &= \bfix \alpha.\, \{ -1 \} \times \alpha + \{ 0 \} \times \alpha + \{ 1 \} \times \alpha +  
		       \{ -1, 0 \} \times \alpha^{2} + \mbox{} \\
		       &\hspace{4cm} \{-1, 1 \} \times \alpha^{2} + \{ 0, 1 \} \times \alpha^{2} +  \{ -1, 0, 1 \} \times \alpha^{3},
\end{align*}		       
which is essentially the set $\TTT_{\bSD}$ of all digital trees. 

In the case of non-empty compact sets with property $\bS_{\bK}$ the Archimedean induction rule 
can be much simplified. Let $\bS^0_{\bK}$ denote the set of all $K \in \bS_{\bK}$ with $0 \noteps K$.

\begin{defi}\label{def-aicsd}
\emph{Archimedean induction for signed-digit represented compact sets} is the rule
\[
\frac{(\forall K \in \bS^0_{\bK})\, (P(K) \vee (\bS_{\bK}(2(K_0)) \wedge (P(2(K_0)) \rightarrow P(K))))}{(\forall K \in \bS^0_{\bK})\, P(K)}\,\, \mathrm{(AICSD)}
\]
where $P$ is a non-Harrop predicate.
\end{defi}

\begin{prop}\label{prop-aicsd}
Archimedean induction for signed-digit represented compact sets \emph{(AICSD)} is derivable in \emph{IFP($\AAA_{R}$)}, and if $s$ realises the premise, then 
\[
f\, a \overset{\rm rec}{=} \mathbf{case}\, s\, a\, \mathbf{of} \{\mathbf{Left}(b) \rightarrow b; \mathbf{Right}(\bpair(a', h)) \rightarrow h (f a') \}
\]
realises the conclusion.
\end{prop}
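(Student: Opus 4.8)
The plan is to derive (AICSD) from Archimedean induction for compact sets (AIC, Proposition~\ref{prop-aic}) by instantiating the latter with the relativised predicate
\[
\hat{P}(K) \Def \bS_{\bK}(K) \to P(K),
\]
mirroring the way the variant (AIB) is obtained from (AI) in the point case. Since $\bS_{\bK}(K)$ entails $\bK(K)$, we have $\bS^0_{\bK} \subseteq \bK_{0}$, so the conclusion of (AIC) for $\hat{P}$, namely $(\forall K \in \bK_{0})(\bS_{\bK}(K) \to P(K))$, restricts immediately to $(\forall K \in \bS^0_{\bK})\, P(K)$, which is the conclusion of (AICSD).

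First I would verify that the premise of (AICSD) implies the premise of (AIC) for $\hat{P}$. Fix $K \in \bK_{0}$ and assume the induction hypothesis $(\forall K' \in \bK)(K' \subseteq K \wedge |K'| \le 1/2 \to (\bS_{\bK}(2 K') \to P(2 K')))$; assuming moreover $\bS_{\bK}(K)$ we have $K \in \bS^0_{\bK}$ and may apply the premise of (AICSD) to $K$. In the disjunct $P(K)$ we are done. In the disjunct $\bS_{\bK}(2 K_{0}) \wedge (P(2 K_{0}) \to P(K))$ the key observation is that $K_{0} = K \cap \bII_{0} = K \cap [-1/2, 1/2]$ satisfies $K_{0} \subseteq K$ and $|K_{0}| \le 1/2$, while $K_{0} \in \bK$ (in particular its nonemptiness) is obtained \emph{a posteriori} from $\bS_{\bK}(2 K_{0})$, which entails $\bK(2 K_{0})$ and hence $K_{0} \neq \emptyset$. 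Instantiating the induction hypothesis at $K' = K_{0}$ thus yields $\bS_{\bK}(2 K_{0}) \to P(2 K_{0})$, so $P(2 K_{0})$ holds, and $P(2 K_{0}) \to P(K)$ gives $P(K)$. This proves (AICSD) in IFP($\AAA_{R}$).

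For the realiser I would track the program $\brec$, which realises (AIC) by Proposition~\ref{prop-aic}, through this derivation. Because $\prec$ and $\bK_{0}$ are Harrop and the premise $K' \subseteq K \wedge |K'| \le 1/2$ (together with $K' \in \bK$) is Harrop, the induction-hypothesis realiser carries type $\tau(\hat{P}) = \tau(\bS_{\bK}) \to \tau(P)$, and the outer uniform quantifier over $\bK_{0}$ drops; hence a realiser of the (AIC)-premise for $\hat{P}$ is a function $s_{0}$ sending such a realiser $r$ and a realiser $a$ of $\bS_{\bK}(K)$ to a realiser of $P(K)$. Reading off the case distinction above gives
\[
s_{0} \Def \lambda r.\, \lambda a.\, \bcase (s\, a)\, \bof \{ \bleft(b) \to b; \bright(\bpair(a', h)) \to h\,(r\, a') \},
\]
where $s$ realises the premise of (AICSD). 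By Proposition~\ref{prop-aic}, $\brec\, s_{0}$ realises the conclusion, and unfolding $f \Def \brec\, s_{0} = s_{0}\, f$ yields precisely the stated equation for $f\, a$.

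The main obstacle I anticipate is bookkeeping rather than conceptual: one must check that the types align ($\tau(\bS_{\bK}(2 K_{0})) = \tau(\bS_{\bK})$ and $\tau(P \to P) = \tau(P) \to \tau(P)$) so that the recursive call $f\, a'$ is well-typed, and that the only nontrivial mathematical input — the nonemptiness of $K_{0}$ and the bounds on it — is supplied exactly where the second disjunct is available. Termination of $f$ on its intended arguments is inherited from the well-foundedness underlying (AIC): every $K \in \bK_{0}$ is accessible for $\prec$ by Lemma~\ref{lem-acck0}, and each recursive call descends from $K$ to $2 K_{0} \prec K$.
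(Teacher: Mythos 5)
Your proposal is correct and follows essentially the same route as the paper: both derive (AICSD) from (AIC) by instantiating it with the relativised predicate $\bS_{\bK}(K) \to P(K)$ and then extract the recursive realiser $f \overset{\rm rec}{=} s'\,f$ via Proposition~\ref{prop-aic}, your $s_{0}$ being exactly the paper's $s'$. Your extra care in justifying $K_{0} \in \bK$ (nonemptiness obtained a posteriori from $\bS_{\bK}(2 K_{0})$) is a detail the paper glosses over, but it does not change the argument.
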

\begin{proof}
We will show that Rule~(AICSD) is a consequence of Rule~(AIC). Set $A(X) \Def \bS_{\bK}(X) \rightarrow P(X)$. It suffices to show that the premise of (AICSD) implies the premise of (AIC). Therefore, let $K\in\bK_0$ and assume that
\begin{equation}\label{eq-aicsd}
(\forall K' \in \bK) (K'\subseteq K \wedge |K'| \le 1/2 \rightarrow A(2 K'))
\end{equation}
We have to prove that $A(K)$. So, let $K \in \bS_{\bK}$. Then we need to derive $P(K)$.

By the premise of (AICSD) we have that 
\begin{enumerate}
\item $P(K)$ or
\item $\bS_{\bK}(2(K_0)) \wedge (P(2(K_0)) \rightarrow P(K)).$
\end{enumerate}
In the first case we are done. Let us therefore consider the second case.

Since $| K_0 | \le 1/2$, by (\ref{eq-aicsd}), $A(2(K_0))$ holds, i.e.,
\[
\bS_{\bK}(2(K_0)) \rightarrow P(2(K_0)).
\]
Since we know that $\bS_{\bK}(2(K_0))$, we obtain that $P(2(K_0))$ and hence, 
as we are considering the second case, that $P(K)$.

As we have just seen, the premise of (AICSD) implies the premise of (AIC). If the first premise is realised by $s$ the latter is realised by
\[
s' = \lambda f.\,\lambda a.\, \mathbf{case}\, s\, a\, \mathbf{of} \{ \mathbf{Left}(b) \rightarrow b; \mathbf{Right}(\bpair(a', h)) \rightarrow h\, (f\, a') \}.
\]
Thus, $f \overset{\rm rec}{=} s'\, f$, i.e., 
\[
f\, a \overset{\rm rec}{=} \mathbf{case}\, s\, a\, \mathbf{of} \{\mathbf{Left}(b) \rightarrow b; \mathbf{Right}(\bpair(a', h)) \rightarrow h\, (f\, a')\},
\]
realises the conclusion $(\forall K \in \bS^0_{\bK})\, P(K)$.
\end{proof}

If one strengthens the premise of Rule~(AICSD) to all $K \in \bS_{\bK}$ instead of only those not containing $0$, one can strengthen the conclusion to a restriction. 

\begin{defi}\label{def-aicr}
\emph{Archimedean induction with restriction for signed-digit represented compact sets} is the rule
\[
\frac{(\forall K \in \bS_{\bK})\, (P(K) \vee (\bS_{\bK}(2(K_0)) \wedge (P(2(K_0)) \rightarrow P(K))))}{(\forall K \in \bS_{\bK})\, \rt{0\, \mathrel{\not\varepsilon}\, K}{P(K)}}\,\, \text{(AICR)}
\]
where $P$ is a productive non-Harrop predicate.
\end{defi}

\begin{prop}\label{prop-airc}
Archimedean induction with restriction for signed-digit represented compact sets \emph{(AICR)} is realisable. More precisely, if $s$ realises the premise, then the conclusion is realised by
\[
\chi\, a \overset{\rm rec}{=} \ccase{s\, a}{\bleft(b) \to b; \bright(\bpair(a', f)) \to \stc{f}{(\chi\, a')}}.
\]
\end{prop}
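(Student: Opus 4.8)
The plan is to verify directly that the program $\chi$ realises the conclusion: for every $K \in \bS_{\bK}$ and every $a \br \bS_{\bK}(K)$ we must show $\chi\, a \br (\rst{P(K)}{0 \noteps K})$. Unfolding the realisability clause for restriction (Section~\ref{sub-rest}), and using that $0 \noteps K$ is nc, so that $\br(0 \noteps K)$ coincides with $0 \noteps K$, this reduces to three conditions: (a) $\chi\, a : \tau(P(K))$; (b) $0 \noteps K \to \chi\, a \neq \bot$; and (c) $\chi\, a \neq \bot \to \chi\, a \br P(K)$. Condition (a) is a routine type inference: since $s$ realises the premise, $s : \tau(\bS_{\bK}) \to (\tau(P) + \tau(\bS_{\bK}) \times (\tau(P) \to \tau(P)))$, and the recursion equation then forces $\chi : \tau(\bS_{\bK}) \to \tau(P)$.

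For the correctness condition (c) I would argue by induction on the least $n$ at which $\chi_n\, a$ is defined, where $\chi = \sup_n \chi_n$ is the Kleene chain of its defining recursion; such an $n$ exists precisely because $\chi\, a \neq \bot$. If $s\, a = \bleft(b)$, then $\chi\, a = b \br P(K)$ by the left disjunct of the premise. If $s\, a = \bright(\bpair(a', f))$, then $a' \br \bS_{\bK}(2(K_0))$ and $f \br (P(2(K_0)) \to P(K))$ by the right disjunct, and the defining equation gives $\chi\, a = \stc{f}{(\chi\, a')}$. Since $\chi\, a \neq \bot$, strictness of the application forces $\chi\, a' \neq \bot$, so $a'$ reaches definedness at a strictly earlier stage and the induction hypothesis yields $\chi\, a' \br P(2(K_0))$; hence $\chi\, a = f(\chi\, a') \br P(K)$.

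For the termination condition (b) I would run, assuming $0 \noteps K$, a well-founded induction on $\acc_\prec(K)$; this is legitimate because Lemma~\ref{lem-acck0} gives $\acc_\prec(K) \leftrightarrow K \in \bK_0$ and $0 \noteps K$ places $K$ in $\bK_0$. The $\bleft$ case again gives $\chi\, a = b \br P(K)$, hence $\chi\, a \neq \bot$ by productivity of $P$. In the $\bright$ case the realiser $a'$ witnesses that $K_0 = K \cap \bII_0$ is a nonempty compact subset of $K$ with $|K_0| \le 1/2$, so $2(K_0) \prec K$; moreover $0 \noteps K$ forces $0 \noteps 2(K_0)$, keeping $2(K_0)$ in $\bK_0$, so the induction hypothesis applies and gives $\chi\, a' \neq \bot$. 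Correctness (c) then upgrades this to $\chi\, a' \br P(2(K_0))$, whence $f(\chi\, a')$ is a defined realiser of $P(K)$ and $\chi\, a = \stc{f}{(\chi\, a')} = f(\chi\, a') \neq \bot$. On $\bS^{0}_{\bK}$ this is essentially the argument behind Proposition~\ref{prop-aicsd}, strict and non-strict application coinciding on terminating computations.

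The main obstacle I anticipate lies in the interaction between the \emph{strict} application $\stc{f}{(\chi\, a')}$ and the fact that, once $0 \eps K$, the descent $K \mapsto 2(K_0)$ need no longer be well-founded, so $\chi\, a$ may legitimately diverge. This is precisely what the restriction in the conclusion is designed to absorb, but it forces correctness (c) to be established by a stagewise induction on the Kleene chain rather than by the well-founded induction that settles (b); the delicate bookkeeping is to ensure that a finite approximation $\chi_n\, a'$ which is already defined contributes the correct (and, by productivity of $P$, total) realiser to the limit, so that strict application behaves as ordinary application exactly when the computation converges.
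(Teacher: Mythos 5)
Your proposal is correct and takes essentially the same route as the paper's own proof: definedness under the side condition $0 \noteps K$ is obtained by Archimedean induction for compact sets (i.e.\ well-founded induction on $\prec$ justified through Lemma~\ref{lem-acck0}), and partial correctness is obtained by Scott induction on the approximations $\chi_i$ of $\chi$. The only difference is presentational: you establish correctness first and then invoke it inside the termination argument to justify that $f(\chi\, a')$ is a \emph{defined} realiser of $P(K)$, a step which the paper's proof of its condition~(1) uses implicitly without spelling it out.
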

\begin{proof}
Assuming $a\, \mathbf{r}\, \bS_{\bK}(K)$ we have to show
\begin{enumerate}
\item\label{prop-airc-1}
$0 \,\noteps K \to \chi\, a \not= \bot$

\item\label{prop-airc-2}
$\chi\, a \not= \bot \to (\chi\, a) \br P(K)$.

\end{enumerate}

(\ref{prop-airc-1}) It suffices to show
\begin{equation*}\label{eq-airc-1'}
(\forall K \in \bK_0) (\forall a)\, (a\, \mathbf{r}\, \bS_{\bK}(K) \rightarrow \chi\, a \not= \bot).
\end{equation*}

We prove the statement by Archimedean induction for compact sets. Let $K \in \bK_0$ and assume, as induction hypothesis, 
\[
(\forall K' \in \bK) (K'\subseteq K \wedge |K'| \le 1/2 \rightarrow 
    (\forall a') (a'\, \mathbf{r}\, \bS_{\bK}(2K') \rightarrow \chi\, a' \not= \bot)) \mytextcolor{red}{.}
\]
We need to show that $(\forall a) (a\, \mathbf{r}\, \bS_{\bK}(K) \rightarrow \chi\,a \not= \bot)$. Assume $a\, \mathbf{r}\, \bS_{\bK}(K)$. Then 
\[
(s\,a)\, \mathbf{r}\, (P(K) \vee (\bS_{\bK}(2(K_0)) \wedge (P(2(K_0)) \rightarrow P(K)))).
\]
If $s\, a = \mathbf{Left}(b)$ where $b\, \mathbf{r}\, P(K)$, then $\chi\, a = b$. Since $P(K)$ is productive, by asumption, $b \not= \bot$. Hence, $\chi\, a \not= \bot$.
If, however, $s\, a = \mathbf{Right}(\bpair(a', f))$, then 
$a'\, \mathbf{r}\, \bS_{\bK}(2(K_0))$ (with $K_0\in\bK$) and
$f\, \mathbf{r}\,(P(2(K_0)) \rightarrow P(K))$.
Since $| K_0 | \le 1/2$, we have $\chi\, a' \not= \bot$, by the induction
hypothesis. It follows that $\chi\, a = \stc{f}{(\chi\,
a')} \not= \bot$. Thus, we are done.

(\ref{prop-airc-2}) We use Scott induction, that is, we consider the approximations $\chi_i$ of $\chi$,
\begin{align*}
& \chi_0\, a = \bot, \\
& \chi_{i+1}\, a = \ccase{s\, a}{\bleft(b) \to b; \bright(\bpair(a', f)) \to \stc{f}{(\chi_i\, a)}}
\end{align*}
Observe that a restricted form of Scott induction (as is used here) is included in the axiom set for the extension RIFP of IFP that allows to deal with realisability in a formal way (cf.\ \cite{btifp}).

By the continuity of function application, if $\chi\, a \not= \bot$, then $\chi_i\, a \not= \bot$, for some $i \in \bN$. Therefore, it suffices to show
\[
(\forall i \in \bN) (a\, \mathbf{r}\, \bS_{\bK}(K) \wedge \chi_i\, a \not= \bot \rightarrow (\chi\, a) \br P(K)).
\]
We induce on $i$. The induction base is trivial as $\chi_0\, a = \bot$. For the induction step assume $a\, \mathbf{r}\, \bS_{\bK}(K)$ and $\chi_{i+1}\, a \not= \bot$. Then
\[
(s\, a)\, \mathbf{r}\, (P(K) \vee (\bS_{\bK}(2(K_0)) \wedge (P(2(K_0)) \rightarrow P(K)))).
\]
If $s\, a = \mathbf{Left}(b)$ where $b\, \mathbf{r}\, P(K)$, then $\chi\, a = b$ and we are done. In the other case $s\,a = \mathbf{Right}(\bpair(a', f))$ where $a'\, \mathbf{r}\, \bS_{\bK}(2(K_0))$ and $f\, \mathbf{r}\, (P(2(K_0)) \rightarrow P(K))$. Then $\chi_{i+1}\, a = \stc{f}{(\chi_i\, a')}$. Since $\chi_{i+1}\, a \not= \bot$ and the application of $f$ is strict, it follows that $\chi_i\, a' \not= \bot$ as well. By the induction hypothesis we therefore have that $(\chi\, a')  \br P(2(K_0))$. Consequently, $\chi\, a = (\stc{f}{(\chi\, a')}) \br P(K)$. 
\end{proof}

As in the real number case, in what follows also a concurrent version of the predicate $\bS_{\bK}$ for the signed digit representation of non-empty compact subsets of the interval $\bII$ will be considered.
\begin{defi}\label{def-skstar}
\[
\bS^{*}_{\bK}(K) \overset{\nu}{=} \bK(K)\land\itdown((\exists E \in \bP_{\fin}(\bSD))\,
K \subseteq \bII_E \wedge (\forall d \eps E) (K_d \not= \emptyset \wedge \bS^{*}_{\bK}(\bav{d}^{-1}[K_d]))).
\]
\end{defi}
In this case the above induction rule is still valid, if we allow the `or' in the premise being decided concurrently.

\begin{defi}\label{def-parch}
\emph{Concurrent Archimedean induction with restriction for signed-digit represented compact sets} is the following rule
\[
\frac{(\forall K \in \bS^{*}_{\bK})\, \ddown(P(K) \vee (\bS^{*}_{\bK}(2(K_0)) \wedge (P(2(K_0)) \rightarrow P(K))))}{(\forall K \in \bS^{*}_{\bK})\, \rt{0\, \mathrel{\not\varepsilon}\, K}{P(K)}}\,\, \mathrm{(CAICR)}
\]
where $P = \itdown(P')$ for some non-Harrop predicate $P'$.
\end{defi}

\begin{prop}\label{prop-pairc}
Concurrent Archimedean induction with restriction for
signed-digit represented compact sets \emph{(CAICR)} is realisable.
More precisely, let $g$ be the canonical realiser of
Rule~{\rm (\emph{$\ddown$-$\itdown$-absorb})}, namely $g = \mamb\,\bright$,
and let $s$ realise the premise of (CAICR).
Set
\[
s' \Def \lambda f.\, \lambda u.\, \ccase{u}{\bleft(u') \to u'; \bright(\bpair(u'', d)) \to d\low (f\, u'')}.
\]
Then the conclusion of (CAICR) is realised by
\[
f\, b \overset{rec}{=} g\low \mamb (s' f)\, (s\, b).
\]  
\end{prop}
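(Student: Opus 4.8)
The plan is to verify directly that the displayed recursive program $f$ realises the conclusion $(\forall K\in\bS^{*}_{\bK})\,(\rst{P(K)}{0\noteps K})$. Writing $C(K)\Def P(K)\vee(\bS^{*}_{\bK}(2(K_0))\wedge(P(2(K_0))\to P(K)))$, the hypothesis that $s$ realises the premise means that $s\,b\br\ddown(C(K))$ whenever $b\br\bS^{*}_{\bK}(K)$. By the realisability clause for restriction I must establish, for every such $K$ and $b$, three facts: (a) $f\,b:\tau(P(K))$; (b) $0\noteps K\to f\,b\neq\bot$; and (c) $f\,b\neq\bot\to f\,b\br P(K)$. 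The argument follows the pattern of Proposition~\ref{prop-airc}, the difference being that everything must now be routed through the concurrency of the premise (a single $\ddown$) and the iterated modality hidden in $P=\itdown(P')$. Fact (a) is a routine type inference: from $s:\tau(\bS^{*}_{\bK})\to\bfA(\tau(P)+\tau(\bS^{*}_{\bK})\times(\tau(P)\to\tau(P)))$ and $\tau(P)=\bfA^{*}(\tau(P'))$ one reads off the types of $s'$, of $\mamb\,(s'f)$, and of the absorption realiser $g=\mamb\,\bright$ of Rule~(\emph{$\ddown$-$\itdown$-absorb}), and concludes $f:\tau(\bS^{*}_{\bK})\to\tau(P)$.

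I would prove (c) before (b), since the latter depends on it, using fixed-point (Scott) induction on the approximations $f_{0}=\bot$ and $f_{i+1}\,b=g\low\mamb\,(s'f_{i})\,(s\,b)$, exactly as in Proposition~\ref{prop-airc}(2). Unfolding the definition, $s\,b=\amb(u_{1},u_{2})$ with at least one $u_{j}$ defined and every defined $u_{j}\br C(K)$, so that $f\,b=\amb(\bright\low w_{1},\bright\low w_{2})$ with $w_{j}=\stc{(s'f)}{u_{j}}$. For a defined $u_{j}$ of the form $\bleft(u')$ one has $w_{j}=u'\br P(K)$; for $u_{j}=\bright(\bpair(u'',d))$ one has $u''\br\bS^{*}_{\bK}(2(K_0))$ and $d\br(P(2(K_0))\to P(K))$, and the induction hypothesis yields $f\,u''\br P(2(K_0))$, whence $w_{j}=d\low(f\,u'')\br P(K)$. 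Thus every defined $w_{j}$ realises $P(K)$, so $\mamb\,(s'f)\,(s\,b)$ realises $\ddown(P(K))$; since $P(K)$ is $\itdown(P'(K))$, applying the realiser $g$ of Rule~(\emph{$\ddown$-$\itdown$-absorb}) collapses the outer $\ddown$ into the inductive modality and yields a realiser of $P(K)$, which gives (c).

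For (b) I would use Archimedean induction for compact sets (AIC), proving $(\forall K\in\bK_{0})(\forall b)(b\br\bS^{*}_{\bK}(K)\to f\,b\neq\bot)$; note that $K\in\bK_{0}$ is exactly $\bK(K)\wedge 0\noteps K$. Fix $K\in\bK_{0}$ and $b\br\bS^{*}_{\bK}(K)$, and choose a defined component $u_{j}$ of $s\,b=\amb(u_{1},u_{2})$ (one exists since $s\,b\br\ddown(C(K))$). If $u_{j}=\bleft(u')$ then $w_{j}=u'\neq\bot$ because $P$ is productive. If $u_{j}=\bright(\bpair(u'',d))$, then, since $K_{0}\subseteq K$, $|K_{0}|\le 1/2$ and $0\noteps 2(K_0)$, the induction hypothesis gives $f\,u''\neq\bot$; correctness (c) upgrades this to $f\,u''\br P(2(K_0))$, so that $d\low(f\,u'')$ is a genuine realiser of $P(K)$, hence defined. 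In either case the $\amb$ produced by $\mamb\,(s'f)\,(s\,b)$ has a defined component realising $P(K)$, so $f\,b=g\low\mamb\,(s'f)\,(s\,b)\neq\bot$, establishing (b).

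The step I expect to be the main obstacle is the careful bookkeeping of definedness across the two concurrency layers. Productivity (b) genuinely needs correctness (c): knowing merely that $f\,u''$ is defined does not suffice to conclude that $d\low(f\,u'')$ is defined, because $d$ only realises an implication and is therefore guaranteed to return a defined value only on bona fide realisers of $P(2(K_0))$; this is exactly why (c) must be secured first and then invoked inside the AIC step. The second delicate point is the precise role of $g=\mamb\,\bright$: one must check that the $\amb$ delivered by $\mamb\,(s'f)$ really is a realiser of $\ddown(P(K))$ — at least one defined component, and every defined component realising the \emph{inductively} defined $\itdown(P'(K))$ — so that Rule~(\emph{$\ddown$-$\itdown$-absorb}) legitimately absorbs the outer $\ddown$ and produces a realiser of $P(K)$ rather than a spurious, never-grounded infinitely-nested $\amb$-structure. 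Once these two points are handled, the verification reduces to the case analyses indicated above.
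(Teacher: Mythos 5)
There is a genuine gap, and it sits precisely at the point your proposal waves through with ``exactly as in Proposition~\ref{prop-airc}(2)''. First, a misreading of the semantics: in this paper $\amb$ is an ordinary, non-strict constructor, so $\amb(a,b) \neq \bot$ even when $a = b = \bot$. Consequently, as soon as $s\,b$ is $\amb$-headed, $f\,b = \amb(\bright\low((s'f)\low u),\, \bright\low((s'f)\low v)) \neq \bot$ holds outright; this one-line computation is the paper's entire proof of the definedness clause, and it uses neither the hypothesis $0 \noteps K$ nor correctness. Your claim that ``(b) genuinely needs (c)'', and the whole (AIC)-based argument you build for (b), therefore rest on treating $\amb$ as strict; under the actual semantics (b) is trivial and your extra machinery proves nothing that is needed for it.

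The flip side of this triviality is that the correctness clause (c) must now deliver $f\,b \br P(K)$ for \emph{every} $K \in \bS^{*}_{\bK}$ — including those with $0 \eps K$, where no Archimedean structure is available — since the hypothesis $f\,b \neq \bot$ is always satisfied. And here your Scott induction does not go through as described. In the deferral branch $u_j = \bright(\bpair(u'',d))$ you invoke ``the induction hypothesis yields $f\,u'' \br P(2(K_0))$'', but the induction hypothesis at stage $i$ is the implication $f_i\,u'' \neq \bot \to f\,u'' \br P(2(K_0))$, and you never discharge its premise. In Proposition~\ref{prop-airc} this premise comes for free: there $\chi_{i+1}\,a = \stc{f}{(\chi_i\,a')}$, so definedness of $\chi_{i+1}\,a$ propagates to $\chi_i\,a'$ by strictness of application. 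Here, by contrast, the components are hidden under the non-strict constructor $\amb$: $f_{i+1}\,b \neq \bot$ gives no information about $(s'f_i)\low u_j$, and at $i = 0$ the premise $f_0\,u'' \neq \bot$ is simply false. This is exactly the step to which the paper's own proof devotes its effort (arguing that $(s'f_i)\low k \neq \bot$, so that strictness yields $f_i\,k'' \neq \bot$ and the induction hypothesis applies); your proposal is silent on it, and without it nothing rules out a premise realiser that defers forever, producing a defined but infinitely right-nested $\amb$-structure, which cannot realise the \emph{inductively} defined $P = \itdown(P')$. Note also that your intermediate claim that $\mamb\,(s'f)\,(s\,b)$ realises $\ddown(P(K))$ needs at least one component $w_j$ to be defined, which hangs on the same undischarged hypothesis; the definedness argument you placed in (b) cannot supply it, both because it comes later and because it is restricted to $0 \noteps K$.
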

\begin{proof}
Let $b \br \bS^{*}_{\bK}(K)$. We have to show
\begin{enumerate}
\item\label{prop-pairc-1}
$0 \,\noteps K \to f\, b \not= \bot$,

\item\label{prop-pairc-2}
$f\, b \not= \bot \to (f\, b) \br P(K)$.

\end{enumerate}

(\ref{prop-pairc-1})
Since $b \br \bS^{*}_{\bK}(K)$ it follows that $s\, b = \amb(u, v)$ and
\begin{align*}
f\,b &= g \low \mamb (s' f)\, (s\, b)\\
&= g \low \amb((s' f)\low u, (s' f)\low v)\\
&= \amb(\bright \low ((s' f)\low u), \bright \low ((s' f)\low v))
\not= \bot
\end{align*}

(\ref{prop-pairc-2}) Again we use Scott induction. For $i \in \NN$ let
\begin{align*}
&f_{0}\, b \Def \bot, \\
&f_{i+1}\, b \Def g\low \mamb (s' f_{i})\, (s\, b).
\end{align*}

By the continuity of function application, if $f\, b \not= \bot$ then $f_{i}\, b \not= \bot$, for some $i \in \mytextcolor{red}{\bN}$. Therefore, it suffices to show
\[
(\forall i \in \mytextcolor{red}{\bN})\,(b \br \bS^{*}_{\bK}(K) \wedge f_{i}\, b \not= \bot \to (f\, b) \br P(K)).
\]
We induce on $i$. The induction base is trivial as $f_{0}\, b = \bot$. For the induction step assume that $b \br \bS^{*}_{\bK}(K)$ and $f_{i+1}\, b \not= \bot$. As we have seen above, $s\, b = \amb(u, v)$. Hence, $f_{i+1}\, b = \amb((s' f_{i})\low u, (s' f_{i})\low v)$. Moreover, $u \not= \bot$ or $v \not= \bot$, and for $k \in \{ u, v \}$ with $k \not= \bot$, $(s' f_{i})\low k = s' f_{i}\, k$ as well as
\[
k \br (P(K) \vee (\bS^{*}_{\bK}(2(K_{0})) \wedge (P(2(K_{0})) \to P(K)))).
\]
We show that $(s' f_{i}\, k) \br P(K)$.

If $k = \bleft(k')$ with $k' \br P(K)$, then $s' f_{i}\, k = k'$. Hence we are done. In the other case $k = \bright(\bpair(k'', d))$ where $k'' \br \bS^{*}_{\bK}(2(K_{0}))$ and $d \br (P(2(K_{0})) \to P(K))$. Then $s' f_{i}\, k = d\low (f_{i}\, k'')$. Since $f_{i+1}\, b \not= \bot$, we have that also $(s' f_{i})\low k \not= \bot$, as otherwise $s' f_{i} = \bot$ and hence $\amb((s' f_{i})\low u, (s' f_{i})\low v) = \bot$ as well as $g\low \amb((s' f_{i})\low u, (s' f_{i})\low v) = \bot$. Thus, $d\low (f_{i}\, k'') \not= \bot$. Because application is strict, it follows that $f_{i}\, k'' \not= \bot$. By the induction hypothesis we therefore have that $(f\, k'') \br P(2(K_{0}))$. Hence, $d\low (f\, k'') \br P(K)$. It follows that $(\mamb (s' f)\, (s\,b)) \br \ddown(P(K)$ and consequently $(f\, b) \br P(K)$.
\end{proof}

We extend the rules of CFP by the new Rules~(AICR) and (CAICR).

In the following we will use that the elements of $\bP_{\fin}(\bSD)$
are decidable classical subsets of $\bSD$:
\begin{lem}\label{lem-findec}
If $E \in \bP_{\fin}(\bSD)$, then
\begin{enumerate}
\item\label{lem-findec-1} $(\forall d \in \bSD)\,(d \eps E \lor d \noteps E)$
\item\label{lem-findec-2} $(\forall d\eps E)\,\neg\neg (d \in\bSD)$
\end{enumerate}
\end{lem}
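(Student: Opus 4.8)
The plan is to prove both statements by the induction principle for the inductively defined predicate $\bP_{\fin}(\bSD) = \mu\Phi_{\bSD}$ from Example~\ref{ex-finset}, where
\[
\Phi_{\bSD}(Z)(E) = (\exists x \in \bSD)\, E = \{ x \} \vee (\exists v \in Z)(\exists y \in \bSD)\, E = v \cup \{ y \}.
\]
In each case it suffices to exhibit a predicate $Y$ on sets with $\Phi_{\bSD}(Y) \subseteq Y$ and conclude $\bP_{\fin}(\bSD) \subseteq Y$. Two elementary facts will be used throughout: (i) the three signed digits are pairwise distinct and equality among them is decidable --- given $\bSD(d)$ and $\bSD(x)$, a case analysis on the two disjunctions $\bSD(d)$ and $\bSD(x)$ together with the real-closed field axioms yields $d = x \vee d \ne x$, so membership in a singleton $\{ x \}$ is decidable for signed-digit arguments; and (ii) the classical union unfolds by definition to $d \eps (v \cup \{ y \}) \leftrightarrow \neg(d \noteps v \wedge d \noteps \{ y \})$, so all reasoning about $\cup$ must be carried out with double negations.

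For the first statement I would take $Y_1(E) \Def (\forall d \in \bSD)(d \eps E \vee d \noteps E)$. In the base case $E = \{ x \}$ with $\bSD(x)$, decidability of $d \eps \{ x \}$ is exactly fact~(i). In the step case $E = v \cup \{ y \}$ with $Y_1(v)$ and $\bSD(y)$, fix $d$ with $\bSD(d)$; by the induction hypothesis $d \eps v \vee d \noteps v$ and by~(i) $d \eps \{ y \} \vee d \noteps \{ y \}$. If either positive alternative holds then $\neg(d \noteps v \wedge d \noteps \{ y \})$, so $d \eps E$; otherwise $d \noteps v \wedge d \noteps \{ y \}$ holds, whence $d \noteps E$. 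This gives $\Phi_{\bSD}(Y_1) \subseteq Y_1$.

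For the second statement I would take $Y_2(E) \Def (\forall d)(d \eps E \to \neg\neg \bSD(d))$. The base case $E = \{ x \}$ is immediate, since $d \eps \{ x \}$ gives $d = x$ and hence $\bSD(d)$ from $\bSD(x)$. In the step case $E = v \cup \{ y \}$ with $Y_2(v)$ and $\bSD(y)$, assume $d \eps E$, i.e.\ $\neg(d \noteps v \wedge d \noteps \{ y \})$, and suppose for contradiction $\neg \bSD(d)$. From $\bSD(y)$ and $\neg\bSD(d)$ we get $d \ne y$, i.e.\ $d \noteps \{ y \}$; and from the induction hypothesis $d \eps v \to \neg\neg\bSD(d)$ together with $\neg\bSD(d)$ we get $d \noteps v$. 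Thus $d \noteps v \wedge d \noteps \{ y \}$, contradicting $d \eps E$. Hence $\neg\neg\bSD(d)$, establishing $\Phi_{\bSD}(Y_2) \subseteq Y_2$.

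The main obstacle --- really the only subtle point --- is the non-constructive definition of the classical union: from $d \eps (v \cup \{ y \})$ one cannot decide whether $d$ lies in $v$ or in $\{ y \}$, so the conclusion of the second statement must be the \emph{stable} form $\neg\neg\bSD(d)$ rather than $\bSD(d)$ itself. Note that $\bSD$, being a disjunction, is not stable, so this weakening is genuinely necessary. In the first statement the same phenomenon is handled by first establishing decidability of membership in each component, which is possible precisely because the arguments range over the decidable set of signed digits; decidability of membership does not pass to arbitrary unions, but here the finite set is built up one decidable singleton at a time, so it is preserved.
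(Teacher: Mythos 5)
Your proof is correct and takes essentially the same route as the paper: the paper's proof is just the remark ``easy induction on $\bP_{\fin}(\bSD)$'', observing that part (1) needs only that equality on $\bSD$ is decidable and that part (2) holds for an arbitrary predicate. Your detailed argument --- $\Phi$-closedness of the two candidate predicates, decidable singleton membership in the base and step of part (1), and the double-negation treatment of the classical union forcing the stable conclusion $\neg\neg\,\bSD(d)$ in part (2) --- is exactly that induction spelled out.
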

\begin{proof}
Easy induction on $\bP_{\fin}(\bSD))$. In part (\ref{lem-findec-1}), 
$\bSD$ could be replaced
by any discrete predicate, that is, predicate $P$ such that
$(\forall x,y\in P)\,(x=y \lor x\not= y)$. In part (\ref{lem-findec-2}), $\bSD$ could be
replaced by any predicate.
\end{proof}

Let 
\[
\bB_{\bK}(K) \Def (K \le 0 \vee K \ge 0 )\vee (K_{-1} \not= \emptyset \land K_{1} \ne \emptyset).
\]
\begin{prop}\label{pn-dc}
If $\bS_{\bK}(K)$, then $\rt{0\, \mathrel{\not\varepsilon}\, K}{\bB_{\bK}(K)}$.
\end{prop}
\begin{proof}
It suffices to verify the premise of Rule~(AICR) with $P(K) \Def \bB_{\bK}(K)$.  That is we must show that
\[
\bB_{{\bK}}(K) \vee (\bS_{\bK}(2(K_0)) \wedge (\bB_{\bK}(2(K_0)) \rightarrow \bB_{\bK}(K))).
\]
Since $\bS_{\bK}(K)$, there is some $E \in \bP_{\fin}(\bSD)$ with 
$K \subseteq \bII_{E}$ so that $K_d$ is non-empty, for all $d \eps E$. 
Thanks to Lemma~\ref{lem-findec}~(\ref{lem-findec-1}), we can do a case analysis on 
the elements of $E$.

\begin{ncase}
$0\noteps E$, that is, $E \subseteq \bGC$.
\end{ncase}
In this case, we have that both $K_{-1}$ and $K_1$ are not empty, if $E = \bGC$; $K \le 0$, if $1 \noteps E$; and $K \ge 0$, 
if $-1 \noteps E$. Hence $\bB_{\bK}(K)$ holds.

\begin{ncase}
$0 \eps E$.
\end{ncase}
Then $\bS_{\bK}(2(K_0))$, by the definition of $\bS_{\bK}$. It remains to show that $\bB_{\bK}(2(K_0)) \rightarrow \bB_{\bK}(K)$. Assume that $\bB_{\bK}(2(K_0))$. Then also $\bB_{\bK}(K_0)$. 

If both $K_0 \cap \bII_{-1}$ and $K_0 \cap \bII_1$ are not empty, $K_{-1}$ and 
$K_1$ are not empty as well. In case $K_0 \le 0$, then $K \le 0$, if, in addition, 
$1 \noteps E$. Otherwise, $K_{-1} \not= \emptyset$ and $K_{1} \not= \emptyset$; 
similarly, if $K_0 \ge 0$. Thus, $\bB_{\bK}(K)$.
\end{proof}

\section{Signed digit and Gray code for non-empty compact sets}\label{sec-cpcode}
In this section the Gray code representation of non-empty compact sets is introduced 
and its connection with the signed digit representation of these sets is studied. 
\begin{defi}\label{def-gk}
\[
\bG_{\bK}(K) \overset{\nu}{=} 
\bK(K) \land  \bG(\bmin K) \land \bG(\bmax K) \land 
 (\forall d \in  \bGC)\, (K_{d} \ne \emptyset \to  \bG_{\bK}(\bt[K_{d}])).
\]
\end{defi}
Our first goal is to show that $\bS_{\bK} \subseteq \bG_{\bK}$. To this end we need the following results.

\begin{lem}\label{lem-cpneg}  
If $\bS_{\bK}(K)$ then also 
\begin{enumerate}
\item\label{lem-cpneg-1} $\bS_{\bK}(-K)$.
\item\label{lem-cpneg-1.4} $\bS(\bmin K)$.
\item\label{lem-cpneg-1.5} $\bS(\bmax K)$.
\item\label{lem-cpneg-2} $(\forall d \in \bGC)\,(K_d \ne \emptyset\to  \bS_{\bK}(\bt[K_d]))$.
\end{enumerate}
\end{lem}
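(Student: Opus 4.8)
The plan is to prove all four items by co-induction — on $\bS_{\bK}$ for (1) and (4), and on the point predicate $\bS$ for (2),(3) — reducing each co-inductive step to an elementary computation with the affine digit maps $\av{d}(y)=(y+d)/2$, $\av{d}^{-1}(y)=2y-d$ and the tent map $\bt$. Throughout I would use that $\bK$ is closed under $x\mapsto -x$ and under the digit homeomorphisms, and that $\II=\bigcup_{d}\bII_{d}$ with $\bII_{-1}=[-1,0]$ and $\bII_{1}=[0,1]$. For (1) I would take $P\Def\set{L}{(\exists K)\,\bS_{\bK}(K)\wedge L=-K}$ and show $P\subseteq\bS_{\bK}$. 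If $L=-K$ with witness $E$ for $\bS_{\bK}(K)$, use the witness $\set{-d}{d\eps E}$ for $L$: from $-x\in\bII_{e}\Leftrightarrow x\in\bII_{-e}$ one gets $L\subseteq\bII_{\{-d\,:\,d\eps E\}}$, $L_{-d}=-(K_{d})\neq\emptyset$, and $\av{-d}^{-1}[L_{-d}]=-\,\av{d}^{-1}[K_{d}]$; since $\bS_{\bK}(\av{d}^{-1}[K_{d}])$ holds for $d\eps E$, this restriction lies in $P$, closing the co-induction.

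For (2) put $P\Def\set{x}{(\exists K)\,\bS_{\bK}(K)\wedge x=\min K}$ and show $P\subseteq\bS$. Given $x=\min K$ with witness $E$, from $x\in K\subseteq\bII_{E}$ I would choose $d\eps E$ with $\bII(d,x)$; then $x\in K_{d}$, and since $x$ is the global minimum, $\min K_{d}=x$, so (as $\av{d}^{-1}$ is increasing) $\min\av{d}^{-1}[K_{d}]=2x-d$, while $\bS_{\bK}(\av{d}^{-1}[K_{d}])$ yields $2x-d\in P$. Together with $\bII(d,x)$ this is exactly the unfolding of $\bS$ at $x$. Item (3) is the mirror argument using $\max$; alternatively it follows from (1) and (2) via $\max K=-\min(-K)$.

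Item (4) I would reduce to a single restriction lemma. On $\bII_{-1}$ one has $\bt(x)=2x+1=\av{-1}^{-1}(x)$ and on $\bII_{1}$ one has $\bt(x)=1-2x=-\av{1}^{-1}(x)$, whence $\bt[K_{-1}]=\av{-1}^{-1}[K_{-1}]$ and $\bt[K_{1}]=-\,\av{1}^{-1}[K_{1}]$; by (1) it therefore suffices to prove that $\bS_{\bK}(K)\wedge K_{d}\neq\emptyset$ implies $\bS_{\bK}(\av{d}^{-1}[K_{d}])$ for $d\in\bSD$. This I would prove by co-induction with the predicate of all digit-restrictions of $\bS_{\bK}$-sets at arbitrary finite depth,
\[
P\Def\set{L}{(\exists K)(\exists\vec c)\,\bS_{\bK}(K)\wedge K\cap\av{\vec c}[\II]\neq\emptyset\wedge L=\av{\vec c}^{-1}[K\cap\av{\vec c}[\II]]},
\]
where $\vec c=[c_{0},\dots,c_{k-1}]$ and $\av{\vec c}\Def\av{c_{0}}\circ\cdots\circ\av{c_{k-1}}$. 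Indexing by sequences makes $P$ closed under further restriction: for $L\in P$ with parameter $\vec c$ and $e\in\bSD$ one checks $\av{e}^{-1}[L_{e}]=\av{\vec c\,e}^{-1}[K\cap\av{\vec c\,e}[\II]]\in P$, so the witness $\set{e\in\bSD}{L_{e}\neq\emptyset}$ (which covers $L\subseteq\II$ and consists of non-empty pieces) closes the co-induction; taking $\vec c=[d]$ gives the lemma.

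The main obstacle is not this classical co-induction but the realiser it must carry: to witness $\bS_{\bK}(\av{d}^{-1}[K_{d}])$ one has to produce a concrete digital tree, i.e.\ determine, along that tree, witness sets of genuinely non-empty digit pieces, starting from the given realiser of $\bS_{\bK}(K)$. This is delicate for two reasons. First, the digits chosen by $K$'s realiser need not contain $d$: the intervals $\bII_{d}$ overlap, so $K_{d}$ can be non-empty although $d$ is absent from every witness actually used, and one cannot simply read off the answer by navigating $K$'s tree. Second, the condition $L_{e}\neq\emptyset$ is non-computational — deciding it is tantamount to an exact real comparison such as $\min K\le -1/2$ — so the branching of the output tree cannot be settled by inspecting finitely much of the input. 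Overcoming this is exactly where the redundancy of the signed-digit code must be exploited, and it is the analogue of the step that, in the real-number case, required the concurrency modality in Lemma~\ref{lem-neg}(\ref{lem-neg-2}); handling it carefully is the crux of the argument.
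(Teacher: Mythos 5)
Parts (\ref{lem-cpneg-1}) and (\ref{lem-cpneg-1.5}) of your proposal are fine and coincide with the paper's argument (the paper also obtains the maximum via $\max K = -\min(-K)$ and closure of $\bS$ under $\lambda x.\,{-}x$). The other two parts have genuine constructive gaps; in this paper a ``proof'' must be carried out in IFP so that a realiser (here: a digital-tree transformer) can be extracted, and classical correctness is not enough. In part (\ref{lem-cpneg-1.4}), your instruction ``choose $d \eps E$ with $\bII(d,x)$'' cannot be executed: $(\exists d \eps E)\,\bII(d,x)$ is a non-computational formula ($\eps$ and $\le$ carry no content), whereas the co-inductive step must realise $(\exists d \in \bSD)\,\bII(d,x)\wedge R(2x-d)$, i.e.\ must produce the digit $d$ as data, and deciding which of the overlapping intervals $\bII_{d}$ contains $\min K$ is an undecidable exact real comparison. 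The paper's proof instead chooses $d$ from the computational witness $E$ alone: order $\bSD$ by $-1<0<1$ and take $d$ to be the least element of $E$, which is computable by Lemma~\ref{lem-findec}(\ref{lem-findec-1}); it is then a classical, non-computational fact that $\min K \in K_{d} \subseteq \bII_{d}$ for this particular $d$, after which your argument ($\min \bav{d}^{-1}[K_{d}] = 2x-d$ by monotonicity of $\bav{d}^{-1}$) goes through.

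Part (\ref{lem-cpneg-2}) is where the real work lies, and your proposal stops exactly there. The reduction via $\bt[K_{d}] = \pm\bav{d}^{-1}[K_{d}]$ and part (\ref{lem-cpneg-1}) is sound, but the co-induction you then set up hinges on the witness set $\set{e \in \bSD}{L_{e}\ne\emptyset}$, which is undecidable — as you say yourself — and no way around this is given, so the crux is missing. Your closing diagnosis also points in the wrong direction: you suggest the missing step is the analogue of the concurrent step in Lemma~\ref{lem-neg}(\ref{lem-neg-2}), but part (\ref{lem-cpneg-2}) asserts plain $\bS_{\bK}(\bt[K_{d}])$, and since $\bS_{\bK}$ is non-Harrop an $\itdown$ could never be stripped off it (Rule~(\emph{$\itdown$-H}) needs a Harrop formula); if concurrency were genuinely required, the statement as given would be unprovable. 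The paper shows it is not required: apply half-strong co-induction to $Q^{d} \Def \set{K}{(\exists Z \in \bS_{\bK})\, Z_{d} \ne \emptyset \wedge K = \bt[Z_{d}]}$ and, instead of testing emptiness of pieces of the \emph{output} set, do a case analysis on the computational witness $F$ of $\bS_{\bK}(Z)$, namely on $d \eps F$ and $0 \eps F$, both decidable by Lemma~\ref{lem-findec}. If $d \eps F$, then $\bav{d}^{-1}[Z_{d}] \in \bS_{\bK}$ by the definition of $\bS_{\bK}$, and this set equals $K$ (for $d=-1$) or $-K$ (for $d=1$), so $\bS_{\bK}(K)$ holds outright (using part (\ref{lem-cpneg-1}) when $d=1$); this branch exits directly into $\bS_{\bK}$, which is exactly what half-strong co-induction permits. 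If $d \noteps F$ and $0 \eps F$, then $Z_{d} \subseteq \bII_{0}$, hence $K = \bt[Z_{d}] \subseteq \bII_{1}$, and the single-digit witness $E = \{1\}$ suffices, with $\bav{1}^{-1}[K] = \bt[(2(Z_{0}))_{d}]$ and $2(Z_{0}) \in \bS_{\bK}$, so $Q^{d}(\bav{1}^{-1}[K])$. If $F = \{-d\}$, then $Z_{d} = \{0\}$, $K = \{1\}$, and $\bS_{\bK}(\{1\})$ holds by a trivial co-induction. This decidable case analysis on the \emph{input} tree, rather than an emptiness test on the output set, is the idea your proposal lacks.
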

\begin{proof}
(\ref{lem-cpneg-1}) Let $P \Def \set{K}{\bS_{\bK}(-K)}$. We use co-induction to prove that $P \subseteq \bS_{\bK}$. That is, we show that
\[
P(K) \rightarrow (\exists E \in \bP_{\fin}(\bSD))\, (K \subseteq \bII_{E} \wedge 
(\forall d \eps E)\, (K_d \not= \emptyset  \wedge P(\bav{d}^{-1}[K_d]))).
\]
Since $\bS_{\bK}(-K)$, there is some $F\in\bP_{\fin}(\bSD)$ so that $-K = \bigcup\set{(-K)_d}{d \in F}$ and for all $d \eps F$, $(-K)_d \not= \emptyset$ as well as $\bS_{\bK}(\bav{d}^{-1}[(-K)_d])$. Note that $(-K)_d = - (K_{(-d)})$ and $\bav{d}^{-1}[(-K )_d] = - \bav{-d}^{-1}[K_{(-d)}]$. Therefore, we can choose $E \Def \set{-d}{d \eps F}$.

(\ref{lem-cpneg-1.4})  
The proof is by co-induction.
Let $R \Def \set{x \in \bII}{(\exists K \in \bS_{\bK})\, x = \bmin K}$. 
We show 
\[ 
R(x) \rightarrow (\exists d \in \bSD)\, ( x \in \bII_{d} \land R(\bav{d}^{-1}(x))).
\]

If $x \in R$ then $x = \bmin K$, for some $K \in \bS_{\bK}$. 
Hence, there exists $E \in \bP_{\fin}(\bSD)$ so that 
$K = \bigcup \set{K_{e}}{e \eps E}$. 
Moreover, $K_{e} \ne \emptyset$ and $\bS_{\bK}(\bav{e}^{-1}[K_{e}])$, 
for all $e \eps E$. 
Order $\bSD$ by $-1 < 0 < 1$ and let $d$ be the least element of $E$ with 
respect to this order (which can be determined, thanks to 
Lemma~\ref{lem-findec}(\ref{lem-findec-1})). 
Then $\bmin K \in K_{d} \subseteq \bII_{d}$ and 
$\bav{d}^{-1}(\bmin K) \in \mytextcolor{red}{\bav{d}^{-1}[K_{d}]}$. Note that $\bav{d}^{-1}$ is monotone. 
Therefore, $\bav{d}^{-1}(\bmin K) = \bmin \bav{d}^{-1}[K_{d}]$. 
Since $\bav{d}^{-1}[K_{d}] \in \bS_{\bK}$, it follows that 
$\bav{d}^{-1}(\bmin K) \in R$. 

(\ref{lem-cpneg-1.5}) 
The statement follows easily with the first two statements and \cite[Lemma~23]{btifp}, stating that $\bS$ is closed under $\lambda x.\, {-}x$,

(\ref{lem-cpneg-2}) 
Let $d\in \bGC$ and set
\[
Q^d \Def \set{L}{(\exists K \in \bS_{\bK})\, (K_d \not= \emptyset \wedge
   L = \bt[K_d])}.
\]
We use half-strong co-induction to show that $Q^d \subseteq \bS_{\bK}$.
That is, we prove
\begin{equation*}
  Q^d(L) \rightarrow ((\exists E \in \bP_{\fin}(\bSD))\,
        (L \subseteq \bII_E
\wedge 
     (\forall e \eps E)\, (L_e \not= \emptyset \wedge
 Q^d(\bav{e}^{-1}[L_e])))) \vee \bS_{\bK}(L).
\end{equation*}

Assume that $Q^d(L)$. Then there is some $K \in \bS_{\bK}$ such that $K_d \not= \emptyset$ and $L = \bt[K_d]$. Since $\bS_{\bK}(K)$, 
there is some $F \in \bP_{\fin}(\bSD)$ so that
\begin{itemize}
\item $K \subseteq \bII_F$ and 
\item $(\forall f \eps F)\, (K_f \not= \emptyset \wedge \bS_{\bK}(\bav{f}^{-1}[K_f]))$.
\end{itemize}

We perform a case analysis on whether $d \eps F$ using 
Lemma~\ref{lem-findec}(\ref{lem-findec-1}).

If $d \eps F$, we have that $\bS_{\bK}(\bav{d}^{-1}[K_d])$. If $d = -1$, then $\bav{d}^{-1}(x) = 2x + 1 = \bt(x)$. Thus, $\bav{d}^{-1}[K_d] = \bt[K_d] = L$. That is, we have that $\bS_{\bK}(L)$. 
On the other hand, if $d = 1$, then $\bav{d}^{-1}(x) = 2x -1 = -\bt(x)$. Hence, $\bav{d}^{-1}[K_d] = -L$. It follows that $\bS_{\bK}(-L)$, whence we obtain that $\bS_{\bK}(L)$.

If $d \noteps F$, then $F \subseteq \{0\}\cup \{-d\}$, by Lemma~\ref{lem-findec}(\ref{lem-findec-2}). 

\begin{ncase}
$0 \eps F$.
\end{ncase}
Then $K_d \subseteq \bII_0$ and $\bS_{\bK}(2(K_0))$. Furthermore, 
\[
2(K_d) = 2(K \cap \bII_0 \cap \bII_d) = 2(K_0) \cap \bII_d
\]
and 
\[
\bav{1}^{-1}[L] = -\bt[L] = -\bt[\bt[K_d]] = \bt[2(K_d)] = \bt[2(K_0) \cap \bII_d],
\]
from which it follows that $Q^d(\bav{1}^{-1}[L])$.
Moreover, $L = \bt[K_d] \subseteq \bt[\bII_0] = \bII_1$ 
and hence $L_1 = L \not= \emptyset$.
Therefore, we have proven the left part of the disjunction with $E \Def \{1\}$.

\begin{ncase}
$0 \noteps F$.
\end{ncase}
Now, $F = \{ -d \}$. Hence $K_d = \{ 0 \}$ and $L = \{ 1 \}$. 
As it follows by co-induction that $\{\{-1\},\{ 1 \}\} \subseteq \bS_{\bK}$, 
we have $\bS_{\bK}(L)$.
\end{proof}

With Theorem~\ref{thm-StoGtoS2} and Lemma~\ref{lem-cpneg} we now obtain by co-induction what we were looking for. 

\begin{prop}\label{prop-cpsdgc}
$\bS_{\bK} \subseteq \bG_{\bK}$.
\end{prop}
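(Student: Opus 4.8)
The plan is to prove the inclusion by co-induction on the definition of $\bG_{\bK}$. Writing $\bG_{\bK} = \nu\,\Phi_{\bG_{\bK}}$ with
\[
\Phi_{\bG_{\bK}}(X)(K) \Def \bG(\min K) \land \bG(\max K) \land (\forall d \in \bGC)\,(K_{d} \ne \emptyset \to X(\bt[K_{d}])),
\]
the co-induction principle reduces the goal to showing that $\bS_{\bK}$ is a post-fixed point of $\Phi_{\bG_{\bK}}$, that is, $\bS_{\bK} \subseteq \Phi_{\bG_{\bK}}(\bS_{\bK})$. Concretely, assuming $\bS_{\bK}(K)$, I would establish the three conjuncts $\bG(\min K)$, $\bG(\max K)$, and $(\forall d \in \bGC)\,(K_{d} \ne \emptyset \to \bS_{\bK}(\bt[K_{d}]))$.

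For the first two conjuncts I would invoke Lemma~\ref{lem-cpneg}: parts~(\ref{lem-cpneg-1.4}) and~(\ref{lem-cpneg-1.5}) yield $\bS(\min K)$ and $\bS(\max K)$ from $\bS_{\bK}(K)$. Applying the inclusion $\bS \subseteq \bG$ of Theorem~\ref{thm-StoGtoS2} then gives $\bG(\min K)$ and $\bG(\max K)$ immediately. The third conjunct is exactly the content of Lemma~\ref{lem-cpneg}(\ref{lem-cpneg-2}), which states that $\bS_{\bK}(K)$ implies $(\forall d \in \bGC)\,(K_{d} \ne \emptyset \to \bS_{\bK}(\bt[K_{d}]))$. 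Since the co-inductive hypothesis is carried by the predicate $\bS_{\bK}$ itself, this supplies precisely the recursive obligation demanded by $\Phi_{\bG_{\bK}}$, so ordinary co-induction suffices and no strengthening (half-strong or strong) is needed.

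The substantive work has therefore already been front-loaded into Lemma~\ref{lem-cpneg}, and the present argument is essentially the assembly of these parts. The step I expect to need the most care is the bookkeeping around well-definedness and legality of the co-inductive unfolding: one must note that $\bS_{\bK}(K)$ entails $\bK(K)$, so that $K$ is a non-empty compact set and $\min K$, $\max K$ exist, and that $\bt[K_{d}]$ is again a non-empty compact subset of $\II$ whenever $K_{d} \ne \emptyset$, keeping the recursive instance of $\bS_{\bK}$ (and hence of $\bG_{\bK}$) within its intended domain. These facts follow from the compactness clause in the definition of $\bS_{\bK}$ together with the continuity of $\bt$, but should be acknowledged explicitly so that each occurrence of $\bG_{\bK}$ and $\bS_{\bK}$ in the co-inductive step is justified.
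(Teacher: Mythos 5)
Your proof is correct and takes essentially the same route as the paper: the paper's entire proof is the remark that the proposition follows by co-induction from Lemma~\ref{lem-cpneg} (whose parts supply $\bS(\min K)$, $\bS(\max K)$, and the clause $\bS_{\bK}(\bt[K_{d}])$ for $K_{d} \ne \emptyset$) together with the inclusion $\bS \subseteq \bG$ of Theorem~\ref{thm-StoGtoS2}. Your write-up simply makes explicit the post-fixed-point check $\bS_{\bK} \subseteq \Phi_{\bG_{\bK}}(\bS_{\bK})$ that the paper leaves implicit, and you are right that ordinary co-induction suffices here.
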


\begin{rem}
\mytextcolor{green}{Inspecting the proof of Part (4) of Lemma~\ref{lem-cpneg}, 
one sees that the extracted realiser}
\mytextcolor{red}{yields a defined result for \emph{every} $d\in \SD$, 
even if $K_d=\emptyset$. In that case the computed realiser of the implication
$K_d \ne \emptyset\to  \bS_{\bK}(\bt[K_d])$ is defined but does not 
\mytextcolor{blue}{necessarily} realise $\bS_{\bK}(\bt[K_d])$. Therefore,
this implication cannot be strengthened to a restriction.
The computation contained in the proof of (4) takes as input only $d$ and $F$ 
(more precisely, realisers of $d\in\SD$ and $F \in \bP_{\fin}(\bSD)$)
and a realiser of $\bS_{\bK}(K)$.
It does not use the information that $K_d$ is non-empty.
This information is only needed to prove the correctness of the result.
Thus the transformation extracted from the proof of Proposition~\ref{prop-cpsdgc}
outputs for every realiser of $\bS_{\bK}(K)$ a \emph{total} full binary tree, 
that is, the addresses of nodes are \emph{all} finite sequences of elements in $\bGC$.
Each node $\vec d = d_0,\ldots,d_{n-1}$ is labelled by a pair of infinite Gray codes 
such that with $g_{\vec d} = [g_{d_0},\ldots,g_{d_{n-1}}]$
(using the notation of Section~\ref{sec-dig} where $g_{-1}$ and $g_1$ are the 
inverses of the legs of $\bt$),
if $g_{\vec d}^{-1}[K]$ is non-empty, then the label 
consists of realisers of
$\bG(\min g_{\vec d}^{-1}[K])$ and $\bG(\max g_{\vec d}^{-1}[K])$.
If
$g_{\vec d}^{-1}[K]$ 
is empty, the label is meaningless.
It is not possible to computationally distinguish meaningful from meaningless labels
since in general a realiser of $K$ does not allow us to recognise the non-emptiness 
of 
$g_{\vec d}^{-1}[K]$.
An extreme example is $K=\{0\}$ where we may only ever know that the label at
the root contains reliable information, namely realisers of $\bG(\min K)$ 
and $\bG(\max K)$. The labels at all other nodes may never be known to carry 
correct information. This shows, in particular, that the conjuncts
$\bG(\min K)$  and $\bG(\max K)$, in the co-inductive definition of $\bG_{\bK}(K)$ cannot
be replaced by the weaker formulas $\bD(\min K)$ and $\bD(\max K)$ 
which would only provide the first digits of the Gray codes of $\min K$ and $\max K$.
\mytextcolor{blue}{If, however, for some node $\vec d$ the first digit of the Gray code of $\min g_{\vec d}^{-1}[K]$  is defined,  it will tell us whether $\min g_{\vec d}^{-1}[K] \le 0$ and hence $(g_{\vec d}^{-1}[K])_{-1} \neq \emptyset$, or  $\min g_{\vec d}^{-1}[K] \ge 0$ and thus $(g_{\vec d}^{-1}[K])_{1}\neq \emptyset$; similarly for $\max g_{\vec d}^{-1}[K]$.}
}
\end{rem}

Our next aim is to show that $\bG_{\bK} \subseteq \bS_{\bK}^{*}$. We start with a technical lemma.

\begin{lem}\label{lem-gkmin}
\[
\bG_{\bK}(K) \rightarrow \bG_{\bK}(-K).
\]
\end{lem}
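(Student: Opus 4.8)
The plan is to prove the equivalent statement $\bG_{\bK}(-K) \to \bG_{\bK}(K)$ (equivalent because $-(-K) = K$) by co-induction, closely mirroring the treatment of the point case in Lemma~\ref{lem-gneg}. Concretely, I would set $\bP \Def \set{K}{\bG_{\bK}(-K)}$ and aim to show $\bP \subseteq \bG_{\bK}$. Writing $\bG_{\bK} = \nu\Phi$ with $\Phi(X)(K) = \bG(\min K) \land \bG(\max K) \land (\forall d \in \bGC)(K_d \ne \emptyset \to X(\bt[K_d]))$, the task is to verify a co-closure condition for $\bP$: produce the two boundary conjuncts $\bG(\min K)$, $\bG(\max K)$, and for each nonempty half $K_d$ a recursive witness for $\bt[K_d]$.

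First I would collect the elementary geometric facts that drive everything: $\min(-K) = -\max K$ and $\max(-K) = -\min K$; the evenness $\bt(-x) = \bt(x)$ of the tent map, which gives $\bt[-A] = \bt[A]$ for any $A$; and, using $\bII_{-1} = [-1,0]$ and $\bII_{1} = [0,1]$, the half-swap identities $(-K)_d = -(K_{-d})$ and $\bt[(-K)_d] = \bt[K_{-d}]$ for $d \in \bGC$. I would also need that $\bG$ itself is closed under negation, $\bG(x) \to \bG(-x)$; this is an easy co-induction along the lines of Lemma~\ref{lem-gneg}, since the bound $-1 \le x \le 1$ and the predicate $\bD(x)$ are invariant under $x \mapsto -x$ and $\bt$ is even. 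Granting these, unfolding $\bG_{\bK}(-K)$ immediately yields $\bG(-\max K)$ and $\bG(-\min K)$, hence $\bG(\max K)$ and $\bG(\min K)$, so the two boundary conjuncts are handled routinely.

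The hard part, and the reason a naive co-induction with $\bP$ alone does not close, is the recursive conjunct. Unfolding the recursion of $\bG_{\bK}(-K)$ and applying the half-swap identities turns it into $(\forall e \in \bGC)(K_e \ne \emptyset \to \bG_{\bK}(\bt[K_e]))$ — note the value produced is $\bG_{\bK}(\bt[K_e])$, not $\bG_{\bK}(-\bt[K_e]) = \bP(\bt[K_e])$. This mismatch is exactly the effect of the evenness of $\bt$: negating $K$ merely interchanges its two halves $K_{-1}, K_{1}$ without altering their tent-images, so descending one level already lands inside $\bG_{\bK}$ proper rather than back inside $\bP$. I would resolve this precisely as in Lemma~\ref{lem-gneg}, by invoking strong co-induction (Lemma~\ref{lem-strong}) instead of plain co-induction: it then suffices to show $\bP(K) \to \Phi(\bP \cup \bG_{\bK})(K)$, and the recursive obligation $\bP(\bt[K_d]) \lor \bG_{\bK}(\bt[K_d])$ is discharged through its right disjunct by the witness $\bG_{\bK}(\bt[K_d])$ just obtained. (Equivalently, one could run ordinary co-induction on the symmetric invariant $\set{K}{\bG_{\bK}(K) \lor \bG_{\bK}(-K)}$.) Strong co-induction then delivers $\bP \subseteq \bG_{\bK}$, which is the assertion.
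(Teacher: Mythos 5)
Your proposal is correct and follows essentially the same route as the paper: the same predicate $\set{K}{\bG_{\bK}(-K)}$, strong co-induction (Lemma~\ref{lem-strong}), the identities $\min(-K)=-\max K$, $\max(-K)=-\min K$, $(-K)_d=-(K_{-d})$ together with the evenness of $\bt$, and discharge of the recursive conjunct through the right disjunct $\bG_{\bK}(\bt[K_d])$. The only inessential difference is that the paper cites closure of $\bG$ under $\lambda x.\,{-}x$ from an external result (\cite[Lemma~7]{becsl}) where you propose to reprove it by co-induction, and your correct diagnosis of why plain co-induction fails is exactly what the paper's use of strong co-induction silently addresses.
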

\begin{proof}
Let $R \Def \set{K}{\bG_{\bK}(-K)}$. We use strong co-induction to show that $R \subseteq \bG_{\bK}$, That is, we have to show that
\[
R(K) \rightarrow (\bG(\bmin K) \land \bG(\bmax K) \land (\forall d \in \bGC)\, (K_{d} \ne \emptyset \to (R(\bt[K_d]) \vee \bG_{\bK}(\bt[K_d])))).
\]
Assume that $R(K)$. Then $\bG_{\bK}(-K)$ and hence 
\[
\bG(\bmin (-K)) \land \bG(\bmax (-K)) \land (\forall d \in \bGC)\, ((-K)_{d} \ne \emptyset \to \bG_{\bK}(\bt[(-K)_{d}])).
\]
Note that $-(K_{d}) = (-K)_{(-d)}$ and hence $\bt[K_{d}] = \bt[-(K_{d})] = \bt[(-K)_{(-d)}]$. Moreover, $\bmin (-K) = - \bmax K$ and $\bmax (-K) = - \bmin K$. Since $\bG$ is closed under $\lambda x. -x$, by \cite[Lemma~7]{becsl}, it follows that
\[
\bG(\bmax K) \land \bG(\bmin K) \land (\forall d \in \bGC)\, (K_{d} \ne \emptyset \to \bG_{\bK}(\bt[K_{d}])),
\]
as was to be shown.
\end{proof}

\begin{lem}\label{lem-clavd}
For $d \in \bGC$,
\[
\bG_{\bK}(K) \rightarrow K_d \not= \emptyset \rightarrow \bG_{\bK}(\bav{d}^{-1}[K_d]).
\]
\end{lem}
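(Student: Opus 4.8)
The plan is to prove the statement by a straightforward case distinction on the digit $d \in \bGC = \{-1, 1\}$, unfolding the co-inductive definition of $\bG_{\bK}$ once and then invoking the negation lemma in the case $d = 1$. No co-induction on the lemma itself is needed; the entire content lies in matching the two maps $\bav{d}^{-1}$ and $\bt$ on the relevant piece $K_d$ of $K$. This is the exact compact-set analogue of the point-case Lemma~\ref{lem-glikec}, with $K_d$ playing the role that the restriction $\bII(d,x)$ plays there.

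First I would assume $\bG_{\bK}(K)$ and $K_d \not= \emptyset$ and unfold the definition of $\bG_{\bK}$. Among its three conjuncts this yields the implication $(\forall d' \in \bGC)\,(K_{d'} \not= \emptyset \to \bG_{\bK}(\bt[K_{d'}]))$; instantiating at $d' = d$ and using $K_d \not= \emptyset$ gives $\bG_{\bK}(\bt[K_d])$. The whole remaining task is then to relate $\bt[K_d]$ to $\bav{d}^{-1}[K_d]$. For $d = -1$ I would use that $K_{-1} = K \cap \bII_{-1} = K \cap [-1,0]$, so every $x \in K_{-1}$ satisfies $x \le 0$ and hence $\bt(x) = 1 + 2x = 2x + 1 = \bav{-1}^{-1}(x)$; consequently $\bav{-1}^{-1}[K_{-1}] = \bt[K_{-1}]$ and the conclusion $\bG_{\bK}(\bav{-1}^{-1}[K_{-1}])$ is immediate. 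For $d = 1$ I would use that $K_1 = K \cap \bII_1 = K \cap [0,1]$, so every $x \in K_1$ satisfies $x \ge 0$ and hence $\bt(x) = 1 - 2x = -(2x - 1) = -\bav{1}^{-1}(x)$; thus $\bt[K_1] = -\bav{1}^{-1}[K_1]$, i.e.\ $\bav{1}^{-1}[K_1] = -\bt[K_1]$. Combining $\bG_{\bK}(\bt[K_1])$ with Lemma~\ref{lem-gkmin} (closure of $\bG_{\bK}$ under $K \mapsto -K$) then gives $\bG_{\bK}(-\bt[K_1]) = \bG_{\bK}(\bav{1}^{-1}[K_1])$.

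I do not expect a genuine obstacle here. The only points requiring care are getting the tent-function identities right on the two half-intervals $[-1,0]$ and $[0,1]$, and noticing that the sign flip appearing in the case $d = 1$ is exactly what Lemma~\ref{lem-gkmin} is designed to absorb. Everything else is a direct reading-off of the co-inductive definition.
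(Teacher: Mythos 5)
Your proof is correct and follows essentially the same route as the paper's: a case distinction on $d$, one unfolding of the co-inductive definition to get $\bG_{\bK}(\bt[K_d])$, the tent-function identities $\bt(x) = \bav{-1}^{-1}(x)$ on $\bII_{-1}$ and $\bt(x) = -\bav{1}^{-1}(x)$ on $\bII_{1}$, and Lemma~\ref{lem-gkmin} to absorb the sign flip in the case $d = 1$. No gaps; your version just spells out the interval computations the paper leaves implicit.
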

\begin{proof}
The statement follows by case distinction on $d$. Assume that  $\bG_{\bK}(K)$. 
Then $\bG_{\bK}(\bt[K_d])$.

\begin{ncase}
$d = -1$.
\end{ncase}
This case is obvious, as for $x\in\bII_{-1}$, $\bt(x) = 2x +1 = \bav{-1}^{-1}(x)$.

\begin{ncase}
$d=1$.
\end{ncase}
For $x\in\bII_{1}$,  $\bt(x) = 1 - 2x = -\bav{1}^{-1}(x)$. 
Therefore the statement follows with Lemma~\ref{lem-gkmin}. 
\end{proof}

\begin{lem}\label{lem-mirr}
Let $K \subseteq \bII_{1}$. Then
\[
\bG_{\bK}(K) \to \bG_{\bK}((\lambda x.\, 1-x)[K]).
\]
\end{lem}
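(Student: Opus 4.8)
The plan is to prove $\bG_{\bK}(L)$ for $L \Def (\lambda x.\,1-x)[K]$ by unfolding the co-inductive definition of $\bG_{\bK}$ once: since $\bG_{\bK} = \Phi(\bG_{\bK})$ for the defining operator $\Phi$, it suffices to check that $L$ satisfies the body of Definition~\ref{def-gk} with $\bG_{\bK}$ itself substituted for the recursive occurrences. First I would record that $\bII(1,x)$ is equivalent to $0 \le x \le 1$, so the hypothesis $K \subseteq \bII_1$ gives $K \subseteq [0,1]$ and hence $L \subseteq [0,1] = \bII_1$ as well. The three obligations are then $\bG(\min L)$, $\bG(\max L)$, and, for each $d \in \bGC$, the implication $L_d \ne \emptyset \to \bG_{\bK}(\bt[L_d])$.

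For the boundary points, note that $\min L = 1 - \max K$ and $\max L = 1 - \min K$, both lying in $[0,1]$. From $\bG_{\bK}(K)$ we already have $\bG(\min K)$ and $\bG(\max K)$, and since $\min K, \max K \in [0,1]$ we have $\bII(1,\min K)$ and $\bII(1,\max K)$. Applying the point-level closure $\bII(1,y) \wedge \bG(y) \to \bG(1-y)$ --- the $\bG$-analogue of Lemma~\ref{lem-min1}, proved exactly as there from closure of $\bG$ under $x \mapsto -x$ (\cite{becsl}) together with the $\bG$-analogue of Lemma~\ref{lem-glikec} --- to $y = \max K$ and $y = \min K$ yields $\bG(\min L)$ and $\bG(\max L)$.

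For the recursion clause I would argue by cases on $d$. For $d = 1$: since $L \subseteq \bII_1$ we have $L_1 = L$, which is non-empty because $K$ is; moreover for $x \in [0,1]$ one computes $\bt(1-x) = 1 - 2(1-x) = 2x - 1 = \bav{1}^{-1}(x)$, so $\bt[L_1] = \bav{1}^{-1}[K]$. As $K_1 = K \ne \emptyset$, Lemma~\ref{lem-clavd} (with $d = 1$) gives $\bG_{\bK}(\bav{1}^{-1}[K]) = \bG_{\bK}(\bt[L_1])$. For $d = -1$: every $z \in L_{-1} = L \cap \bII_{-1}$ satisfies $0 \le z \le 0$, hence $L_{-1} \subseteq \{0\}$; assuming $L_{-1} \ne \emptyset$ we obtain $\bt[L_{-1}] = \{\bt(0)\} = \{1\}$, and $\bG_{\bK}(\{1\})$ holds because $\{\{-1\},\{1\}\} \subseteq \bG_{\bK}$ by co-induction (as already used in the proof of Lemma~\ref{lem-cpneg}). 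This discharges all obligations and gives $\bG_{\bK}(L)$.

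The main obstacle is not any single computation but keeping the degenerate boundary behaviour under control: the reflection pushes $K$ into the far right cell, so the genuinely recursive step occurs only for $d = 1$ (where it reduces cleanly to Lemma~\ref{lem-clavd}), while the $d = -1$ cell can only ever contain the single point $0$ and must be recognised as leading to the constant set $\{1\}$. The one external ingredient that must be in place is the non-concurrent point-level lemma $\bII(1,y) \wedge \bG(y) \to \bG(1-y)$; once that is granted (it is the straightforward $\bG$-version of Lemma~\ref{lem-min1}), the compact-set statement follows by a single unfolding rather than a fresh co-induction, because both recursive targets $\bt[L_1]$ and $\bt[L_{-1}]$ are already known to lie in $\bG_{\bK}$.
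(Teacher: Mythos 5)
Your proof is correct and follows essentially the same route as the paper's: both arguments reduce the claim to the point-level closure of $\bG$ under $x \mapsto 1-x$ on $\bII_{1}$ (which the paper simply cites as \cite[Lemma~10]{becsl}) applied to $\min K$ and $\max K$, together with the identity $\bt(1-x) = 2x-1 = \bav{1}^{-1}(x)$ on $\bII_{1}$ and Lemma~\ref{lem-clavd} for the recursive clause. You are in fact slightly more careful than the paper, which only establishes $\bG_{\bK}(\bt[L_{1}])$ for $L = (\lambda x.\,1-x)[K]$: your explicit treatment of the degenerate cell $L_{-1} \subseteq \{0\}$, discharged via $\bG_{\bK}(\{1\})$ by a separate co-induction, covers an edge case ($1 \in K$) that the paper's proof silently passes over.
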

\begin{proof}
The statement follows by co-induction. Note to this end that $(\lambda x.\, 1-x)[K] \subseteq \bII_{1}$ as well. Moreover, $\bmin ((\lambda x.\, 1-x)[K]) = 1- \bmax K$ and $\bmax ((\lambda x.\, 1-x)[K]) = 1-\bmin K$. Now assume that $\bG_{\bK}(K)$. Then $\bG(\bmin K)$ and $\bG(\bmax K)$. By \cite[Lemma~10]{becsl} we have for $x \in \bII_{1}$ with $\bG(x)$ that also $\bG(1-x)$. Thus, we obtain $\bG(\bmin (\lambda x.\, 1-x)[K])$ and $\bG(\bmax (\lambda x.\, 1-x)[K])$. Since for $x \in \bII_{1}$, $\bt(1-x) = 2x -1 = \bav{1}^{-1}(x)$, it follows with Lemma~\ref{lem-clavd} that \mytextcolor{red}{$\bG_{\bK}(\bt[(\lambda x.\, 1-x)[K]])$}.
\end{proof}

\begin{lem}\label{lem-II0}
Let $K \subseteq \bII$. Then
\[
\bG_{\bK}(1/2 K) \to \bG_{\bK}(K).
\]
\end{lem}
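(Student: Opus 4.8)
The plan is to avoid setting up a fresh co-induction and instead to verify directly that $K$ satisfies the three defining clauses of $\bG_{\bK}$ with every recursive occurrence already landing in $\bG_{\bK}$; since $\bG_{\bK}$ is a fixed point of its defining operator, establishing the body of the co-inductive definition (with the recursive $\bG_{\bK}$'s read as $\bG_{\bK}$ itself) already yields $K\in\bG_{\bK}$. Equivalently one may phrase this as strong co-induction (Lemma~\ref{lem-strong}) with $R\Def\set{K}{\bG_{\bK}(\tfrac12 K)}$, the point being that the recursive calls below bottom out at the first level. Throughout I assume $K\subseteq\bII$ is nonempty compact, so that $\tfrac12 K\subseteq[-1/2,1/2]$ and $\bG_{\bK}(\tfrac12 K)$ may be unfolded once.

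First I would record the point-level fact $\bG(y)\wedge|y|\le 1/2\to\bG(2y)$, which is the $\bG$-analogue of Lemma~\ref{lem-ii0} and is available (resp.\ proved the same way): unfolding $\bG(y)$ gives $\bD(y)$ and $\bG(\bt(y))$, and a second unfolding gives $\bG(\bt(\bt(y)))$. Since $|y|\le 1/2$ one computes $\bt(\bt(y))=-\bt(2y)$, so closure of $\bG$ under negation (\cite[Lemma~7]{becsl}) yields $\bG(\bt(2y))$, whence $\bG(2y)$ by folding, using $-1\le 2y\le 1$ and $\bD(2y)$ (which follows from $\bD(y)$ since the sign condition is scale-invariant). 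Applying this to $y=\tfrac12\min K$ and $y=\tfrac12\max K$, both of which lie in $[-1/2,1/2]$, and using $\min\tfrac12 K=\tfrac12\min K$, $\max\tfrac12 K=\tfrac12\max K$ together with the min/max clauses of $\bG_{\bK}(\tfrac12 K)$, gives $\bG(\min K)$ and $\bG(\max K)$.

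The crux is the recursive clause: for $d\in\bGC$ with $K_d\neq\emptyset$, derive $\bG_{\bK}(\bt[K_d])$. Here I would first note the bookkeeping identity $(\tfrac12 K)_d=\tfrac12 K_d$, valid because $\tfrac12 K\subseteq[-1/2,1/2]$, so $K_d\neq\emptyset$ makes the $d$-clause of $\bG_{\bK}(\tfrac12 K)$ fire and delivers $\bG_{\bK}(M_d)$ with $M_d\Def\bt[(\tfrac12 K)_d]$. A short tent-map computation gives $M_{-1}=\set{1+y}{y\eps K_{-1}}$ and $M_1=\set{1-y}{y\eps K_1}$; in both cases $M_d\subseteq\bII_1$, and $M_d=\bav{1}[\bt[K_d]]$, equivalently $\bt[K_d]=\bav{1}^{-1}[M_d]$ (check $\bav{1}^{-1}(1\pm y)=2(1\pm y)-1=1\pm 2y$, matching $\bt[K_{-1}]=\set{1+2y}{y\eps K_{-1}}$ and $\bt[K_1]=\set{1-2y}{y\eps K_1}$). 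Since $M_d=(M_d)_1$ is nonempty, Lemma~\ref{lem-clavd} taken with digit $1$ applied to $\bG_{\bK}(M_d)$ gives $\bG_{\bK}(\bav{1}^{-1}[M_d])=\bG_{\bK}(\bt[K_d])$, as required.

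Combining the three verifications shows that $K$ satisfies the body of Definition~\ref{def-gk} with recursive occurrences already in $\bG_{\bK}$, hence $K\in\bG_{\bK}$. The main obstacle is entirely the arithmetic bookkeeping of the third step — correctly matching $\bt[K_d]$ with $\bav{1}^{-1}$ of the tent-image $M_d$ of the halved slice, and confirming $M_d\subseteq\bII_1$ so that Lemma~\ref{lem-clavd} is applicable — rather than any delicate (co)induction, which is pleasantly unnecessary here. One should also double-check the implicit well-formedness (that $K$ and $\tfrac12 K$ are nonempty compact, so that $\min$, $\max$ and hence $\bG_{\bK}$ are defined on them).
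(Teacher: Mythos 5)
Your proof is correct, and its skeleton coincides with the paper's: both verify the three clauses of Definition~\ref{def-gk} for $K$ with all recursive occurrences already landing in $\bG_{\bK}$ and then fold (the paper phrases this as co-induction), and both settle the $\min$/$\max$ clauses by closure of $\bG$ under $\lambda x.\,2x$ on $[-1/2,1/2]$ --- the paper simply cites \cite[Lemma~11]{becsl}, while you re-derive it inline via $\bt(\bt(y))=-\bt(2y)$ and negation closure, which is a sound reproduction of the same fact. The genuine difference is in the recursive clause. The paper, starting from the same point $\bG_{\bK}(\bt[1/2\,K_d])=\bG_{\bK}((\lambda x.\,1-|x|)[K_d])$, applies the reflection Lemma~\ref{lem-mirr} to get $\bG_{\bK}(|K_d|)$ and then invokes the tent clause of the definition (with digit $1$, since $|K_d|\subseteq\bII_1$) to reach $\bG_{\bK}(\bt[|K_d|])=\bG_{\bK}(\bt[K_d])$. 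You instead verify the identity $\bt[K_d]=\bav{1}^{-1}[M_d]$ for $M_d=\bt[(1/2\,K)_d]\subseteq\bII_1$ and apply Lemma~\ref{lem-clavd} once with digit $1$; your arithmetic checks out. The two conversions are close relatives (on $\bII_1$ one has $\bav{1}^{-1}(x)=-\bt(x)$, and Lemma~\ref{lem-clavd} is itself the tent clause combined with negation closure), but your route bypasses Lemma~\ref{lem-mirr} entirely --- a lemma the paper establishes by a separate co-induction essentially for this purpose --- at the price of the explicit bookkeeping you flag. So: same decomposition and same point-level input, with a slightly leaner choice of auxiliary lemma for the crux step.
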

\begin{proof}
The statement follows again by co-induction. Assume that $\bG_{\bK}(1/2 K)$. Then $\bG(\bmin K / 2)$ and $\bG(\bmax K / 2)$. Since by \cite[Lemma~11]{becsl} $\bG$ is closed under $\lambda x.\, 2x$ for $|x|\le 1/2$, it follows that $\bG(\bmin K)$ and $\bG(\bmax K)$. 

As a further consequence of our assumption we have for $d \in \bGC$ with $K_{d}\ne \emptyset$ that $\bG_{\bK}(\bt[1/2 K_{d}])$. Because $\bt(x/2) = 1-|x|$, we obtain that $\bG_{\bK}((\lambda x.\, 1-|x|)[K_{d}])$. Hence, $\bG_{\bK}(|K_{d}|)$, by Lemma~\ref{lem-mirr}, and therefore $\bG_{\bK}(\bt[K_{d}])$. 
\end{proof}

Set
\begin{align*}
B^{\mathrm{min}}_{0} &\Def \bmin K \ne 0, 
&B^{\mathrm{min}}_{1} &\Def \bt(\bmin K) \ne 0, \\
B^{\mathrm{max}}_{0} &\Def \bmax K \ne 0, 
&B^{\mathrm{max}}_{1} &\Def \bt(\bmax K) \ne 0.
\end{align*}
Then
\[
\neg \neg(B^{\mathrm{min}}_{0}  \lor B^{\mathrm{min}}_{1})
\]
and 
\[\neg\neg(B^{\mathrm{max}}_{0}  \lor B^{\mathrm{max}}_{1}).
\]
It follows for 
\[
C_{i,j} \Def B^{\mathrm{min}}_{i} \land B^{\mathrm{max}}_{j}
\]
with $i, j \in \{ 0, 1 \}$ that
\begin{equation}\label{eq-negnegC}
\neg\neg(\bigvee_{0 \le i,j \le 1} C_{i,j}).
\end{equation}
Moreover, all $C_{i,j}$ are Harrop.

Now, let
\[
A(K) \Def (\exists E \in \bP_{\fin}(\bSD))\, 
(K \subseteq \bII_E \land
(\forall d \in E)\, K_{d} \ne \emptyset).
\]

\begin{lem}\label{lem-GK-to-ddownAKn}
$\bG_{\bK}(K) \to \itdown(A(K))$.
\end{lem}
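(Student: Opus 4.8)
The plan is to establish the restricted statement $\rst{A(K)}{C_{i,j}}$ for each of the four pairs $i,j\in\{0,1\}$ separately and then to glue these together using the double negation~(\ref{eq-negnegC}). First I would observe that $A(K)$ is productive: the only implications occurring in it, namely $K_{d}\ne\emptyset$ (that is $\neg(K_{d}=\emptyset)$) and the membership $E\in\bP_{\fin}(\bSD)$, sit inside Harrop formulas or a disjunction, so every restriction $\rst{A(K)}{\cdots}$ is well formed. Granting the four restrictions, Rule~(\emph{$\itdown$-$\|$-$\lor$}) yields $\rst{\itdown(A(K))}{(\bigvee_{i,j}C_{i,j})}$, Rule~(\emph{$\|$-stab}) strengthens this to $\rst{\itdown(A(K))}{\neg\neg(\bigvee_{i,j}C_{i,j})}$, and since $\neg\neg(\bigvee_{i,j}C_{i,j})$ holds by~(\ref{eq-negnegC}), Rule~(\emph{$\|$-mp}) delivers $\itdown(A(K))$, as required.

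It then remains to prove each $\rst{A(K)}{C_{i,j}}$. Unfolding the co-inductive definition of $\bG_{\bK}(K)$ gives $\bG(\min K)$ and $\bG(\max K)$, and a single unfolding of $\bG$ further yields $\bD(\min K)$, $\bD(\max K)$, $\bD(\bt(\min K))$ and $\bD(\bt(\max K))$. Each such fact has the form $y\ne0\to(y\le0\lor y\ge0)$, so I would feed it to Rule~(\emph{$\|$-intro}) (its second premise $y=0\to(y\le0\land y\ge0)$ being trivially valid) to obtain a restricted decision: from $\bD(\min K)$ the sign decision $\rst{(\min K\le0\lor\min K\ge0)}{B^{\mathrm{min}}_{0}}$, and from $\bD(\bt(\min K))$ the magnitude decision $\rst{(\bt(\min K)\le0\lor\bt(\min K)\ge0)}{B^{\mathrm{min}}_{1}}$, i.e.\ a comparison of $|\min K|$ against $1/2$; analogously for $\max K$. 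In proving $\rst{A(K)}{C_{i,j}}$ I would use the min-decision indexed by $i$ and the max-decision indexed by $j$, lifted from $B^{\mathrm{min}}_{i}$ and $B^{\mathrm{max}}_{j}$ to $C_{i,j}$ by Rule~(\emph{$\|$-antimon}), and chain them by Rule~(\emph{$\|$-bind}). Whenever a branch forces an endpoint to be nonzero -- directly (when $i=0$ or $j=0$) or because it has established $|\min K|>1/2$ or $|\max K|>1/2$ -- a further application of Rule~(\emph{$\|$-intro}) via $\bD$ recovers its sign. At every leaf of this finite decision tree the accumulated inequalities locate $\min K$ and $\max K$ precisely enough to name an explicit $E\in\bP_{\fin}(\bSD)$ with $K\subseteq\bII_{E}$ and $K_{d}\ne\emptyset$ for all $d\in E$, so that $A(K)$ holds and is lifted by Rule~(\emph{$\|$-return}). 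Concretely one picks digits $d_{\min},d_{\max}$ with $\min K\in\bII_{d_{\min}}$, $\max K\in\bII_{d_{\max}}$ such that $\bII_{d_{\min}}\cup\bII_{d_{\max}}$ already contains $[\min K,\max K]$, and sets $E=\{d_{\min},d_{\max}\}$ (or the corresponding singleton); then $K_{d_{\min}}\ni\min K$ and $K_{d_{\max}}\ni\max K$ are nonempty and $K\subseteq[\min K,\max K]\subseteq\bII_{E}$.

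The main obstacle is the leaf verification: checking, for each branch of each $C_{i,j}$, that the chosen $E$ genuinely covers $K$ while keeping every $K_{d}$ nonempty. The delicate point is $C_{1,1}$, where initially only the magnitudes $|\min K|,|\max K|$ relative to $1/2$ are decided, so that in the sub-branch where both exceed $1/2$ one must additionally recover the signs of the (now provably nonzero) endpoints through a secondary $\bD$-decision before a covering set can be named. All of this is ordinary decidable case analysis on the three breakpoints $-1/2,0,1/2$; crucially, no concurrency is needed \emph{inside} a single $C_{i,j}$, the modality $\itdown$ entering only at the final gluing step, which reflects that we cannot decide which $C_{i,j}$ holds but may run the four restricted computations concurrently.
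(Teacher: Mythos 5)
Your proposal is correct and takes essentially the same route as the paper's own proof: the same gluing step reducing the goal to the four restrictions $\rst{A(K)}{C_{i,j}}$ via (\ref{eq-negnegC}) and Rules~(\emph{$\itdown$-$\|$-$\lor$}), (\emph{$\|$-stab}), (\emph{$\|$-mp}); the same restricted sign and magnitude decisions extracted from $\bG(\min K)$, $\bG(\max K)$ and their $\bt$-images; and the same finite case analysis naming an explicit $E$, including the secondary sign recovery (via (\emph{$\|$-mp}) on the nonzero endpoint) in the branches where only magnitudes are decided. The remaining differences are cosmetic only: you chain decisions with (\emph{$\|$-bind}) where the paper combines them with (\emph{$\|$-$\land$}) and (\emph{$\|$-mon}), and you give a uniform two-digit recipe for $E$ where the paper enumerates the choices case by case.
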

\begin{proof}
Assume $\bG_{\bK}(K)$. Because of (\ref{eq-negnegC}) and the Rules~(\emph{$\itdown$-$\rest$-$\lor$}), (\emph{$\rest$-stab}), and (\emph{$\rest$-mp}), 
it suffices to show
\begin{enumerate}
\item\label{lem-GK-to-ddownAK-1}  $\rt{C_{0,0}}{A(K)}$,

\item\label{lem-GK-to-ddownAK-2}  $\rt{C_{0,1}}{A(K)}$,

\item\label{lem-GK-to-ddownAK-3}  $\rt{C_{1,0}}{A(K)}$,

\item\label{lem-GK-to-ddownAK-4}  $\rt{C_{1,1}}{A(K)}$.
\end{enumerate}

The assumption $\bG_{\bK}(K)$ entails that $\bG(\bmin K)$ and $\bG(\bmax K)$. 
Hence, we have for $x \in \{ \bmin K, \bmax K \}$ that 
\begin{equation}\label{eq-star1}
\qquad\rt{x \ne 0}{(x \ge 0 \lor x \le 0)}.
\end{equation}
Since from $\bG(x)$ we obtain that also $\bG(\bt(x))$, it follows in the same way that 
\begin{equation}\label{eq-star2}
\qquad\rt{\bt(x) \ne 0}{(\bt(x) \ge 0 \lor \bt(x) \le 0)}.
\end{equation}
Note that
\begin{gather*}
\bmin K \ge 0 \leftrightarrow K \ge 0, \\
\bmin K \le 0 \leftrightarrow K_{-1} \ne \emptyset, \\
\bmax K \le 0 \leftrightarrow K \le 0, \\
\bmax K \ge 0 \leftrightarrow K_1 \ne \emptyset.
\end{gather*}

(\ref{lem-GK-to-ddownAK-1}) 
Observe that $C_{0,0}$ is the formula $(\bmin K \ne 0\land\bmax K \ne 0)$.
We use (\ref{eq-star1}) for $x=\bmin K$ and $x=\bmax K$. 
With Rules~($\rest$-$\land$), ($\rest$-mon), and ($\rest$-antimon) we then obtain
\[\rt{\bmin K \ne 0 \land \bmax K \ne 0}{((\bmin K \ge 0 \lor \bmin K \le 0)\land (\bmax K \ge 0 \lor \bmax K \le 0))}\]
which is equivalent to 
\[\rt{C_{0,0}}{(\bmin K \ge 0 \lor (\bmin K \le 0 \land \bmax K \ge 0) \lor \bmax K \le 0)}\]
and, by the above equivalences, to
\[\rt{C_{0,0}}{(K\ge 0 \lor (K_{-1}\ne\emptyset \land K_{1}\ne\emptyset) \lor K \le 0)}.\]
Since the formula 
$(K\ge 0 \lor (K_{-1}\ne\emptyset \land K_{1}\ne\emptyset) \lor K \le 0)$
clearly implies $A(K)$, we are done by Rule~($\rest$-mon).

(\ref{lem-GK-to-ddownAK-2}) 
$C_{0,1}$ is the formula $(\bmin K \ne 0\land\bt(\bmax K) \ne 0)$.
We use (\ref{eq-star1}) for $x=\bmin K$ and (\ref{eq-star2}) for $x=\bmax K$. 
With a similar argument as in the previous case we receive
\[\rt{C_{0,1}}{((\bmin K \ge 0 \lor \bmin K \le 0)\land (\bt(\bmax K)\ge 0 \lor \bt(\bmax K)\le 0))}.
\]
Therefore, it suffices to show that $A(K)$ is implied by the formula
\[(\bmin K \ge 0 \lor \bmin K \le 0)\land (\bt(\bmax K)\ge 0 \lor \bt(\bmax K)\le 0)).\]
The latter is equivalent to
\[(K \ge 0 \lor K_{-1}\ne\emptyset) \land (|\bmax K| \le 1/2 \lor |\bmax K|\ge 1/2). \]

If $K \ge 0$, we choose $E\Def\{1\}$. 

If $K_{-1}\ne\emptyset$ and $|\bmax K| \le 1/2$ we chose $E\Def\{-1,0\}$.

If $K_{-1}\ne\emptyset$ and $|\bmax K|\ge 1/2$ we
have $\bmax K \ne 0$ and can therefore use (\ref{eq-star1}) and Rule~(\emph{$\rest$-mp})
to get $\bmax K \ge 0 \lor \bmax K \le 0$, that is, $K_1 \ne \emptyset \lor K \le 0$.

If $K_1 \ne \emptyset$, we choose $E\Def\{-1,1\}$. If $K\le 0$, we choose $E\Def\{-1\}$.

(\ref{lem-GK-to-ddownAK-3})  is dual to (\ref{lem-GK-to-ddownAK-2}).  

(\ref{lem-GK-to-ddownAK-4}) 
$C_{1,1}$ is the formula $(\bt(\bmin K) \ne 0 \land\bt(\bmax K) \ne 0)$.
Using (\ref{eq-star2}) for $x=\bmin K$ and $x=\bmax K$ we obtain
\[\rt{C_{1,1}}{((\bt(\bmin K) \ge 0 \lor \bt(\bmin K) \le 0)\land (\bt(\bmax K) \ge 0 \lor \bt(\bmax K) \le 0))}.\]
Therefore, it suffices to show that $A(K)$ is implied by the formula
\[(\bt(\bmin K) \ge 0 \lor \bt(\bmin K) \le 0)\land (\bt(\bmax K) \ge 0 \lor \bt(\bmax K) \le 0).\]
The latter is equivalent to
\[(|\bmin K|\le 1/2 \lor |\bmin K| \ge 1/2)\land (|\bmax K|\le 1/2 \lor |\bmax K| \ge 1/2).\] 

If $|\bmin K|\le 1/2$ and $|\bmax K|\le 1/2$, we choose $E\Def\{0\}$.

If $|\bmin K|\le 1/2$ and $|\bmax K|\ge 1/2$, then $\bmax K\ge 1/2$, hence we choose $E\Def\{0,1\}$.

If $|\bmin K|\ge 1/2$ and $|\bmax K|\le 1/2$, then $\bmin K\le -1/2$, hence we choose $E\Def\{-1,0\}$.

If $|\bmin K|\ge 1/2$ and $|\bmax K|\ge 1/2$, then, by (\ref{eq-star2}),
$(\bmin K \ge 0 \lor \bmin K \le 0)\land (\bmax K \ge 0 \lor \bmax K \le 0)$, which, as in 
case (\ref{lem-GK-to-ddownAK-1}), implies $A(K)$. 
\end{proof}

The above statements now allow the derivation of the result we are looking for.

\begin{prop}\label{prop-comgtoc}
\[
\bG_{\bK} \subseteq \bS_{\bK}^{*}.
\]
\end{prop}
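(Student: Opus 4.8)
The plan is to imitate the proof of the point-case inclusion $\bG^{*}\subseteq\bS^{*}$ (Proposition~\ref{prop-gtoc}). Since $\bS^{*}_{\bK}$ is a greatest fixed point, it suffices by co-induction to show that $\bG_{\bK}$ is a post-fixed point of the defining operator, i.e.\ that $\bG_{\bK}(K)$ implies $\bK(K)$ together with
\[
\itdown\bigl((\exists E \in \bP_{\fin}(\bSD))\, K \subseteq \bII_E \land (\forall d \eps E)\,(K_d \ne \emptyset \land \bG_{\bK}(\av{d}^{-1}[K_d]))\bigr).
\]
Compactness $\bK(K)$ is immediate (it is part of the standing assumption that the sets considered are non-empty compact). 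For the $\itdown$-conjunct, Lemma~\ref{lem-GK-to-ddownAKn} already delivers $\itdown(A(K))$, where $A(K)$ is the weaker matrix $(\exists E \in \bP_{\fin}(\bSD))\, K \subseteq \bII_E \land (\forall d \eps E)\, K_d \ne \emptyset$. So the task reduces to inserting the extra conjunct $\bG_{\bK}(\av{d}^{-1}[K_d])$ under the quantifier and pushing this through $\itdown$ by Rule~(\emph{$\itdown$-mon}).

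The insertion is carried by the three \emph{digit lemmas}: for each $d \in \bSD$,
\[
\bG_{\bK}(K) \to K_d \ne \emptyset \to \bG_{\bK}(\av{d}^{-1}[K_d]).
\]
For $d = -1$ we have $\av{-1}^{-1}[K_{-1}] = \bt[K_{-1}]$, so this is the tent-conjunct of $\bG_{\bK}(K)$ itself; for $d = 1$ we have $\av{1}^{-1}[K_1] = -\bt[K_1]$, so it follows from the tent-conjunct and Lemma~\ref{lem-gkmin}. Both are packaged as Lemma~\ref{lem-clavd}. Since $\bG_{\bK}(K)$ is a fixed hypothesis independent of the concurrently chosen cover $E$, the implication from $A(K)$ to the strengthened matrix then holds for the \emph{same} witness $E$, the bounded quantifier $(\forall d \eps E)$ being treated with the decidability of membership from Lemma~\ref{lem-findec}. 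Applying Rule~(\emph{$\itdown$-mon}) to $\itdown(A(K))$ produces the desired formula and closes the co-induction.

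The main obstacle is the remaining digit lemma, the case $d = 0$, which asks for $\bG_{\bK}(\av{0}^{-1}[K_0]) = \bG_{\bK}(2K_0)$ with $K_0 = K \cap \bII_0$. This is genuinely harder than the others: the recursion of $\bG_{\bK}$ branches only over $\bGC = \{-1,1\}$, so the middle digit is not serviced by the coinductive unfolding, and there is no effortless point-case analogue, since in Lemma~\ref{lem-ii0} the single argument lies \emph{entirely} inside $\bII_0$ whereas here $K_0$ is usually a proper part of $K$. I would prove it by a single co-closure step of $\bG_{\bK}$ applied to $2K_0$. The two tent-pieces $\bG_{\bK}(\bt[(2K_0)_{\pm1}])$ are available from $\bG_{\bK}(K)$, because $\bt[(2K_0)_{-1}]$ and $\bt[(2K_0)_{1}]$ are, via $\av{1}^{-1}$, exactly the $\bII_1$-parts of $\bt[K_{-1}]$ and $\bt[K_1]$, to which Lemma~\ref{lem-clavd} applies; if one prefers to isolate the scaling, Lemma~\ref{lem-II0} can be used to reduce $\bG_{\bK}(2K_0)$ to $\bG_{\bK}(K_0)$ first.

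The genuinely delicate point, and the step I expect to require the most care, is the pair of boundary conjuncts $\bG(\min 2K_0)$ and $\bG(\max 2K_0)$. Since $\min K_0$ and $\max K_0$ are in general interior points of $K$, their Gray codes are not among the outer conjuncts $\bG(\min K)$, $\bG(\max K)$ of $\bG_{\bK}(K)$; they have to be recovered from the min/max codes of the tent-pieces above by ``un-tenting'', i.e.\ using the point-level closure of $\bG$ under the Gray digit maps $g_{\pm1}$ and under doubling on $|x|\le 1/2$ (as established in \cite{becsl}). Carrying this out uniformly over the four sign-configurations of $(\min K_0,\max K_0)$ — including the degenerate cases where one of the two $\bII_{\pm1}$-halves of $2K_0$ is empty — is the technical heart of the argument; once the boundary codes are secured, the co-closure step yields $\bG_{\bK}(2K_0)$ and hence the last digit lemma.
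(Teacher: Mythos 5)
Your proof skeleton is the same as the paper's: co-induction, Lemma~\ref{lem-GK-to-ddownAKn} to obtain $\itdown(A(K))$, Lemma~\ref{lem-clavd} for the digits $\pm 1$, and Rule~(\emph{$\itdown$-mon}) to strengthen the matrix inside the modality. Your identification of the digit $d=0$ as the crux is also accurate, and in fact sharper than the paper's own treatment, which disposes of this case by citing Lemma~\ref{lem-II0}; as you observe, that lemma only reduces $\bG_{\bK}(2K_0)$ to $\bG_{\bK}(K_0)$, and $\bG_{\bK}(K_0)$ is available nowhere. Your computation of the tent-pieces $\bt[(2K_0)_{\pm 1}]$ via a double application of Lemma~\ref{lem-clavd} is correct as well.

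The gap is in the step you yourself flag as the technical heart, and it is not a matter of care but a genuine constructive obstruction: the case analysis ``over the four sign-configurations of $(\min K_0,\max K_0)$'' is not admissible, because in this setting a case split carries computational content, and that content cannot be extracted from a realiser of $\bG_{\bK}(K)$. Indeed the digit lemma you are trying to prove,
\[
\bG_{\bK}(K) \to K_0 \ne \emptyset \to \bG_{\bK}(2K_0),
\]
is not realisable at all. Take $K = \{-1/2,\,1/4\}$ and $K' = \{-1/2-\delta,\,1/4\}$. Then $2K_0 = \{-1,\,1/2\}$ but $2K'_0 = \{1/2\}$, so a realiser of the conclusion must contain a realiser of $\bD(\min 2K_0)$, which (both minima being nonzero) is the defined token $\bleft(\bnil)$ for $K$ and $\bright(\bnil)$ for $K'$. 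However, a realiser of $\bG_{\bK}(K)$ can be chosen with $\bot$ precisely in the cells whose values would distinguish it from a realiser of $\bG_{\bK}(K')$ (the cells coding the sign of $\bt(-1/2)=0$, together with a suitable choice of the unconstrained components guarded by false non-emptiness hypotheses); then every compact part of it lies below a realiser of $\bG_{\bK}(K')$ for $\delta$ small enough. Since an extracted realiser of the implication is a continuous function, it would have to emit the same sign token on both inputs --- a contradiction. Put differently, whether $K$ meets $[-1/2,0)$, which is exactly what the sign of $\min K_0$ records, is not observable from a Gray-coded tree for $K$; this is the same phenomenon that makes Gray-to-signed-digit conversion inherently concurrent, and no amount of un-tenting circumvents it.

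Consequently the two-stage plan --- first fix the cover, then strengthen by a plain implication under (\emph{$\itdown$-mon}) --- cannot be completed (this criticism applies equally to the paper's own one-line argument). A repair has to move the difficulty into the concurrent stage: the case analysis producing the cover (Lemma~\ref{lem-GK-to-ddownAKn}) must be refined so that a cover containing $0$ is only ever emitted together with additional Harrop side information, such as $K \cap [-1/2,0] \ne \emptyset$ or $K \cap [0,1/2] \ne \emptyset$, under which the boundary codes of $2K_0$ do become constructible (for instance, under $K \cap [-1/2,0] \ne \emptyset$ the $\bD$-bit of $\min 2K_0$ may safely be output as ``negative'', and the remainder of its code is supplied by your tent-piece construction). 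In other words, the choice of the cover and the construction of the subtrees must be made simultaneously inside $\itdown$; they cannot be decoupled.
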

\begin{proof}
The statement follows by co-induction. We have to show that
\[
\bG_{\bK}(K) \rightarrow \itdown((\exists E \in \bP_{\fin}(\bSD))\, 
(K \subseteq \bII_E \land
(\forall d \in E)\, (K_d \not= \emptyset \land \bG_{\bK}(\bav{d}^{-1}[K_d])))).
\]
From Lemmas~\ref{lem-clavd} and \ref{lem-II0} it follows
\begin{multline*}
\bG_{\bK}(K) \rightarrow ((\exists E \in \bP_{\fin}(\bSD))\, 
(K \subseteq \bII_E \land
(\forall d \in E)\, K_d \not= \emptyset)  \\
\rightarrow 
(\exists E \in \bP_{\fin}(\bSD))\, 
(K \subseteq \bII_E \land
(\forall d \in E)\, (K_d \not= \emptyset \wedge  \bG_{\bK}(\bav{d}^{-1}[K_d])))).
\end{multline*}
By the monotonicity of $\itdown$ we thus obtain
\begin{multline*}
\bG_{\bK}(K) \rightarrow (\itdown((\exists E \in \bP_{\fin}(\bSD))\, 
(K \subseteq \bII_E \land
(\forall d \in E)\, K_d \not= \emptyset)) \\
\rightarrow 
\itdown((\exists E \in \bP_{\fin}(\bSD))\, 
(K \subseteq \bII_E \land
(\forall d \in E)\, (K_d \not= \emptyset \wedge \bG_{\bK}(\bav{d}^{-1}[K_d]))))),
\end{multline*}
where, the assumption 
$\itdown((\exists E \in \bP_{\fin}(\bSD))\, 
(K \subseteq \bII_E \land
(\forall d \in E)\, K_d \not= \emptyset))$ 
can be discharged by Lemma~\ref{lem-GK-to-ddownAKn}.
\end{proof}

The result we have obtained so far is  analogous to the number case.
\begin{thm}\label{thm-sdtogtoconcsd}
$\bS_{\bK} \subseteq \bG_{\bK} \subseteq \bS_{\bK}^{*}$.
\end{thm}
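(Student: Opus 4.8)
The plan is to observe that the statement is the exact compact-set counterpart of Theorem~\ref{thm-StoGtoS2}, and that both of its inclusions have already been established as separate results earlier in this section. Concretely, I would read off $\bS_{\bK} \subseteq \bG_{\bK}$ from Proposition~\ref{prop-cpsdgc} and $\bG_{\bK} \subseteq \bS_{\bK}^{*}$ from Proposition~\ref{prop-comgtoc}, and then chain them by transitivity of set inclusion. Thus the theorem itself requires essentially no new argument; it is a bookkeeping statement that assembles the two halves of the development into the desired chain.

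It is worth recalling where the actual work sits, since that indicates where the difficulty really lay. The first inclusion $\bS_{\bK}\subseteq\bG_{\bK}$ is proved by co-induction on the definition of $\bG_{\bK}$, feeding in the closure properties of $\bS_{\bK}$ collected in Lemma~\ref{lem-cpneg} (closure under negation, and the behaviour of $\min$, $\max$ and the tent map on $\bS_{\bK}$) together with the point-wise inclusion $\bS\subseteq\bG$ supplied by Theorem~\ref{thm-StoGtoS2}. The second inclusion $\bG_{\bK}\subseteq\bS_{\bK}^{*}$ is again proved by co-induction, but now genuinely needs the concurrency modality: Lemmas~\ref{lem-clavd} and~\ref{lem-II0} provide the one-step transitions under $\bav{d}^{-1}$, while the crucial productivity step---that from $\bG_{\bK}(K)$ one can concurrently determine a finite digit set covering $K$---is exactly Lemma~\ref{lem-GK-to-ddownAKn}, whose assumption is then discharged inside the co-inductive step.

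Accordingly, the only thing left to do in the proof of the theorem is to state that the two propositions combine. The main obstacle is therefore not in the theorem but was already met and overcome in Proposition~\ref{prop-comgtoc}, where---just as in the point case---one cannot translate Gray code into signed digits purely sequentially and must pass through $\itdown$ to detect, concurrently, which digits are admissible. From the program-extraction viewpoint the realiser of the full chain is the composition of the translation extracted from Proposition~\ref{prop-cpsdgc} (digital trees based on signed digits into Gray-code trees) with the concurrent translation extracted from Proposition~\ref{prop-comgtoc}, but for the statement as formulated no such composition need be exhibited.
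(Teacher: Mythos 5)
Your proposal is correct and matches the paper exactly: the theorem is stated there without a separate proof, being simply the concatenation by transitivity of Proposition~\ref{prop-cpsdgc} ($\bS_{\bK} \subseteq \bG_{\bK}$) and Proposition~\ref{prop-comgtoc} ($\bG_{\bK} \subseteq \bS_{\bK}^{*}$). Your account of where the real work lies (Lemma~\ref{lem-cpneg} with Theorem~\ref{thm-StoGtoS2} for the first inclusion; Lemmas~\ref{lem-clavd}, \ref{lem-II0} and \ref{lem-GK-to-ddownAKn} inside the co-induction for the second) is also an accurate description of the paper's development.
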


\begin{rem} 
\mytextcolor{green}
{
The definition of $\bG_{\bK}(K)$ (Definition~\ref{def-gk}) 
can be simplified to
\begin{gather*}
\bG_{\bK}(K) = \bK(K) \land  \bG(\bmin K) \land \bG'_{\bK}(K) \\
 \intertext{where}
\bG'_{\bK}(K) \overset{\nu}{=} \bG(\bmax K) \land 
 (\forall d \in  \bGC)\, (K_{d} \ne \emptyset \to  \bG'_{\bK}(\bt[K_{d}])).
\end{gather*}
This is equivalent to \ref{def-gk} since $\bG$ is closed under the function 
$\bt$ and for $d\in \bGC$ with $K_{d} \ne \emptyset$,
$\bK(K)$ implies $\bK(\bt[K_{d}])$, 
and if $d=-1$, then $\bmin \bt[K_{d}] = \bt(\bmin K)$ while 
for $d=1$, $\bmin \bt[K_{d}] = \bt(\bmax K)$.
Although the new definition looks more complicated, it leads to simpler 
realisers since in each recursion step it refers to $\bG$ only once.
The definition of $\bG_{\bK}^{*}(K)$ in the subsequent Section~\ref{sec-concgray}  can be
simplified in a similar way.
}
\end{rem}

\section{Concurrent Gray code for non-empty compact sets}\label{sec-concgray}

Next, set
\[
\bG_{\bK}^{*}(K) \overset{\nu}{=} \bK(K) \land \bG^{*}(\bmin K) \land \bG^{*}(\bmax K) \land (\forall d \in \bGC)\, (K_d \ne \emptyset \rightarrow \bG_{\bK}^{*}(\bt[K_d])).
\]

Our next and final goal is to show that $\bS_{\bK}^{*} = \bG_{\bK}^{*}$.

\begin{lem}\label{lem-smin}
If $\bS_{\bK}^{*}(K)$ then also 
\begin{enumerate}
\item\label{lem-smin-1} $\bS_{\bK}^{*}(-K)$.
\item\label{lem-smin-2} $\bS^{*}(\bmin K)$.
\item\label{lem-smin-2.5} $\bS^{*}(\bmax K)$.
\item\label{lem-smin-3} $(\forall d \in \bGC) (K_d \ne \emptyset \to \bS_{\bK}^{*}(\bt[K_d]))$.
\end{enumerate}
\end{lem}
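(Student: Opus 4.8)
The four statements of Lemma~\ref{lem-smin} are the exact concurrent analogues of Lemma~\ref{lem-cpneg}, with $\bS_{\bK}$ replaced everywhere by $\bS_{\bK}^{*}$, $\bS$ by $\bS^{*}$, and ordinary co-induction/half-strong co-induction upgraded to their $\itdown$-monadic counterparts. So the plan is to mirror the proof of Lemma~\ref{lem-cpneg} line by line, inserting applications of the monadic rules from Lemma~\ref{lem-itconcrule} (principally (\emph{$\itdown$-mon}) and (\emph{$\itdown$-return})) wherever the non-starred proof manipulated the body of the co-inductive definition directly. The key structural fact to exploit is that $\bS_{\bK}^{*}$ wraps its one-step unfolding in $\itdown(\cdots)$, exactly as $\bS^{*}$ does for the point case; this is what makes the monadic machinery applicable and what lets concurrent half-strong co-induction (Lemma~\ref{lem-conhscoind}, in its $\itdown$-form) replace plain half-strong co-induction in part~(\ref{lem-smin-3}).

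Let me describe the four parts. For~(\ref{lem-smin-1}) I would set $P \Def \set{K}{\bS_{\bK}^{*}(-K)}$ and prove $P \subseteq \bS_{\bK}^{*}$ by co-induction. Unfolding $\bS_{\bK}^{*}(-K)$ gives an $\itdown$-wrapped statement about some $F \in \bP_{\fin}(\bSD)$; using (\emph{$\itdown$-mon}) it suffices to transform the body, and there the identities $(-K)_d = -(K_{(-d)})$ and $\bav{d}^{-1}[(-K)_d] = -\bav{-d}^{-1}[K_{(-d)}]$ (already established in Lemma~\ref{lem-cpneg}(\ref{lem-cpneg-1})) let me take $E \Def \set{-d}{d \eps F}$. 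For~(\ref{lem-smin-2}), the minimum, I would set $R \Def \set{x \in \bII}{(\exists K \in \bS_{\bK}^{*})\, x = \min K}$ and prove $R \subseteq \bS^{*}$ by co-induction, again peeling off $\itdown$ by (\emph{$\itdown$-mon}); the least-element-of-$E$ argument (via Lemma~\ref{lem-findec}(\ref{lem-findec-1})) and monotonicity of $\bav{d}^{-1}$ carry over verbatim, giving $\bav{d}^{-1}(\min K) = \min \bav{d}^{-1}[K_d]$. Part~(\ref{lem-smin-2.5}), the maximum, follows from~(\ref{lem-smin-1}) and~(\ref{lem-smin-2}) together with the closure of $\bS^{*}$ under $\lambda x.\,{-}x$, which is Lemma~\ref{lem-neg}(\ref{lem-neg-1}) --- this is the replacement for the reference to \cite[Lemma~23]{btifp} used in the non-starred version.

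**The main obstacle:** part~(\ref{lem-smin-3}). Here the non-starred proof used half-strong co-induction, because unfolding the tent image of $Z_d$ can land either in the recursive predicate $Q^d$ or directly in $\bS_{\bK}$ (e.g.\ the $d\eps F$ and $0\noteps F$ cases produce $\bS_{\bK}(K)$ or $\bS_{\bK}(-K)$ outright). In the concurrent setting I expect to need \emph{concurrent} half-strong co-induction: define $Q^d \Def \set{K}{(\exists Z \in \bS_{\bK}^{*})\, Z_d \ne \emptyset \land K = \bt[Z_d]}$ and show $Q^d(K) \to \itdown(\Phi_{\bS_{\bK}^{*}}(Q^d)(K) \cup \bS_{\bK}^{*}(K))$, then conclude by the $\itdown$-half-strong principle. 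The delicate point is that unfolding $\bS_{\bK}^{*}(Z)$ now yields an $\itdown$-wrapped existential over $F$, so the case analysis on ``$d \eps F$'' cannot be done eagerly --- one must push it inside $\itdown$ via (\emph{$\itdown$-mon}), doing the case split on the body $F \in \bP_{\fin}(\bSD)$ (using Lemma~\ref{lem-findec}) only after the monad has been entered. Once inside, the four sub-cases ($d\eps F$ with $d=-1$ or $d=1$; $d\noteps F$ with $0\eps F$ or $0\noteps F$) reproduce the computations of Lemma~\ref{lem-cpneg}(\ref{lem-cpneg-2}) unchanged, invoking parts~(\ref{lem-smin-1}) and the fact that $\{\{-1\},\{1\}\} \subseteq \bS_{\bK}^{*}$ (provable by co-induction and (\emph{$\itdown$-return})). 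The real work is bookkeeping: verifying that the absorption condition $\itdown(\Phi(X)) \subseteq \Phi(X)$ holds for $\Phi = \Phi_{\bS_{\bK}^{*}}$ --- which it does precisely because $\Phi_{\bS_{\bK}^{*}}(X) = \bK \land \itdown(\cdots)$ is of the required $\itdown(\ldots)$ form, so (\emph{$\itdown$-idem}) and (\emph{$\ddown$-$\itdown$-absorb}) give the needed inclusion --- and then threading the disjunction $(\ldots) \lor \bS_{\bK}^{*}(K)$ correctly through the monad.
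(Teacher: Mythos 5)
Your proposal follows the paper's proof essentially step for step: parts (1)--(3) use the same auxiliary predicates and co-inductions, peeling off the monad with (\emph{$\itdown$-mon}), and the same reduction of the maximum to parts (1)--(2) plus Lemma~\ref{lem-neg}(\ref{lem-neg-1}); part (4) uses concurrent half-strong co-induction on the same predicate, with the case analysis on $F$ pushed inside the monad and the result reassembled via monotonicity, idempotency and $\itdown$-$\vee$-distribution, exactly as in the paper, and your check of the absorption condition (that $\Phi_{\bS_{\bK}^{*}}$ has the required $\itdown$ shape) is precisely what licenses that principle. The only cosmetic difference is that the paper phrases the co-induction goal with $\ddown$ outermost where you use $\itdown$, which amounts to the same thing under the monadic rules.
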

\begin{proof}
(\ref{lem-smin-1}) Let $P \Def \set{K}{\bS_{\bK}^{*}(-K)}$. We use co-induction to prove that $P \subseteq  \bS_{\bK}^{*}$. That is, we have to show that
\[
P(K) \rightarrow \itdown((\exists E \in \bP_{\fin}(\bSD))\, 
(K \subseteq \bII_E \land
(\forall d \in E)\, (K_d \not= \emptyset \wedge P(\bav{d}^{-1}[K_d])))).
\]
By definition of $P$ it suffices to derive
\begin{multline*}
\itdown((\exists F \in \bP_{\fin}(\bSD))\, 
(-K \subseteq \bII_F \land
(\forall d \in F)\, ((-K)_d \not= \emptyset \wedge \bS_{\bK}^{*}(\bav{d}^{-1}[(-K)_d]))))\\
\rightarrow
\itdown((\exists E \in \bP_{\fin}(\bSD))\, 
(K \subseteq \bII_E \land
(\forall d \in E)\, (K_d \not= \emptyset \wedge P(\bav{d}^{-1}[K_d])))).
\end{multline*}
Because of the monotonicity rule for $\itdown$ we thus only have to show that
\begin{multline*}
(\exists F \in \bP_{\fin}(\bSD))\, 
(-K \subseteq \bII_F \land
(\forall d \in F)\, ((-K)_d \not= \emptyset \wedge \bS_{\bK}^{*}(\bav{d}^{-1}[(-K)_d]))) \\
\rightarrow
(\exists E \in \bP_{\fin}(\bSD))\, 
(K \subseteq \bII_E \land
(\forall d \in E)\, (K_d \not= \emptyset \wedge P(\bav{d}^{-1}[K_d]))),
\end{multline*}
which has been done in the proof of Lemma~\ref{lem-cpneg}(\ref{lem-cpneg-1}).

(\ref{lem-smin-2}) The proof is an adaptation of the proof of Lemma~\ref{lem-cpneg}(\ref{lem-cpneg-1.4}). Let 
\[
R \Def \set{x \in \bII}{(\exists K \in \bS_{\bK}^{*})\, x = \bmin K}.
\]
 We have to show that
\[
R(K) \to \itdown ((\exists d \in \bSD)\, (x \in \bII_{d} \land R(\bav{d}^{-1}(x)))).
\]
Assume $R(K)$. Then there is some $K \in \bS_{\bK}^{*}$ with $x = \bmin K$. It follows that
\[
\itdown((\exists E \in \bP_{\fin}(\bSD))\, 
(K \subseteq \bII_E \land
(\forall e \in E)\, (K_{e} \ne \emptyset \land \mytextcolor{red}{\bS_{\bK}^{*}(\bav{e}^{-1}[K_{e}]))))}.
\]
Because of the monotonicity law for $\itdown$ it suffices to prove that
\begin{multline*}
(\exists E \in \bP_{\fin}(\bSD))\, 
(K \subseteq \bII_E \land
(\forall e \in E)\, (K_{e} \ne \emptyset \land \mytextcolor{red}{\bS_{\bK}^{*}(\bav{e}^{-1}[K_{e}]))) }\to \\
(\exists d \in \bSD)\, (x \in \bII_{d} \land R(\bav{d}^{-1}(x))).
\end{multline*}
Order $\bSD$ again by $-1 < 0 < 1$ and let $d$ be the least element of $E$ with respect to this order. Then $\bmin K \in K_{d} \subseteq \bII_{d}$ and $\bav{d}^{-1}(\bmin K) \in \bav{d}^{-1}[K_{d}]$. Note that $\bav{d}^{-1}$ is monotone. Therefore, $\bav{d}^{-1}(\bmin K) = \bmin \bav{d}^{-1}[K_{d}]$. Since $ \bav{d}^{-1}[K_{d}] \in \bS_{\bK}^{*}$, it follows that $\bav{d}^{-1}(\bmin K) \in R$. 

(\ref{lem-smin-2.5}) As in Lemma~\ref{lem-cpneg}, the statement is a direct consequence of Statements~\ref{lem-smin-1} and \ref{lem-smin-2} as well as Lemma~\ref{lem-neg}(\ref{lem-neg-1}).

(\ref{lem-smin-3}) Set
\[
R^d_K(K) \Def \set{K}{(\exists Z \in \bS_{\bK}^{*})\, (Z_d \not= \emptyset \wedge K = \bt[Z_d])}.
\]
We use concurrent half-strong co-induction to show that $R^d_K \subseteq \bS_{\bK}^{*}$. That is, we have to prove that
\begin{equation}\label{eq-smin-3}
\begin{split}
R^d_K(K) \to  \itdown(\itdown((\exists E \in \bP_{\fin}(\bSD)&)\, (K \subseteq \bII_E \land  \mbox{} \\
&(\forall e \in E)\, (K_e \not= \emptyset \wedge R^d_K(\bav{e}^{-1}[K_e])))) \vee \bS_{\bK}^{*}(K)).
\end{split}
\end{equation}

If $R^d_K(K)$, there is some $Z \in \bS_{\bK}^{*}$ so that $Z_d \not= \emptyset$ and $K = \bt[Z_d]$. Since $\bS_{\bK}^{*}(Z)$, it follows that
\[
\itdown((\exists F \in \bP_{\fin}(\bSD))\, 
(Z \subseteq \bII_F \land (\forall f \in F)\, (Z_f \not= \emptyset \wedge \bS_{\bK}^{*}(\bav{f}^{-1}[Z_f])))).
\]

Now, assume that there is some $F \in \bP_{\fin}(\bSD)$ such that $Z = \bigcup_{f \in F} Z_f$ and for all $f \in F$, $Z_f \not= \emptyset$ and $\bS_{\bK}^{*}(\bav{f}^{-1}[Z_f])$. As in the proof of Lemma~\ref{lem-cpneg}(\ref{lem-cpneg-2}) it follows for $d \in \bGC$ with $Z_d \not= \emptyset$ and $K = \bt[Z_d]$ that $\bS_{\bK}^{*}(K)$, if $d \in F$ or, $0 \notin F$ and $d \notin F$, and $R^d_K(\bav{1}^{-1}[K])$, if $d \notin F$, but $0 \in F$. Thus, we have that 
\[
((\exists E' \in \bP_{\fin}(\bSD))\, 
(K \subseteq \bII_{E'} \land (\forall e \in E')\, (K_e \not= \emptyset \wedge R^d_K(\bav{e}^{-1}[K_d])))) \vee \bS_{\bK}^{*}(K).
\]
With the monotonicity and the idempotency of $\itdown$ and $\itdown$-$\vee$ distribution we therefore obtain
\begin{multline*}
\itdown((\exists F \in \bP_{\fin}(\bSD))\, 
(Z \subseteq \bII_F \land (\forall f \in F)\, (Z_f \not= \emptyset \wedge \bS_{\bK}^{*}(\bav{f}^{-1}[Z_f])))) \to \mbox{} \\
\itdown(\itdown((\exists E' \in \bP_{\fin}(\bSD))\, 
(K \subseteq \bII_{E'} \land
(\forall e \in E')\, \\ (K_e \not= \emptyset \wedge R^d_K(\bav{e}^{-1}[K_e])))) \vee \bS_{\bK}^{*}(K)),
\end{multline*}
of which (\ref{eq-smin-3}) is a direct consequence.
\end{proof}

By co-induction we now obtain the first inclusion we are looking for.
\begin{prop}\label{prop-scg}
$\bS_{\bK}^{*} \subseteq \bG_{\bK}^{*}$.
\end{prop}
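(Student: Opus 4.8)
The plan is to prove Proposition~\ref{prop-scg} by co-induction, exactly as the sentence preceding it announces. Recall that $\bG_{\bK}^{*}$ is the greatest fixed point of the strictly positive operator $\Phi$ given by
\[
\Phi(X)(K) = \bG^{*}(\min K) \land \bG^{*}(\max K) \land (\forall d \in \bGC)\, (K_d \ne \emptyset \rightarrow X(\bt[K_d])),
\]
the recursive occurrence of $X$ sitting in the conclusion of an implication and hence in strictly positive position. By the co-induction principle it therefore suffices to exhibit $\bS_{\bK}^{*}$ as a post-fixed point, that is, to prove $\bS_{\bK}^{*} \subseteq \Phi(\bS_{\bK}^{*})$, which unfolds to the single implication
\[
\bS_{\bK}^{*}(K) \rightarrow \bG^{*}(\min K) \land \bG^{*}(\max K) \land (\forall d \in \bGC)\, (K_d \ne \emptyset \rightarrow \bS_{\bK}^{*}(\bt[K_d])).
\]

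First I would assume $\bS_{\bK}^{*}(K)$ and treat the three conjuncts separately. For the first two I would invoke Lemma~\ref{lem-smin}(\ref{lem-smin-2}) and (\ref{lem-smin-2.5}) to get $\bS^{*}(\min K)$ and $\bS^{*}(\max K)$, and then apply the point-case inclusion $\bS^{*} \subseteq \bG^{*}$ from Proposition~\ref{prop-ctog} to conclude $\bG^{*}(\min K)$ and $\bG^{*}(\max K)$. The third conjunct requires no work at all: it is literally the statement of Lemma~\ref{lem-smin}(\ref{lem-smin-3}). This closes the co-inductive step and hence the proposition.

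The honest assessment is that there is no real obstacle left in this proof itself: all of the analytic content has been discharged in the preparatory Lemma~\ref{lem-smin} (whose parts (\ref{lem-smin-2}), (\ref{lem-smin-2.5}), (\ref{lem-smin-3}) feed directly into the three conjuncts) together with the already-established point-case translation of Proposition~\ref{prop-ctog}; the proposition is precisely the compact-set analogue of that point-case result, so the co-induction is essentially a bookkeeping step. The only place where more care would be needed is the realiser, which the paper deliberately leaves to the reader. There I would follow the template of Proposition~\ref{prop-ctog}: a $\coiter$ over $\tau(\Phi_{\bG_{\bK}^{*}})$ whose step function pairs the point-case converter applied to the $\min$- and $\max$-components with the recursive call on the $\bt$-images of the $K_d$, matching the realisers extracted for the constituent parts of Lemma~\ref{lem-smin}.
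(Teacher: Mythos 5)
Your proof is correct and follows exactly the route the paper intends: the paper leaves the argument implicit (``By co-induction we now obtain\dots''), relying on Lemma~\ref{lem-smin}(\ref{lem-smin-2}), (\ref{lem-smin-2.5}), (\ref{lem-smin-3}) together with the point-case inclusion $\bS^{*}\subseteq\bG^{*}$ of Proposition~\ref{prop-ctog}, which is precisely your decomposition of the post-fixed-point condition. Your sketch of the extracted realiser via $\coiter$ is also consistent with the pattern the paper establishes in the point case.
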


Let us now start with proving the converse inclusion. Again we need some technical results.

\begin{lem}\label{lem-sgmin}
$\bG_{\bK}^{*}(K) \rightarrow \bG_{\bK}^{*}(-K)$.
\end{lem}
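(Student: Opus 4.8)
The plan is to prove Lemma~\ref{lem-sgmin} exactly along the lines of its non-concurrent counterpart, Lemma~\ref{lem-gkmin}, replacing the ingredient ``$\bG$ is closed under $\lambda x.\,{-}x$'' by its concurrent analogue, Lemma~\ref{lem-gneg} (``$\bG^{*}$ is closed under $\lambda x.\,{-}x$''). Concretely, I would set $R \Def \set{K}{\bG_{\bK}^{*}(-K)}$ and apply strong co-induction (Lemma~\ref{lem-strong}) to the operator $\Phi$ with $\bG_{\bK}^{*} = \nu\Phi$, reducing the goal to
\[
R(K) \to \bG^{*}(\min K) \land \bG^{*}(\max K) \land (\forall d \in \bGC)\,(K_{d} \ne \emptyset \to (R(\bt[K_{d}]) \vee \bG_{\bK}^{*}(\bt[K_{d}]))).
\]

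First I would unfold the assumption $R(K)$, i.e.\ $\bG_{\bK}^{*}(-K)$, into $\bG^{*}(\min(-K))$, $\bG^{*}(\max(-K))$, and $(\forall d \in \bGC)\,((-K)_{d} \ne \emptyset \to \bG_{\bK}^{*}(\bt[(-K)_{d}]))$. For the first two conjuncts of the goal I would use $\min(-K) = -\max K$ and $\max(-K) = -\min K$ together with Lemma~\ref{lem-gneg} to pass from $\bG^{*}(-\max K)$ and $\bG^{*}(-\min K)$ to $\bG^{*}(\max K)$ and $\bG^{*}(\min K)$. For the digit recursion I would fix $d \in \bGC$ with $K_{d} \ne \emptyset$ and use the set identity $(-K)_{-d} = -(K_{d})$ (already computed in the proof of Lemma~\ref{lem-cpneg}(\ref{lem-cpneg-1})), so that $(-K)_{-d} \ne \emptyset$, together with the evenness of the tent map, $\bt(-x) = \bt(x)$, which gives $\bt[(-K)_{-d}] = \bt[-(K_{d})] = \bt[K_{d}]$. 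Instantiating the unfolded hypothesis at $-d$ then yields $\bG_{\bK}^{*}(\bt[K_{d}])$ directly.

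The point I expect to be the crux is the choice of \emph{strong} co-induction rather than plain co-induction: the recursion produces $\bG_{\bK}^{*}(\bt[K_{d}])$, a membership in the greatest fixed point itself, and not $R(\bt[K_{d}]) = \bG_{\bK}^{*}(-\bt[K_{d}])$, which is what plain co-induction would demand. Since negation does not commute with the branch map $K_{d} \mapsto \bt[K_{d}]$ in a way that returns to $R$, the recursive obligation can only be discharged through the $\nu\Phi$-disjunct made available by Lemma~\ref{lem-strong}. Apart from this observation, the argument is a routine verification of elementary identities relating $\min$, $\max$, the digit cells $\bII_{d}$, and the tent map, all of which are symmetry/evenness facts. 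No realiser needs to be exhibited, in keeping with the convention stated in the introduction for the compact-set case.
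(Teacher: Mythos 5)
Your proposal is correct and follows essentially the same route as the paper: the paper also sets $P \Def \set{K}{\bG_{\bK}^{*}(-K)}$, applies strong co-induction, discharges the $\min$/$\max$ conjuncts via $\min(-K) = -\max K$, $\max(-K) = -\min K$ and Lemma~\ref{lem-gneg}, and handles the digit recursion with the identities $(-K)_{d} = -(K_{(-d)})$ and $\bt(x) = \bt(-x)$, landing in the $\nu\Phi$-disjunct exactly as you describe. Your explicit remark on why plain co-induction would not suffice is implicit in the paper but accurately identifies the crux.
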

\begin{proof}
Let $P \Def \set{K}{\bG_{\bK}^{*}(-K)}$. We prove $P \subseteq \bG_{\bK}^{*}$ by strong co-induction. That is, we must show that
\[
P(K) \to (\bG^{*}(\bmin K) \land \bG^{*}(\bmax K) \land (\forall d \in \bGC)\, (K_d \ne \emptyset \to (P(\bt[K_d]) \lor \bG_{\bK}^{*}(\bt[K_d])))).
\]

Assume that $P(K)$. Then $\bG_{\bK}^{*}(-K)$ and hence $\bD_{\bK}^{*}(-K)$ and $\bG_{\bK}^{*}(\bt[(-K)_d])$, for $d \in \bGC$ with $(-K)_d \not= \emptyset$. Note that $\bt(x) = \bt(-x)$ and $(-K)_d = -(K_{(-d)})$. Thus, $\bG_{\bK}^{*}(\bt[K_{d}])$, for $d \in \bGC$ with $K_d \not= \emptyset$. By Lemma~\ref{lem-gneg} it follows that $\bG^{*}(\bmax K)$ and $\bG^{*}(\bmin K)$. Moreover, as $\bt(x) = \bt(-x)$ and $(-K)_{d} = -(K_{(-d)})$, $\bG_{\bK}^{*}(\bt[K_{d}])$, for all $d \in \bGC$ with $K_{d} \ne \emptyset$.
\end{proof}

\begin{lem}\label{lem-savd}
For $d \in \bGC$,
\[
\bG_{\bK}^{*}(K) \rightarrow K_d \not= \emptyset \rightarrow \bG_{\bK}^{*}(\bav{d}^{-1}[K_d]).
\]
\end{lem}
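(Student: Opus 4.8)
The plan is to mirror the proof of the unstarred counterpart, Lemma~\ref{lem-clavd}, replacing its appeal to Lemma~\ref{lem-gkmin} by the concurrent version, Lemma~\ref{lem-sgmin}. Concretely, I would assume $\bG_{\bK}^{*}(K)$ together with $K_d \ne \emptyset$ and unfold the co-inductive definition of $\bG_{\bK}^{*}$ once; its last conjunct, instantiated at the given $d$ and discharged by $K_d \ne \emptyset$, yields $\bG_{\bK}^{*}(\bt[K_d])$. The argument then splits into the two cases $d = -1$ and $d = 1$.

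The key computation is the pointwise relation between the tent map $\bt$ and the inverse digit $\bav{d}^{-1}$ on the cell $\bII_d$, recalling that $\bav{d}^{-1}(x) = 2x - d$. For $x \in \bII_{-1} = [-1,0]$ one has $|x| = -x$, so $\bt(x) = 1 - 2|x| = 2x + 1 = \bav{-1}^{-1}(x)$; hence $\bav{-1}^{-1}[K_{-1}] = \bt[K_{-1}]$ and the case $d = -1$ is immediate, since $\bG_{\bK}^{*}(\bt[K_{-1}])$ is literally $\bG_{\bK}^{*}(\bav{-1}^{-1}[K_{-1}])$.

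For $x \in \bII_{1} = [0,1]$ one has $|x| = x$, so $\bt(x) = 1 - 2x = -(2x - 1) = -\bav{1}^{-1}(x)$, whence $\bav{1}^{-1}[K_1] = -\bt[K_1]$. Here I would first read off $\bG_{\bK}^{*}(\bt[K_1])$ from the defining clause and then apply Lemma~\ref{lem-sgmin}, the closure of $\bG_{\bK}^{*}$ under $K \mapsto -K$, to obtain $\bG_{\bK}^{*}(-\bt[K_1]) = \bG_{\bK}^{*}(\bav{1}^{-1}[K_1])$, which is the required conclusion.

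I do not expect a genuine obstacle here; the only substantive difference from Lemma~\ref{lem-clavd} is that the auxiliary negation step must invoke the starred Lemma~\ref{lem-sgmin} rather than Lemma~\ref{lem-gkmin}, so that the extracted realiser threads through the concurrency modality correctly. The routine side conditions---that $K_d = K \cap \bII_d$ is a non-empty compact set and that $\bav{d}^{-1}$ carries $\bII_d$ onto $\bII$, so that $\bav{d}^{-1}[K_d]$ is a legitimate argument of $\bG_{\bK}^{*}$---are handled exactly as in the unstarred case and require no new work.
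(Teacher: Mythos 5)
Your proposal is correct and matches the paper's proof, which simply states that the result follows as in Lemma~\ref{lem-clavd}: the same case distinction on $d$, with the identity $\bt = \bav{-1}^{-1}$ on $\bII_{-1}$ handling $d=-1$, and the identity $\bt = -\bav{1}^{-1}$ on $\bII_{1}$ combined with the starred negation lemma (Lemma~\ref{lem-sgmin} in place of Lemma~\ref{lem-gkmin}) handling $d=1$. Your worked-out details of the pointwise computations are exactly what the paper leaves implicit.
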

The statement follows as in case of Lemma~\ref{lem-clavd}.

\begin{lem}\label{lem-starmir}
Let $K \subseteq \bII_{1}$. Then
\[
\bG_{\bK}^{*}(K) \to \bG_{\bK}^{*}((\lambda x.\, 1-x)[K]).
\]
\end{lem}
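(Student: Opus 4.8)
The plan is to mirror the proof of Lemma~\ref{lem-mirr}, its unstarred counterpart, replacing $\bG$ by $\bG^{*}$ and $\bG_{\bK}$ by $\bG_{\bK}^{*}$ throughout and feeding in the starred versions of the auxiliary closure results. First I would set $L \Def (\lambda x.\, 1-x)[K]$ and argue by co-induction, so that it suffices to verify the three conjuncts in the defining formula of $\bG_{\bK}^{*}$ for $L$, namely $\bG^{*}(\min L)$, $\bG^{*}(\max L)$, and $(\forall d\in\bGC)\,(L_{d}\ne\emptyset\to\bG_{\bK}^{*}(\bt[L_{d}]))$, with $\bG_{\bK}^{*}$ permitted in the recursive positions (so that the argument really only uses the co-closure of $\bG_{\bK}^{*}$).

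Second, I would record the elementary geometric facts that do not involve the predicates at all: since $K\subseteq\bII_{1}=[0,1]$, also $L\subseteq\bII_{1}$, and moreover $\min L = 1-\max K$ and $\max L = 1-\min K$, while for every $x\in\bII_{1}$ one has $\bt(1-x)=2x-1=\bav{1}^{-1}(x)$. These are the same identities used in Lemma~\ref{lem-mirr} and carry over unchanged.

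Third, for the two endpoint conjuncts I would extract $\bG^{*}(\min K)$ and $\bG^{*}(\max K)$ from the assumption $\bG_{\bK}^{*}(K)$; since $\min K,\max K\in\bII_{1}$, i.e.\ $\bII(1,\min K)$ and $\bII(1,\max K)$ hold, Lemma~\ref{lem-min1} supplies $\bG^{*}(1-\max K)=\bG^{*}(\min L)$ and $\bG^{*}(1-\min K)=\bG^{*}(\max L)$. This is the step where Lemma~\ref{lem-min1} plays exactly the role that \cite[Lemma~10]{becsl} played in the unstarred proof, so no new closure fact about $\bG^{*}$ under $x\mapsto 1-x$ needs to be established.

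Finally, for the recursive clause the substantive case is $d=1$: as $L\subseteq\bII_{1}$ we have $L_{1}=L$ and $K_{1}=K$, hence $\bt[L_{1}]=\bt[L]=\bav{1}^{-1}[K]$ by the geometric identity, and Lemma~\ref{lem-savd} (applied to $K$ with $d=1$, using $K_{1}=K\ne\emptyset$) yields $\bG_{\bK}^{*}(\bav{1}^{-1}[K])=\bG_{\bK}^{*}(\bt[L_{1}])$. For $d=-1$ one has $L_{-1}\subseteq\{0\}$, so if $L_{-1}\ne\emptyset$ then $\bt[L_{-1}]=\{1\}$, and $\bG_{\bK}^{*}(\{1\})$ holds because $\{\{-1\},\{1\}\}\subseteq\bG_{\bK}^{*}$ by a trivial co-induction. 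I do not expect a genuine obstacle: once the starred closure lemmas are in hand (Lemma~\ref{lem-min1} for $x\mapsto 1-x$ on $\bII_{1}$ and Lemma~\ref{lem-savd} for $\bav{d}^{-1}$), the argument is a routine transcription of Lemma~\ref{lem-mirr}, and the only point requiring a little care is the degenerate $d=-1$ boundary case, which the unstarred proof leaves implicit.
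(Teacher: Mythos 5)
Your proposal is correct and takes essentially the same route as the paper: the paper's proof of Lemma~\ref{lem-starmir} is literally ``proceed as in Lemma~\ref{lem-mirr}, using Lemma~\ref{lem-min1}'', and your transcription (co-closure of $\bG_{\bK}^{*}$, the identities $\min L = 1-\max K$, $\max L = 1-\min K$ and $\bt(1-x)=\bav{1}^{-1}(x)$ on $\bII_{1}$, Lemma~\ref{lem-min1} for the endpoint conjuncts, and Lemma~\ref{lem-savd} in place of Lemma~\ref{lem-clavd} for the recursive clause) is exactly that argument. Your explicit handling of the degenerate $d=-1$ case (via $\bG_{\bK}^{*}(\{1\})$ by a trivial co-induction) fills a point the paper's proof of Lemma~\ref{lem-mirr} leaves implicit, but it does not constitute a different approach.
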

The proof proceeds as in Lemma~\ref{lem-mirr} by using Lemma~\ref{lem-min1}.

\begin{lem}\label{lem-starII0}
Let $K \in \bK$. Then 
\[
\bG_{\bK}^{*}(1/2 K) \to \bG_{\bK}^{*}(K).
\]
\end{lem}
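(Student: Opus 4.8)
The plan is to mirror the proof of Lemma~\ref{lem-II0}, replacing the non-starred predicate $\bG_{\bK}$ by $\bG_{\bK}^{*}$ and the auxiliary results about $\bG$ by their starred counterparts. Proceeding by co-induction, I would assume $\bG_{\bK}^{*}(1/2 K)$ and verify the three conjuncts of the defining clause of $\bG_{\bK}^{*}(K)$ directly: that $\bG^{*}(\min K)$, that $\bG^{*}(\max K)$, and that $\bG_{\bK}^{*}(\bt[K_d])$ for every $d \in \bGC$ with $K_d \ne \emptyset$. Since the outer structure of $\bG_{\bK}^{*}$ is an ordinary conjunction (the concurrency modality $\itdown$ being hidden inside $\bG^{*}$ via $\bD^{*}$), establishing these three conjuncts with $\bG_{\bK}^{*}$ in the recursive position suffices.

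For the first two conjuncts, note that scaling by $1/2$ is monotone, so $\min(1/2 K) = 1/2 \min K$ and $\max(1/2 K) = 1/2 \max K$; unfolding the assumption therefore yields $\bG^{*}(1/2 \min K)$ and $\bG^{*}(1/2 \max K)$. Since $K \subseteq \bII$ we have $|\min K|, |\max K| \le 1$, whence $\bII(0, 1/2 \min K)$ and $\bII(0, 1/2 \max K)$ hold. Applying Lemma~\ref{lem-ii0}, the starred analogue of closure under doubling on $[-1/2,1/2]$, then gives $\bG^{*}(\min K)$ and $\bG^{*}(\max K)$.

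For the recursive conjunct I would follow the same chain of set manipulations as in Lemma~\ref{lem-II0}. First $(1/2 K)_d = 1/2 K_d$, so from the assumption one obtains $\bG_{\bK}^{*}(\bt[1/2 K_d])$ for $d \in \bGC$ with $K_d \ne \emptyset$. Using $\bt(x/2) = 1 - |x|$ this set equals $(\lambda x.\, 1-x)[|K_d|]$, and since $|K_d| \subseteq \bII_{1}$ while $\lambda x.\, 1-x$ is an involution of $\bII_{1}$, applying Lemma~\ref{lem-starmir} to the set $(\lambda x.\, 1-x)[|K_d|]$ (which again lies in $\bII_{1}$) yields $\bG_{\bK}^{*}(|K_d|)$. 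Finally, since $(|K_d|)_1 = |K_d| \ne \emptyset$, unfolding $\bG_{\bK}^{*}(|K_d|)$ at the digit $1$ gives $\bG_{\bK}^{*}(\bt[|K_d|])$, and $\bt[|K_d|] = \bt[K_d]$ because $\bt$ factors through $|\cdot|$. This establishes the last conjunct and completes the co-induction.

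The only delicate points are bookkeeping rather than conceptual: verifying the set identities $(1/2 K)_d = 1/2 K_d$, $\bt[1/2 K_d] = (\lambda x.\, 1-x)[|K_d|]$ and $\bt[K_d] = \bt[|K_d|]$, and exploiting that $\lambda x.\, 1-x$ is its own inverse on $\bII_{1}$ in order to run Lemma~\ref{lem-starmir} in the direction needed. No genuinely new concurrency reasoning is required beyond what Lemmas~\ref{lem-ii0} and \ref{lem-starmir} already encapsulate, so the extracted realiser will, as elsewhere, be read off by composing the realisers of those lemmas along this derivation.
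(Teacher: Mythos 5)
Your proposal is correct and takes essentially the same route as the paper: the paper's proof of Lemma~\ref{lem-starII0} is precisely ``as in Lemma~\ref{lem-II0}, applying Lemma~\ref{lem-ii0}'' (with Lemma~\ref{lem-starmir} replacing Lemma~\ref{lem-mirr}), which is exactly the argument you spell out, including the correct observation that no new concurrency reasoning is needed since the top-level structure of $\bG_{\bK}^{*}$ is an ordinary conjunction.
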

The result follows as in Lemma~\ref{lem-II0} by applying Lemma~\ref{lem-ii0}.

\begin{lem}\label{lem-scomgc}
\[
\bG_{\bK}^{*}(K) \rightarrow 
\itdown((\exists E \in \bP_{\fin}(\bSD))\, 
(K \subseteq \bII_E \land (\forall d \in E)\, K_d \not= \emptyset)).
\]
\end{lem}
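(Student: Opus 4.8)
The formula to be established is exactly $\bG_{\bK}^{*}(K)\to\itdown(A(K))$, where
\[
A(K) \Def (\exists E \in \bP_{\fin}(\bSD))\, K \subseteq \bII_E \land (\forall d \in E)\, K_d \ne \emptyset
\]
is the same predicate used in Section~\ref{sec-cpcode}. Thus the statement is the concurrent counterpart of Lemma~\ref{lem-GK-to-ddownAKn}, and the plan is to follow that proof line by line, replacing $\bG$ by $\bG^{*}$ and each bare sign-disjunction $x\le 0\lor x\ge 0$ by its $\itdown$-wrapped version, which is all that $\bG^{*}$ now delivers.

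First I would unfold $\bG_{\bK}^{*}(K)$ to obtain $\bG^{*}(\min K)$ and $\bG^{*}(\max K)$, and unfold each of these once more to extract $\bD^{*}(x)$ and $\bD^{*}(\bt(x))$ for $x\in\{\min K,\max K\}$. Feeding these into Lemma~\ref{lem-d2} yields the concurrent analogues of $(\ref{eq-star1})$ and $(\ref{eq-star2})$, namely
\[
\rst{\itdown(x\le 0\lor x\ge 0)}{x\ne 0} \quad\text{and}\quad \rst{\itdown(\bt(x)\le 0\lor \bt(x)\ge 0)}{\bt(x)\ne 0}
\]
for $x\in\{\min K,\max K\}$. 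I would keep the very same Harrop conditions $B^{\mathrm{min}}_{i}$, $B^{\mathrm{max}}_{j}$ and $C_{i,j}=B^{\mathrm{min}}_{i}\land B^{\mathrm{max}}_{j}$, the same double-negation fact $(\ref{eq-negnegC})$, and the same arithmetic equivalences $\min K\ge 0\leftrightarrow K\ge 0$, $\min K\le 0\leftrightarrow K_{-1}\ne\emptyset$, $\max K\le 0\leftrightarrow K\le 0$ and $\max K\ge 0\leftrightarrow K_{1}\ne\emptyset$.

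The core of the argument is to prove, for each $i,j\in\{0,1\}$, the restriction $\rst{\itdown(A(K))}{C_{i,j}}$ by exactly the case analysis of the four items in the proof of Lemma~\ref{lem-GK-to-ddownAKn}. The only new bookkeeping is that the sign information now lives under $\itdown$: after aligning the two restriction conditions to $C_{i,j}$ by \emph{($\|$-antimon)} and merging them by \emph{($\|$-$\land$)}, I would turn the resulting $\itdown(\cdots)\land\itdown(\cdots)$ into a single $\itdown(\cdots\land\cdots)$ by \emph{($\itdown$-$\land$-intro)}, and then push the purely logical implication into $A(K)$ — using the displayed equivalences and the choices of $E$ dictated by the same case split as in the non-star proof — via \emph{($\itdown$-mon)} and \emph{($\|$-mon)}.

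Finally, I would apply \emph{($\itdown$-$\|$-$\lor$)} with $A:=\itdown(A(K))$ to the four premises $\rst{\itdown(A(K))}{C_{i,j}}$, obtaining $\rst{\itdown(\itdown(A(K)))}{(\bigvee_{i,j}C_{i,j})}$; collapse the double modality by \emph{($\itdown$-idem)} and \emph{($\|$-mon)}; and discharge the restriction by \emph{($\|$-stab)} followed by \emph{($\|$-mp)} using $(\ref{eq-negnegC})$, which yields $\itdown(A(K))$. The main obstacle is not conceptual but bookkeeping with the concurrency modality: unlike the non-star proof, where the disjunctions are bare and feed directly into \emph{($\itdown$-$\|$-$\lor$)}, here one must carry an $\itdown$ through every conjunction (hence \emph{($\itdown$-$\land$-intro)}) and then remove the extra $\itdown$-layer that \emph{($\itdown$-$\|$-$\lor$)} introduces (hence \emph{($\itdown$-idem)}); the only point requiring real care is checking the productivity side-conditions that make all of these restrictions well-formed.
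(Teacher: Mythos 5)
Your overall route is exactly the paper's: the paper also reduces the claim to the four restrictions $\rst{\itdown(A(K))}{C_{i,j}}$, extracts $\rst{\itdown(\bB(x))}{x \ne 0}$ and $\rst{\itdown(\bB(\bt(x)))}{\bt(x) \ne 0}$ for $x \in \{\min K, \max K\}$ from $\bG^{*}(\min K)$, $\bG^{*}(\max K)$ (via Lemma~\ref{lem-d2}), merges them by \emph{($\|$-antimon)}, \emph{($\|$-$\land$)} and \emph{($\itdown$-$\land$-intro)} into $\rst{\itdown(\bB(x)\land\bB(y))}{C_{i,j}}$, and discharges the $C_{i,j}$ using (\ref{eq-negnegC}). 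Your account of the final step — \emph{($\itdown$-$\|$-$\lor$)} producing $\rst{\itdown(\itdown(A(K)))}{(\bigvee_{i,j}C_{i,j})}$, collapsed by \emph{($\itdown$-idem)} and \emph{($\|$-mon)}, then \emph{($\|$-stab)} and \emph{($\|$-mp)} — is in fact spelled out more explicitly than in the paper, which only refers back to Lemma~\ref{lem-GK-to-ddownAKn} for this reduction.

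There is, however, one step that fails as you state it: the implication you propose to push under $\itdown$ by \emph{($\itdown$-mon)} is \emph{not} ``purely logical'' in three of the four cases. Only for $C_{0,0}$ does $\bB(\min K)\land\bB(\max K) \to A(K)$ hold outright. For $C_{0,1}$ (dually $C_{1,0}$, and in the last subcase of $C_{1,1}$), the case split on $\bB(\min K)\land\bB(\bt(\max K))$ reaches the subcase $K_{-1}\ne\emptyset \land |\max K|\ge 1/2$, where choosing $E$ requires the sign of $\max K$; in the non-star proof this is supplied by (\ref{eq-star1}) and \emph{($\|$-mp)} as a \emph{bare} disjunction, but in the star setting \emph{($\|$-mp)} only yields $\itdown(\max K \le 0 \lor \max K \ge 0)$. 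Consequently the implication actually provable is $(\bB(x)\land\bB(y)) \to \itdown(A(K))$, not $(\bB(x)\land\bB(y)) \to A(K)$, and applying \emph{($\itdown$-mon)} to it gives $\itdown(\bB(x)\land\bB(y)) \to \itdown(\itdown(A(K)))$: you need a further application of \emph{($\itdown$-idem)} (or \emph{($\itdown$-bind)}) already inside this core step, not only after \emph{($\itdown$-$\|$-$\lor$)}. This is precisely how the paper handles it: it proves $(\bB(x)\land\bB(y)) \to \itdown(A(K))$ for each of the four choices of $(x,y)$ and then invokes Rules \emph{($\itdown$-mon)} \emph{and} \emph{($\itdown$-idem)} to obtain $\itdown(\bB(x)\land\bB(y)) \to \itdown(A(K))$, whence $\rst{\itdown(A(K))}{C_{i,j}}$ by \emph{($\|$-mon)}. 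With this correction your argument coincides with the paper's.
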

\begin{proof}
The proof follows the derivation of Lemma~\ref{lem-GK-to-ddownAKn}.
Set 
\[
A(K) \Def (\exists E \in \bP_{\fin}(\bSD))\, 
(K \subseteq \bII_E \land (\forall d \in E)\, K_d \not= \emptyset).
\] 
Then the assertion is
\[
\bG_{\bK}^{*}(K) \rightarrow \itdown(A(K)).
\]

Let again
\begin{align*}
B^{\mathrm{min}}_{0} &\Def \bmin K \ne 0, 
&B^{\mathrm{min}}_{1} &\Def \bt(\bmin K) \ne 0, \\
B^{\mathrm{max}}_{0} &\Def \bmax K \ne 0, 
&B^{\mathrm{max}}_{1} &\Def \bt(\bmax K) \ne 0.
\end{align*}
and 
\[C_{i,j} \Def B^{\mathrm{min}}_{i} \land B^{\mathrm{max}}_{j},
\]
 for $i, j \in \{0, 1\}$.
As we have seen in the proof of Lemma~\ref{lem-GK-to-ddownAKn}, it suffices to show that
\begin{enumerate}
\item\label{lem-scomgc-1}  $\rt{C_{0,0}}{\itdown(A(K))}$,

\item\label{lem-scomgc-2}  $\rt{C_{0,1}}{\itdown(A(K))}$,

\item\label{llem-scomgc-3}  $\rt{C_{1,0}}{\itdown(A(K))}$,

\item\label{lem-scomgc-4}  $\rt{C_{1,1}}{\itdown(A(K))}$.
\end{enumerate}

Assume that $\bG_{\bK}^{*}(K)$ and note for $x \in \{ \bmin K, \bmax K \}$ that $\bG^{*}(x)$ entails $\bG^{*}(\bt(x))$. From both we obtain that $\rt{x \ne 0}{\itdown(\bB(x))}$ and  $\rt{\bt(x) \ne 0}{\itdown(\bB(\bt(x)))}$.  Because of Rules~($\rest$-$\land$), (\emph{$\rest$-mon}), and (\emph{$\rest$-antimon}) it follows that
\[
\rt{x \ne 0 \land y \ne 0}{(\itdown(\bB(x)) \land \itdown(\bB(y)))},
\]
from which we obtain with Rules~(\emph{$\itdown$-$\land$-intro}) and (\emph{$\rest$-mon}) that
\begin{equation}\label{eq-scomgc}
\rt{x \ne 0 \land y \ne 0}{\itdown(\bB(x) \land \bB(y))}.
\end{equation}
As we have seen in the proof of Lemma~\ref{lem-GK-to-ddownAKn},
\[
(\bB(x) \land \bB(y)) \to \itdown(A(K)),
\]
for each choice of $x$ and $y$.
With Rules~(\emph{$\itdown$-mon}) and (\emph{$\itdown$-idem}) we thus have that
\[
\itdown(\bB(x) \land \bB(y)) \to \itdown(A(K)).
\]
Consequently,  by (\ref{eq-scomgc}) and Rule~(\emph{$\rest$-mon}), we  obtain that $\rt{C_{i,j}}{\itdown(A(K))}$, for $i, j \in \{0, 1\}$.
\end{proof}

\begin{prop}\label{prop-cgcs}
$\bG_{\bK}^{*} \subseteq \bS_{\bK}^{*}$.
\end{prop}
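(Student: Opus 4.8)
The plan is to establish the inclusion by co-induction on the definition of $\bS_{\bK}^{*}$, running the argument of Proposition~\ref{prop-comgtoc} (which proved $\bG_{\bK} \subseteq \bS_{\bK}^{*}$) with every unstarred technical lemma replaced by its starred counterpart. Unfolding the co-inductive operator defining $\bS_{\bK}^{*}$, the co-induction principle reduces the claim to
\[
\bG_{\bK}^{*}(K) \rightarrow \itdown((\exists E \in \bP_{\fin}(\bSD))\, K \subseteq \bII_E \land (\forall d \in E)\, (K_d \not= \emptyset \land \bG_{\bK}^{*}(\bav{d}^{-1}[K_d]))).
\]
The Harrop conjunct $\bK(K)$ appearing in the definition of $\bS_{\bK}^{*}$ carries no computational content and holds for every compact $K$ under consideration, so, as in Proposition~\ref{prop-comgtoc}, it may be left implicit.

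First I would combine Lemmas~\ref{lem-savd} and \ref{lem-starII0} to obtain the implication
\begin{multline*}
\bG_{\bK}^{*}(K) \rightarrow ((\exists E \in \bP_{\fin}(\bSD))\, K \subseteq \bII_E \land (\forall d \in E)\, K_d \not= \emptyset \\
\rightarrow (\exists E \in \bP_{\fin}(\bSD))\, K \subseteq \bII_E \land (\forall d \in E)\, (K_d \not= \emptyset \land \bG_{\bK}^{*}(\bav{d}^{-1}[K_d]))),
\end{multline*}
the point being that, once a covering digit set $E$ with all $K_d$ nonempty is available, the extra conjunct $\bG_{\bK}^{*}(\bav{d}^{-1}[K_d])$ is supplied for $d \in \bGC$ by Lemma~\ref{lem-savd} and for $d = 0$, where $\bav{0}^{-1}[K_0] = 2 K_0$, by the doubling lemma, Lemma~\ref{lem-starII0}. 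Applying Rule~(\emph{$\itdown$-mon}) wraps $\itdown$ around both antecedent and consequent of the inner implication, and the resulting hypothesis $\itdown((\exists E \in \bP_{\fin}(\bSD))\, K \subseteq \bII_E \land (\forall d \in E)\, K_d \not= \emptyset)$ is precisely the conclusion of Lemma~\ref{lem-scomgc} under the standing assumption $\bG_{\bK}^{*}(K)$. Discharging it completes the co-induction, so that $\bG_{\bK}^{*} \subseteq \bS_{\bK}^{*}$.

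All of the genuine content has already been absorbed into the three technical lemmas: Lemma~\ref{lem-scomgc} extracts, concurrently, a finite signed-digit set covering $K$ from the sign information carried by $\bG^{*}(\min K)$ and $\bG^{*}(\max K)$ (this is where the $\neg\neg$-case analysis over the formulas $C_{i,j}$ and the restriction rules do the real work), while Lemmas~\ref{lem-savd} and \ref{lem-starII0} record the closure of $\bG_{\bK}^{*}$ under the inverse digit maps. Consequently I expect no essential obstacle in the proposition itself; it is a routine assembly, structurally identical to Proposition~\ref{prop-comgtoc}. The one point to watch is the $d = 0$ case, where one must be careful to invoke the doubling lemma for $2 K_0$ rather than the tent-map closure used for $d \in \bGC$ — exactly as in the unstarred argument.
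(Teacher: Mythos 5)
Your proposal is correct and takes essentially the same route as the paper's proof: co-induction on $\bS_{\bK}^{*}$, closure lemmas to supply the conjunct $\bG_{\bK}^{*}(\bav{d}^{-1}[K_d])$, Rule~(\emph{$\itdown$-mon}) to lift the implication under $\itdown$, and discharge of the resulting hypothesis $\itdown(A(K))$ by Lemma~\ref{lem-scomgc}. If anything, your explicit treatment of the digit $d=0$ via Lemma~\ref{lem-starII0} is more careful than the paper's written proof, which cites only Lemma~\ref{lem-savd}; your version is the faithful starred analogue of how Proposition~\ref{prop-comgtoc} combines Lemmas~\ref{lem-clavd} and \ref{lem-II0}.
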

\begin{proof}
The statement follows by co-induction. We need to show that
\[
\bG_{\bK}^{*}(K) \rightarrow 
\itdown((\exists E \in \bP_{\fin}(\bSD)) 
(K \subseteq \bII_E \land
(\forall d \in E)\, (K_d \not= \emptyset \wedge \bG_{\bK}^{*}(\av{d}^{-1}[K_d])))).
\]

From Lemmas~\ref{lem-savd}  and \ref{lem-starII0} it follows 
\begin{multline*}
\bG_{\bK}^{*}(K) \rightarrow ((\exists E \in \bP_{\fin}(\bSD)\, 
(K \subseteq \bII_E \land
(\forall d \in E)\, K_d \not= \emptyset) \\
\rightarrow 
(\exists E \in \bP_{\fin}(\bSD)) 
(K \subseteq \bII_E \land
(\forall d \in E)\, (K_d \not= \emptyset \wedge \bG_{\bK}^{*}(\bav{d}^{-1}[K_d])))).
\end{multline*}
By monotonicity of $\itdown$ we thus obtain
\begin{multline*}
\bG_{\bK}(K) \rightarrow (\itdown((\exists E \in \bP_{\fin}(\bSD)) 
(K \subseteq \bII_E \land
(\forall d \in E)\, K_d \not= \emptyset)) \\
\rightarrow 
\itdown((\exists E \in \bP_{\fin}(\bSD)) 
(K \subseteq \bII_E \land 
(\forall d \in E)\, \mytextcolor{red}{(K_d \not= \emptyset} \wedge \bG_{\bK}^{*}(\bav{d}^{-1}[K_d]))))),
\end{multline*}
where, the assumption $\itdown((\exists E \in \bP_{\fin}(\bSD))\, 
(K \subseteq \bII_E \land 
(\forall d \in E)\, K_d \not= \emptyset))$ can be discharged by Lemma~\ref{lem-scomgc}.
\end{proof}

As a consequence of Propositions~\ref{prop-scg} and~\ref{prop-cgcs} we now obtain our central result for the compact sets case.
\begin{thm}\label{thm-maincompgc}
$\bS_{\bK}^{*} = \bG_{\bK}^{*}.$
\end{thm}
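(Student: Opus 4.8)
The plan is to observe that Theorem~\ref{thm-maincompgc} asserts the equality of two sets (equivalently, two unary predicates of arity $(\PPP(\iota))$), and that equality of sets is just mutual inclusion. Hence the entire content of the theorem is already packaged in the two preceding propositions: Proposition~\ref{prop-scg} gives $\bS_{\bK}^{*} \subseteq \bG_{\bK}^{*}$, and Proposition~\ref{prop-cgcs} gives the reverse inclusion $\bG_{\bK}^{*} \subseteq \bS_{\bK}^{*}$. By antisymmetry of $\subseteq$ the two inclusions combine to $\bS_{\bK}^{*} = \bG_{\bK}^{*}$, and there is nothing further to verify. So at the level of the theorem statement the proof is a one-line appeal to those two results, exactly mirroring how Theorem~\ref{thm-maingc} ($\bS^{*}=\bG^{*}$) is assembled in the point case from Propositions~\ref{prop-ctog} and~\ref{prop-gtoc}.

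Where the real work sits is inside the two inclusions, and it is worth recording the shape of each argument. For $\bS_{\bK}^{*} \subseteq \bG_{\bK}^{*}$ I would proceed by co-induction on the defining operator of $\bG_{\bK}^{*}$: given $\bS_{\bK}^{*}(K)$ one must produce $\bG^{*}(\min K)$, $\bG^{*}(\max K)$, and, for each $d \in \bGC$ with $K_{d}\neq\emptyset$, a proof of $\bS_{\bK}^{*}(\bt[K_{d}])$ (so that the co-inductive invariant is reproduced). These are supplied respectively by Lemma~\ref{lem-smin}(\ref{lem-smin-2}), Lemma~\ref{lem-smin}(\ref{lem-smin-2.5}), and Lemma~\ref{lem-smin}(\ref{lem-smin-3}), together with the point-case inclusion $\bS^{*}\subseteq\bG^{*}$ for the minimum and maximum. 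For the converse $\bG_{\bK}^{*} \subseteq \bS_{\bK}^{*}$ I would again co-induct, this time unfolding $\bS_{\bK}^{*}$: from $\bG_{\bK}^{*}(K)$ one derives, via the closure Lemmas~\ref{lem-savd} and~\ref{lem-starII0}, that on any suitable covering set $E$ the branches $\bav{d}^{-1}[K_{d}]$ again satisfy $\bG_{\bK}^{*}$, and then discharges the outstanding existence-of-covering premise $\itdown((\exists E\in\bP_{\fin}(\bSD))\,K\subseteq\bII_{E}\land(\forall d\in E)\,K_{d}\neq\emptyset)$ using Lemma~\ref{lem-scomgc}.

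The main obstacle is therefore not the final combination — that is immediate — but the reverse inclusion $\bG_{\bK}^{*} \subseteq \bS_{\bK}^{*}$, and specifically the step that turns the (undecidable, as the test for $0$ is not computable) question of which signed digits cover $K$ into a concurrently computable disjunction. This is precisely what Lemma~\ref{lem-scomgc} achieves, and its proof in turn leans on the iterated concurrency modality $\itdown$, the restriction calculus (Rules~(\emph{$\|$-$\land$}), (\emph{$\|$-mon}), (\emph{$\|$-antimon}), (\emph{$\|$-stab}), (\emph{$\|$-mp})), the double-negation covering fact~(\ref{eq-negnegC}) for the corner cases $C_{i,j}$, and Rules~(\emph{$\itdown$-$\land$-intro}), (\emph{$\itdown$-mon}), and (\emph{$\itdown$-idem}). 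Once Lemma~\ref{lem-scomgc} is in hand the co-induction goes through smoothly, so the theorem itself is obtained with no additional argument beyond citing Propositions~\ref{prop-scg} and~\ref{prop-cgcs}; as in the point case, the extractable content is a pair of computable translations between Gray-code and signed-digit digital trees.
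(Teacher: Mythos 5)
Your proof is correct and takes exactly the paper's route: Theorem~\ref{thm-maincompgc} is obtained there with no argument beyond citing Propositions~\ref{prop-scg} and~\ref{prop-cgcs}, whose underlying co-inductions (via Lemma~\ref{lem-smin} with the point-case inclusion $\bS^{*}\subseteq\bG^{*}$ on one side, and Lemmas~\ref{lem-savd}, \ref{lem-starII0} and \ref{lem-scomgc} on the other) you describe faithfully. If anything, your explicit appeal to Lemma~\ref{lem-starII0} for the digit $0$ in the reverse inclusion is slightly more careful than the paper's own proof of Proposition~\ref{prop-cgcs}, which cites only Lemma~\ref{lem-savd}.
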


\section{Conclusion}\label{sec-concl}

In this paper the computational power of infinite Gray code has been re-considered and compared with the  signed digit representation which is mostly used in applications. Infinite Gray code is a redundancy-free representation of the real numbers, whereas the signed digit representation has a high degree of redundancy: every real number has infinitely many names. Instead of all real numbers only the interval [-1, 1] was considered.

The central aim was to study the relationship between both representations without having to discuss the manipulation of code words directly. To this end, for each of the two kinds of representation, co-inductive characterisations for the spaces under consideration were introduced in a formal logical system as predicates $\bG$ and $\bS$, from which the representation can be recovered via a realisability interpretation. Instead of dealing with representations directly, the predicates were compared. Computable translations between the representations can  be extracted from the formal proofs. The proofs also guarantee the correctness of the extracted programs.

As was known from earlier studies by Tsuiki~\cite{ts,tsug}, infinite Gray code can be translated into signed digit code in a sequential way; for the converse translation, however, one has to allow the computations to proceed concurrently. In \cite{bt}, Berger and Tsuiki introduced a modality $\ddown$ for concurrency. $\ddown(A)$ has no effect on the classical validity of the formula $A$, but on its realisability interpretation: two concurrent processes try to realise $A$, in case $\ddown(A)$ is realisable, at least one of them will do so. With help of this modality a predicate $\bS_{2}$ was co-inductively defined, the realisers of which are again streams of signed digits. However, they can be computed concurrently. It was shown that $\bS \subseteq \bG \subseteq \bS_{2}$. 

In the present paper the set of rules coming with the modality $\ddown$ was enlarged by two new realisable rules, and several other useful rules were derived.
Moreover, the modality was inductively extended to a modality $\itdown$ of bounded non-determinism, co-inductively leading to predicates $\bG^{*}$ and $\bS^{*}$. Proof rules for the new modality were derived, and by this way it was shown that $\bG^{*} = \bS^{*}$, thus extending  the result in \cite{bt}.

A powerful proof tool in the proof of the inclusion $\bS \subseteq \bG$ case was Archimedean
induction. Here, a similar rule was presented for the concurrent case.

In \cite{bs,sp} the present authors have given a co-inductive characterisation of the hyperspace of all non-empty compact subsets of a given digit space. Instead of streams of digits, as in the point case, extracted realisers are now finitely branching infinite trees with nodes being labelled with digits. By doing so, in particular a canonical way of lifting the signed digit representation of the real numbers in $[-1, 1]$ to a representation of the non-empty compact subsets of $[-1, 1]$  is obtained. The representation is very natural: the infinite paths of a tree representing a compact set $K$ correspond to the streams representing the elements of $K$.

A central aim of the present research was to do analogous investigations for the lifted representations as was done in the point case. The situation turned out very similar to the point case. Predicates $\bG_{K}, \bS_{K}$ and $\bS_{K}^{*}$ were defined co-inductively and the inclusions $\bS_{K} \subseteq \bG_{K} \subseteq \bS_{K}^{*}$ shown. Note, however, that for the last inclusion one had to use the stronger modality $\itdown$ in the definition of a   predicate for `concurrent' signed digit representation, whereas in the point case the use of $\ddown$ sufficed. A co-inductive predicate $\bG_{K}^{*}$  for `concurrent' Gray code was introduced as well and $\bG_{K}^{*} = \bS_{K}^{*}$ derived. 

Moreover, an Archimedean induction rule for non-empty compact subsets was obtained.

A computability-theoretic approach to representing compact sets is carried out in work by Pauly and Tsuiki~\cite{pt} who show in particular that $\KKK(\II)$ has a faithful $\TT^{\omega}$-representation $\TT^{\omega} \to \KKK(\II)$. Here, $\TT$ is the partial order $(\{ \bot, 0 , 1 \}, \sqsubseteq)$ with $\bot \sqsubseteq 0, 1$. In this study compact sets are  represented as trees as well, but then converted to bottomed sequences in such a way that for finite sets the number of bottoms in the sequence increased by 1 coincides with the cardinality of the set.
The exact relationship of this kind of Gray code for $\KKK(\II)$ with the one introduced in the present paper will have to be investigated in future work. 
Since the constructions given by Pauly and Tsuiki use coding and dove-tailing techniques, 
which correspond to a direct reference to a fixed operational semantics, it is unclear whether 
they can be recast in our abstract setting.

\section*{Acknowledgement}

This research has  been started during the Hausdorff trimester ``Types, Sets and Construction'' at the Hausdorff Research Institute for Mathematics, Bonn, 2018. The authors are grateful to the organisers of the trimester for having arranged this inspiring meeting and to the Hausdorff Institute for providing such excellent working conditions.

Thanks are due to the referees for their careful reading of the paper. They did a wonderful job: errors in results could be eliminated and the overall presentation of the paper improved.

\end{document}